\newtheorem{theorem}{Theorem}[section]
\newtheorem{proposition}[theorem]{Proposition}
\newtheorem{lemma}[theorem]{Lemma}
\newtheorem{corollary}[theorem]{Corollary}
\theoremstyle{definition}
\newtheorem{definition}[theorem]{Definition}
\newtheorem{remark}[theorem]{Remark}
\newcommand{\ind}{1\hspace{-.27em}\mbox{\rm l}}
\newcommand{\E}{\mathbb{E}}
\newcommand{\R}{\mathbb{R}}
\numberwithin{equation}{section}
\begin{document}

\title{On the distribution of the critical values
of random spherical harmonics}
\author{Valentina Cammarota \\
Department of Mathematics, University of Rome Tor Vergata \and Domenico
Marinucci \\
Department of Mathematics, University of Rome Tor Vergata \and Igor Wigman \\
Department of Mathematics, King's College London}
\maketitle

\abstract{ We study the limiting distribution of critical points and
extrema of random spherical harmonics, in the high energy limit.
In particular, we first derive the density
functions of extrema and saddles; we then provide analytic
expressions for the variances and we show that the empirical
measures in the high-energy limits converge weakly to their
expected values. Our arguments require a careful investigation of
the validity of the Kac-Rice formula in nonstandard circumstances,
entailing degeneracies of covariance matrices for first and second
derivatives of the processes being analyzed.}

\begin{itemize}
\item \textbf{Keywords and Phrases: }Spherical Harmonics, Critical Points,
Kac-Rice Formula, Legendre Polynomials, Hilb's Asymptotics

\item \textbf{AMS Classification:} 60G60, 60D05, 60B10, 33C55, 42C10
\end{itemize}

\section{Introduction}

\subsection{Statement of the main results}

The purpose of this paper is to investigate the asymptotic
distribution of critical values for Gaussian spherical harmonics, in
the high energy (Laplace eigenvalue) limit.
Below we will find explicit expressions for the density of extreme values and saddles; more
importantly, we will also find functional form for
their variances; by means of the above we will study the
convergence of the empirical distributions of extrema to their
limiting expressions. Some motivating
applications are discussed below; first let us introduce our
models and results more formally.

It is well-known that the eigenvalues $\lambda$ of the Laplace equation
\begin{equation*}
\Delta _{{\cal S}^2}f+\lambda f=0
\end{equation*}
on the two-dimensional sphere ${\cal S}^2$, are of the form
$\lambda=\lambda_{\ell}=\ell(\ell+1)$ for some integer $\ell\ge 1$. For any given eigenvalue $\lambda_{\ell}$, the corresponding eigenspace is the
$(2 \ell+1)$-dimensional space ${\cal L}_{\ell}$ of spherical harmonics of degree $\ell$; we can choose an arbitrary $L^{2}$-orthonormal
basis $\left\{ Y_{\mathbb{\ell }m}(.)\right\} _{m=-\ell, \dots, \ell }$, and consider random eigenfunctions of the form
\begin{equation*}
f_{\ell }(x)=\frac{1}{\sqrt{2\ell +1}}\sum_{m=-\ell }^{\ell }a_{\ell
m}Y_{\ell m}(x),
\end{equation*}
where the coefficients $\left\{ a_{\mathbb{\ell }m}\right\}$ are
independent, standard Gaussian variables; this is invariant w.r.t. the choice of $\{Y_{\ell m}\}$. The random fields $\{f_{\ell }(x), \; x\in {\cal S}^2\}$ are isotropic, meaning that
the probability laws of $f_{\ell }(\cdot )$ and $f_{\ell }^{g}(\cdot
):=f_{\ell }(g\cdot )$ are the same for any rotation $g\in SO(3)$. Also, $f_{\ell }$ are centred Gaussian, and from the addition theorem for spherical harmonics (see \cite{AAR} theorem 9.6.3) the
covariance function is given by,
\begin{equation*}
\mathbb{E}[f_{\ell }(x)f_{\ell }(y)]=P_{\ell }(\cos d(x,y)),
\end{equation*}%
where $P_{\ell }$ are the usual Legendre polynomials,
$\cos d(x,y)=\cos \theta _{x}\cos \theta _{y}+\sin \theta _{x}\sin
\theta _{y}\cos (\varphi _{x}-\varphi _{y})$ is the spherical geodesic
distance between $x$ and $y$, $\theta \in \lbrack 0,\pi ]$, $\varphi \in
\lbrack 0,2\pi )$ are standard spherical coordinates and $(\theta
_{x},\varphi _{x})$, $(\theta _{y},\varphi _{y})$ are the spherical
coordinates of $x$ and $y$ respectively.

Let $I\subseteq \mathbb{R}$ be any interval in the real line; we are
interested in the number of critical points, extrema and saddles of $f_{\ell
}$ with value in $I$:
\begin{equation*}
\mathcal{N}^{c}(f_\ell;I )=\mathcal{N}_{I}^{c}(f_{\ell })=\#\{x\in {\cal S}^2
:f_{\ell }(x)\in I,\nabla f_{\ell }(x)=0\},
\end{equation*}%
\begin{equation*}
\mathcal{N}^{e}(f_\ell;I)=\mathcal{N}_{I}^{e}(f_{\ell })=\#\{x\in {\cal S}^2
:f_{\ell }(x)\in I,\nabla f_{\ell }(x)=0,\text{det}(\nabla ^{2}f_{\ell
}(x))>0\},
\end{equation*}%
\begin{equation*}
\mathcal{N}^{s}(f_\ell;I)=\mathcal{N}_{I}^{s}(f_{\ell })=\#\{x\in {\cal S}^2
:f_{\ell }(x)\in I,\nabla f_{\ell }(x)=0,\text{det}(\nabla ^{2}f_{\ell
}(x))<0\}.
\end{equation*}%
We use $a=c,e,s$ to denote critical points, extrema and saddles respectively;
it is obvious that for all $I$ we have a.s. $$\mathcal{N}_{I}^{c}(f_{\ell })=\mathcal{N}
_{I}^{e}(f_{\ell })+\mathcal{N}_{I}^{s}(f_{\ell }).$$ For a nice domain ${\cal D} \subseteq {\cal S}^2$ we introduce
\begin{equation*}
\mathcal{N}^c (f_\ell; {\cal D},I )=\#\{x\in {\cal D}
:f_\ell (x) \in I,\nabla f_{\ell }(x)=0\}.
\end{equation*}
Our first theorem
gives the asymptotic behaviour for the expected number of critical points
of $f_{\ell }$ with values lying in $I$. Let us introduce the density
functions
\begin{equation*}
\pi _{1}^{c}(t)=\frac{\sqrt{3}}{\sqrt{8\pi }}(2e^{-t^{2}}+t^{2}-1)e^{-\frac{
t^{2}}{2}},
\end{equation*}
\begin{equation*}
\pi _{1}^{e}(t)=\frac{\sqrt{3}}{\sqrt{2\pi }}(e^{-t^{2}}+t^{2}-1)e^{-\frac{
t^{2}}{2}},
\end{equation*}
\begin{equation*}
\pi _{1}^{s}(t)=\frac{\sqrt{3}}{\sqrt{2\pi }}e^{-\frac{3}{2}t^{2}}.
\end{equation*}
We have the following:

\begin{proposition}
\label{expectation copy(1)} For every interval $I\subseteq \R$ we have as $\ell \rightarrow \infty$
\begin{equation*}
\mathbb{E}[\mathcal{N}_{I}^{c}(f_{\ell })] =\frac{2}{\sqrt{3}} \ell^2 \int_{I}\pi _{1}^{c}(t)dt+O(1),
\end{equation*}
and
\begin{equation*}
\mathbb{E}[\mathcal{N}_{I}^{a}(f_{\ell })] =\frac{\ell^2}{\sqrt{3}}  \int_{I}\pi _{1}^{a}(t)dt+O(1),
\end{equation*}
for $a=e,s$. The constant in the $O(\cdot)$ term is universal.
\end{proposition}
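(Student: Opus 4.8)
The plan is to apply the Kac--Rice formula to the point process of critical points of $f_\ell$ restricted to those with value in $I$. Since the covariance function is $\E[f_\ell(x)f_\ell(y)] = P_\ell(\cos d(x,y))$ and the field is isotropic, all the relevant quantities at a fixed point $x$ (the joint distribution of $f_\ell(x)$, $\nabla f_\ell(x)$, $\nabla^2 f_\ell(x)$) are independent of $x$, so the expected number of critical points is $4\pi$ times a single Kac--Rice density. The first task is to assemble this density: the numerator is $\E\bigl[|\det \nabla^2 f_\ell(x)|\, \ind\{f_\ell(x)\in I\}\,\big|\,\nabla f_\ell(x)=0\bigr]$ times the density of $\nabla f_\ell(x)$ at $0$. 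I would first record the covariance data: after normalizing so that $\Var(f_\ell)=1$, one has $\E[\partial_i f\,\partial_j f] = \tfrac{\lambda_\ell}{2}\,\delta_{ij}$ (in an orthonormal frame, with $\lambda_\ell=\ell(\ell+1)$), $f$ and $\nabla f$ are independent of each other at a point, and the Hessian is correlated with $f$ but, conditionally on $\nabla f=0$, independent of $\nabla f$. One then needs the conditional law of $\nabla^2 f$ given $f=t$ and $\nabla f = 0$; this is a Gaussian $2\times2$ symmetric matrix whose mean is a multiple of $t\cdot \mathrm{Id}$ and whose (shifted) entries have a covariance structure that, after suitable rescaling by powers of $\lambda_\ell$, converges to an explicit limit as $\ell\to\infty$ (the "Berry-type" scaling).

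Second, I would carry out the Gaussian expectation $\E[|\det M|\,;\,\text{sign conditions}]$ where $M = \nabla^2 f\mid(f=t,\nabla f=0)$. Writing $M = \sqrt{\lambda_\ell/2}\cdot(\text{something}) - (\text{mean } t)$, one reduces to a concrete two- or three-dimensional Gaussian integral. For the critical-point count one takes $|\det M|$; for extrema one inserts $\ind\{\det M>0\}$ and for saddles $\ind\{\det M<0\}$, and $\det M > 0$ splits further into the two definiteness cases but by symmetry these contribute equally. Using the standard parametrization of a $2\times2$ GOE-type matrix (diagonal part and off-diagonal part), these integrals can be done in closed form and, after multiplying by the Gaussian density of $\nabla f$ at $0$ (which is $\tfrac{1}{2\pi}\cdot\tfrac{2}{\lambda_\ell} = \tfrac{1}{\pi\lambda_\ell}$) and by $4\pi$, produce exactly $\tfrac{2}{\sqrt3}\ell^2\int_I \pi_1^c(t)\,dt$ and $\tfrac{\ell^2}{\sqrt3}\int_I\pi_1^a(t)\,dt$ to leading order, since $\lambda_\ell = \ell^2(1+O(1/\ell))$.

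Third, I need to control the error term and be honest about the validity of Kac--Rice. The leading term comes from pretending the covariance data is exactly at its $\ell\to\infty$ limit; the $O(1)$ error comes from (a) replacing $\lambda_\ell/2$ and the other covariance entries by their asymptotic forms — each correction is $O(1/\ell)$ relative, hence $O(\ell)$ after multiplying by $\ell^2$... so more care is needed: one must use the precise asymptotics of the relevant combinations (e.g. via Hilb's asymptotics for $P_\ell$ and its derivatives, which is why it appears in the keyword list) to see that the next-order term in the density integrates against $dt$ to something that, times $\ell^2$, is actually $O(1)$ — typically because the relevant correction is $O(1/\ell^2)$, or because it is $O(1/\ell)$ but with a coefficient that vanishes, or because one absorbs a genuine $O(\ell)$-type contribution and the claim's $O(1)$ should be read with the understanding that the dominant correction cancels. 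I would track this by expanding the moments of $f_\ell$ and its derivatives to the needed order. And (b) the Kac--Rice formula itself: the covariance matrix of $(\partial_1 f,\partial_2 f)$ is nondegenerate at a fixed point, so the basic hypotheses hold, but the degeneracies flagged in the abstract concern the two-point (second-moment) analysis needed elsewhere, not this first-moment computation — here I only need the classical Kac--Rice for the expectation, whose hypotheses (Gaussianity, $C^2$ sample paths, nondegeneracy of $\nabla f$ at a point) are immediate.

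The main obstacle I anticipate is the bookkeeping in step three: extracting the error as $O(1)$ with a universal constant requires the \emph{uniform-in-$I$} control of the next-to-leading term in the Kac--Rice density, which forces one to use sharp asymptotics (Hilb's asymptotics) for $P_\ell(\cos\theta)$ and its first two derivatives near $\theta=0$ and to check that the correction, once integrated over $t\in\R$, does not pick up an extra factor of $\ell$. The Gaussian integrals in step two are routine (if slightly lengthy); the conceptual content is entirely in setting up the correct conditional covariance structure and in the asymptotic analysis of its entries.
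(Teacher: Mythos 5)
Your proposal follows essentially the same route as the paper: Kac--Rice, isotropy to reduce to a single density, Gaussian conditioning of the Hessian on $\nabla f=0$ (with the indicator on $f(x)\in I$ handled via the linear relation $\Delta f=-\lambda_\ell f$ at a critical point, which also takes care of the degeneracy of the $6$-dimensional vector $(f,\nabla f,\nabla^2 f)$), and explicit Gaussian integrals in a rescaled limit. However, your third step contains a misconception about where the difficulty lies.

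Hilb's asymptotics plays no role in the one-point (expectation) computation; it is used only for the variance, where one needs $P_\ell^{(k)}(\cos\phi)$ for $\phi$ bounded away from $0$. For the expectation you only need the covariance data at a single point, which involves $P_\ell'(1)=\lambda_\ell/2$ and $P_\ell''(1)=\lambda_\ell(\lambda_\ell-2)/8$: these are exact polynomial identities, not asymptotics. After scaling the conditional Hessian by $\sqrt 8/\lambda_\ell$, its covariance matrix is a fixed $3\times 3$ matrix plus corrections of explicit relative size $O(\lambda_\ell^{-1})=O(\ell^{-2})$, and the normalizing ratios are of the form $\lambda_\ell/(\lambda_\ell-2)=1+O(\ell^{-2})$. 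Multiplying this relative $O(\ell^{-2})$ error by the $\ell^2$ prefactor gives $O(1)$ directly; there is no hidden cancellation, no vanishing coefficient, and no need for the hedges you propose. Since the integrands in $t$ carry factors like $e^{-3t^2/2}$, the error bound is uniform over all intervals $I$, which is precisely what the universal-constant claim requires.
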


The results above were confirmed with great accuracy by numerical simulations \cite{fantaye}
to be published; the total number of critical points (i.e. the special case $I={\mathbb{R}}$) was addressed in
\cite{Nicolaescu}.
Indeed it is immediate from Proposition \ref{expectation copy(1)} that we have\footnote{Our leading constant for
$\mathbb{E}[\mathcal{N}_{\mathbb{R}}^{c}(f_{\ell })]$, supported by numerics
~\cite{fantaye}, differs from ~\cite{Nicolaescu}
by a factor of $\sqrt{8}$.}
\begin{equation*}
\mathbb{E}[\mathcal{N}_{\mathbb{R}}^{c}(f_{\ell })]=\frac{2 }{\sqrt{3}
} \ell^{2} +O(1),\hspace{0.5cm} \mathbb{E}[\mathcal{N}_{\mathbb{R}}^{e}(f_{\ell })]=
\frac{\ell ^{2}}{\sqrt{3}}+O(1), \hspace{0.5cm} \mathbb{E}[\mathcal{N}_{\mathbb{R}%
}^{s}(f_{\ell })]=\frac{\ell ^{2}}{\sqrt{3}}+O(1).
\end{equation*}%
The density functions for critical points, extrema and saddles are
plotted in Figure \ref{fig1} and Figure \ref{fig11}. The
distribution of saddle points is Gaussian with zero mean, whereas
the extrema are bimodal; since for $f_{\ell}$ all the maxima (resp., minima) are necessarily
positive, this also holds in the limit; the unique peak of their density is located
approximately at $\pm 1.685$\ldots.

\begin{figure}[h]
\centering {\includegraphics[width=7cm, height=5cm]{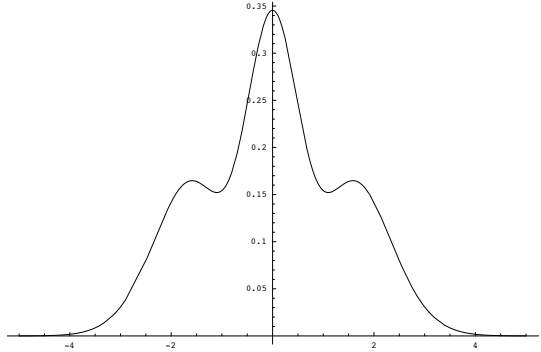}}
\caption{Limiting probability density for critical points: $\frac{\protect%
\sqrt 3}{\protect\sqrt{ 8 \protect\pi}} (2 e^{-t^2}+t^2-1) e^{- \frac{t^2}{2}%
}$. }
\label{fig1}
\end{figure}

\begin{figure}[h]
\centering
\subfigure[$\frac{\sqrt 3}{\sqrt{2 \pi}} (e^{-t^2}+t^2-1)e^{-
\frac{t^2}{2}}$]  {\includegraphics[width=7cm, height=4.5cm]{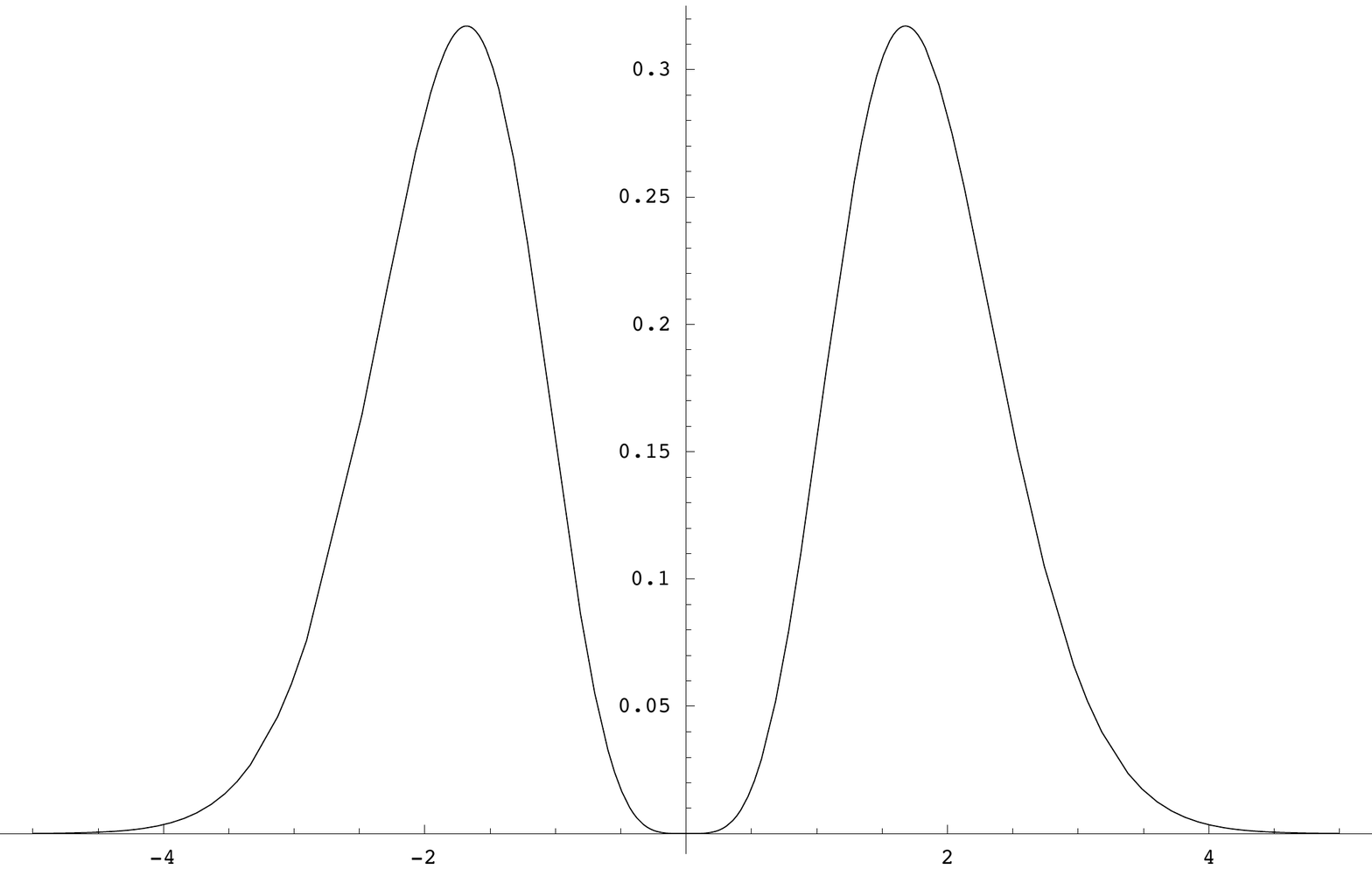} } %
\subfigure[$\frac{\sqrt 3}{\sqrt{2 \pi}} e^{- \frac{3 }{2} t^2} $]
{\includegraphics[width=7cm, height=4.5cm]{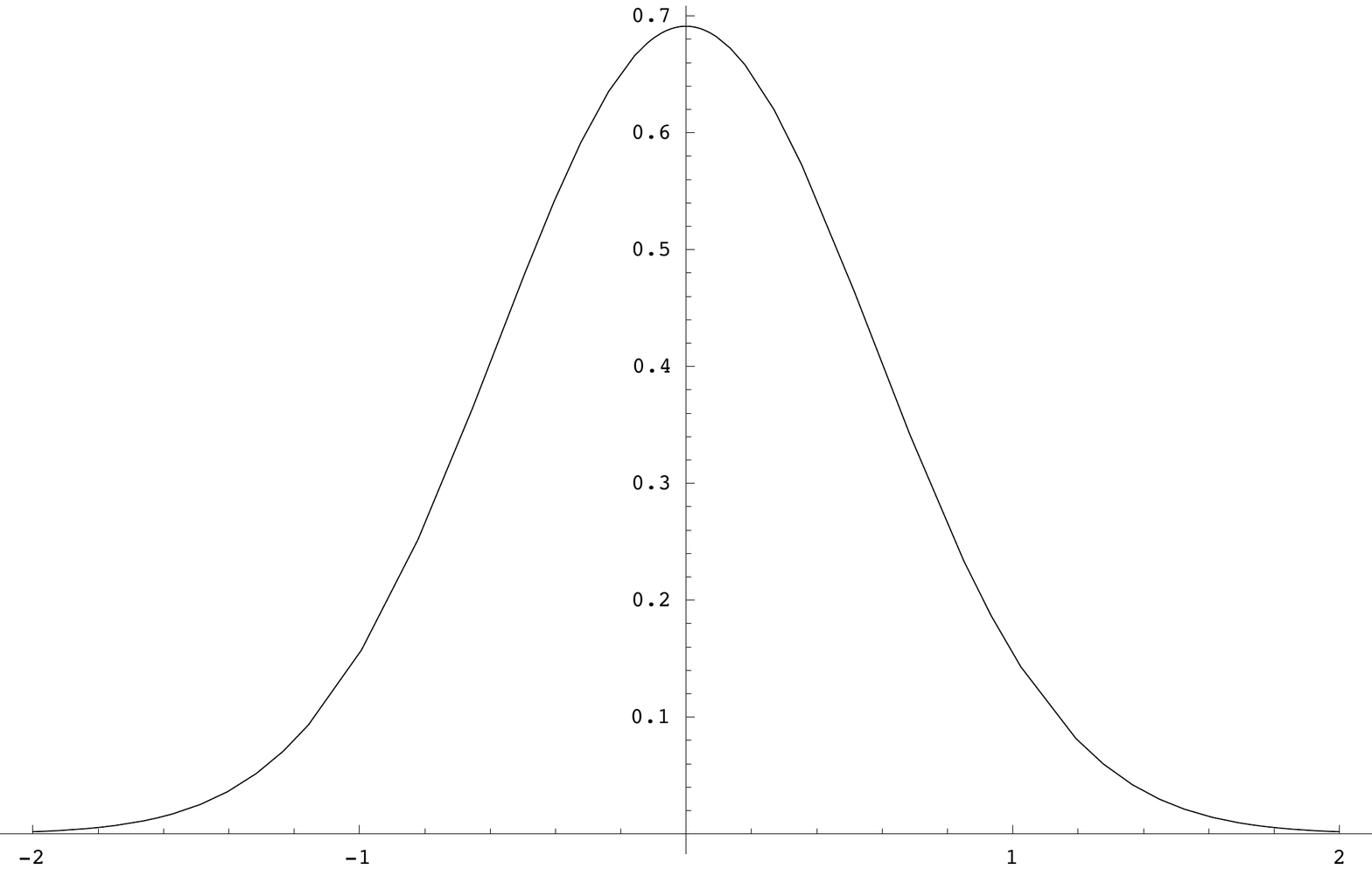} }
\caption{Limiting probability density for extrema (a) and saddles (b).}
\label{fig11}
\end{figure}

The question of asymptotic fluctuations of critical values around the expected number
is more challenging. Here we write
\begin{equation*}
p_{1}^{c}(t)= \frac{4}{\sqrt{3}} \pi _{1}^{c}(t), \hspace{0.5cm} p_{1}^{e}(t)=
\frac{2}{\sqrt{3}}\pi _{1}^{e}(t), \hspace{0.5cm} p_{1}^{s}(t)=\frac{2}{
\sqrt{3}}\pi _{1}^{s}(t),
\end{equation*}
and introduce the functions
\begin{equation*}
p_{2}^{c}(t)=\frac{\sqrt{2}}{\sqrt{\pi }}\left[-4+t^{2}+t^{4}+e^{-t^{2}}2(4+3t^{2})\right]e^{-
\frac{t^{2}}{2}},
\end{equation*}
\begin{equation*}
p_{2}^{e}(t)=\frac{\sqrt{2}}{\sqrt{\pi }}\left[-4+t^{2}+t^{4}+e^{-t^{2}}(4+3t^{2})\right]e^{-
\frac{t^{2}}{2}},
\end{equation*}
\begin{equation*}
p_{2}^{s}(t)=\frac{\sqrt{2}}{\sqrt{\pi }}(4+3t^{2})e^{-\frac{3}{2}t^{2}},
\end{equation*}
and
\begin{equation*}
p_3^a(t)=\frac{5}{4} p^a_1(t)-\frac{1}{4} p_2^a(t)
\end{equation*}
for $a=c,e,s$. Explicitly,
\begin{equation*}
p_{3}^{c}(t)=\frac{1}{ \sqrt{8 \pi }}e^{-\frac{3}{2} t^{2}}
\left[2-6t^{2}-e^{t^{2}}(1-4t^{2}+t^{4})\right],
\end{equation*}%
\begin{equation*}
p_{3}^{e}(t)=\frac{1}{ \sqrt{8 \pi }}e^{-\frac{3}{2} t^{2}
}\left[1-3t^{2}-e^{t^{2}}(1-4t^{2}+t^{4})\right],
\end{equation*}%
\begin{equation*}
p_{3}^{s}(t)=\frac{1}{ \sqrt{8 \pi }}(1-3t^{2})e^{-\frac{3}{2}%
t^{2}}.
\end{equation*}
Finally, for $a=c,e,s$ we denote
\begin{align*}
\nu^a(I)=\left[ \int_I p_3^a(t) d t \right]^2.
\end{align*}
Our principal result concerns the asymptotic behaviour of the variance.

\begin{theorem}
\label{th_variance copy(1)} For every interval $I\subseteq \R$ as $\ell \rightarrow \infty $
\begin{equation*}
{\text{Var}}(\mathcal{N}_{I}^{a}(f_{\ell }))=\ell^3\nu^a(I)+O(\ell ^{5/2}),
\end{equation*}
$a=c,e,s$, where the constant in the $O(\cdot)$ term is universal.
\end{theorem}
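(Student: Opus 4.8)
The plan is to compute the second factorial moment of $\mathcal{N}_I^a(f_\ell)$ by the Kac–Rice formula and to extract its asymptotics through a controlled expansion of the two–point correlation function.

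\textbf{Step 1 (Kac–Rice for the second moment).} Using the validity of the Kac–Rice formula established earlier in the paper — the delicate point being the degeneracies of the covariance matrix of $(\nabla f_\ell(x),\nabla f_\ell(y))$ on the diagonal, and of $\nabla^2 f_\ell(x)$, which because of $\Delta_{\mathcal{S}^2}f_\ell=-\lambda_\ell f_\ell$ is confined to the hyperplane $\{\partial_{11}f_\ell+\partial_{22}f_\ell=-\lambda_\ell f_\ell(x)\}$ — I would write, for $a=c,e,s$,
\[
\mathbb{E}\big[\mathcal{N}_I^a(f_\ell)(\mathcal{N}_I^a(f_\ell)-1)\big]=\int_{\mathcal{S}^2\times\mathcal{S}^2}\mathcal{K}_2^a(x,y)\,dx\,dy ,
\]
where $\mathcal{K}_2^a(x,y)$ is the joint density at $(0,0)$ of $(\nabla f_\ell(x),\nabla f_\ell(y))$ times the conditional expectation, given $\nabla f_\ell(x)=\nabla f_\ell(y)=0$, of $|\det\nabla^2 f_\ell(x)\det\nabla^2 f_\ell(y)|$ weighted by $\mathbf{1}_I(f_\ell(x))\mathbf{1}_I(f_\ell(y))$ and, when $a=e$ (resp.\ $a=s$), by the indicators that both Hessian determinants are positive (resp.\ negative). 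Then ${\rm Var}(\mathcal{N}_I^a)=\int\!\int\mathcal{K}_2^a+\mathbb{E}[\mathcal{N}_I^a]-(\mathbb{E}[\mathcal{N}_I^a])^2$, and since Proposition~\ref{expectation copy(1)} gives $\mathbb{E}[\mathcal{N}_I^a]=O(\ell^2)$, everything reduces to showing $\int\!\int\mathcal{K}_2^a=(\mathbb{E}[\mathcal{N}_I^a])^2+\ell^3\nu^a(I)+O(\ell^{5/2})$.

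\textbf{Step 2 (Isotropy, splitting, and expansion).} By $SO(3)$–invariance, fix $x$ at the north pole; then $\mathcal{K}_2^a$ depends only on $\phi=d(x,y)$ and $\int\!\int\mathcal{K}_2^a=8\pi^2\int_0^\pi\mathcal{K}_2^a(\phi)\sin\phi\,d\phi$. Each entry of the $12\times12$ covariance matrix of $(f_\ell,\nabla f_\ell,\nabla^2 f_\ell)$ at the pair $(x,y)$, after rescaling derivatives by suitable powers of $\sqrt{\lambda_\ell}$, is an explicit elementary function of $\phi$ built from $P_\ell(\cos\phi)$ and its first four derivatives and from $\cos\phi,\sin\phi$. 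Split $[0,\pi]$ into the polar caps $[0,\phi_0]\cup[\pi-\phi_0,\pi]$ with $\phi_0=\ell^{-1+\delta}$ and the bulk $[\phi_0,\pi-\phi_0]$. On the bulk the Laplace–Hilb asymptotics for Legendre polynomials force all cross–covariances to be $O(\ell^{-1/2}(\sin\phi)^{-1/2})$, so I would Taylor–expand $\mathcal{K}_2^a(\phi)$ about the fully decorrelated configuration (all cross–covariances set to $0$), where it equals the constant $(\mathbb{E}[\mathcal{N}_I^a]/4\pi)^2$. The zeroth–order term, integrated against $8\pi^2\sin\phi$, reproduces $(\mathbb{E}[\mathcal{N}_I^a])^2$ up to lower order and cancels. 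The quadratic term is a fixed quadratic form in the cross–covariance functions whose coefficients, once the Gaussian expectations over the Hessian variables (including the cone integrations $\{\det>0\}$, $\{\det<0\}$ for $a=e,s$) are carried out, collapse into the density $p_3^a=\tfrac54 p_1^a-\tfrac14 p_2^a$; its integral against $\sin\phi$ is evaluated by the exact identities
\[
\int_0^\pi P_\ell(\cos\phi)^2\sin\phi\,d\phi=\frac{2}{2\ell+1},\qquad \int_0^\pi P_\ell'(\cos\phi)^2\sin^3\phi\,d\phi=\frac{2\ell(\ell+1)}{2\ell+1},
\]
together with their higher–order analogues obtained from the Legendre differential equation and integration by parts, producing exactly $\ell^3\nu^a(I)=\ell^3(\int_I p_3^a)^2$. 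The cubic and higher terms of the expansion carry an extra $O(\ell^{-1/2}(\sin\phi)^{-1/2})$ and integrate to $O(\ell^{5/2})$.

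\textbf{Step 3 (Polar caps).} On $[0,\phi_0]$ and $[\pi-\phi_0,\pi]$ the gradient covariance degenerates; here I would run a direct local analysis, using that imposing $\nabla f_\ell(y)=0$ on top of $\nabla f_\ell(x)=0$ for $\phi$ small forces $\det\nabla^2 f_\ell(x)$ to be of order $\phi$, so the conditional expectation of $|\det\nabla^2 f_\ell(x)\det\nabla^2 f_\ell(y)|$ compensates the blow–up of the density; this shows the caps contribute $O(\ell^{2+c\delta})$, hence $O(\ell^{5/2})$ for $\delta$ small. (Near $\phi=\pi$ one uses $P_\ell(\cos\phi)=(-1)^\ell P_\ell(\cos(\pi-\phi))$, so the antipodal region is handled exactly as the diagonal.)

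\textbf{Expected main obstacles.} Two steps carry the real difficulty. First, the uniformity in $\ell$ of the near–diagonal and near–antipodal estimates: one must control $\mathcal{K}_2^a(\phi)$ for $\phi\asymp 1/\ell$, where both $(\nabla f_\ell(x),\nabla f_\ell(y))$ and the pair of Hessians degenerate, and show that the singularity of the density is cancelled by the vanishing of the conditional determinant factor to the right order — squeezing the error down to $O(\ell^{5/2})$ rather than a larger power is the crux. Second, the algebraic identification of the quadratic term of the expansion: proving that the relevant second–order Gaussian expectation is precisely $p_3^a$ and that, after the exact Legendre integrations, all quadratic contributions combine into the perfect square $(\int_I p_3^a)^2$ — this reflects a rank–one structure of the dominant (second Wiener chaos) component of $\mathcal{N}_I^a(f_\ell)$, which is what makes $\nu^a(I)$ so clean. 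For $a=e,s$ the Hessian–sign constraints render these conditional expectations non–polynomial and require an additional explicit two–dimensional Gaussian integration over a cone.
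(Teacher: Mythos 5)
Your Step 1 is the genuine gap. You write the Kac--Rice formula for the second factorial moment globally, as if the non-degeneracy of the $10\times 10$ Gaussian covariance of $(\nabla f_\ell(x),\nabla f_\ell(y),\nabla^2 f_\ell(x),\nabla^2 f_\ell(y))$ had been verified for \emph{all} pairs $(x,y)$, and then split the resulting integral into caps and bulk. But that identity is not available: the paper explicitly states it could not verify global non-degeneracy for the $10\times 10$ matrix (depending on $x,y,\ell$), and verifies it only in two regimes, $d(x,y)<c/\ell$ (by an explicit Taylor expansion of $\det\Sigma_\ell(\psi)$ around the diagonal, Appendix~\ref{AppC}) and $d(x,y)>C/\ell$ (via the decay estimates on the perturbing entries, Lemma~\ref{alphaalpha}); the intermediate window $c/\ell\le d(x,y)\le C/\ell$ is left uncontrolled. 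Splitting the domain of a single Kac--Rice integral that you have not justified does not repair this; what is needed is a partition of the \emph{point process} itself. This is precisely what the paper's Voronoi-cell device does: partition $\mathcal{S}^2$ into $O(\ell^2)$ cells of diameter $\asymp c/\ell$, apply the \emph{within-cell} Kac--Rice (valid by the short-range non-degeneracy), apply Kac--Rice \emph{pairwise} across cells at distance $\ge C/\ell$ (valid by the long-range non-degeneracy via \eqref{eq:Kac-Rice 2nd fact disjoint}), and handle the $O(\ell^2)$ pairs of nearby cells by Cauchy--Schwarz and a uniform $O(1)$ bound on each cell's variance (equation~\eqref{csbound}). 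This produces the \emph{approximate} Kac--Rice formula \eqref{afkao} with an $O(\ell^{5/2})$ error, sidestepping the uncontrolled regime entirely. Without this (or some replacement), your Step~1 and Step~3 do not cohere: your direct cap analysis would presuppose the formula whose validity you are meant to be proving.

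There is a smaller issue in Step~2: you propose to evaluate the leading-order oscillatory integrals via exact Legendre identities such as $\int_0^\pi P_\ell(\cos\phi)^2\sin\phi\,d\phi=2/(2\ell+1)$ and its higher-order analogues. This clashes with the fact that (i) the integration range is $[\phi_0,\pi-\phi_0]$, not $[0,\pi]$, so the exact identities apply only up to boundary corrections that must be shown to be of lower order (and near $\phi\asymp 1/\ell$ they are not harmless), and (ii) the integrand is not a polynomial in $P_\ell$ and its derivatives, but contains the normalization factors $\bigl((1-4\alpha_{1,\ell}^2/\lambda_\ell^2)(1-4\alpha_{2,\ell}^2/\lambda_\ell^2)\bigr)^{-1/2}$ and the inverse conditional covariance $\Delta_\ell(\phi)^{-1}$, which are genuinely rational functions of the Legendre data. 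The paper's Hilb-asymptotics route (Lemma~\ref{hilbs}) is what makes these integrals tractable; the exact-identity route is not developed enough to know whether it can be pushed through. Your observation that the $\ell^3$ coefficient collapses to a perfect square $(\int_I p_3^a)^2$ reflecting a rank-one/second-chaos structure is a nice interpretation, consistent with the algebra in Section~\ref{cinque}, but does not substitute for the missing Kac--Rice justification.
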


\noindent Note that for simplicity all our results are formulated for
intervals $I$, however they can be easily extended to more
general cases, for instance Borel subsets of $\mathbb{R}$. The
plots for the kernel of these variances are given in Figure
\ref{fig2}.

\begin{figure}[h]
\centering
\subfigure[$\frac{1}{ \sqrt{8  \pi}} e^{- \frac{3 }{2} t^2} (2-6 t^2-e^{t^2}
(1-4 t^2+t^4))$ ]  {\includegraphics[width=5cm,
height=3.5cm]{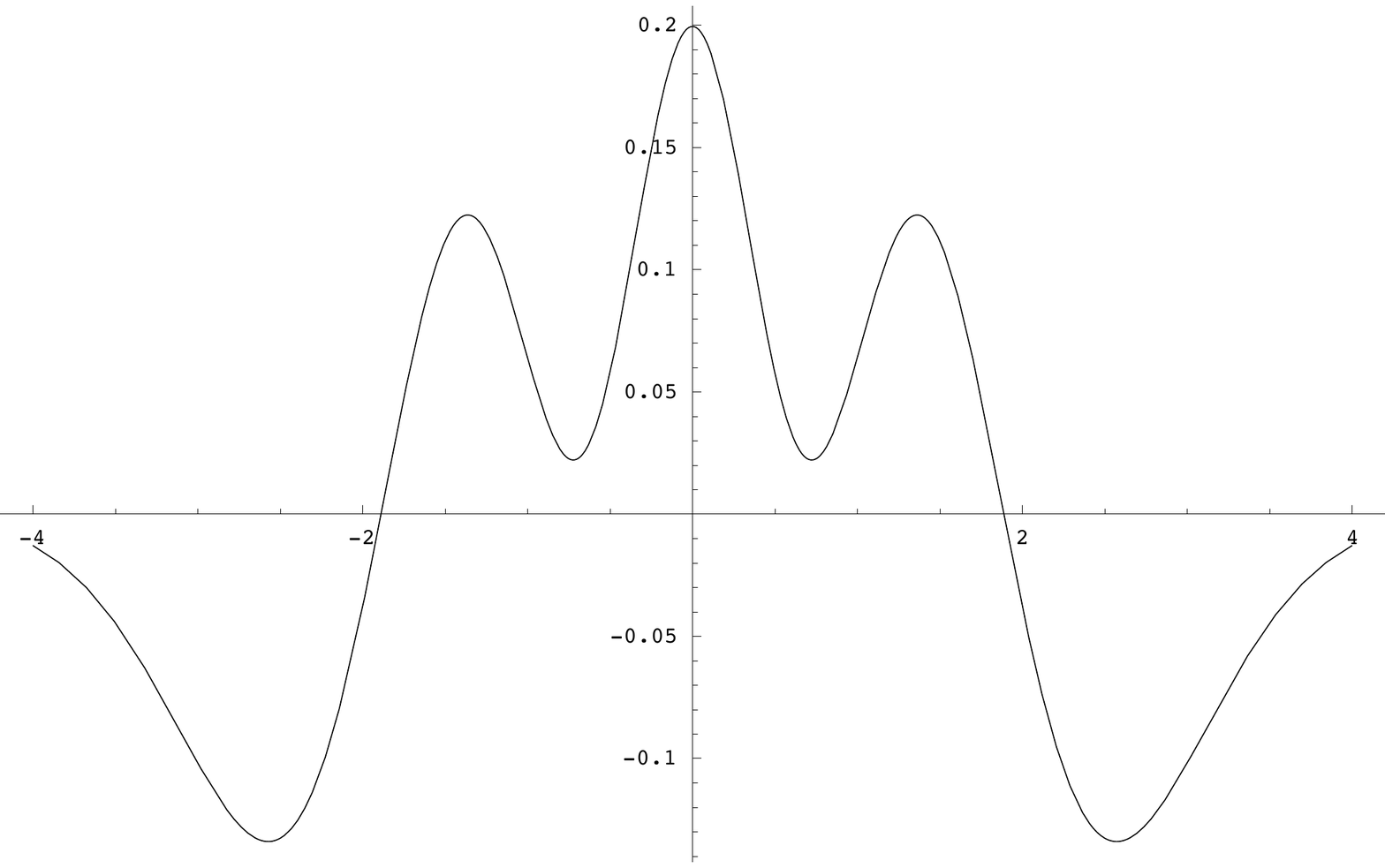} }
\subfigure[$ \frac{1}{\sqrt{8  \pi}}  e^{-
\frac{3}{2}  t^2} (1-3 t^2-e^{t^2} (1-4 t^2+t^4))$]  {\includegraphics[width=5cm, height=3.5cm]{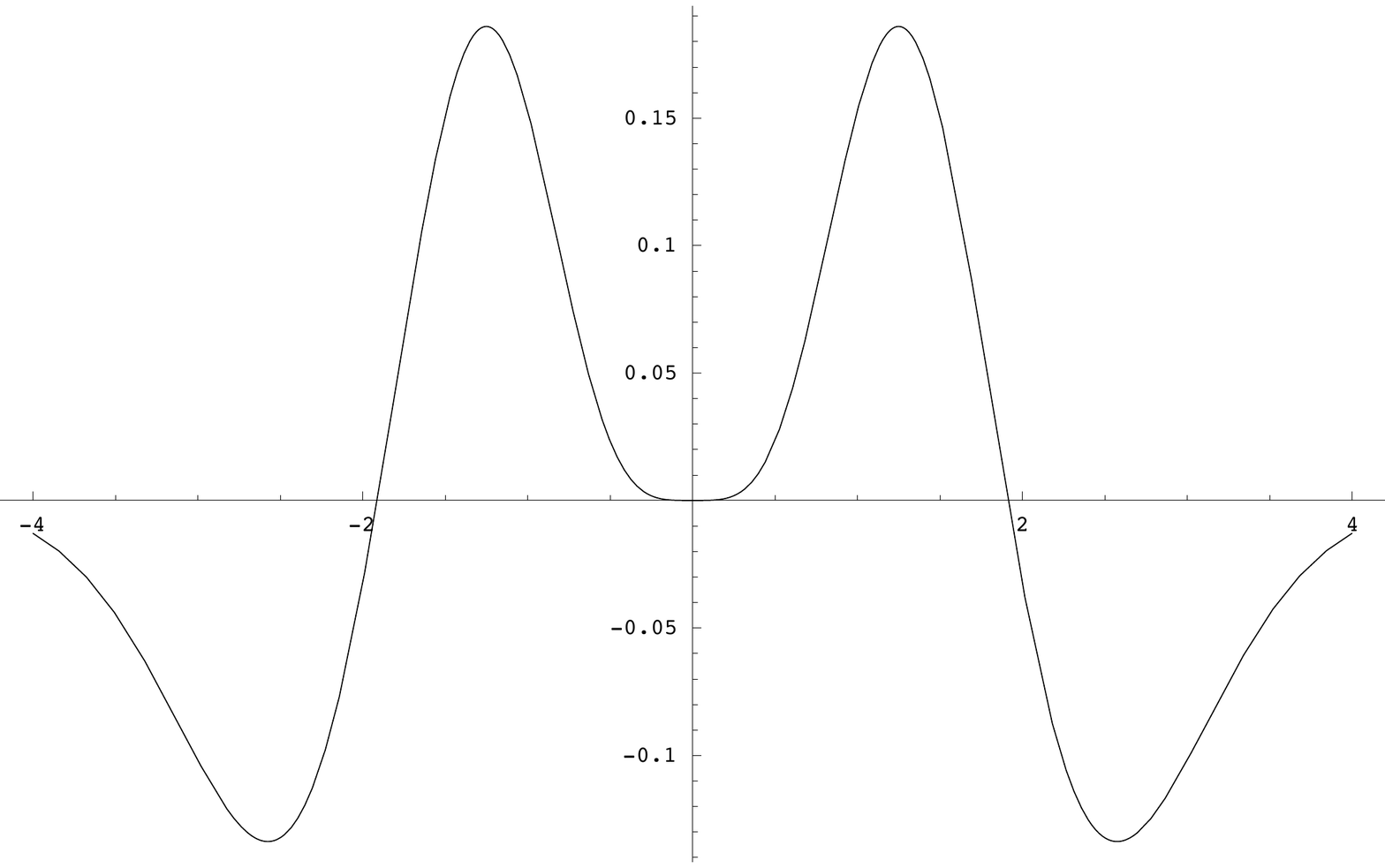} }
\subfigure[$
\frac{1}{\sqrt{8  \pi}}  e^{- \frac{3}{2}  t^2} (1-3 t^2)$]  {\includegraphics[width=5cm, height=3.5cm]{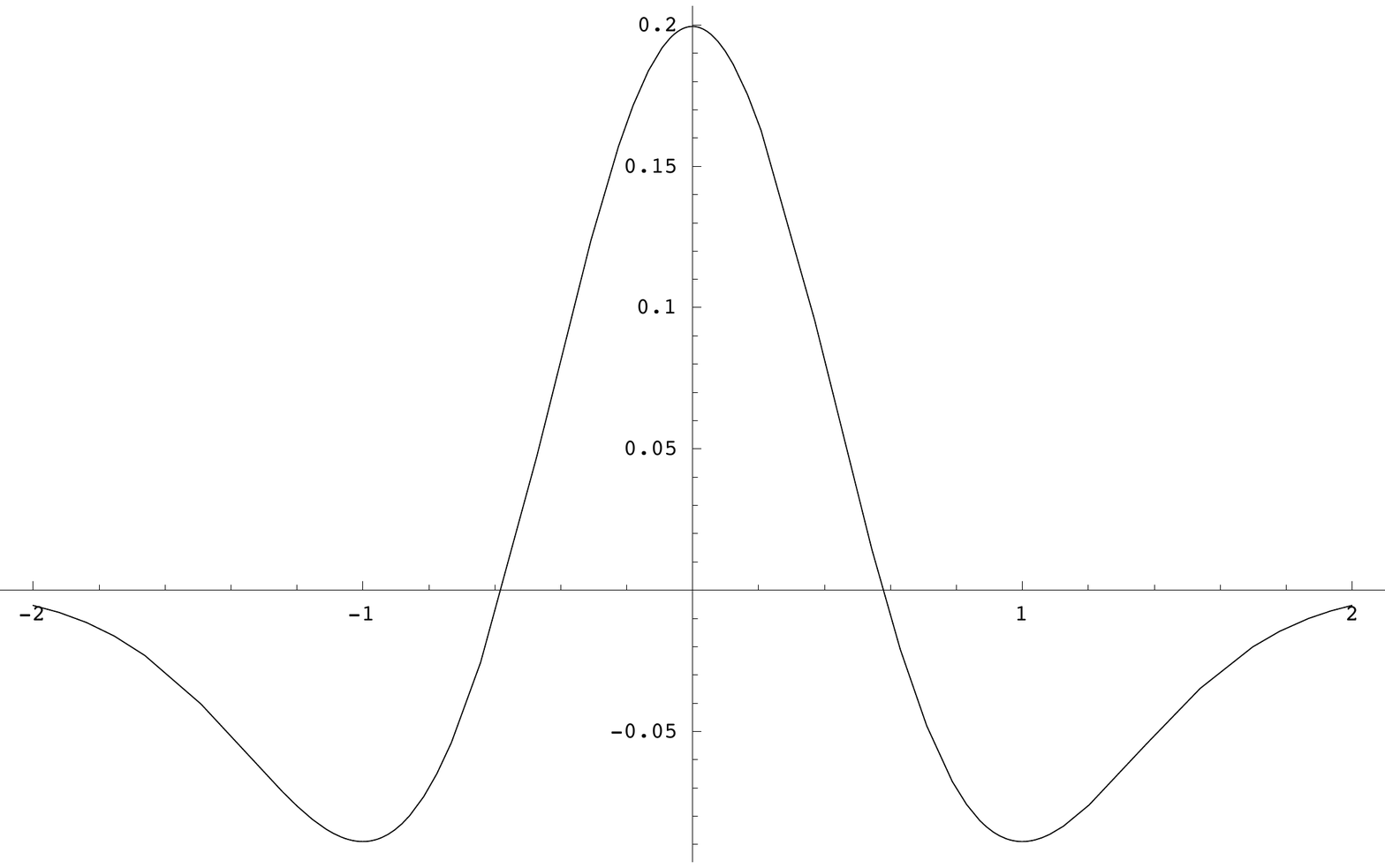} }
\caption{$ p_{3}^a$ for critical (a), extrema (b) and saddles (c).}
\label{fig2}
\end{figure}

\begin{remark}
It is straightforward to evaluate the leading
terms for critical points, extrema and saddles for any given interval
$[a,b]$, as an explicit function of $a$ and $b$. We have

\begin{align*}
\nu^c([a,b])&=\left[ \frac{
-a e^{-\frac{3}{2} a^{2} }(2+(a^{2}-1)e^{a^{2}})+b e^{-\frac{3}{2}%
b^{2}}(2+(b^{2}-1)e^{b^{2}})}{ \sqrt{ 8\pi }}\right] ^{2},   \\
\nu^e([a,b])& =\left[ \frac{%
-a e^{-\frac{3}{2} a^{2}}(1+(a^{2}-1)e^{a^{2}})+be^{-\frac{3}{2}%
b^{2}}(1+(b^{2}-1) e^{b^{2}})}{ \sqrt{ 8 \pi }}\right] ^{2},   \\
\nu^s([a,b])& =\left[ \frac{%
-a e^{-\frac{3}{2}a^{2}}+b e^{-\frac{3}{2}b^{2}}}{ \sqrt{8 \pi }}\right]^{2}.
\end{align*}
More discussion on the behaviour of the leading constant $\nu^a(I)$ for symmetric intervals
$I$ around the origin is reported in the next subsection.
\end{remark}

\begin{remark}
In Figure \ref{fig3} we illustrate the behaviour of  the variances for the excursion sets $I=[u,\infty)$, i.e., we plot
\begin{align*}
\nu^c([u,\infty))&=\frac{1}{8 \pi} e^{- 3 u^2} u^2 (2+ e^{u^2}
(u^2-1))^2, \\
\nu^e([u,\infty))&= \frac{1}{8 \pi}  e^{- 3
u^2} u^2 (1+ e^{u^2} (u^2-1))^2,\\
\nu^s([u,\infty))&=\frac{1}{8 \pi}  e^{- 3 u^2} u^2.
\end{align*}
\end{remark}

\begin{figure}[h]
\centering
\subfigure[$\frac{1}{8 \pi} e^{- 3 u^2} u^2 (2+ e^{u^2}
(u^2-1))^2$ ]  {\includegraphics[width=5cm,
height=3.5cm]{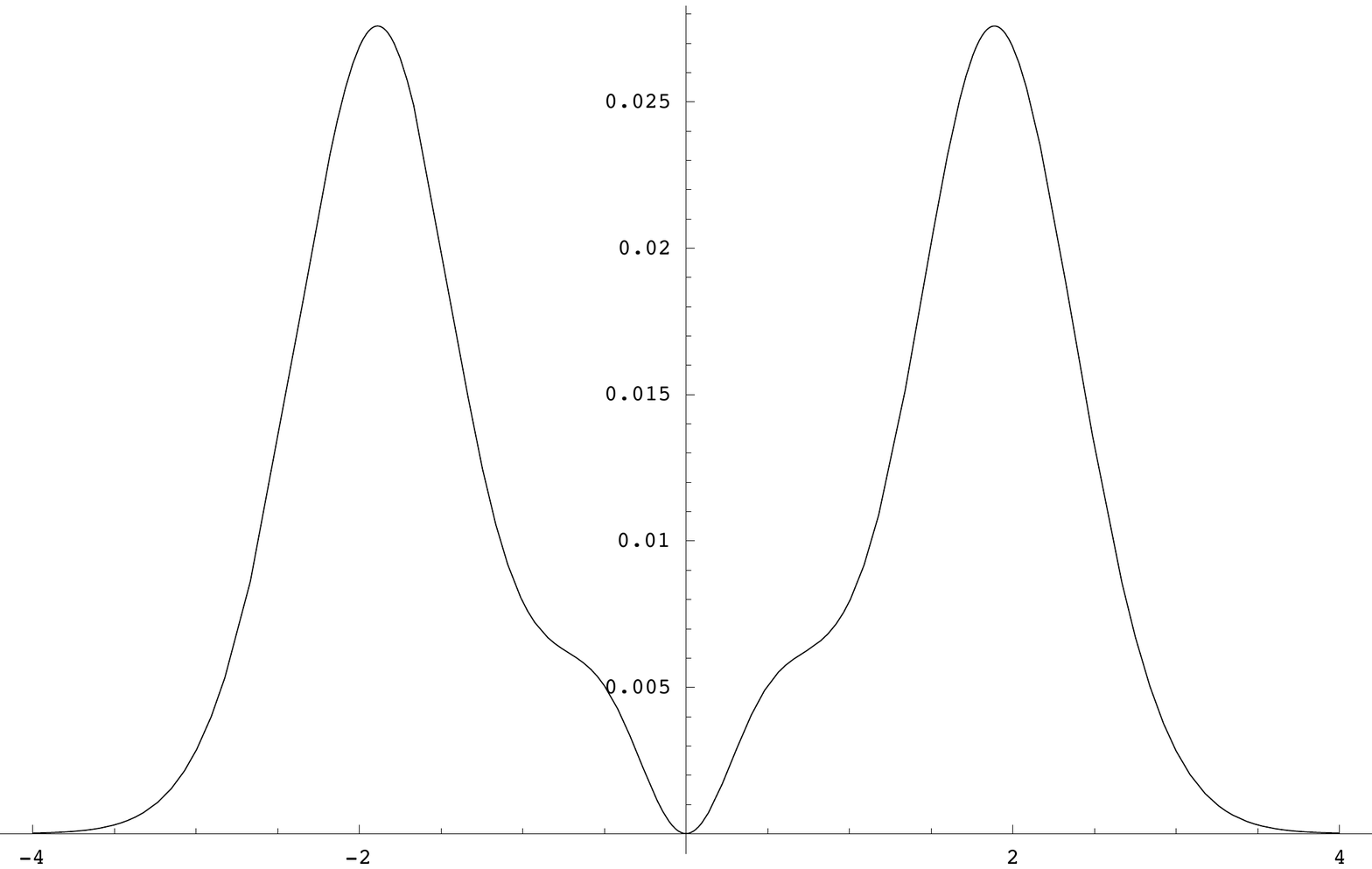} }
\subfigure[$ \frac{1}{8 \pi}  e^{- 3
u^2} u^2 (1+ e^{u^2} (u^2-1))^2 $]  {\includegraphics[width=5cm,
height=3.5cm]{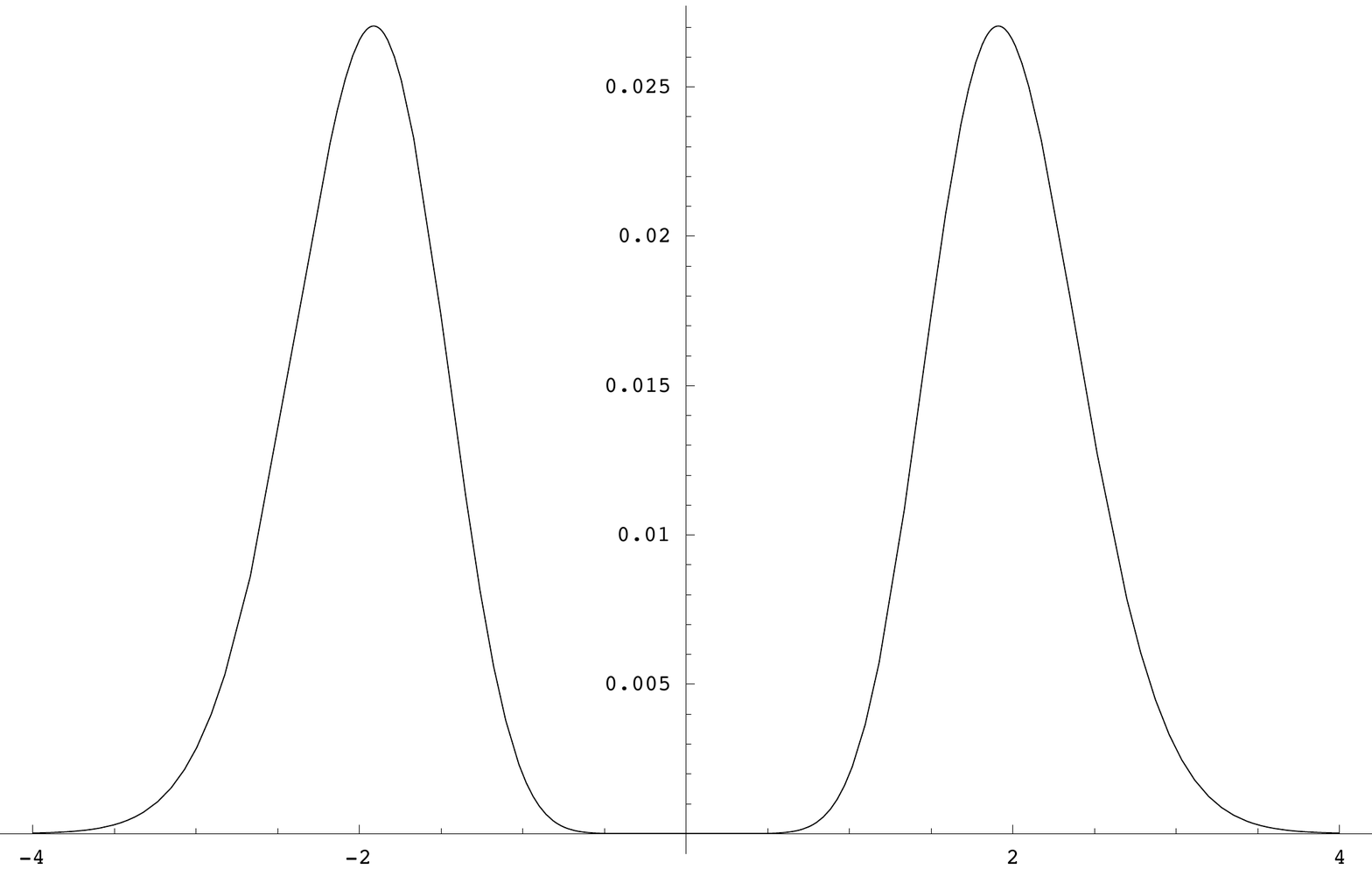} }
\subfigure[$
\frac{1}{8 \pi}  e^{- 3 u^2} u^2 $]  {\includegraphics[width=5cm,
height=3.5cm]{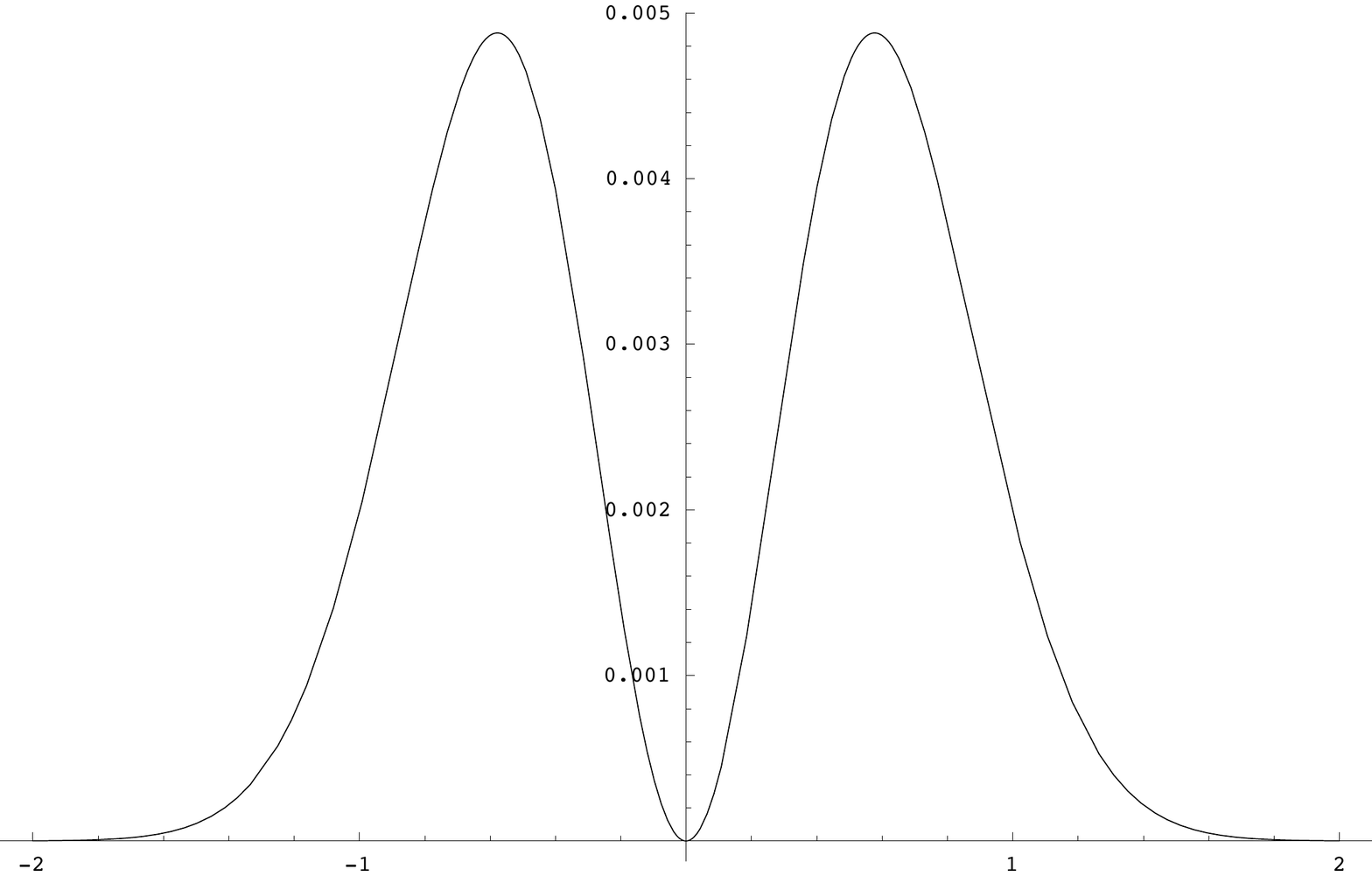} }
\caption{$\nu^a([u,\infty))$
for critical (a), extrema (b) and saddles (c).}
\label{fig3}
\end{figure}

In our view, the asymptotic law we proved for the variances are of
independent interest; they also imply the convergence of empirical
measures of critical points and extrema to their theoretical
limit. More precisely, let
\begin{equation*}
F_{\ell }(z)=\frac{\mathcal{N}^{c}(f_{\ell };(-\infty ,z))}{\mathbb{E}[%
\mathcal{N}^{c}(f_{\ell };\mathbb{R})]},\hspace{2cm}F_{\ell }^{\ast }(z)=%
\frac{\mathcal{N}^{c}(f_{\ell };(-\infty ,z))}{\mathcal{N}^{c}(
f_{\ell };\mathbb{R})},
\end{equation*}%
be the empirical distribution function of critical points for $f_{\ell }$
under deterministic and random normalizations,
respectively. Now define the
distribution functions $F_{\infty }$ as
\begin{equation*}
F_{\infty }(z)=\lim_{\ell \rightarrow \infty }\mathbb{E}[F_{\ell
}(z)]=\int_{-\infty }^{z}\pi _{1}^{c}(t)dt=\int_{-\infty }^{z}\frac{\sqrt{3}%
}{\sqrt{8\pi }}(2e^{-t^{2}}+t^{2}-1)e^{-\frac{t^{2}}{2}}dt.
\end{equation*}

\noindent Our next result concerns the uniform convergence of the empirical
distribution function to $F_{\infty }(z)$.

\begin{corollary} \label{emprcl}
\label{the_th} For all $\varepsilon >0$, as $\ell \rightarrow \infty $, we
have
\begin{equation*}
\mathbb{P}\{\sup_{z}|F_{\ell }^{\ast }(z)-F_{\infty }(z)|\geq \varepsilon \}%
\text{ },\text{ }\mathbb{P}\{\sup_{z}|F_{\ell }(z)-F_{\infty }(z)|\geq
\varepsilon \}\rightarrow 0.
\end{equation*}
\end{corollary}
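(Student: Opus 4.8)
The plan is to derive the corollary from Proposition \ref{expectation copy(1)} and Theorem \ref{th_variance copy(1)} by a Glivenko--Cantelli/P\'olya-type argument: first establish pointwise convergence in probability, then upgrade it to uniform convergence using monotonicity, and finally transfer the conclusion from the deterministic to the random normalization. To begin I would record two preliminary facts. Comparing the two expressions for $\mathbb{E}[\mathcal{N}^c_{\mathbb{R}}(f_\ell)]$ in Proposition \ref{expectation copy(1)} forces $\int_{\mathbb{R}}\pi_1^c(t)\,dt=1$, so $F_\infty$ is a continuous cumulative distribution function with $F_\infty(-\infty)=0$ and $F_\infty(+\infty)=1$. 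Moreover $\mathcal{N}^c(f_\ell;{\cal S}^2)\ge 2$ almost surely, since a smooth function on ${\cal S}^2$ has at least a maximum and a minimum; hence $F_\ell^\ast$ is a.s.\ well defined, both $F_\ell(\cdot)$ and $F_\ell^\ast(\cdot)$ are a.s.\ nondecreasing, and $F_\ell^\ast(z)=F_\ell(z)/F_\ell(+\infty)$.

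Next I would prove pointwise $L^2$-convergence of the deterministically normalized $F_\ell$. Set $c_\ell:=\mathbb{E}[\mathcal{N}^c(f_\ell;\mathbb{R})]=\tfrac{2}{\sqrt3}\ell^2+O(1)$, a deterministic quantity. Applying Proposition \ref{expectation copy(1)} with $I=(-\infty,z)$ gives $\mathbb{E}[F_\ell(z)]=c_\ell^{-1}\mathbb{E}[\mathcal{N}^c(f_\ell;(-\infty,z))]\to F_\infty(z)$, while Theorem \ref{th_variance copy(1)}, together with the bound $\sup_I\nu^c(I)<\infty$ (valid since $p_3^c\in L^1(\mathbb{R})$), yields
\begin{equation*}
\Var\big(F_\ell(z)\big)=\frac{\Var\big(\mathcal{N}^c(f_\ell;(-\infty,z))\big)}{c_\ell^2}=\frac{\ell^3\nu^c((-\infty,z))+O(\ell^{5/2})}{\tfrac43\ell^4+O(\ell^2)}=O(\ell^{-1})
\end{equation*}
uniformly in $z$. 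By Chebyshev's inequality $F_\ell(z)\to F_\infty(z)$ in probability for every fixed $z\in[-\infty,+\infty]$; in particular $F_\ell(+\infty)\to 1$ in probability.

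To obtain uniformity, I would fix $\varepsilon>0$ and, using continuity of $F_\infty$ and $F_\infty(\pm\infty)=0,1$, choose $-\infty=z_0<z_1<\cdots<z_K=+\infty$ with $F_\infty(z_{j+1})-F_\infty(z_j)<\varepsilon/2$ for all $j$. For $z\in[z_j,z_{j+1})$ the monotonicity of $F_\ell$ and of $F_\infty$ sandwiches $F_\ell(z)-F_\infty(z)$ between $F_\ell(z_j)-F_\infty(z_{j+1})$ and $F_\ell(z_{j+1})-F_\infty(z_j)$, so that $\sup_z|F_\ell(z)-F_\infty(z)|\le \varepsilon/2+\max_{0\le j\le K}|F_\ell(z_j)-F_\infty(z_j)|$; a union bound over the finite grid, together with the previous step, gives $\mathbb{P}\{\sup_z|F_\ell(z)-F_\infty(z)|\ge\varepsilon\}\to0$. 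For the random normalization I would use $F_\ell^\ast(z)=F_\ell(z)/F_\ell(+\infty)$ with $0\le F_\ell(z)\le F_\ell(+\infty)$: on the event $\{F_\ell(+\infty)\ge 1/2\}$ one has $\sup_z|F_\ell^\ast(z)-F_\ell(z)|\le 2|1-F_\ell(+\infty)|$, and since this event has probability tending to $1$ while $|1-F_\ell(+\infty)|\to0$ in probability, $\sup_z|F_\ell^\ast(z)-F_\ell(z)|\to0$ in probability; the triangle inequality then completes the argument.

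Essentially all the content of the corollary is carried by Proposition \ref{expectation copy(1)} and Theorem \ref{th_variance copy(1)}, so no genuinely hard step remains. The only points requiring care are the monotonicity sandwiching that converts pointwise convergence in probability into uniform convergence in probability (a probabilistic analogue of P\'olya's theorem) and the handling of the random denominator $F_\ell(+\infty)$; I would also make sure the grid argument treats the endpoint $z=+\infty$ consistently, but all of this is routine.
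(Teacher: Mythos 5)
Your argument follows the same route as the paper's: Lemma \ref{lma} in the paper handles the passage from random to deterministic normalization via the bound $|F_\ell(z)-F_\ell^\ast(z)|\le |1-\mathcal{N}^c_{\mathbb{R}}(f_\ell)/\mathbb{E}[\mathcal{N}^c_{\mathbb{R}}(f_\ell)]|$ and the $O(\ell^{-1})$ variance bound, exactly matching your last paragraph (your conditioning on $\{F_\ell(+\infty)\ge 1/2\}$ and factor of $2$ are actually superfluous, since $F_\ell(z)\le F_\ell(+\infty)$ already gives $|F_\ell^\ast(z)-F_\ell(z)|\le|1-F_\ell(+\infty)|$); and the paper's proof of the corollary is precisely the P\'olya-type grid-plus-monotonicity argument you describe, combined with pointwise convergence in probability from Proposition \ref{expectation copy(1)} and Theorem \ref{th_variance copy(1)}. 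Your write-up is correct and a bit more explicit than the paper's (checking $\int\pi_1^c=1$ and the a.s.\ well-definedness of $F_\ell^\ast$), but conceptually identical.
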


In practice, loosely speaking, the latter result shows that for each random
realization of a high degree spherical harmonic the same empirical
density of critical values will be observed, up to asymptotically
negligible fluctuations.

\subsection{On Berry cancellation}

An interesting phenomenon occurs when we consider the extrema variance
with values falling into $I$, an infinitesimally small neighbourhood of the origin,
or for a fixed interval $I$ with vanishing leading constant $\nu^a(I)$
(more details are given below). In related
circumstances, it is known ~\cite{wigman dartmouth}
that the nodal length variance for random
eigenfunctions on the torus and on the sphere is of lower order than for
other level curves; on ${\cal S}^2$ the nodal length variance is proportional to $\log \ell$ \cite{Wig},
whereas for generic level curves the variance is
proportional to $\ell$. This behaviour was discovered by Michael Berry in \cite{Berry 1977}, and thereupon
is referred to as \emph{Berry's cancellation phenomenon}.

From Theorem \ref{th_variance copy(1)}, it is easy to obtain, by a simple evaluation of the integral, that the variance of the number
of extrema for a generic interval $I=[-\varepsilon / 2+
x_0,\varepsilon / 2 +x_0]$, is asymptotic to
$$\lim_{\varepsilon \to 0} \lim_{\ell \to \infty} \frac{\text{Var}(\mathcal{N}^{e}(f_\ell ; [- \varepsilon / 2+x_0,\varepsilon / 2+x_0])}{\varepsilon^2 \ell^3}= [p_3^e(x_0)]^2 ,$$
 where $[p_3^e(x_0)]^2>0$, almost everywhere, see Figure \ref{fig2}.  In contrast, from Theorem \ref{th_variance copy(1)}, we may also deduce
the behaviour of the critical points variance in a vanishing interval $I=[-\varepsilon ,\varepsilon ]$
around the origin:

\begin{corollary} As $\varepsilon \to 0$
\begin{equation} \label{thiseq}
\lim_{\varepsilon \rightarrow 0} \frac{\nu^e([-\varepsilon,\varepsilon])}{\varepsilon^{10}}
= \frac{1}{8 \pi }.
\end{equation}
\end{corollary}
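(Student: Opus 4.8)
The plan is to reduce everything to an elementary Taylor expansion at the origin. By the Remark following Theorem~\ref{th_variance copy(1)}, specialising $[a,b]=[-\varepsilon,\varepsilon]$ and using that $x\mapsto xe^{-\frac32 x^2}\bigl(1+(x^2-1)e^{x^2}\bigr)$ is odd, the two boundary contributions coincide and
\[
\nu^e([-\varepsilon,\varepsilon])=\left[\frac{2\varepsilon\, e^{-\frac32\varepsilon^2}\bigl(1+(\varepsilon^2-1)e^{\varepsilon^2}\bigr)}{\sqrt{8\pi}}\right]^2 .
\]
Equivalently one may work straight from $\nu^e(I)=\bigl[\int_I p_3^e(t)\,dt\bigr]^2$ together with the evenness of $p_3^e$; the two routes give the same expansion, and I would present whichever is shorter.

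The substantive step is to record the cancellation of the first two Taylor coefficients. Writing $e^{\varepsilon^2}=1+\varepsilon^2+\tfrac12\varepsilon^4+O(\varepsilon^6)$ and multiplying out,
\[
1+(\varepsilon^2-1)e^{\varepsilon^2}=\tfrac12\varepsilon^4+O(\varepsilon^6),
\]
so the $\varepsilon^0$ and $\varepsilon^2$ terms vanish identically. In the $p_3^e$ formulation this is the statement that the bracket $1-3t^2-e^{t^2}(1-4t^2+t^4)$ equals $\tfrac52 t^4+O(t^6)$, i.e. $p_3^e$ vanishes to fourth order at $t=0$ — precisely the Berry‑type cancellation discussed above, and the reason the exponent turns out to be $10$ rather than the naive $2$ (if $p_3^e(0)\neq0$) or $6$ (if $p_3^e$ vanished only to second order).

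Combining this with $e^{-\frac32\varepsilon^2}=1+O(\varepsilon^2)$ gives
\[
2\varepsilon\, e^{-\frac32\varepsilon^2}\bigl(1+(\varepsilon^2-1)e^{\varepsilon^2}\bigr)=\varepsilon^5+O(\varepsilon^7),
\]
hence $\nu^e([-\varepsilon,\varepsilon])=\dfrac{\varepsilon^{10}}{8\pi}\bigl(1+O(\varepsilon^2)\bigr)$. Dividing by $\varepsilon^{10}$ and letting $\varepsilon\to0$ yields \eqref{thiseq}.

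The argument is routine throughout; the only point deserving care — and really the whole content of the corollary — is verifying that the leading two Taylor coefficients of $1+(\varepsilon^2-1)e^{\varepsilon^2}$ (equivalently, of the bracket defining $p_3^e$) genuinely cancel. Once that identity is in hand, the rest is bookkeeping of error terms.
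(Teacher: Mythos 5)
Your argument is correct and is essentially the paper's own computation: both evaluate $\nu^e([-\varepsilon,\varepsilon])$ from the closed-form (either via the Remark or via $\int_I p_3^e$), reduce to the factor $1+(\varepsilon^2-1)e^{\varepsilon^2}=\tfrac12\varepsilon^4+O(\varepsilon^6)$, and read off the limit. The paper states this more tersely, but the content is the same; your observation that the cancellation through order $\varepsilon^2$ is what produces the exponent $10$ is a helpful gloss, not a different method.
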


\begin{proof} The statement follows immediately by evaluating
$$\lim_{\varepsilon \to 0} \frac{1}{\varepsilon^{10}} \left[ \int_{-\varepsilon}^{\varepsilon} \frac{1}{ \sqrt{8 \pi}} e^{-\frac 3 2 t^2} \left[1-3t^2-e^{t^2} (1-4 t^2+t^4)\right] d t \right]^2  =\lim_{\varepsilon \to 0} \frac{e^{- 3 \varepsilon^2}  \left[1+e^{ \varepsilon^2} (-1+ \varepsilon^2)\right]^2}{2 \pi \varepsilon^8}=\frac{1}{8 \pi}.$$
\end{proof}

\noindent Figure \ref{5} illustrates the behaviour of these functions for symmetric intervals around the origin.

\begin{figure}[h]
\centering
\subfigure[$\frac{1}{2 \pi} e^{- 3 \varepsilon^2} \varepsilon^2 (2+
e^{\varepsilon^2} (\varepsilon^2-1))^2$ ]  {\includegraphics[width=5cm,
height=3.5cm]{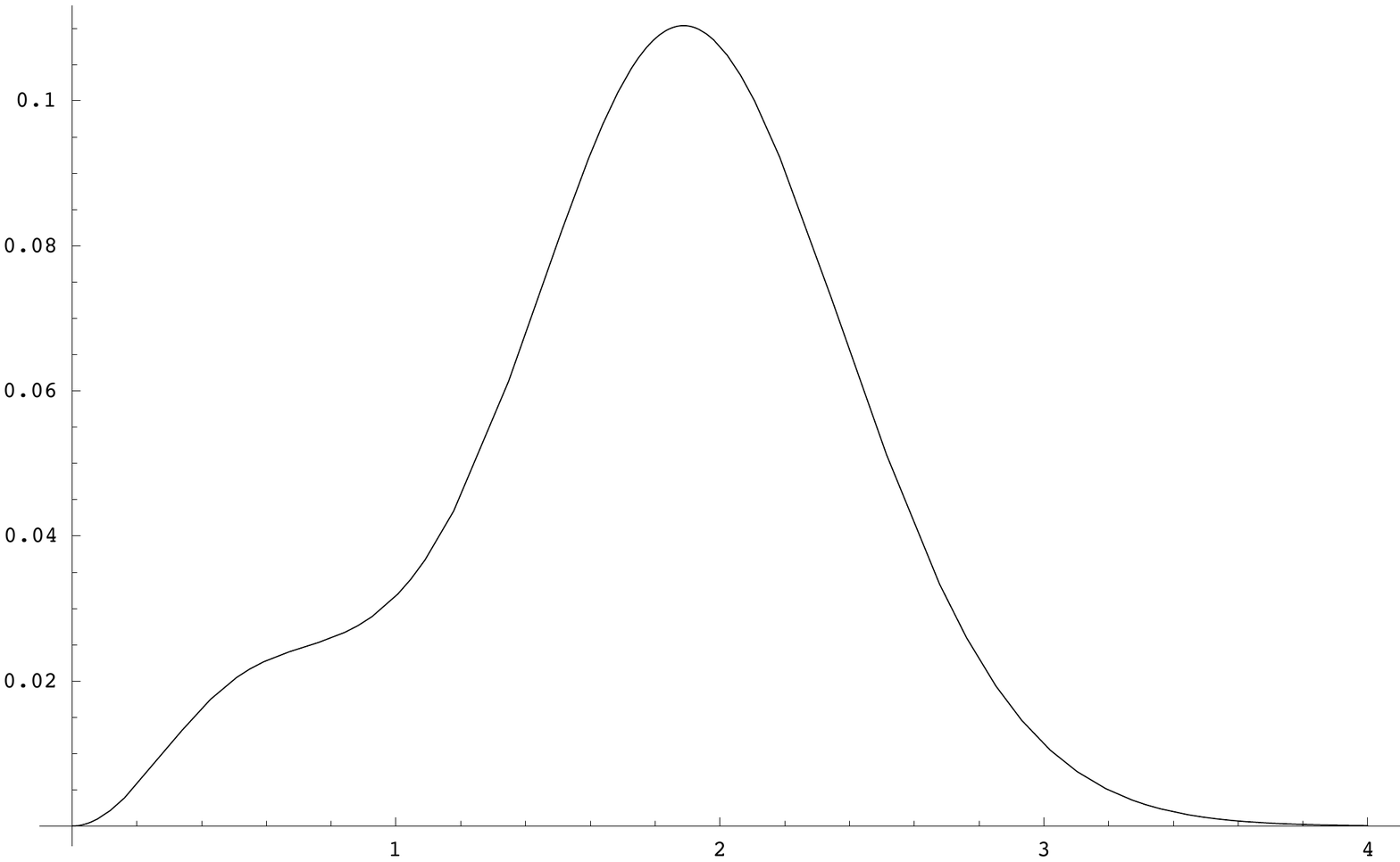} }
\subfigure[$ \frac{1}{2 \pi}  e^{- 3
\varepsilon^2} \varepsilon^2 (1+ e^{\varepsilon^2} (\varepsilon^2-1))^2 $]  {\includegraphics[width=5cm, height=3.5cm]{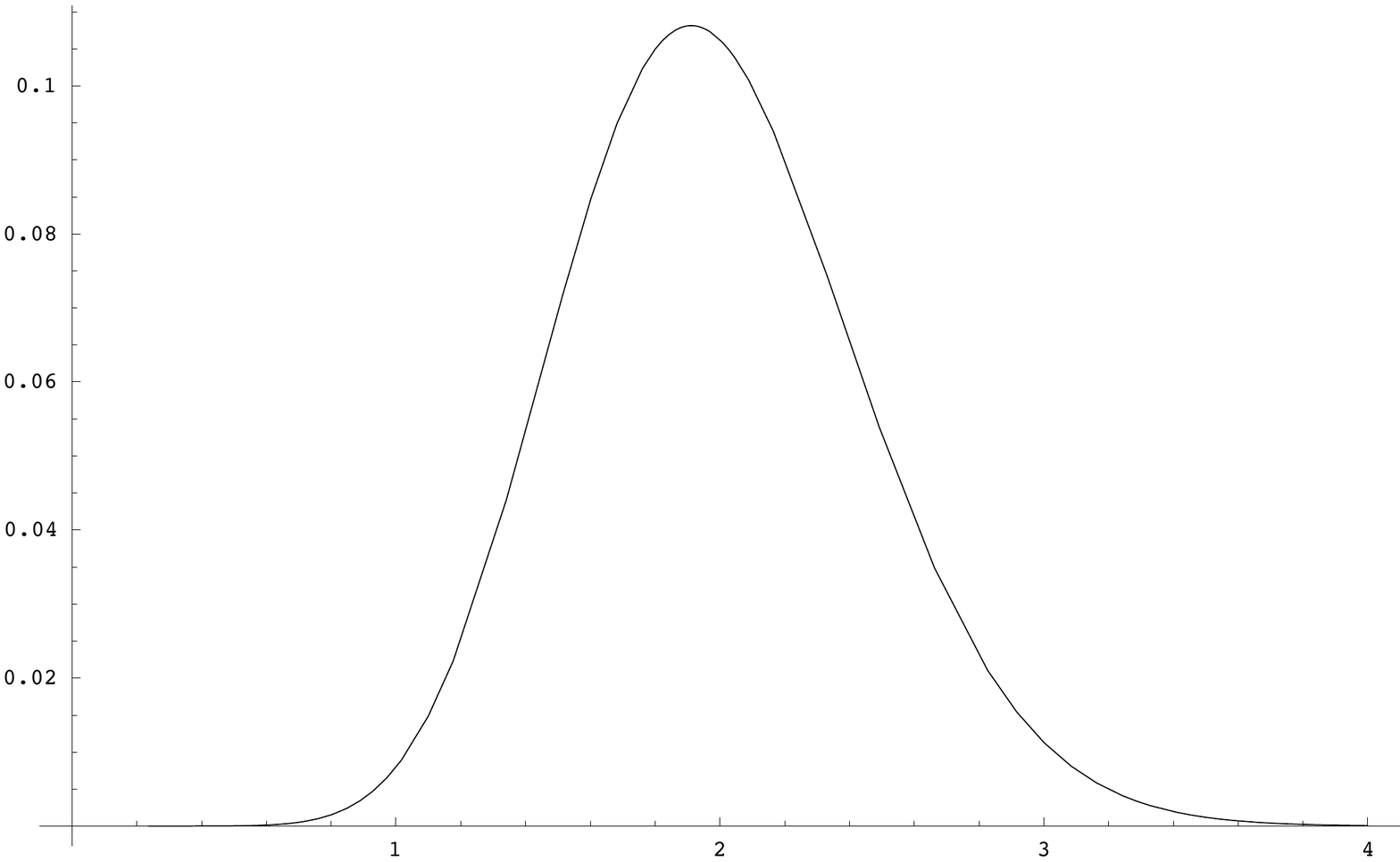} }
\subfigure[$
\frac{1}{2 \pi}  e^{- 3 \varepsilon^2} \varepsilon^2 $]  {\includegraphics[width=5cm, height=3.5cm]{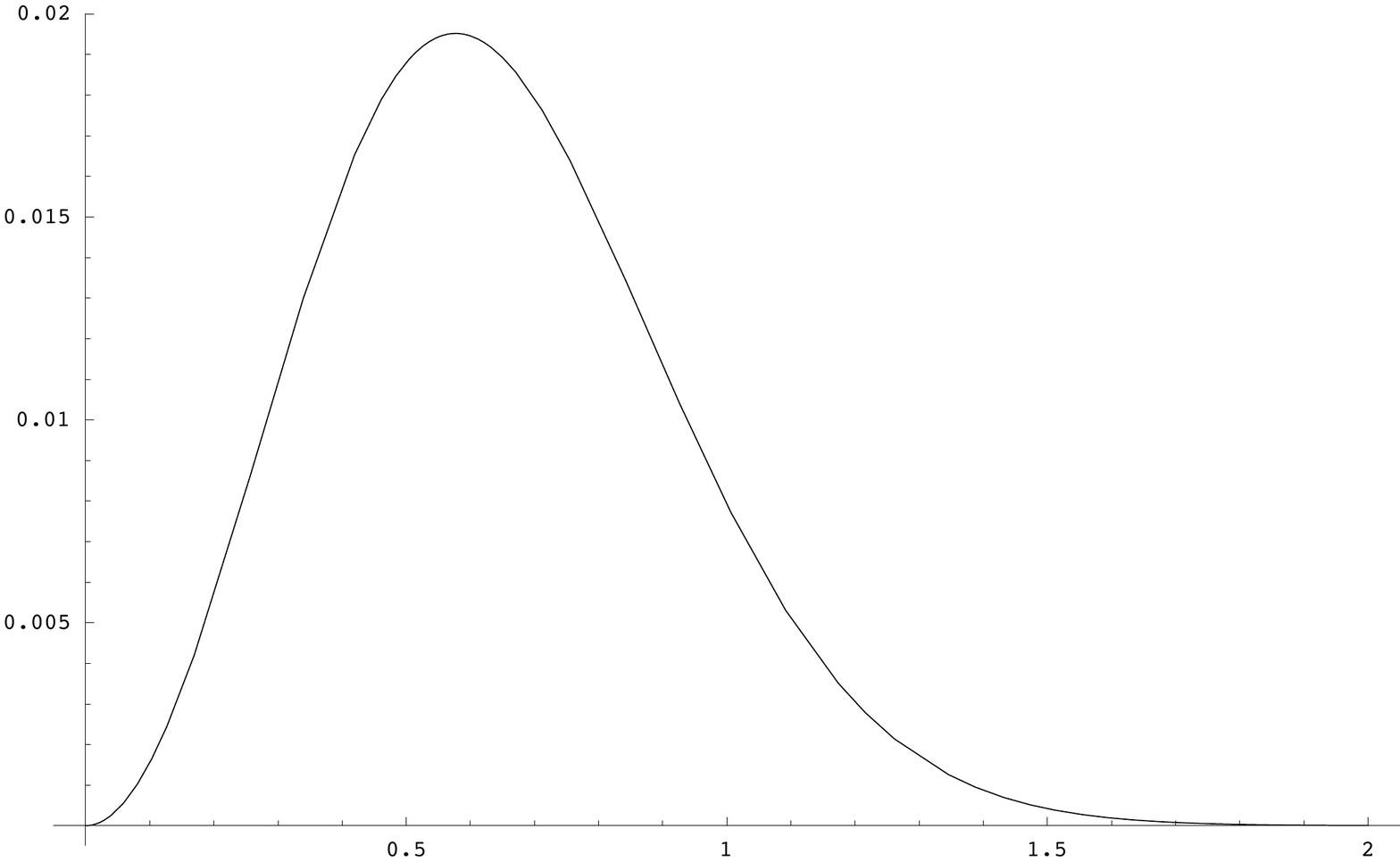} }
\caption{$\frac{\text{Var}[\mathcal{N}^a(f_{\ell };[-\varepsilon ,
\varepsilon ])]}{\ell ^{3}}$ for critical (a), extrema (b) and
saddles (c).} \label{5}
\end{figure}

We note also that for some specifically chosen (but fixed)
intervals $I\subseteq \R$ with $\nu^a(I)=0$, such as, for example
$I=\R$ (for the latter case the unrestricted total number of
critical points, extrema or saddles is counted), the order of
magnitude of the variance is lower than $\ell^{3}$. In this case
Theorem \ref{th_variance copy(1)} reads:
\begin{equation*}
\text{Var}(\mathcal{N}_{\mathbb{R}}^{c}(f_{\ell })),
\text{Var}(\mathcal{N}_{\mathbb{R}}^{e}(f_{\ell })) ,
\text{Var}(\mathcal{N}_{\mathbb{R}}^{s}(f_{\ell }))
=O(\ell^{\frac{5}{2}}).
\end{equation*}
as $\ell \rightarrow\infty$.
It seems though that by our present methods we may sharpen the latter bound to the next term of order $O(\ell^{2}\log{\ell})$, and,
unless further cancellation occurs, it may be the true asymptotic behaviour of each of the three quantities above.
However, in a recent numerical simulation by D. Belyaev \cite{Belyaev private} the observed fluctuations were far too small
for the latter to hold.

\subsection{Overview of the proof}

Our proof below is technically demanding,  and we present here its main
conceptual steps for the variance result (Theorem \ref{th_variance copy(1)}).
Our argument is based on a suitably modified version of the Kac-Rice formula
for the number of zeroes of the gradient of $f_{\ell }.$ The first
technical difficulty is related to the fact that the $6$-dimensional vector $
(f_{\ell }(x),\nabla f_{\ell }(x),\nabla ^{2}f_{\ell }(x))$ is always
degenerate, as the level field $f_{\ell }$ is a linear combination of
gradient and second order derivatives. However, this issue is relatively easily
mended by reducing the dimension of the problem to take this
degeneracy into account.

A much trickier issue arises when considering the two-point correlation
function needed for the evaluation of the variance. Here we have to cope
with the $10$-dimensional Gaussian random vectors of the form
\begin{equation*}
(\nabla f_{\ell }(x), \nabla f_{\ell }(y),\nabla ^{2}f_{\ell }(x),\nabla
^{2}f_{\ell }(y)), \;\; (x,y)\in {\cal S}^2,
\end{equation*}%
imposing suitable conditions to ensure that $(f_{\ell }(x),f_{\ell }(y))\in
I$. {\em A priori} there is no certainty that this random
vector is nondegenerate, a condition that guarantees the applicability of the standard
Kac-Rice. Our basic idea is to split the range ${\cal S}^2 $ of the integration of the Kac-Rice integral
into two parts: the short range regime $d(x,y)<C/\ell $, $d(x,y)=\arccos (\left\langle x,y\right\rangle )$ denoting the usual
spherical distance and $C$ a sufficiently big positive constant, and the long range regime $d(x,y)>C/\ell$. In the short range regime the
Kac-Rice formula holds only approximately, and we can prove by a
partitioning argument inspired from \cite{rudnickwigman} that the corresponding
contribution is of order $O(\ell^{2})$. The proof of the latter requires a
precise Taylor analysis of the behaviour of Legendre polynomials and their
derivatives around the origin, and related analytic functions.

The main term comes from the long range regime.
Here the asymptotic analysis is based on the
properties of multivariate conditional Gaussian variables, and an
asymptotic study of the tail decay of the Legendre
polynomials and their derivatives. In this regime, Kac-Rice formula holds exactly and
we shall exploit the fact that a Gaussian expectation is an analytic function with respect to
the parameters of the corresponding covariance matrix outside its singularities. It is then possible to compute the Taylor expansion of these expected values around the origin with respect to the vanishing entries of the covariance matrix; a small finite number of these
(depending on the interval $I$) make an asymptotically significant contribution to the variance, whereas the rest are negligible.

\subsection{Background and motivation}

\subsubsection{Cosmology and CMB}

Our main motivation for this paper is given by cosmological and astrophysical applications.
Indeed, it is well-known that random spherical
harmonics are the Fourier components of square integrable
isotropic fields on the sphere, i.e., for every centred Gaussian
spherical random field $f(x)$ the following spectral
representation holds \cite{MaPeCUP}:
\[
f(x)=\sum_{\ell=1}^{\infty}f_{\ell}(x)=\sum_{\ell=1}^{\infty}\sum_{m=-\ell}^{\ell}\sqrt{C_{\ell}}\, a_{\ell m} \, Y_{\ell m}(x),
\]
where equality holds in the $L^2$ sense and the sequence
$\{C_{\ell}\}_{\ell=1,2,...}$ denotes the so-called angular power
spectrum, which fully characterizes the dependence structure of
${f(x)}$. The analysis of spherical random fields is now at the
heart of observational cosmology, for instance for experiments
handling Cosmic Microwave Background radiation data, see e.g.,
\cite{planck} and \cite{cobe}. In summary, we can represent CMB
observations as a realization of the isotropic, Gaussian random
function, which we denote by $\hat{f}(x)$; realizations of the
random spherical harmonic components $\hat{f}_{\ell}(x)$ are then
obtained by standard Fourier analysis, i.e.
\begin{equation}
\hat{f}_{\ell}(x):=\sum_{m=-\ell}^{\ell}\hat{a}_{\ell m} Y_{\ell
m}(x), \;\;\;  \hat{a}_{\ell
m}:=\int_{{\cal S}^2 }\hat{f}(x)\bar{Y}_{\ell m}(x)dx,
\label{inverseFourier}
\end{equation}
the bar denoting as usual complex conjugation. It is to be noted
that in many experimental circumstances the realizations of these
random fields are observed only on subsets of the sphere, and this
can make the inverse Fourier transform in \eqref{inverseFourier}
unfeasible: however, very recently some more sophisticated
statistical techniques have indeed led to the reconstruction of full sky
data maps, see \cite{starck2014}, and in this setting the empirical
derivation of $\hat{f}_{\ell}$ has become possible. A natural
question is whether these observed CMB maps are indeed consistent
with the starting assumptions of Gaussianity and isotropy;
departures from these assumptions could signal either spurious
features introduced by the algorithms to produce the maps, or
physically motivated deviations from standard cosmological models.
Examples of the former are, for instance, astrophysical components
which have not been properly removed from CMB maps, such as
so-called point-sources (galaxies and other astrophysical
objects unrelated to CMB).

Our results can be exploited in this setting by means of the
implementation of a number of Gaussianity and isotropy tests. For
instance, it is possible to compare the actual number of maxima
above a given threshold $u$ for an observed component
$\hat{f}_{\ell}$ with its expected value and standard deviation
which we reported in the previous subsection; i.e., for any given threshold value $u$, we may construct statistics such as
\begin{align*}
Z_{\ell}(u)=\frac{{\cal N}_{\ell}(f_{\ell}; [u,\infty))-\mathbb{E}[{\cal N}_{\ell}(f_{\ell}; [u,\infty))]}{\sqrt{\text{Var}({\cal N}_{\ell}(f_{\ell}; [u,\infty)))}}.
\end{align*}
By the results on expected values and variances provided in this paper, the previous statistic can be computed explicitly for any value of $u$. It is natural to expect that convergence to a standard Gaussian limit will hold in the high-energy regime, under the null assumption that $\{f_{\ell}\}$ is a pure Gaussian field; on the contrary, nonGaussian features such as the previously mentioned point sources will show up as a higher number of observed maxima than predicted under Gaussian circumstances; therefore high values of $Z_{\ell}$ will  signal the presence of spurious components. Extensions to cover joint tests on multiple threshold $u_1,\dots, u_p$ are straightforward. Of course, the actual implementation of these procedures on real data will require further work, which we delay to future research (see \cite{fantaye}).

\subsubsection{Nodal domains of Laplace eigenfunctions}

The {\em nodal components} of $f_{\ell}$ are the connected components of the {\em nodal line}
$f_{\ell}^{-1}(0)$, and the nodal domains of $f_{\ell}$ are the connected components of its complement
${\cal S}^2\setminus f_{\ell}^{-1}(0)$. It was asserted that the nodal structure of $f_{\ell}$
(or Laplace eigenfunctions, random or deterministic, on generic surfaces)
could be modelled ~\cite{Bogomolny-Schmit} by a bond percolation-like model which could
be explained as follows.

Let $\mathcal{L}^{+}$ and $\mathcal{L}^{-}$ be the (random) sets of
the local minima and maxima of $f_{\ell}$ respectively.
Under the percolation model $\mathcal{L}^{+}$ and $\mathcal{L}^{-}$ are thought of as
mutually dual square grids with $\approx \ell^{2}= \ell \times \ell$ points (`sites'), each
representing a maximum or minimum respectively.
Each pair of adjacent (w.r.t. the grid) sites are connected by an `open' bond
in $\mathcal{L}^{+}$ with probability ${1}/{2}$ independent of other bonds,
whence the dual bond in $\mathcal{L}^{-}$ is `closed' and vice
versa. One can then study some aspects of the percolation
process described, such as the number of clusters of $\mathcal{L}^{\pm}$
(representing the number of the nodal domains
of $f_{\ell}$), their area distribution etc.
It is important that there are only few
low-lying extrema (see Proposition \ref{expectation copy(1)} and
Figure \ref{fig11} (a)), corresponding to nodal domains
unstable under small perturbations of $f_{\ell}$.

Recent numerical studies
revealed small but significant deviation from the percolation model
(e.g. ~\cite{Belyaev Kereta}); this deviation may be attributed ~\cite{Belyaev private} to the unsubstantiated
rigidity assumption on the sites positions along $\mathcal{L}^{\pm}$, and it was suggested
~\cite{Belyaev private} that the rigidity of the sets $\mathcal{L}^{\pm}$ should be relaxed.
Theorem ~\ref{th_variance copy(1)} then may be used to determine the measure of flexibility or rigidity
expected from the sets $\mathcal{L}^{\pm}$ to satisfy in a more sophisticated percolation-like model for
the nodal structure of Laplace eigenfunctions.

\subsubsection{Persistence barcodes}

Our results may also find natural applications in the rapidly
growing areas of applied algebraic topology and topological data
analysis, and in particular for the characterization of the
stochastic properties of \emph{persistence barcodes} and
\emph{persistence diagrams} (see e.g. \cite{carlsson} or
\cite{adlerstflour}) for excursion sets of random spherical
harmonics. Write
\[
A_u(f_{\ell})=\left \{x \in \mathcal{S}^2:f_{\ell}(x)>u \right \}
\]
for the excursion sets of $f_{\ell}$, and let us recall that a
\emph{barcode} for $A_u(f_{\ell})$ is a pair of graphs, each
corresponding to one of the two homology groups for the
corresponding excursion sets, $H_k( A_u(f_{\ell}))$ where $k=0,1$.
Loosely speaking, $H_0(A_u(f_{\ell}))$ is generated by the elements
that represent the connected components of the excursion sets, and
$H_1(A_u(f_{\ell}))$ is generated by elements that represent
1-dimensional "loops". Each of the two graphs in this barcode is a
collection of bars; a bar in the graph representing $H_0$,
starting at threshold $u_{start}$ and ending at threshold
$u_{end}$, corresponds to a generator of $H_0(A_u)$ that
"appeared" at level $u_{start}$ and "disappeared" at level
$u_{end}$; if two connected components of $A_u(f_{\ell})$ merge,
then only one of the two corresponding bars remains. An analogous meaning can be
given to the bars in the second graph, see
\cite{carlsson},\cite{adlerstflour} for more details and
discussion. Hence the number of bars in graph $k$ at any level $u$
equals the \emph{Betti number} $\beta_k(A_u(f_{\ell}))$ for the
excursion region corresponding to this threshold.

A \emph{persistence diagram} for $H_k$, $k=0,1$ is a set of pairs
$(u_{end}(k),u_{start}(k))$ corresponding to the starting and
ending points of these bars. In \cite{adlerstflour}, p. 107--108
it is explained that the starting points of the $H_0$ bars
correspond to the heights of local maxima, whereas the ending
points of the $H_1$ bars correspond to the heights of the local
minima. Hence our results in this paper establish the density of
$u_{end}(1)$ and $u_{start}(0)$ in the case of random spherical
harmonics; the shape of our curves can be compared to the
simulated results reported in \cite{adlerstflour}, figure 6.2.2,
which represent persistence diagrams of excursion sets from a
Gaussian isotropic random field on the unit square.

\subsection{Plan of the paper}

The plan of this paper is as follows: in section \ref{due} we
establish the asymptotic density of critical points, extrema and
saddles; in section \ref{tre} we discuss the approximate Kac-Rice
formula instrumental for establishing our results; section
\ref{asymptoticsection} discusses the derivation of the two-point
correlation function while section \ref{cinque} is devoted to the
proofs for the expressions of the variances reported in the
introduction. Finally, section \ref{sei} provides the convergence
results for the empirical measures of critical points and extrema.
A number of auxiliary results of more technical nature
facilitating the computations of covariance matrices and
asymptotics for Legendre polynomials are collected in the
appendix.

\subsection{Acknowledgements}

The research leading to these results has received funding from the
European Research Council under the European Union's Seventh
Framework Programme (FP7/2007-2013) / ERC grant agreements
n$^{\text{o}}$ 277742 (D.M.) and n$^{\text{o}}$ 335141 (I.W.). I.W. was also
partially supported by the EPSRC grant under the First Grant scheme
(EP/J004529/1). We are grateful to J.-M. Azais and Z. Rudnick for
some useful discussion and suggestions; the usual disclaimers apply.

\section{Asymptotic density of critical values} \label{due}

\subsection{On the Kac-Rice formula for the expected number of critical values}

\label{sec:Kac-Rice expected}

Let $\mathcal{E}\subseteq \R^{n}$ be a nice Euclidian domain, and
$g:\mathcal{E}\rightarrow \R^{n}$ a centred Gaussian random field,
a.s. smooth. The set $g^{-1}(0)$ a.s. consists of finitely many
zeros of $g$. One defines the zero density (also referred to as
``first intensity") $K_{1}=K_{1;g}:\mathcal{E}\rightarrow\R$ of
$g$ as
\begin{equation*}
K_{1}(x) = \phi_{g(x)}(\mathbf{0})\cdot \E[ |\det J_{g}(x) |  \big| g(x)=\mathbf{0}  ],
\end{equation*}
where $\phi_{g(x)}$ is the (Gaussian) probability density of $g(x)\in \R^{n}$ and $J_{g}(x)$ is the
Jacobian matrix of $g$ at $x$.
Under the assumption that for all $x\in \mathcal{E}$ the distribution of $(g(x),J_{g}(x)) $
is non-degenerate in $\R^{n}\times \R^{n(n+1)/2}$ (i.e. is not concentrated in a proper subspace
of the latter), the expected number of zeros of $g$ on $\mathcal{E}$ is given by ~\cite{adlertaylor}, Theorem 11.2.1
\begin{equation*}
\E[\#g^{-1}(\mathbf{0})] = \int\limits_{\mathcal{E}} K_{1}(x)dx.
\end{equation*}

To apply the latter formula in our case we will work with spherical coordinates on $\mathcal{S}^{2}$ and
use an explicit orthonormal frame (see section \ref{sec:Kac-Rice coord}); counting the critical points of
$f=f_{\ell}$ is equivalent to counting the zeros of the map $[0,\pi]^{2}\rightarrow\R^{2}$ given by $x\mapsto \nabla f(x) =(f_{1}(x),f_{2}(x))$,
where $(f_{1},f_{2})$ are the partial derivatives of $f$. Here we have
\begin{equation*}
K_{1}(x)=K_{1;l}(x) = \phi_{\nabla f(x)}(0,0)\cdot \E[ |\det H_{f}(x) |  \big| \nabla f(x)=0  ],
\end{equation*}
where $H_{f}(x)$ is the Hessian matrix of $f$ at $x$. An explicit computation of the covariance matrix of
$(\nabla f(x),H(x))\in \R^{5}$ shows that for $\ell$ sufficiently big the distribution
of the latter is non-degenerate so that ~\cite{adlertaylor}, Theorem 11.2.1 yields that the expected total number
of critical points of $f$ is
\begin{equation*}
\E[\mathcal{N}_{\R}^{c}(f)] = \int\limits_{\mathcal{S}^{2}} K_{1}(x)dx;
\end{equation*}
the isotropy of $f_{\ell}$ implies that $K_{1}(x)\equiv K_{1}$ depends on $\ell$ only, independent of $x$.

For counting the number of critical points with corresponding value lying in $I\subseteq \R$, we
need to modify $K_{1}(x)$ so that this time we define
\begin{equation}
\label{eq:K1 xI def}
K_{1}(x;I) = \phi_{\nabla f (x)}(0,0)\cdot \E[ |\det H(x) |\cdot  1\hspace{-0.27em}\mbox{\rm l}_{I}(f(x)) \big| \nabla f(x)=0],
\end{equation}
where $1\hspace{-0.27em}\mbox{\rm l}_{I}$ is the characteristic function of $I$ on $\R$. In this case ~\cite{adlertaylor}, Theorem 11.2.1 yields
\begin{equation*}
\E[\mathcal{N}_{I}^{c}(f)] = \int\limits_{\mathcal{S}^{2}} K_{1;I}(x)dx,
\end{equation*}
again, under the non-degeneracy assumption on $(\nabla f(x),H(x))$. Note that in our case there is
a linear dependency between the value $f(x)$, involved in the definition \eqref{eq:K1 xI def} of $K_{1}(x;I)$,
and the Hessian $H(x)$ (see \eqref{linear_dp} below); nevertheless
the non-degeneracy of $(\nabla f(x),H(x))$ is sufficient for an application of ~\cite{adlertaylor}, Theorem 11.2.1;
the linear dependency \eqref{linear_dp} allows us to reduce the dimension of the Gaussian distribution involved
in the evaluation of $K_{1}(x,I)$ from $6$ to $5$, a considerable technical simplification. It is easy to
adapt the same approach to separate the critical points into extrema and saddles (see section \ref{sec:Kac-Rice coord},
towards the end).

\subsection{Application of Kac-Rice in coordinates}

\label{sec:Kac-Rice coord}

In this section we formulate the (precise) Kac-Rice formula to derive the expected value of the number of critical points,
saddles and extrema with values in a given interval $I\subset \mathbb{R}$. To
this aim, let us first introduce some notation.

\noindent Given $x\in {\cal S}^2$, consider
a local orthogonal frame $\{e_{1}^{x},e_{2}^{x}\}$ defined in some neighbourhood of $x$, such that, for any
regular function $h:{\cal S}^2\rightarrow \mathbb{R}$, we have $%
e_{1}^{x}e_{2}^{x}h=e_{2}^{x}e_{1}^{x}h$. Via an isometry, for every $x\in
{\cal S}^2$, it is possible to obtain a (local) identification
\begin{equation*}
T_{x}({\cal S}^2)\cong \mathbb{R}^{2},
\end{equation*}%
so that we do not have to work with probability densities defined on the
tangent planes $T_{x}({\cal S}^2)$ which depend on the point $x \in {\cal S}^2$; in particular,  we shall work with the
orthogonal frame
\begin{equation*}
\Big\{e_{1}^{x}=\frac{\partial }{\partial \theta _{x}},\;e_{2}^{x}=\frac{%
\partial }{\partial \varphi _{x}}\Big\}.
\end{equation*}%

\noindent Since the $f_{\ell }$ are eigenfunctions of the
spherical Laplacian, we have that the value of the spherical harmonic at
every fixed point $x\in {\cal S}^2$ is a linear
combination of its first and second order derivatives at $x$. If
the point $x\in {\cal S}^2$ is also a critical point for $ f_{\ell
}$ it follows that the value of the spherical harmonic at $x$ is a
linear combination of its second order derivatives, i.e.,
\begin{equation} \label{linear_dp}
e_{1}^{x}e_{1}^{x}f_{\ell }(x)+e_{2}^{x}e_{2}^{x}f_{\ell }(x)=-\ell (\ell
+1)f_{\ell }(x).
\end{equation}
For $x\in {\cal S}^2$ we define the random vectors:
\begin{equation*}
Z_{\ell ;x}=(\nabla f_{\ell }(x),\nabla ^{2}f_{\ell }(x)),
\end{equation*}%
where
\begin{equation*}
\nabla f_{\ell }(x)=(e_{1}^{x}f_{\ell }(x),e_{2}^{x}f_{\ell }(x)),
\end{equation*}%
and $\nabla ^{2}f_{\ell }$ is defined as
\begin{equation*}
\nabla ^{2}f_{\ell }(x)=(e_{1}^{x}e_{1}^{x}f_{\ell
}(x),e_{1}^{x}e_{2}^{x}f_{\ell }(x),e_{2}^{x}e_{2}^{x}f_{\ell }(x)).
\end{equation*}
We denote by
\begin{equation*}
D_{\ell; x}(\xi _{x,1},\xi _{x,2},\zeta _{x,1},\zeta _{x,2},\zeta _{x,3}),
\end{equation*}%
the probability density functions of $Z_{\ell, x}$; the vectors
$Z_{\ell ,x}$ are centered Gaussian in $\mathbb{R}^{5}$. By the
isotropic property of $f_{\ell}$ it is possible and indeed convenient
to perform our  computations along a specific geodesic; we
constrain ourselves to the equatorial line
$\theta_x=\frac{\pi}{2}$. With this choice the $5\times 5$
covariance matrix $\sigma _{\ell }$ of $Z_{\ell ;x}$ is (see the
computations in Appendix \ref{cov_matx})
\begin{equation*}
\sigma _{\ell }=\left(
\begin{array}{cc}
a_{\ell } & b_{\ell } \\
b_{\ell }^{t} & c_{\ell }%
\end{array}%
\right),
\end{equation*}%
where
\begin{equation*}
a_{\ell }=\left(
\begin{array}{cc}
\frac{\lambda _{\ell }}{2} & 0 \\
0 & \frac{\lambda _{\ell }}{2}%
\end{array}%
\right), \hspace{1cm}
b_{\ell }=\left(
\begin{array}{ccc}
0 & 0 & 0 \\
0 & 0 & 0%
\end{array}%
\right),
\end{equation*}
and
\begin{align*}
c_{\ell }=\left(
\begin{array}{ccc}
\frac{\lambda _{\ell }}{8}[3\lambda _{\ell }-2] & 0 & \frac{\lambda _{\ell }%
}{8}[\lambda _{\ell }+2] \\
0 & \frac{\lambda _{\ell }}{8}[\lambda _{\ell }-2] & 0 \\
\frac{\lambda _{\ell }}{8}[\lambda _{\ell }+2] & 0 & \frac{\lambda _{\ell }}{%
8}[3\lambda _{\ell }-2]%
\end{array}%
\right)
=\frac{\lambda _{\ell }^{2}}{8}\left(
\begin{array}{ccc}
3-\frac{2}{\lambda _{\ell }} & 0 & 1+\frac{2}{\lambda _{\ell }} \\
0 & 1-\frac{2}{\lambda _{\ell }} & 0 \\
1+\frac{2}{\lambda _{\ell }} & 0 & 3-\frac{2}{\lambda _{\ell }}%
\end{array}%
\right),
\end{align*}
where $\lambda_\ell=\ell(\ell+1)$. From the isotropy
the following result follows at once: \newline

\begin{lemma}[Kac-Rice formula] \label{K-R_Exp}
The expected value of $\mathcal{N}_{I}^{c}(f_{\ell })$ is given by
\begin{equation*}
\mathbb{E}[\mathcal{N}_{I}^{c}(f_{\ell })]=4\pi K_{1,\ell }(I),
\end{equation*}
where
\begin{equation*}
K_{1,\ell }(I)=\int_{\mathbb{R}^{3}}|\zeta _{x,1}\zeta _{x,3}-\zeta
_{x,2}^{2}|1\hspace{-0.27em}\mbox{\rm l}_{\big\{\frac{\zeta _{x,1}+\zeta
_{x,3}}{\ell (\ell +1)}\in I\big\}}D_{\ell ;x}(0,0,\zeta _{x,1},\zeta
_{x,2},\zeta _{x,3})\;d\zeta _{x,1}\;d\zeta _{x,2}\;d\zeta _{x,3}.
\end{equation*}
\end{lemma}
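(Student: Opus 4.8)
The plan is to derive the stated formula directly from the general Kac-Rice first-intensity formula, as recalled in Section \ref{sec:Kac-Rice expected}, applied to the Gaussian map $x \mapsto \nabla f_{\ell}(x)$ from (a coordinate chart on) $\mathcal{S}^2$ to $\R^2$, and then to exploit isotropy to collapse the spatial integral to a single point. First I would record that, by the explicit computation of the covariance matrix $\sigma_{\ell}$ displayed just above the statement, for $\ell$ large enough the Gaussian vector $Z_{\ell;x} = (\nabla f_{\ell}(x), \nabla^2 f_{\ell}(x)) \in \R^5$ is nondegenerate: indeed $a_{\ell} = \frac{\lambda_{\ell}}{2} I_2$ is invertible, and $\det c_{\ell} = \frac{\lambda_{\ell}^2}{8}(\lambda_{\ell}-2)\cdot\frac{\lambda_{\ell}}{8}(3\lambda_{\ell}-2)\cdot\frac{\lambda_{\ell}}{8}\cdot\big(\text{positive}\big) \neq 0$ once $\lambda_{\ell} > 2$, while $b_{\ell} = 0$ means $\sigma_{\ell}$ is block diagonal, hence invertible whenever both blocks are. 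This guarantees the applicability of \cite{adlertaylor}, Theorem 11.2.1, to the field $\nabla f_{\ell}$.

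Next I would unwind the definition \eqref{eq:K1 xI def} of the modified first intensity. Conditioning on $\nabla f_{\ell}(x) = 0$, the Hessian $H_{f_{\ell}}(x)$ has entries $(e_1^x e_1^x f_{\ell}, e_1^x e_2^x f_{\ell}, e_2^x e_2^x f_{\ell})(x) = (\zeta_{x,1},\zeta_{x,2},\zeta_{x,3})$, so $\det H_{f_{\ell}}(x) = \zeta_{x,1}\zeta_{x,3} - \zeta_{x,2}^2$; moreover, by the eigenfunction identity \eqref{linear_dp}, on the event $\nabla f_{\ell}(x)=0$ one has $f_{\ell}(x) = -\frac{1}{\ell(\ell+1)}(\zeta_{x,1}+\zeta_{x,3})$, so the indicator $1\hspace{-0.27em}\mbox{\rm l}_I(f_{\ell}(x))$ becomes $1\hspace{-0.27em}\mbox{\rm l}_{\{(\zeta_{x,1}+\zeta_{x,3})/(\ell(\ell+1)) \in I\}}$ (using that $I$ is symmetric-agnostic, or just absorbing the sign of $-1/\lambda_{\ell}$ into the reflection of $I$; since $\pi_1^a$ are even this is immaterial, but more simply: replace $I$ by $-I$ or note the density is a function of the value only). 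Writing the conditional expectation in \eqref{eq:K1 xI def} as an integral against the conditional density of $\nabla^2 f_{\ell}(x)$ given $\nabla f_{\ell}(x)=0$, and multiplying back by $\phi_{\nabla f_{\ell}(x)}(0,0)$, the product collapses to the joint density $D_{\ell;x}$ evaluated at $(0,0,\zeta_{x,1},\zeta_{x,2},\zeta_{x,3})$ integrated over $(\zeta_{x,1},\zeta_{x,2},\zeta_{x,3}) \in \R^3$; this is exactly $K_{1,\ell}(I)$ as written. Here the point $x$ may be taken on the equator $\theta_x = \pi/2$, where $D_{\ell;x}$ is the centered Gaussian density with covariance $\sigma_{\ell}$ computed above.

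Finally, isotropy gives the constant $4\pi$. Since the law of $f_{\ell}$ is $SO(3)$-invariant, the first intensity $K_{1;I}(x)$ of the critical-value point process is independent of $x \in \mathcal{S}^2$; more precisely, for any rotation $g$, $K_{1;I}(gx)$ is computed from the law of $(\nabla f_{\ell}^g, \nabla^2 f_{\ell}^g)$ in a frame transported by $g$, which coincides with that of $(\nabla f_{\ell}, \nabla^2 f_{\ell})$ at $x$ in the original frame. Hence $\E[\mathcal{N}_I^c(f_{\ell})] = \int_{\mathcal{S}^2} K_{1;I}(x)\,dx = K_{1,\ell}(I)\cdot \mathrm{Area}(\mathcal{S}^2) = 4\pi K_{1,\ell}(I)$. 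The only genuine subtlety — and the step I would treat most carefully — is the passage from $\E[\,|\det H|\,1\hspace{-0.27em}\mbox{\rm l}_I(f_{\ell}) \mid \nabla f_{\ell}=0\,]$ times $\phi_{\nabla f_{\ell}}(0)$ to the plain $5$-dimensional integral of $|\zeta_{x,1}\zeta_{x,3}-\zeta_{x,2}^2|\,1\hspace{-0.27em}\mbox{\rm l}_{\{\cdots\}}$ against $D_{\ell;x}(0,0,\cdot,\cdot,\cdot)$: one must check that the linear dependency \eqref{linear_dp} relating $f_{\ell}(x)$ to the Hessian does not obstruct the standard Kac-Rice argument, i.e. that it is the nondegeneracy of the $5$-dimensional vector $(\nabla f_{\ell}, \nabla^2 f_{\ell})$ — not of the degenerate $6$-dimensional $(f_{\ell}, \nabla f_{\ell}, \nabla^2 f_{\ell})$ — that is required, exactly as remarked in Section \ref{sec:Kac-Rice expected}. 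Once this is in place, the identity $\phi_{\nabla f_{\ell}(x)}(0)\cdot(\text{conditional density of }\nabla^2 f_{\ell}(x)\text{ given }\nabla f_{\ell}(x)=0) = D_{\ell;x}(0,0,\cdot)$ is just the definition of conditional density, and the proof concludes.
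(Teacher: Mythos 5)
Your proof is correct and follows essentially the same route as the paper: use \eqref{linear_dp} to rewrite the indicator $1\hspace{-0.27em}\mbox{\rm l}_I(f_\ell(x))$ in terms of the Hessian trace, apply Theorem 11.2.1 of \cite{adlertaylor} under the nondegeneracy of the $5$-dimensional vector $(\nabla f_\ell,\nabla^2 f_\ell)$, and invoke isotropy to collapse the spatial integral to $4\pi K_{1,\ell}(I)$. The one step you handle a bit loosely --- passing from the indicator on $-(\zeta_{x,1}+\zeta_{x,3})/\lambda_\ell$ to the one on $(\zeta_{x,1}+\zeta_{x,3})/\lambda_\ell$ --- is most cleanly justified by the change of variables $\zeta\mapsto-\zeta$ together with the evenness of the centered Gaussian density $D_{\ell;x}$ and of $|\zeta_{x,1}\zeta_{x,3}-\zeta_{x,2}^2|$, rather than by appealing to symmetry of $I$ or of the limiting densities $\pi_1^a$ (the latter would be circular).
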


\begin{proof} First, from \eqref{linear_dp}, we have
\begin{align*}
\mathcal{N}_{I}^{c}(f_{\ell })&=\#\{x\in {\cal S}^2: f_{\ell }(x)\in I, \nabla f_{\ell
}(x)=0\}\\
&=\#\Big\{x\in {\cal S}^2: -\frac{e_1^x e_1^xf_{\ell }(x)+e_2^x e_2^x f_{\ell }(x) }{\ell (\ell+1)} \in I, \nabla f_{\ell}(x)=0\Big\}.
\end{align*}
We can now apply Theorem 11.2.1 in \cite{adlertaylor}, and
get:
\begin{align*}
\mathbb{E}[\mathcal{N}_{I}^{c}(f_{\ell })]&= \int_{{\cal S}^2} dx  \int_{\mathbb{R}^3} |\zeta_{x,1} \zeta_{x,3} -\zeta_{x,2}^2| \ind_{\big\{-\frac{\zeta_{x,1} + \zeta_{x,3} }{\ell (\ell+1)} \in I \big\}} D_{\ell;x} (0 ,0,\zeta_{x,1},\zeta_{x,2}, \zeta_{x,3})\; d \zeta_{x,1}\; d \zeta_{x,2} \; d \zeta_{x,3}\\
&= \int_{{\cal S}^2} dx  \int_{\mathbb{R}^3} |\zeta_{x,1} \zeta_{x,3} -\zeta_{x,2}^2| \ind_{\big\{\frac{\zeta_{x,1} + \zeta_{x,3} }{\ell (\ell+1)} \in I \big\}} D_{\ell;x} (0 ,0,\zeta_{x,1},\zeta_{x,2}, \zeta_{x,3})\; d \zeta_{x,1}\; d \zeta_{x,2} \; d \zeta_{x,3}.
\end{align*}
By the isotropic property the density
\begin{align*}
K_{1,\ell}(I)=\int_{\mathbb{R}^3} |\zeta_{x,1} \zeta_{x,3}
-\zeta_{x,2}^2| \ind_{\big\{\frac{\zeta_{x,1} + \zeta_{x,3} }{\ell
(\ell+1)} \in I \big\}} D_{\ell;x} (0 ,0,\zeta_{x,1},\zeta_{x,2},
\zeta_{x,3})d\zeta _{x,1}\;d\zeta _{x,2}\;d\zeta _{x,3}
\end{align*}
does not depend on $x$, and the result of the present lemma follows.
\end{proof}

\begin{remark} For the critical points and the saddles we have the analogous result
\begin{equation*}
\mathbb{E}[\mathcal{N}_{I}^{a}(f_{\ell })]=4\pi K^a_{1,\ell }(I),
\end{equation*}
for $a=e,s$, where, for example,
\begin{align*}
K_{1,\ell }^{e}(I) =\int_{\mathbb{R}^{3}}|\zeta _{x,1}\zeta _{x,3}-\zeta
_{x,2}^{2}|1\hspace{-0.27em}\mbox{\rm l}_{\big\{\frac{\zeta _{x,1}+\zeta
_{x,3}}{\ell (\ell +1)}\in I\big\}}1\hspace{-0.27em}\mbox{\rm l}_{\big\{%
\zeta _{x,1}\zeta _{x,3}-\zeta _{x,2}^{2}>0\big\}}  D_{\ell ;x}(0,0,\zeta _{x,1},\zeta _{x,2},\zeta _{x,3})\;d\zeta
_{x,1}\;d\zeta _{x,2}\;d\zeta _{x,3}.
\end{align*}

\end{remark}

\subsection{Asymptotic density of critical points}

We will now exploit the Kac-Rice formula and the degeneracy discussed above for
spherical harmonics to prove our first result on the expected
number of critical points and extrema of $f_{\ell }$ with values lying in an
interval $I\subseteq \mathbb{R}$.

\begin{lemma} \label{expectation}
For $\ell \rightarrow \infty$, we have
\begin{equation*}
\mathbb{E}[\mathcal{N}_{I}^{c}(f_{\ell })]=\frac{\ell ^{2}}{2}%
\int_{I}p_{1}^{c}(t)dt+O(1),
\end{equation*}%
where
\begin{equation*}
p_{1}^{c}(t)=\frac{1}{(2\pi )^{3/2}}\int_{\mathbb{R}^{2}}|z_{1}t\sqrt{8}%
-z_{1}^{2}-z_{2}^{2}|\exp \left\{-\frac{3}{2}t^{2}\right\} \exp \left\{-\frac{1}{2}%
(z_{1}^{2}+z_{2}^{2}-\sqrt{8}tz_{1})\right\} dz_{1}dz_{2}.
\end{equation*}
\end{lemma}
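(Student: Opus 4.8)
The plan is to combine the Kac--Rice identity of Lemma~\ref{K-R_Exp} with an explicit finite-dimensional Gaussian computation, exploiting the two structural features visible in $\sigma_\ell$. The cross-block $b_\ell$ vanishes, so $\nabla f_\ell(x)$ and $\nabla^2 f_\ell(x)$ are independent and the density factors, $D_{\ell;x}(0,0,\zeta_1,\zeta_2,\zeta_3)=\phi_{\nabla f_\ell(x)}(0,0)\cdot\phi_{c_\ell}(\zeta_1,\zeta_2,\zeta_3)$ with $\phi_{\nabla f_\ell(x)}(0,0)=1/(\pi\lambda_\ell)$ (since $\det a_\ell=\lambda_\ell^2/4$); moreover the Hessian covariance is $c_\ell=\tfrac{\lambda_\ell^2}{8}M_\ell$, where $M_\ell$ tends, at rate $O(1/\lambda_\ell)$, to the fixed positive-definite matrix $M_\infty$ with rows $(3,0,1),(0,1,0),(1,0,3)$. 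Feeding this into $K_{1,\ell}(I)$ and rescaling $\zeta_i=\tfrac{\lambda_\ell}{2\sqrt2}w_i$ (the Jacobian cancels $\sqrt{\det c_\ell}$, $|\zeta_1\zeta_3-\zeta_2^2|\mapsto\tfrac{\lambda_\ell^2}{8}|w_1w_3-w_2^2|$, and the constraint $\tfrac{\zeta_1+\zeta_3}{\lambda_\ell}\in I$ becomes $\tfrac{w_1+w_3}{2\sqrt2}\in I$), I obtain, using $\mathbb{E}[\mathcal{N}_I^c(f_\ell)]=4\pi K_{1,\ell}(I)$,
$$\mathbb{E}[\mathcal{N}_I^c(f_\ell)]=\frac{\lambda_\ell}{2}\int_{\R^3}|w_1w_3-w_2^2|\,\ind_{\{(w_1+w_3)/(2\sqrt2)\in I\}}\,\phi_{M_\ell}(w)\,dw.$$

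The heart of the argument is then a linear change of variables that detaches the value-constraint from everything else: set $t=\tfrac{w_1+w_3}{2\sqrt2}$, $z_1=w_1$, $z_2=w_2$ (Jacobian $2\sqrt2$), so $w_3=2\sqrt2\,t-z_1$. Under this substitution $|w_1w_3-w_2^2|=|z_1 t\sqrt8-z_1^2-z_2^2|$; the marginal of $t$ is exactly standard Gaussian (the variance of $w_1+w_3$ equals $8$ for every $\ell$); and the joint density of $(t,z_1,z_2)$, after absorbing the Jacobian, is $(2\pi)^{-3/2}\exp\{-\tfrac12(3t^2+z_1^2+z_2^2-\sqrt8\,tz_1)\}$ up to an $O(1/\lambda_\ell)$ perturbation of the covariance coming from $M_\ell-M_\infty$. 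For $M_\ell$ replaced by $M_\infty$ the inner $(z_1,z_2)$-integral is then exactly $p_1^c(t)$, so the leading term is $\tfrac{\ell^2}{2}\int_I p_1^c(t)\,dt$.

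What remains is the $O(1)$ remainder. A bare dominated-convergence estimate only yields $o(\ell^2)$, so instead I keep the $\varepsilon_\ell:=2/\lambda_\ell$ perturbation explicit: the factor $|z_1 t\sqrt8-z_1^2-z_2^2|$ carries no $\varepsilon_\ell$-dependence, while the Gaussian density is an $\varepsilon_\ell$-perturbation uniformly dominated by a fixed polynomial times a fixed Gaussian (since $M_\ell$ remains in a compact set of positive-definite matrices, on which the relevant expectation is a real-analytic function of the covariance entries). Hence the inner integral equals $p_1^c(t)+\varepsilon_\ell\, r(t)+O(\varepsilon_\ell^2)$ for a fixed bounded $r$, and multiplying by $\tfrac{\lambda_\ell}{2}$ — so that $\tfrac{\lambda_\ell}{2}\varepsilon_\ell=1$ — produces the asserted $\tfrac{\ell^2}{2}\int_I p_1^c(t)\,dt+O(1)$. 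The non-smoothness of $|w_1w_3-w_2^2|$ is not an issue: after completing the square and passing to polar coordinates in the two remaining Gaussian directions it becomes $|2t^2-\rho^2|$, whose radial integral splits at $\rho^2=2t^2$ into two elementary pieces (the same splitting also gives the closed form of $p_1^c$ recorded in the introduction). The main obstacle is organising this explicit Gaussian computation — in particular spotting the substitution $(w_1,w_2,w_3)\mapsto(z_1,z_2,t)$ that reproduces $p_1^c$ — together with the quantitative, rather than merely qualitative, control of the $\ell$-dependence of $M_\ell$ needed to reach the sharp remainder.
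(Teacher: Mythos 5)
Your proposal follows essentially the same route as the paper's own proof: both start from the Kac--Rice identity of Lemma~\ref{K-R_Exp}, use the vanishing of $b_\ell$ to factor the joint density, rescale the Hessian by $\sqrt{8}/\lambda_\ell$, perform the linear substitution $(w_1,w_2,w_3)\mapsto(z_1,z_2,t)$ with $t=(w_1+w_3)/\sqrt8$, and expand the resulting prefactor $\lambda_\ell/(\lambda_\ell-2)$ and exponent in powers of $1/\lambda_\ell$ to isolate $p_1^c$. Your observation that $\mathrm{Var}(w_1+w_3)=8$ holds exactly for every $\ell$ (so the $t$-marginal is exactly standard Gaussian) is a nice explicit remark, but it is already implicit in the paper's computation; note also that, just like the paper's own derivation, your argument actually produces $\tfrac{\lambda_\ell}{2}\int_I p_1^c+O(1)$, so the passage to the stated leading constant $\tfrac{\ell^2}{2}$ inherits the same informal identification $\lambda_\ell\approx\ell^2$.
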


\begin{proof} From Lemma \ref{K-R_Exp}, we have
\begin{align} \label{xx}
\mathbb{E}[\mathcal{N}^c_I(f_\ell)]&=4 \pi K_{1,\ell}(I).
\end{align}
Since the first and the second order derivatives of $f_{\ell}(x)$ are independent at every fixed point $x \in {\cal S}^2$, we can write
\begin{align*}
K_{1,\ell}(I)&=\int_{\mathbb{R}^3} |\zeta_{1} \zeta_{3} -\zeta_{2}^2| \ind_{\big\{\frac{\zeta_{1} + \zeta_{3} }{\lambda_\ell} \in I \big\}} D_{\ell} (0 ,0,\zeta_{1},\zeta_{2}, \zeta_{3}) \; d \zeta_{1}\; d \zeta_{2} \; d \zeta_{3}
\end{align*}
with
$$D_{\ell} (0 ,0,\zeta_{1},\zeta_{2}, \zeta_{3})=D_{1,\ell} (0 ,0)  D_{2,\ell} (\zeta_{1},\zeta_{2}, \zeta_{3}),$$
where $D_{1,\ell}$ and $D_{2,\ell}$ are the marginal densities of the random vectors $\nabla f_\ell$ and $\nabla^2 f_\ell$ respectively. Now, observing the matrices $a_\ell$ and $c_\ell$, it follows immediately that
$$D_{1,\ell}(0,0)=\frac{1}{2 \pi} \frac{2}{\lambda_\ell},$$
and
$$\frac{\sqrt 8}{\lambda_\ell} \nabla^2 f_\ell=(\tilde{Z}_1,\tilde{Z}_2,\tilde{Z}_3) \sim N(0,\tilde c_\ell),$$
with
\begin{align} \label{xxx}
\tilde{c}_\ell= \frac{8}{\lambda_\ell^2}  c_\ell= \left(
\begin{array}{ccc}
3 -\frac{2}{\lambda_\ell } & 0 & 1 +\frac{2}{\lambda_\ell} \\
0 & 1 -\frac{2}{\lambda_\ell} & 0 \\
1 +\frac{2}{\lambda_\ell} & 0 & 3 -\frac{2}{\lambda_\ell}%
\end{array}%
\right).
\end{align}
It then follows that
\begin{align*}
K_{1,\ell}(I)&=  \frac{1}{ \pi \lambda_\ell} \int_{\mathbb{R}^3} |\zeta_{1} \zeta_{3} -\zeta_{2}^2| \ind_{\big\{\frac{\zeta_{1} + \zeta_{3} }{\lambda_\ell} \in I \big\}}  D_{2,\ell} (\zeta_{1},\zeta_{2}, \zeta_{3})\; d \zeta_{1}\; d \zeta_{2} \; d \zeta_{3} \nonumber \\
&=  \frac{1}{ \pi \lambda_\ell} \frac{\lambda_\ell^2}{8} \int_{\mathbb{R}^3} |\tilde{z}_{1} \tilde{z}_{3} -\tilde{z}_{2}^2| \ind_{\big\{\frac{\tilde{z}_{1} + \tilde{z}_{3} }{\sqrt{8}} \in I \big\}} \frac{1}{(2 \pi)^{3/2} \sqrt{\det \tilde{c}_\ell}}
\exp \left\{-\frac 1 2 ( \tilde{z}_1, \tilde{z}_2 ,\tilde{z}_3)
\tilde{c}_\ell^{-1} \left(
\begin{array}{c}
\tilde{z}_1 \\
\tilde{z}_2 \\
\tilde{z}_3
\end{array}
\right) \right\} d \tilde{z}_1 d\tilde{z}_2  d\tilde{z}_3.
\end{align*}
After the change of variables
\begin{align*} 
\left(
\begin{array}{c}
\tilde{z}_1  \\
\tilde{z}_2 \\
\tilde{z}_3
\end{array}
\right)=\left(
\begin{array}{ccc}
1&0&0\\
0&1&0 \\
-1&0&\sqrt 8
\end{array}
\right)  \left(
\begin{array}{c}
z_1  \\
z_2 \\
t
\end{array}
\right),
\end{align*}
the latter expression is
\begin{align}
K_{1,\ell}(I)&=  \frac{1}{ \pi \lambda_\ell} \frac{\lambda_\ell^2}{8} \int_{\mathbb{R}^3} |{z}_{1} (\sqrt{8} t-z_1) -{z}_{2}^2| \ind_{\{t \in I\}} \nonumber\\
&\;\;\;\; \times  \frac{1}{(2 \pi)^{3/2} \sqrt{\det \tilde{c}_\ell}}
\exp \left\{-\frac 1 2 ( z_1, z_2 ,\sqrt{8} t -z_1)
\tilde{c}_\ell^{-1} \left(
\begin{array}{c}
z_1 \\
z_2 \\
\sqrt{8} t -z_1
\end{array}
\right) \right\} \sqrt 8 d {z}_1 d{z}_2  d t \nonumber \\
&=  \frac{1}{ \pi } \frac{\lambda_\ell}{ 8} \int_I  d t  \int_{\mathbb{R}^2} |{z}_{1} (\sqrt{8} t -z_1) -{z}_{2}^2| \nonumber \\
&\;\;\;\; \times  \frac{1}{(2 \pi)^{3/2} \sqrt{\det \tilde{c}_\ell}}
\exp \left\{-\frac 1 2 ( z_1, z_2 ,\sqrt{8} t -z_1)
\tilde{c}_\ell^{-1} \left(
\begin{array}{c}
z_1 \\
z_2 \\
\sqrt{8} t -z_1
\end{array}
\right) \right\}  \sqrt{8} d {z}_1 d{z}_2,   \label{xxxx}
\end{align}
where
\begin{align*}
\frac{1}{\sqrt{\det \tilde{c}_\ell}}= \frac{\lambda_\ell}{\sqrt 8 (\lambda_\ell-2)},
\end{align*}
and
$$
\exp \left\{-\frac 1 2 ( z_1, z_2 ,\sqrt{8} t -z_1)
\tilde{c}_\ell^{-1} \left(
\begin{array}{c}
z_1 \\
z_2 \\
\sqrt{8} t -z_1
\end{array}
\right) \right\}   =\exp\left\{- \frac{3 \lambda_\ell-2}{2(\lambda_\ell-2)} t^2 \right\}
\exp\left\{-\frac 1 2 \frac{\lambda_\ell}{\lambda_\ell-2} (z_1^2+z_2^2-\sqrt 8 z_1 t ) \right\}
$$
so that
\begin{align*}
K_{1,\ell}(I)&=  \frac{1}{ \pi } \frac{\lambda_\ell}{8}  \frac{\lambda_\ell}{\lambda_\ell-2} \int_I  dt \int_{\mathbb{R}^2} |{z}_{1} (\sqrt{8} t -z_1) -{z}_{2}^2| \\
&  \;\;\; \times \frac{1}{(2 \pi)^{3/2} }
\exp\left\{- \frac{3 \lambda_\ell-2}{2(\lambda_\ell-2)} t^2 \right\}
\exp\left\{-\frac 1 2 \frac{\lambda_\ell}{\lambda_\ell-2} (z_1^2+z_2^2-\sqrt 8 z_1 t) \right\}  d {z}_1 d{z}_2.
\end{align*}
Now let us write
\begin{align}
\label{eq:EN=lambda*int g1l}
\mathbb{E} [\mathcal{N}^c_I(f_\ell)] = \frac{\lambda_\ell}{ 2 } \int_I
g^c_{1,\ell}(t) d t,
\end{align}
where
\begin{align} \label{g^c}
g^c_{1,\ell}(t)&=\frac{1}{(2 \pi)^{3/2}} \frac{\lambda_\ell}{
\lambda_\ell-2} \int_{\mathbb{R}^2} | z_1 t \sqrt 8 -z_1^2-z_2^2| \exp\left\{ - \frac{3 \lambda_\ell-2}{2(\lambda_\ell-2)} t^2
\right\} \exp\left\{-\frac 1 2 \frac{\lambda_\ell}{\lambda_\ell-2} (z_1^2+
z_2^2-\sqrt 8 t z_1)\right\}d z_1 d z_2.
\end{align}
Now consider the expansions
$$\frac{\lambda_\ell}{\lambda_\ell-2}=1+O(\ell^{-2}), \hspace{1cm} \exp\left\{ - \frac{3 \lambda_\ell-2}{2(\lambda_\ell-2)} t^2\right\} =
\exp\left\{ - \frac{3}{2} t^2\right\} +t^2 \exp\left\{ - \frac{3}{2} t^2\right\} O(\ell^{-2}),
$$
and
$$\exp\left\{-\frac 1 2 \frac{\lambda_\ell}{\lambda_\ell-2} (z_1^2+
z_2^2-\sqrt 8 t z_1)\right\} = \exp\left\{-\frac{1}{2} (z_1^2+
z_2^2-\sqrt 8 t z_1)\right\}+(z_1^2+
z_2^2-\sqrt 8 t z_1) \exp\left\{-\frac{1}{2} (z_1^2+
z_2^2-\sqrt 8 t z_1)\right\} O(\ell^{-2});$$
we can observe that, for $n, n', k, k' \in \mathbb{N}$,
$$ \exp\left\{ - \frac{3}{2} t^2\right\} \int_{\mathbb{R}} |z_2|^n z_2^{n'} t^k \exp\left\{-\frac{z_2^2}{2} \right\} d z_2$$
and
$$ \exp\left\{ - \frac{3}{2} t^2\right\} \int_{\mathbb{R}} |z_1|^n z_1^{n'} |t|^k t^{k'} \exp\left\{-\frac{1}{2} (z_1^2-\sqrt 8 t z_1)\right\} d z_1
$$
are bounded by terms of the form
$\text{const} \times t^k e^{-\frac{3}{2} t^2}$ and $\text{const} \times|t|^{k+k'} e^{-\frac{t^2}{2} } $ respectively; hence we have
\begin{align}
\label{eq:int g1l=int p1+O(l-2)}
\int_{I} g^c_{1,\ell}(t) d t&=\int_{ I} p^c_1(t) d t+O(\ell^{-2}),
\end{align}
where
\begin{align*}
p^c_1(t)&=\frac{1}{(2 \pi)^{3/2}}  \int_{\mathbb{R}^2} |z_1 t \sqrt 8 -z
_1^2-z_2^2| \exp\left\{ - \frac{3 }{2} t^2
\right\} \exp\left\{-\frac 1 2  (z_1^2+%
z_2^2-\sqrt 8 t z_1)\right\}d z_1 d z_2.
\end{align*}
We finally obtain the statement of the present lemma by substituting \eqref{eq:int g1l=int p1+O(l-2)} into \eqref{eq:EN=lambda*int g1l}.

\end{proof}

\begin{remark}
Introducing the corresponding conditions on the Hessian and following the lines of the previous proof we get the analogous result for extrema and saddles, i.e.,
\begin{align*}
\mathbb{E}[{\cal N}_I^a(f_{\ell})]=\frac{\ell^2}{2} \int_{I} p_1^a(t) d t+O(1),
\end{align*}
where, for $a=e,s$, we have
\begin{align*}
 p_1^e(t)= \frac{1}{(2 \pi)^{3/2}}  \int_{\mathbb{R}^2} |z_1 t \sqrt 8 -z
_1^2-z_2^2| \ind_{\{ z_1 t \sqrt 8 -z
_1^2-z_2^2>0 \}}\exp\left\{ - \frac{3 }{2} t^2
\right\} \exp\left\{-\frac 1 2  (z_1^2+%
z_2^2-\sqrt 8 t z_1)\right\}d z_1 d z_2,
\end{align*}
and
\begin{align*}
 p_1^s(t)= \frac{1}{(2 \pi)^{3/2}}  \int_{\mathbb{R}^2} |z_1 t \sqrt 8 -z
_1^2-z_2^2| \ind_{\{ z_1 t \sqrt 8 -z
_1^2-z_2^2 < 0 \}}\exp\left\{ - \frac{3 }{2} t^2
\right\} \exp\left\{-\frac 1 2  (z_1^2+%
z_2^2-\sqrt 8 t z_1)\right\}d z_1 d z_2.
\end{align*}
\end{remark}

We can now prove Proposition \ref{expectation copy(1)} by deriving
explicit expressions for $p_{1}^{c}$, $p_{1}^{e}$ and $p_{1}^{s}$
(see also \cite{azais} for alternative techniques in a related
setting). For this purpose, let $Y=(Y_{1},Y_{2},Y_{3})$ be a
centered jointly Gaussian random vector with covariance matrix
\begin{equation*}
\tilde{c}_{\infty }=\left(
\begin{array}{ccc}
3 & 0 & 1 \\
0 & 1 & 0 \\
1 & 0 & 3%
\end{array}%
\right) .
\end{equation*}%
Denote by $\phi _{Y_{1}+Y_{3}}$ the probability density of $Y_{1}+Y_{3}$.
The proof of Proposition \ref{expectation copy(1)} is given below.
\begin{proof}[Proof of Proposition  \ref{expectation copy(1)}]
From Lemma \ref{expectation}, and in particular from \eqref{xxxx}, we observe that
\begin{align*}
p^c_{1}(t)&=\sqrt{8}  \cdot  \mathbb{E} \left[ \left|
Y_{1}Y_{3}-Y_{2}^{2}\right| \bigg\vert Y_{1}+Y_{3}= \sqrt{8} t \right]
\cdot \phi _{Y_{1}+Y_{3}}(\sqrt 8 t).
\end{align*}
Similarly we write
\begin{align*}
p^e_1(t)&=\sqrt 8 \cdot \mathbb{E} \left[  \left|
Y_{1}Y_{3}-Y_{2}^{2} \right| \ind_{\{Y_{1}Y_{3}-Y_{2}^{2}>0\}} \bigg| Y_{1}+Y_{3}= \sqrt 8 t \right] \cdot \phi _{Y_{1}+Y_{3}}( \sqrt 8 t),
\\ p^s_1(t)&=\sqrt 8 \cdot \mathbb{E} \left[  \left|
Y_{1}Y_{3}-Y_{2}^{2} \right| \ind_{\{Y_{1}Y_{3}-Y_{2}^{2}<0\}} \bigg| Y_{1}+Y_{3}= \sqrt 8 t \right] \cdot \phi _{Y_{1}+Y_{3}}( \sqrt 8 t).
\end{align*}
Now consider the transformation $W_{1}=Y_{1}$, $W_{2}=Y_{2}$ and $W_{3}=Y_{1}+Y_{3}$, i.e. the vector $W=(W_1,W_2,W_3)$ is given by
\[
W=\left(
\begin{array}{ccc}
1 & 0 & 0 \\
0 & 1 & 0 \\
1 & 0 & 1
\end{array}
\right) Y;
\]
the covariance matrix $\Sigma _{W}$ of $W$ is
\[
\Sigma _{W}=\left(
\begin{array}{ccc}
1 & 0 & 0 \\
0 & 1 & 0 \\
1 & 0 & 1
\end{array}
\right) \left(
\begin{array}{ccc}
3 & 0 & 1 \\
0 & 1 & 0 \\
1 & 0 & 3
\end{array}
\right) \left(
\begin{array}{ccc}
1 & 0 & 1 \\
0 & 1 & 0 \\
0 & 0 & 1
\end{array}
\right) =\left(
\begin{array}{ccc}
3 & 0 & 4 \\
0 & 1 & 0 \\
4 & 0 & 8
\end{array}
\right).
\]
Under the obvious notation we write
\begin{align*}
\Sigma _{(W_1,W_2)} &=\left(
\begin{array}{cc}
3 & 0 \\
0 & 1
\end{array}
\right), \hspace{0.5cm} \Sigma _{W_3}=8,
\end{align*}
so that the conditional distribution of $(W_1,W_2)| W_3=\sqrt 8 t$ is Gaussian with covariance matrix
\begin{align*}
\Sigma _{(W_1,W_2)|W_3} =\left(
\begin{array}{cc}
3 & 0 \\
0 & 1
\end{array}
\right) -\left(
\begin{array}{c}
4 \\
0
\end{array}
\right) \frac{1}{8}\left(
\begin{array}{cc}
4 & 0
\end{array}
\right) =\left(
\begin{array}{cc}
1 & 0 \\
0 & 1
\end{array}
\right),
\end{align*}
and expectation
\[
\mathbb{E}  [(W_1,W_2)| W_3=\sqrt 8 t] =\left(
\begin{array}{c}
4 \\
0
\end{array}
\right) \frac{1}{8}\sqrt{8}t=\left(
\begin{array}{c}
\sqrt{2}t \\
0
\end{array}
\right).
\]
Hence we have
\begin{align*}
\mathbb{E}\left[ \left. \left\vert Y_{1}Y_{3}-Y_{2}^{2}\right\vert
\right\vert Y_{1}+Y_{3}=\sqrt{8}t\right] &=\mathbb{E}\left[ \left. \left\vert
W_{1}(W_{3}-W_{1})-W_{2}^{2}\right\vert \right\vert W_{3}=\sqrt{8}t\right] \\
&=\mathbb{E}\left[ \left. \left\vert \sqrt{8}tW_{1}-W_{1}^{2}-W_{2}^{2}%
\right\vert \right\vert W_{3}=\sqrt{8}t\right]  \\
&=\mathbb{E}\left[ \left\vert \sqrt{8}t(Z_{1}+\sqrt{2}t)-(Z_{1}+\sqrt{2}%
t)^{2}-Z_{2}^{2}\right\vert \right]  \\
&=\mathbb{E}\left[ \left\vert -Z_{1}^{2}-Z_{2}^{2}+2t^{2}\right\vert \right],
\end{align*}
where $Z_{1},Z_{2}$ denote standard independent Gaussian variables. We can
then implement a further change of variable
$\zeta =Z_{1}^{2}+Z_{2}^{2}$ with the probability density function
$f_{\zeta }(z)=\frac{1}{2} e^{-\frac{z}{
2}}$.
Hence
\begin{align*}
\mathbb{E}\left[ \left\vert -Z_{1}^{2}-Z_{2}^{2}+2t^{2}\right\vert \right] &=
\mathbb{E}\left[ \left\vert -\zeta +2t^{2}\right\vert \right] =\frac{1}{2} \int_{0}^{2t^{2}}(2t^{2}-z)  e^{-\frac{z}{2}}dz+\frac{1}{2} \int_{2t^{2}}^{
\infty }(z-2t^{2})   e^{-\frac{z}{2}} dz,
\end{align*}
where
\begin{align*}
\frac{1}{2} \int_{0}^{2t^{2}}(2t^{2}-z)   e^{-\frac{z}{2}} dz
&=2(t^{2}-1)+2e^{-t^{2}}.
\end{align*}
Likewise, with the change of variable $y=\frac{z-2 t^2}{2}$,
\begin{align*}
\frac{1}{2} \int_{2t^{2}}^{\infty }(z-2t^{2}) e^{-\frac{z}{2}} dz
&=2e^{-t^{2}}\int_{0}^{\infty }y e^{-y}dy=2e^{-t^{2}}.
\end{align*}
So we have
\begin{align*}
\left.  \mathbb{E}\left[ \left\vert Y_{1}Y_{3}-Y_{2}^{2}\right\vert \right\vert Y_{1}+Y_{3}= \sqrt 8 t \right]
=2 (2 e^{-t^2}+t^2-1),
\end{align*}
and
\begin{align*}
p^c_{1}(t)=\sqrt 8 \left.  \mathbb{E}\left[ \left\vert
Y_{1}Y_{3}-Y_{2}^{2}\right\vert \right\vert Y_{1}+Y_{3}= \sqrt 8 t \right]
\phi _{Y_{1}+Y_{3}}(\sqrt 8 t)
&= \frac{\sqrt{2}}{ \sqrt{  \pi}} (2 e^{-t^2}+t^2-1) e^{-\frac{t^2}{2}},
\end{align*}
in fact, since $Y_1+Y_3$ is a centered Gaussian with variance $8$, we have
$$\phi_{Y_1+Y_3}(\sqrt 8 t)=\frac{1}{\sqrt 8 \sqrt{2 \pi}}e^{-\frac{(\sqrt 8 t)^2}{2 \cdot 8}}=\frac{1}{4 \sqrt \pi} e^{-\frac{t^2}{2}}.$$
Similarly, for the extrema we obtain
\begin{align*}
p^e_{1}(t)
&= \sqrt 8 \mathbb{E}\left[ \left\vert -Z_{1}^{2}-Z_{2}^{2}+2t^{2}\right\vert  \ind_{ \{ -Z_{1}^{2}-Z_{2}^{2}+2t^{2} >0\} }   \right] \phi _{Y_{1}+Y_{3}}(\sqrt 8 t)
=\frac{\sqrt 2}{ \sqrt \pi} (e^{-t^2}+t^2-1) e^{-\frac{t^2}{2}}.
\end{align*}
\end{proof}

\begin{remark}
From the expressions for $p_{1}^{c}$ and $p_{1}^{e}$ we immediately obtain
an expression for $p_{1}^{s}$:
\begin{equation*}
p_{1}^{s}(t)=p_{1}^{c}(t)-p_{1}^{e}(t)=\frac{\sqrt 2 }{\sqrt{\pi }}e^{-\frac{3}{2}%
t^{2}}.
\end{equation*}
\end{remark}

\begin{remark}
As mentioned in the introduction, the distribution that we found
cannot be viewed as a special case of the general result which has
recently been established on the sphere by \cite{chengschwartzman}
for real-valued, $C^2$ Gaussian random fields $\{f(t), t \in T
\}$, $T \subseteq \mathbb{R}^N$. This is because condition C3' on
page 15 of \cite{chengschwartzman} is not satisfied for random
spherical harmonics. Indeed,
following their notation let us write, for $i,j,k,l=1,\dots, N$,
\begin{align*}
f_i(t)=\frac{\partial f(t)}{ \partial t_i}, \hspace{2cm} f_{i,j}(t)=\frac{\partial^2 f(t)}{ \partial t_i t_j},
\end{align*}
and define $C', C''$ such that
\begin{align*}
\mathbb{E}[f_i(t) f_j(t)]=C' \delta_{ij}, \hspace{2cm} \mathbb{E}[f_{ij}(t) f_{kl}(t)]=C''( \delta_{ik} \delta_{jl}+\delta_{il} \delta_{jk})+(C''+C') \delta_{ij} \delta_{kl}.
\end{align*}
Then Condition  C3'  in \cite{chengschwartzman} states that
$C''+C'-(C')^2 \ge 0$; on the other hand in our case of spherical
harmonics we have
\begin{align*}
C''=\frac{\lambda_{\ell}^2}{8}-\frac{\lambda_{\ell}}{4}, \hspace{1cm} C'=\frac{\lambda_{\ell}}{2},
\end{align*}
so the quantity $C''+C'-(C')^2$ is in this case equal to
$(-\ell^4-2 \ell^3-\ell^2+2 \ell)/8$ which is negative for $\ell >
1$. Hence the limiting distribution in \cite{chengschwartzman}
Theorem 3.10, which depends on the square root of $C''+C'-(C')^2$,
is not applicable in our setting.
\end{remark}

\section{Approximate Kac-Rice for variance computation} \label{tre}

\subsection{On the Kac-Rice formula for computing $2$nd (factorial) moment}

In the setting of section \ref{sec:Kac-Rice expected}, $\mathcal{E}\subseteq \R^{n}$ a nice Euclidian domains,
and $g:\mathcal{E}\rightarrow \R^{n}$ a centred Gaussian
random field, a.s. smooth, define the $2$-point correlation function of critical points
(also referred to as ``2nd intensity") $K_{2}=K_{2;g}:\mathcal{E}^{2}\rightarrow\R$
\begin{equation*}
K_{2}(x,y) = \phi_{(g(x),g(y))}(\mathbf{0},\mathbf{0})\cdot
\E[ |\det J_{g}(x) | \cdot |\det J_{g}(y) | \big| g(x)=g(y)=\mathbf{0}  ].
\end{equation*}
By the virtue of ~\cite{adlertaylor}, Theorem 11.2.1, the $2$nd
factorial moment of $g^{-1}(0)$ is given by
\begin{equation*}
\E[\# g^{-1}(0)\cdot (\# g^{-1}(0)-1)] =
\int\limits_{\mathcal{E}^{2}}K_{2}(x,y)dxdy,
\end{equation*}
provided that the Gaussian distribution of $(g(x),g(y),J_{g}(x),J_{g}(y))\in \R^{2n}\times\R^{n(n+1)}$
is non-degenerate for all $(x,y)\in\mathcal{E}^{2}$. Moreover, for
$\mathcal{D}_{1},\mathcal{D}_{2}\subseteq\mathcal{E}$ two nice {\em disjoint}
domains, we have
\begin{equation}
\label{eq:Kac-Rice 2nd fact disjoint}
\E[\# g^{-1}(0)\cap \mathcal{D}_{1}\cdot (\# g^{-1}(0)\cap \mathcal{D}_{2})] =
\iint\limits_{\mathcal{D}_{1}\times \mathcal{D}_{2}}K_{2}(x,y)dxdy,
\end{equation}
under the same non-degeneracy assumption for all $(x,y)\in \mathcal{D}_{1}\times \mathcal{D}_{2}$.

For the critical points of $f=f_{\ell}$ we have
\begin{equation}
\label{eq:K2 glob gen}
K_{2}(x,y)=K_{2;l}(x,y) = \phi_{(\nabla f(x),\nabla f(y))}(\mathbf{0},\mathbf{0})\cdot
\E[ |\det H_{f}(x) | \cdot |\det H_{f}(y) | \big| \nabla f(x)=\nabla f(y)=\mathbf{0}  ];
\end{equation}
by the isotropy $K_{2}(x,y)=K_{2}(d(x,y))$ depends only on the (spherical) distance
between $x$ and $y$. Here ~\cite{adlertaylor}, Theorem 11.2.1 would yield
\begin{equation}
\label{eq:Kac-Rice 2nd mom gen}
\E[\mathcal{N}_{\R}^{c}(f)\cdot (\mathcal{N}_{\R}^{c}(f)-1)] =
\iint\limits_{\mathcal{S}^{2}\times \mathcal{S}^{2}}K_{2}(x,y)dxdy,
\end{equation}
provided that for all $x,y\in \mathcal{S}^{2}$, the Gaussian distribution of
$\left(\nabla f(x), \nabla f(y), H_{f}(x),H_{f}(y)\right)\in \R^{10}$ is non-degenerate.

Unfortunately, we were not able to validate the non-degeneracy assumption due to the technical difficulty
of dealing with complicated $10\times 10$ matrices depending on both $x$ and $y$ (and $\ell$).
Instead, we will prove that the (precise) Kac-Rice formula \eqref{eq:Kac-Rice 2nd mom gen} holds up to an
admissible error, i.e. an {\em approximate Kac-Rice} (formula \eqref{afkao} below),
an approach inspired from \cite{rudnickwigman};
our argument is based on a partitioning of the integration domain of \eqref{eq:Kac-Rice 2nd mom gen}
and applying \eqref{eq:Kac-Rice 2nd fact disjoint} on the valid slices, bounding the contribution of the rest.
It is easy to adapt the definition of the $2$-point correlation in \eqref{eq:K2 glob gen} in order
to count critical points with values lying in $I$, or separate the critical points into extrema and saddles
(cf. \eqref{k2elegant} below).

\subsection{Statement of the principal formula}

In this section we shall formulate the approximate Kac-Rice formula which is
instrumental for our main result. First we need to introduce some more
notation; define the function
\begin{align} \label{bachata}
L_{2,\ell}(\phi ;t_{1},t_{2})&=\frac{1}{2}\sin ^{4}\phi [P_{\ell }^{\prime \prime
}(\cos \phi )]^2 v_{1}(t_{1},t_{2})
-\frac{32}{\ell^2} \sin ^{6}\phi [ P_{\ell }^{\prime \prime \prime }(\cos \phi
)]^{2} v_{2}(t_{1},t_{2})+\frac{64}{\ell^4} \sin ^{8}\phi [
P_{\ell }^{\prime \prime \prime \prime }(\cos \phi )]^{2}
v_{3}(t_{1},t_{2}), 
\end{align}
where
\begin{align*}
v_{1}(t_{1},t_{2})= p_1^c(t_1) p_1^c(t_2),
\end{align*}
\begin{align*}
v_{2}(t_{1},t_{2})& = \frac{1}{8^2} \Big[ -3 p_1^c(t_1) p_1^c(t_2) +\frac 1 2 p_2^c(t_1) p_1^c(t_2) + \frac 1 2 p_1^c(t_1) p_2^c(t_2) \Big],
\end{align*}%
and%
\begin{align*}
 v_{3}(t_{1},t_{2})&= \frac{1}{8^2} \Big[  \frac 3 8 p_1^c(t_1) - \frac 1 8 p_2^c(t_1) \Big] \Big[  \frac 3 8 p_1^c(t_2) - \frac 1 8 p_2^c(t_2) \Big] .
\end{align*}%
We are now in a position to formulate the \textit{Approximate Kac-Rice
formula:}

\begin{proposition} \label{P-afkao}
For any sufficiently big constant $C>0$, the variance of the critical points number $\mathcal{N}_{I}^{c}(f_{\ell })$ satisfies
\begin{equation}
\text{Var} \left( \mathcal{N}_{I}^{c}(f_{\ell })\right) =\int_{C/\ell}^{\pi
/2} \iint_{I \times I} L_{2,\ell}(\phi ;t_{1},t_{2}) d t_1 dt_2  \sin \phi d\phi +O(\ell ^{5/2}). \label{afkao}
\end{equation}
\end{proposition}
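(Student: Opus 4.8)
The plan is to pass from the variance to the second factorial moment through
\[
\text{Var}(\mathcal N_I^c(f_\ell))=\mathbb E\big[\mathcal N_I^c(f_\ell)(\mathcal N_I^c(f_\ell)-1)\big]+\mathbb E[\mathcal N_I^c(f_\ell)]-\big(\mathbb E[\mathcal N_I^c(f_\ell)]\big)^2,
\]
to drop $\mathbb E[\mathcal N_I^c(f_\ell)]=O(\ell^2)=O(\ell^{5/2})$ by Proposition \ref{expectation copy(1)}, and to evaluate the rest by the precise Kac-Rice formula \eqref{eq:Kac-Rice 2nd fact disjoint} on the locus where the relevant Gaussian vector is non-degenerate. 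Set $\mathcal A_C=\{(x,y):C/\ell\le d(x,y)\le\pi-C/\ell\}$. Splitting the pairs counted by the factorial moment according to membership in $\mathcal A_C$, and splitting $(\mathbb E[\mathcal N_I^c(f_\ell)])^2=\iint_{\mathcal S^2\times\mathcal S^2}K_1^I(x)K_1^I(y)\,dx\,dy$ in the same way (the complement of $\mathcal A_C$ has measure $O(\ell^{-2})$ while $K_1^I\equiv K_{1,\ell}(I)=O(\ell^2)$ by isotropy, hence contributes $O(\ell^2)$), the proposition reduces to the two statements
\[
\mathbb E\big[\#\{(x,y)\notin\mathcal A_C,\ x\ne y,\ f_\ell(x),f_\ell(y)\in I,\ \nabla f_\ell(x)=\nabla f_\ell(y)=0\}\big]=O(\ell^2),
\]
\[
\iint_{\mathcal A_C}\big(K_2^I(x,y)-K_1^I(x)K_1^I(y)\big)\,dx\,dy=\int_{C/\ell}^{\pi/2}\iint_{I\times I}L_{2,\ell}(\phi;t_1,t_2)\,dt_1\,dt_2\,\sin\phi\,d\phi+O(\ell^{5/2}),
\]
where $K_2^I,K_1^I$ are the two- and one-point intensities of \eqref{eq:K2 glob gen}, \eqref{eq:K1 xI def}, with the value conditions rewritten through \eqref{linear_dp} so as to kill the single-point degeneracy of $(f_\ell,\nabla f_\ell,\nabla^2f_\ell)$.

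The first statement (the short-range, and symmetrically near-antipodal, bound) is the main obstacle, because Kac-Rice is not available there. I would cover $\mathcal S^2$ by $O(\ell^2)$ geodesic discs $\{B_i\}$ of radius $\asymp C/\ell$ with bounded overlap; any pair with $d(x,y)<C/\ell$ lies in some $B_i\times B_i'$, $B_i'$ a fixed dilate, so that the left-hand side is $\le\sum_i\mathbb E[\mathcal N^c(f_\ell;B_i')^2]$ and, by isotropy, it suffices to show $\mathbb E[\mathcal N^c(f_\ell;B')^2]=O(1)$ uniformly in $\ell$ for a single disc $B'$ of radius $\asymp C/\ell$. This is where a partitioning argument in the spirit of \cite{rudnickwigman} enters: decompose $B'\times B'$ dyadically in $d(x,y)$, apply \eqref{eq:Kac-Rice 2nd fact disjoint} on the finitely many annuli on which non-degeneracy holds, and estimate the innermost (nearly degenerate) scales directly after rescaling $f_\ell$ around a point to its limiting Bessel-type field; the estimates rest on a precise Taylor expansion at $\phi=0$ of $P_\ell(\cos\phi)$ and its derivatives up to fourth order, and of the entries of the attendant covariance matrices, which is what keeps the rescaled covariance structures uniformly non-degenerate at scale $1/\ell$. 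The near-antipodal part is reduced to the near-diagonal one using $f_\ell(-x)=(-1)^\ell f_\ell(x)$: replacing $y$ by $-y$ turns a pair with $d(x,y)>\pi-C/\ell$ at which both points are critical into a pair within $C/\ell$ of the diagonal with the same property.

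For the second (long-range) statement I would first verify — from the explicit $10\times10$ covariance matrix of $(\nabla f_\ell(x),\nabla f_\ell(y),\nabla^2 f_\ell(x),\nabla^2 f_\ell(y))$, whose $x$–$y$ entries are built from $P_\ell$ and its derivatives at $\cos d(x,y)$, together with Hilb's asymptotics $P_\ell^{(k)}(\cos\phi)=O(\ell^{k-1/2}(\sin\phi)^{-k-1/2})$ — that in the normalization $\nabla f_\ell/\sqrt{\lambda_\ell}$, $\nabla^2 f_\ell/\lambda_\ell$ all off-diagonal entries are $O(C^{-1/2})$ uniformly on $\mathcal A_C$; since at vanishing off-diagonal entries this matrix is block-diagonal with the non-degenerate blocks of Section \ref{sec:Kac-Rice coord}, the vector is non-degenerate throughout $\mathcal A_C$ once $C$ is large. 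Tiling $\mathcal A_C$ by products $\mathcal D_\alpha\times\mathcal D_\beta$ of disjoint nice domains (a boundary layer of measure $O(\ell^{-2})$, times $\sup K_2^I=O(\ell^4)$, being absorbed into $O(\ell^2)$), \eqref{eq:Kac-Rice 2nd fact disjoint} gives $\iint_{\mathcal A_C}K_2^I$ up to $O(\ell^2)$; by isotropy $K_2^I(x,y)=K_2^I(d(x,y))$ and $\iint_{\mathcal S^2\times\mathcal S^2}h(d(x,y))\,dx\,dy=8\pi^2\int_0^\pi h(\phi)\sin\phi\,d\phi$, so the left side equals $8\pi^2\int_{C/\ell}^{\pi-C/\ell}\big(K_2^I(\phi)-K_{1,\ell}(I)^2\big)\sin\phi\,d\phi+O(\ell^2)$.

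It remains to expand $K_2^I(\phi)-K_{1,\ell}(I)^2$. As a Gaussian expectation, $K_2^I(\phi)$ is real-analytic in the entries of the conditional covariance matrix of $(\nabla^2 f_\ell(x),\nabla^2 f_\ell(y))$ given $\nabla f_\ell(x)=\nabla f_\ell(y)=0$ away from its singular locus, and at vanishing off-diagonal entries it factorizes into exactly $K_{1,\ell}(I)^2$. I would Taylor-expand around that point in the normalized off-diagonal entries $\varepsilon=\varepsilon(\phi)=O(C^{-1/2})$: the zeroth order cancels; the first-order term is linear in $\varepsilon(\phi)$, hence of size $O(\ell^{7/2}(\sin\phi)^{-1/2})$ times a factor oscillating like $\cos((\ell+\tfrac12)\phi)$, so that $\int_{C/\ell}^{\pi/2}(\cdot)\sin\phi\,d\phi=O(\ell^{5/2})$ after one integration by parts; the remainder is $O(|\varepsilon(\phi)|^3)=O(\ell^{5/2}(\sin\phi)^{-3/2})$ uniformly, again integrating to $O(\ell^{5/2})$; and the second-order term, upon inserting the covariance computations of Appendix \ref{cov_matx} — which show that the three surviving parameters, one per off-diagonal block, are governed to leading order by $\sin^2\phi\,P_\ell''(\cos\phi)$, $\tfrac1\ell\sin^3\phi\,P_\ell'''(\cos\phi)$ and $\tfrac1{\ell^2}\sin^4\phi\,P_\ell''''(\cos\phi)$, all mixed second derivatives of the Gaussian expectation across distinct blocks vanishing by the sign-reflection symmetries of the integrand — and identifying the remaining one-point Gaussian expectations with $p_1^c$ and $p_2^c$ of Section \ref{due}, reproduces $\iint_{I\times I}L_{2,\ell}(\phi;t_1,t_2)\,dt_1\,dt_2$ up to the overall constant fixed by the identities above. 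Finally $\int_{C/\ell}^{\pi-C/\ell}=2\int_{C/\ell}^{\pi/2}$ is legitimate for the second-order term since $P_\ell^{(k)}(-\cos\phi)^2=P_\ell^{(k)}(\cos\phi)^2$ (the deviation of $K_2^I$ itself from this symmetry for odd $\ell$ living in the already-controlled first-order and remainder terms), and the factors $8\pi^2$, $2$ and the normalization of $L_{2,\ell}$ cancel; collecting the three error contributions gives the stated $O(\ell^{5/2})$.
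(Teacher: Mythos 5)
Your overall architecture---a near-diagonal $O(\ell^2)$ bound, exact Kac--Rice in the long range, and a Taylor expansion of the two-point intensity around the decoupled state---matches the paper's, but the details diverge in two instructive places.

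On the near-antipodal regime your treatment is actually more careful than the paper's. The matrix $\Sigma_\ell(\phi)$ degenerates at $\phi=\pi$ as well as at $\phi=0$ (by $P_\ell(-\cos\phi)=(-1)^\ell P_\ell(\cos\phi)$), so the paper's Lemma \ref{e}, which asserts exact Kac--Rice under $d(\mathcal{D}_1,\mathcal{D}_2)>C/\ell$ alone, cannot be applied as stated to nearly antipodal pairs of cells; the paper never confronts this. Your reduction via $f_\ell(-x)=(-1)^\ell f_\ell(x)$ folds the near-antipodal pairs into the already-controlled near-diagonal regime cleanly. Likewise, your observation that mixed second derivatives of the conditional Gaussian expectation across blocks vanish at the base point (since $B^tA^{-1}B$ is quadratic in the $\beta$'s and $\det A$ is even in the $\alpha$'s) dispenses by symmetry with the cross-terms $A_{ij,\ell}$, $(i,j)\neq(7,7)$, which the paper bounds one at a time as oscillatory integrals in Lemma \ref{Aij}.

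The genuine gap is in the short-range estimate $\E[\mathcal{N}^c(f_\ell;B')^2]=O(1)$. You propose a dyadic decomposition followed by ``rescaling to the limiting Bessel field at the innermost scale,'' but this does not resolve the essential difficulty, which is a compensation of singularities: the factor $\det A_\ell(\psi)^{-1/2}$ in the two-point intensity blows up like $\psi^{-2}$ as $\psi=\ell\,d(x,y)\to 0$, and this must be matched by a $\psi^2$ decay of the conditional Hessian-determinant expectation for the Kac--Rice integral over the cell to be $O(1)$. The paper proves exactly this (Lemmas \ref{18luglioa} and \ref{dettttA}), after first showing in Appendix \ref{AppC}, by a twelfth-order Taylor expansion, that $\det\Sigma_\ell(\psi)\sim\psi^{26}>0$ uniformly for $0<\psi<c$; this makes Kac--Rice exact on the whole cell and the dyadic decomposition unnecessary. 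Passing to the Bessel scaling limit does not sidestep the issue: the rescaled two-point intensity inherits the same near-diagonal singularity structure, so finiteness of the limit's second moment requires the same compensation, and the uniform transfer back to finite $\ell$ would still need justification. One smaller caution on the long-range expansion: working in the raw cross-covariances rather than the paper's conditional-covariance parameters $a_{i,\ell}$, you must not lose the $\alpha_2^2$ contribution entering through $\det A_\ell(\phi)^{-1/2}$; in the paper's bookkeeping this is what survives the cancellation of $(\E[\mathcal{N}_I^c(f_\ell)])^2$ against $A_{0,\ell}$ and produces the first term $\frac{1}{2}\sin^4\phi\,[P_\ell''(\cos\phi)]^2\,v_1$ of $L_{2,\ell}$, so ``the zeroth order cancels'' needs to be stated with this residual kept.
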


\noindent The rest of the present section is dedicated to proving formula \eqref{afkao}.

\subsection{Two-point correlation function} \label{3.22222}

Here we formulate some auxiliary results instrumental for our
main argument below; our aim is to
write an approximate formula for the variance as an integral of the two-point
correlation function $K_{2,\ell}$ defined by
\begin{equation} \label{k2elegant}
K_{2,\ell}(x,y;t_{1},t_{2})=
\mathbb{E}\left[  \left\vert \nabla ^{2}f_{\ell }(x)\right\vert
\cdot \left\vert \nabla ^{2}f_{\ell }(y)\right\vert \Big|  \nabla
f_{\ell }(x)=\nabla f_{\ell }(y)={\bf 0},f_{\ell }(x)=t_{1},f_{\ell
}(y)=t_{2}\right] \cdot \varphi _{x,y, \ell}(t_{1},t_{2}, \mathbf{0},\mathbf{0}),
\end{equation}
where $\varphi _{x,y, \ell}(t_{1},t_{2},\mathbf{0},\mathbf{0})$ denotes
the density of the 6-dimensional vector
\begin{equation*}
\left( f_{\ell }(x),f_{\ell }(y),\nabla f_{\ell }(x),\nabla
f_{\ell }(y)\right)
\end{equation*}
in $f_{\ell }(x)=t_{1},f_{\ell }(y)=t_{2},\nabla f_{\ell }(x)=\nabla f_{\ell}(y)={\bf 0}$. Note that, by the isotropy, the function $K_{2,\ell}$ depends on the points $x$, $y$ only
via their geodesic distance $\phi=d(x,y)$; by abuse of notation we write
\begin{equation*}
K_{2,\ell}(\phi; t_{1},t_{2})=K_{2,\ell}(x,y;t_{1},t_{2}).
\end{equation*}%
Also, we note that $K_{2,\ell}(\phi ;t_{1},t_{2})$ is everywhere nonnegative.
We shall need several results:

\begin{lemma} \label{e}
There exists a constant $C>0$ sufficiently big, such that for every
nice domains ${\cal D}_{1},{\cal D}_{2}\subseteq {\cal S}^2$ with distance $d({\cal D}_{1},{\cal D}_{2})>C/\ell$, we have
\begin{equation*}
{\text Cov} \left( \mathcal{N}^{c}(f_{\ell };{\cal D}_{1},I),\mathcal{N}^{c}(f_{\ell
};{\cal D}_{2},I)\right) =\iint_{{\cal D}_{1}\times {\cal D}_{2}}\iint_{I\times
I}K_{2,\ell}(x , y ;t_{1},t_{2})dt_{1}dt_{2}dx dy.
\end{equation*}
\end{lemma}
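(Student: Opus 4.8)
The plan is to recognise the right-hand side as the Kac--Rice second moment integral \eqref{eq:Kac-Rice 2nd fact disjoint} for the zeros of $g=\nabla f_{\ell}$, applied to the \emph{disjoint} pair $\mathcal{D}_{1},\mathcal{D}_{2}$ and carrying the bounded weight $(x,y)\mapsto\ind_{I}(f_{\ell}(x))\,\ind_{I}(f_{\ell}(y))$. Counting critical points of $f_{\ell}$ in $\mathcal{D}_{i}$ with value in $I$ is the same as counting the zeros of $x\mapsto\nabla f_{\ell}(x)$ in $\mathcal{D}_{i}$ weighted by $\ind_{I}(f_{\ell}(x))$, the relevant Jacobian being the Hessian $\nabla^{2}f_{\ell}$; and since $\mathcal{D}_{1}\cap\mathcal{D}_{2}=\emptyset$ the diagonal $\{x=y\}$ never enters the product $\mathcal{N}^{c}(f_{\ell};\mathcal{D}_{1},I)\cdot\mathcal{N}^{c}(f_{\ell};\mathcal{D}_{2},I)$, so this product coincides a.s.\ with the number of pairs $(x,y)\in\mathcal{D}_{1}\times\mathcal{D}_{2}$ at which $\nabla f_{\ell}(x)=\nabla f_{\ell}(y)=0$ and $f_{\ell}(x),f_{\ell}(y)\in I$. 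The version of \cite{adlertaylor}, Theorem~11.2.1, permitting a bounded measurable weight in the field value is obtained by a routine monotone approximation of $\ind_{I}$ by continuous functions, using that the critical set of $f_{\ell}$ is a.s.\ finite and that for each fixed $t$ the field $f_{\ell}$ a.s.\ has no critical value equal to $t$, so that $\partial I$ carries no mass; this produces exactly $\iint_{\mathcal{D}_{1}\times\mathcal{D}_{2}}\iint_{I\times I}K_{2,\ell}(x,y;t_{1},t_{2})\,dt_{1}\,dt_{2}\,dx\,dy$ with $K_{2,\ell}$ as in \eqref{k2elegant}, which is the assertion of Lemma~\ref{e}.

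The one genuinely nontrivial point is the non-degeneracy hypothesis in \eqref{eq:Kac-Rice 2nd fact disjoint}: for all $x\in\mathcal{D}_{1}$, $y\in\mathcal{D}_{2}$ --- so $\phi:=d(x,y)>C/\ell$ --- the Gaussian vector $(\nabla f_{\ell}(x),\nabla f_{\ell}(y),\nabla^{2}f_{\ell}(x),\nabla^{2}f_{\ell}(y))\in\R^{10}$ must be non-degenerate. No extra condition is needed to accommodate the weight, since by \eqref{linear_dp} each of $f_{\ell}(x),f_{\ell}(y)$ is, on the zero set of the gradient, a fixed linear functional of the Hessian entries, and the $6$-dimensional density $\varphi_{x,y,\ell}$ in \eqref{k2elegant} is well defined as soon as the above $10$-dimensional vector is. By isotropy its covariance matrix $\Sigma_{\ell}(\phi)$ depends only on $\phi$ and, in $5+5$ block form, has both diagonal blocks equal to the single-point matrix $\sigma_{\ell}$ of Section~\ref{sec:Kac-Rice coord} (non-degenerate for $\ell$ large), while its off-diagonal block is an explicit combination of $P_{\ell}(\cos\phi),P_{\ell}'(\cos\phi),\dots,P_{\ell}^{(4)}(\cos\phi)$ with trigonometric and $\lambda_{\ell}$-power coefficients, as computed in Appendix~\ref{cov_matx}. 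After rescaling every coordinate to unit variance, the diagonal blocks converge as $\ell\to\infty$ to a fixed non-singular matrix (read off from $a_{\ell}$, $b_{\ell}$ and $\tilde c_{\infty}$), whereas Hilb's asymptotics for Legendre polynomials and their derivatives bound the rescaled off-diagonal block, uniformly over $C/\ell<\phi\le\pi/2$, by a quantity that tends to $0$ as $C\to\infty$; by $P_{\ell}(\cos\phi)=(-1)^{\ell}P_{\ell}(\cos(\pi-\phi))$ and the antipodal reflection $y\mapsto-y$ this also covers $C/\ell<\phi<\pi-C/\ell$ (the narrow band around the antipode, irrelevant for the application to Proposition~\ref{P-afkao}, requiring a separate argument). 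Hence, for $C$ sufficiently large, $\Sigma_{\ell}(\phi)$ is, in the rescaled coordinates, a small perturbation of a fixed block-diagonal non-singular matrix, so $\det\Sigma_{\ell}(\phi)$ is bounded below by a positive constant uniformly over all admissible $(x,y)$, and \eqref{eq:Kac-Rice 2nd fact disjoint} applies.

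The main obstacle is precisely this uniform non-degeneracy estimate: one must extract enough decay from Hilb's asymptotics for $P_{\ell}^{(k)}(\cos\phi)$, $0\le k\le4$, simultaneously for all $\phi>C/\ell$, and then convert the smallness of the off-diagonal entries into a lower bound for the $10\times10$ determinant. These are exactly the bounds on the quantities $\sin^{j}\phi\,[P_{\ell}^{(k)}(\cos\phi)]$ that occur in \eqref{bachata}, and they will be reused when the long-range contribution to $\text{Var}(\mathcal{N}_{I}^{c}(f_{\ell}))$ is evaluated; as in \cite{rudnickwigman}, it is the failure of this non-degeneracy in the complementary short-range regime $\phi\le C/\ell$ that forces the separate, only approximate, treatment there.
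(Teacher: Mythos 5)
Your argument is correct and reaches the same conclusion, but the core nondegeneracy step uses a genuinely different block decomposition than the paper's. The paper organises the $10\times 10$ covariance $\Sigma_{\ell}(\phi)$ in the (gradient$/$Hessian) block form $\begin{pmatrix}A_{\ell}&B_{\ell}\\B_{\ell}^{t}&C_{\ell}\end{pmatrix}$ and uses the Schur-complement factorisation $\det\Sigma_{\ell}=\det A_{\ell}\cdot\det\Omega_{\ell}$ with $\Omega_{\ell}=C_{\ell}-B_{\ell}^{t}A_{\ell}^{-1}B_{\ell}$, then checks both factors are nonsingular by showing the perturbation entries $a_{i,\ell}(\phi)$ of the scaled $\Delta_{\ell}(\phi)$ are uniformly small for $\phi>C/\ell$ (Lemma~\ref{alphaalpha}). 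You instead regroup the coordinates by point, so that $\Sigma_{\ell}(\phi)$ has two copies of the single-point matrix $\sigma_{\ell}$ on the diagonal --- after scaling, a fixed nonsingular block $\mathrm{diag}(I_{2},\tilde c_{\infty})$ --- and a cross-covariance off-diagonal block whose normalised entries are bounded, via Hilb's asymptotics, by $O(C^{-1/2})$ uniformly in $\ell$ and in $C/\ell<\phi\le\pi/2$, hence small for $C$ large. Both routes rest on the same analytic input (the asymptotics of Lemma~\ref{alphaalpha}); the paper's is more economical because it reuses the Schur complement $\Omega_{\ell}$ it needs anyway for the two-point correlation function, while yours is more transparent as a perturbation argument and makes the role of the constant $C$ explicit. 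Two small remarks: (i) the paper cites Theorem~11.5.1 rather than Theorem~11.2.1 of \cite{adlertaylor} here, so your approximation step for the indicator weight is an alternative to, not a substitute for, simply invoking that more general meta-theorem; (ii) your note about the antipodal range $\phi$ near $\pi$ is well taken --- the paper handles it only implicitly via the reflection $\phi\mapsto\pi-\phi$ (which preserves $\Sigma_{\ell}$ up to signs and hence its determinant), and your observation that the narrow band around $\pi$ requires a separate word is a fair point about a feature the paper glosses over, though not a gap in your own argument since that band is irrelevant for the application in Proposition~\ref{P-afkao}.
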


\begin{proposition}[Long-range asymptotics of the 2-point correlation function] \label{key-c}
There exists a constant $C>0$, such that for $d(x,y)>C/\ell$, one has:
\begin{equation*}
16 \pi^2K_{2,\ell}(x,y;t_{1},t_{2})=L_{2,\ell}(x,y;t_{1},t_{2})+\frac{\ell ^{4}}{4} p_{1}^{c}(t_{1}) p_{1}^{c}(t_{2})+E_{2,\ell}(x,y;t_{1},t_{2}),
\end{equation*}%
where $L_{2,\ell}$ is as in \eqref{bachata} and the error term $E_{2,\ell}$ is such that
\begin{equation}
\iint_{d(x,y)>C/\ell}\iint_{\mathbb{R} \times \mathbb{R}}\left\vert
E_{2,\ell}(x,y;t_{1},t_{2})\right\vert dt_{1}dt_{2}dxdy=O(\ell ^{5/2}).
\label{integralerror}
\end{equation}
\end{proposition}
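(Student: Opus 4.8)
The plan is to realise $16\pi^2K_{2,\ell}(\phi;t_1,t_2)$, where $\phi=d(x,y)$, as a Kac--Rice density, to exhibit it as a real-analytic function of the entries of the relevant Gaussian covariance matrix, and to Taylor-expand about the ``decoupled'' configuration in which all cross-covariances between the derivatives of $f_\ell$ at $x$ and at $y$ vanish; the constant term of the expansion produces $\tfrac{\ell^4}{4}p_1^c(t_1)p_1^c(t_2)$, the quadratic term produces $L_{2,\ell}$, and all remaining terms are shown to form an admissible error. Concretely, fix $x,y$ on the equator with $d(x,y)=\phi$. By \eqref{linear_dp} at $x$ and at $y$, the values $f_\ell(x),f_\ell(y)$ equal $-\lambda_\ell^{-1}$ times the traces of $\nabla^2f_\ell(x),\nabla^2f_\ell(y)$, so after conditioning on $\nabla f_\ell(x)=\nabla f_\ell(y)=0$ one may work with the $10$-dimensional centred Gaussian vector $(\nabla f_\ell(x),\nabla f_\ell(y),\nabla^2f_\ell(x),\nabla^2f_\ell(y))$ and rewrite $K_{2,\ell}(\phi;t_1,t_2)$ as the density \eqref{k2elegant}. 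Its covariance matrix $\Sigma_\ell(\phi)$ has diagonal $5\times5$ blocks equal to the one-point matrix $\sigma_\ell$ of Section~\ref{sec:Kac-Rice coord} (independent of $\phi$) and an off-diagonal $5\times5$ block $B_\ell(\phi)$ whose entries are explicit trigonometric-polynomial combinations of $P_\ell(\cos\phi),P_\ell'(\cos\phi),\dots,P_\ell''''(\cos\phi)$, computed in Appendix~\ref{cov_matx}. The equatorial reflection $(\theta_x,\theta_y)\mapsto(\pi-\theta_x,\pi-\theta_y)$, an isometry of $\mathcal{S}^2$ fixing $x$ and $y$, changes the sign of $\partial_{\theta_x},\partial_{\theta_y}$ and fixes $\partial_{\varphi_x},\partial_{\varphi_y}$, so all covariances between components carrying a total odd number of $\theta$-derivatives vanish and the vector splits into an independent ``even'' and ``odd'' part. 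Rescaling the second derivatives by $\sqrt8/\lambda_\ell$ and the first derivatives by $\sqrt{2/\lambda_\ell}$ (as in the proof of Lemma~\ref{expectation}) turns each diagonal block into a fixed non-degenerate matrix up to an $O(\ell^{-2})$ deviation, while Hilb's asymptotics bound the entries of the rescaled off-diagonal block $\widetilde B_\ell(\phi)$ by $\mathrm{const}\cdot(\ell\sin\phi)^{-1/2}$; for $C$ large this is $<\tfrac12$ on $\phi>C/\ell$, and a Schur-complement / Neumann-series argument then gives that $\Sigma_\ell(\phi)$ is uniformly non-degenerate there, with uniformly bounded inverse and determinant.

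Since $|\det\nabla^2f_\ell|$ is the absolute value of a quadratic in the Hessian entries and the conditioning is on a non-degenerate sub-vector, the density $16\pi^2K_{2,\ell}(\phi;t_1,t_2)$ is, for each fixed $(t_1,t_2)$, a real-analytic function of the entries of $\Sigma_\ell(\phi)$ on the non-degeneracy region --- justified by writing the Gaussian expectation of $|\cdot|$ through the (absolutely convergent) Fourier representation of $s\mapsto|s|$ and differentiating under the integral sign, with domination supplied by the bounds just established. Viewing it as an analytic function $\Psi(\,\cdot\,;t_1,t_2)$ of the small entries of $\widetilde B_\ell(\phi)$ (the diagonal blocks carried along with their $O(\ell^{-2})$ deviation) and Taylor-expanding to second order about $\widetilde B=0$: at $\widetilde B=0$ the $x$- and $y$-blocks decouple and $\Psi(0;t_1,t_2)$ is the product of two one-point Kac--Rice densities, equal by Lemma~\ref{expectation} and Proposition~\ref{expectation copy(1)} to $\tfrac{\ell^4}{4}p_1^c(t_1)p_1^c(t_2)+O(\ell^2)$, the $O(\ell^2)$ coming from the $O(\ell^{-2})$ correction to the diagonal blocks (and multiplying a polynomial $\times$ Gaussian in $(t_1,t_2)$). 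Each first-order term is proportional to a single $P_\ell^{(k)}(\cos\phi)$, whose Hilb expansion is a main part oscillating at frequency $\ell+\tfrac12$ plus a remainder of relative size $O(\ell^{-1})$, and is therefore put into $E_{2,\ell}$. Among the second-order terms, the three ``pure-square'' contributions assemble into the summands of \eqref{bachata}: the coupling $\mathbb{E}[\partial_{\varphi_x}f_\ell(x)\partial_{\varphi_y}f_\ell(y)]$, with leading part $\propto\sin^2\phi\,P_\ell''(\cos\phi)$, enters the gradient-density factor quadratically and produces $\tfrac12\sin^4\phi[P_\ell'']^2v_1$ with $v_1=p_1^c(t_1)p_1^c(t_2)$; the cross-point coupling between $\nabla f_\ell$ and $\nabla^2f_\ell$, with leading part $\propto\sin^3\phi\,P_\ell'''(\cos\phi)$, enters the Schur correction of each conditional Hessian covariance and produces $-\tfrac{32}{\ell^2}\sin^6\phi[P_\ell''']^2v_2$ (the $p_2^c$ in $v_2$ arising from the second covariance-derivative of the one-point integrand $|\det H|\,1\hspace{-0.27em}\mbox{\rm l}_I$); and the Hessian--Hessian coupling, with leading part $\propto\sin^4\phi\,P_\ell''''(\cos\phi)$, directly couples the two Hessian laws and produces $\tfrac{64}{\ell^4}\sin^8\phi[P_\ell'''']^2v_3$, with $v_3$ the product of the corresponding $(p_1^c,p_2^c)$-combinations at $t_1$ and at $t_2$. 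The residual second-order terms are mixed products $P_\ell^{(j)}(\cos\phi)P_\ell^{(k)}(\cos\phi)$ with $j\ne k$: those with $j-k$ odd have vanishing secular part and oscillate purely at frequency $2\ell+1$, so they go into $E_{2,\ell}$, while those with $j-k$ even carry a nonzero secular part and must cancel --- which they do, by the $H\mapsto-H$ symmetry of $|\det H|$ at each point separately together with the even/odd block structure; all Hilb remainders of the surviving quadratic terms are likewise absorbed into $E_{2,\ell}$.

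Collecting, $E_{2,\ell}(\phi;t_1,t_2)$ consists of (a) the $O(\ell^2)$ diagonal-correction term; (b) the Taylor remainder, cubic and higher in $\widetilde B_\ell(\phi)$, bounded by $\mathrm{const}\cdot\ell^4(\ell\sin\phi)^{-3/2}$ times a polynomial $\times$ Gaussian in $(t_1,t_2)$ (using uniform bounds on $\Psi$ and its third derivatives, which are of the same polynomial$\times$Gaussian type in $(t_1,t_2)$ that $p_1^c,p_2^c$ are); and (c) the first-order terms, the odd mixed-quadratic terms, and all Hilb remainders, each of the form (slowly varying envelope of size $\lesssim\ell^4(\ell\sin\phi)^{-1/2}$) $\times\cos\!\big((\text{integer}\ge\ell)\,\phi+\text{phase}\big)\times$ polynomial $\times$ Gaussian in $(t_1,t_2)$, plus a non-oscillatory part of size $\lesssim\ell^4\cdot\ell^{-3/2}(\sin\phi)^{-1/2}$. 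Integrating in $(t_1,t_2)$ over $\R^2$ (Gaussian decay gives a finite constant) and then in $(x,y)$ over $d(x,y)>C/\ell$: part (a) gives $O(\ell^2)$; part (b) gives $\mathrm{const}\cdot\ell^4\int_{C/\ell}^{\pi}(\ell\sin\phi)^{-3/2}\sin\phi\,d\phi=\mathrm{const}\cdot\ell^{5/2}\int_{C/\ell}^{\pi}(\sin\phi)^{-1/2}\,d\phi=O(\ell^{5/2})$; and in part (c) one integration by parts in $\phi$ trades the frequency-$\ge\ell$ oscillation for a factor $\ell^{-1}$, leaving boundary terms of size $O(\ell^{2})$ and a bulk term $\le\mathrm{const}\cdot\ell^{5/2}\int_{C/\ell}^{\pi}(\sin\phi)^{-1/2}\,d\phi=O(\ell^{5/2})$ (the non-oscillatory part being directly $O(\ell^{5/2})$ by the same integral). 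Hence the left-hand side of \eqref{integralerror} is $O(\ell^{5/2})$, which completes the proof.

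The main obstacle is the bookkeeping in the second step: organizing the $10\times10$ covariance computation of Appendix~\ref{cov_matx} around the $\theta$-parity decomposition so as to verify that the secular parts of all ``even'' mixed-quadratic terms cancel and to read off the precise coefficients $v_1,v_2,v_3$ --- in particular the $p_2^c$-dependence, which encodes a genuine second-order effect of the coupling on the conditional Hessian laws. A secondary technical nuisance is the analyticity together with the uniform control of the covariance-derivatives of the Kac--Rice density, since $|\det H|$ is only Lipschitz; this is dealt with via the Fourier representation of $s\mapsto|s|$ and the non-degeneracy bounds of the first step.
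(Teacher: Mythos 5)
Your overall architecture matches the paper's: realize $K_{2,\ell}$ as a Kac--Rice density, establish analyticity in the covariance entries, and Taylor-expand about the decoupled ($\phi$-independent) configuration, reading off $\tfrac{\ell^4}{4}p_1^cp_1^c$ from the constant term, $L_{2,\ell}$ from the surviving quadratic terms, and absorbing the rest into $E_{2,\ell}$ via Hilb's asymptotics. The one genuine methodological variant is your choice of small parameters: you expand in the raw cross-covariance block $\widetilde B_\ell(\phi)$ (the $\alpha_i,\beta_j,\gamma_k$ of Appendix~\ref{cov_matx_ev}), whereas the paper first Schur-complements over the gradients and expands in the eight entries $\mathbf a_\ell(\phi)$ of the conditional covariance $\Delta_\ell(\phi)$. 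Either starting point can work; the paper's has the advantage that the conditioning is already built in, so the dominant contributions appear cleanly as one linear ($a_3$) and one quadratic ($a_7^2$) coefficient (Lemmas~\ref{Ai} and~\ref{Aij}), whereas in your scheme the $\beta_3^2$ term that gives the $[P_\ell''']^2v_2$ piece is a second-order coefficient and the $\gamma_4$ linear and $\gamma_4^2$ quadratic coefficients have to be tracked separately.

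There is, however, a concrete gap. You claim that the ``even'' mixed-quadratic contributions (products $P_\ell^{(j)}P_\ell^{(k)}$ with $j-k$ even, e.g.\ $\gamma_1\gamma_4 \sim P_\ell''\cdot \sin^4\phi\,P_\ell''''$), which do carry a nonzero secular part, ``must cancel --- which they do, by the $H\mapsto -H$ symmetry of $|\det H|$ at each point separately together with the even/odd block structure.'' This is not correct. The negation symmetry of $|\det H|$ simultaneously flips the sign of the field value via \eqref{linear_dp}, so it is not an invariance of the Kac--Rice integrand restricted to a general interval $I$; and the second cross-derivative $\partial_{\gamma_1}\partial_{\gamma_4}q|_{\widetilde B=0}$, which by Wick/Hermite factorization equals (up to constants) the product of one-point quantities $\mathbb E_x[\partial_{\zeta_1}\partial_{\zeta_3}(\,\cdot\,)]\cdot\mathbb E_y[\partial_{\zeta_1}\partial_{\zeta_3}(\,\cdot\,)]$, has no reason to vanish. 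In fact these terms do \emph{not} cancel pointwise; they are simply subdominant after integration. The reason, which your own size bookkeeping already contains but your argument does not invoke, is that the two factors have very different magnitudes: $\gamma_1/\lambda_\ell^2 \simeq \ell^{-5/2}\sin^{-5/2}\phi$ while $\gamma_4/\lambda_\ell^2\simeq \ell^{-1/2}\sin^{-1/2}\phi$, so the product $\simeq \ell^{-3}\sin^{-3}\phi$ integrates against $\sin\phi\,d\phi$ on $[C/\ell,\pi/2]$ to $O(\ell^{-2})$, i.e.\ $O(\ell^2)$ after multiplying by $\lambda_\ell^2$. This is exactly the content of the paper's Lemma~\ref{Aij}, which shows that \emph{all} second-order coefficients except $A_{77,\ell}$ are $O(\ell^{-2}\log\ell)$ by appealing to Lemmas~\ref{alphaalpha}, \ref{sine2}--\ref{sine4}, with no cancellation needed. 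To repair your proof you should replace the ``must cancel'' step with the corresponding magnitude estimate: whenever at least one factor in a mixed quadratic comes from $\alpha_1,\beta_1,\beta_2,\gamma_1,\gamma_2,\gamma_3$ (or a subleading piece of $\alpha_2,\beta_3,\gamma_4$), the Hilb size of the rescaled product, integrated against $\sin\phi\,d\phi$, is $O(\ell^{-2}\log\ell)$, hence contributes $O(\ell^2\log\ell)$ to the variance; the only secular quadratic that survives at order $\ell^{-1}$ is $\gamma_4^2/\lambda_\ell^4$, together with the two terms you already isolate. Apart from this, the error bookkeeping (Taylor remainder of size $\lesssim\ell^4(\ell\sin\phi)^{-3/2}$ giving $O(\ell^{5/2})$, and integration by parts for the oscillatory pieces) is sound, modulo the usual care near $\phi=C/\ell$ where the envelope derivative is large.
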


\begin{lemma} \label{d}
For any constant $C>0$, we have
\begin{equation*}
\int_{\mathbb{R}^{2}}\left\vert L_{2,\ell}(x,y;t_{1},t_{2})\right\vert
dt_{1}dt_{2}=O(\ell ^{4}),
\end{equation*}%
uniformly for $\ell \geq 1$, $d(x,y)>C/\ell$.
\end{lemma}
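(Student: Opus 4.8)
The plan is to peel off the dependence on $(t_1,t_2)$, which is completely explicit and $\ell$-free, and reduce everything to uniform bounds on the Legendre factors appearing in \eqref{bachata}. Since each of the three terms in \eqref{bachata} is a nonnegative function of $\phi$ times one of $\pm v_1,\pm v_2,\pm v_3$, the triangle inequality gives
\begin{align*}
\int_{\mathbb{R}^2}\bigl|L_{2,\ell}(\phi;t_1,t_2)\bigr|\,dt_1\,dt_2
&\le \tfrac12\sin^4\phi\,[P_\ell''(\cos\phi)]^2\int_{\mathbb{R}^2}|v_1|\,dt_1\,dt_2 + \tfrac{32}{\ell^2}\sin^6\phi\,[P_\ell'''(\cos\phi)]^2\int_{\mathbb{R}^2}|v_2|\,dt_1\,dt_2 \\
&\qquad + \tfrac{64}{\ell^4}\sin^8\phi\,[P_\ell''''(\cos\phi)]^2\int_{\mathbb{R}^2}|v_3|\,dt_1\,dt_2.
\end{align*}
The three $t$-integrals are finite constants independent of $\ell$: $p_1^c$ and $p_2^c$ are the explicit functions of the introduction (each a polynomial times $e^{-t^2/2}$ plus a polynomial times $e^{-3t^2/2}$), so $\int_{\mathbb{R}}|p_1^c|$ and $\int_{\mathbb{R}}|p_2^c|$ are finite, and $v_1,v_2,v_3$ are products in $(t_1,t_2)$ of linear combinations of these. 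Thus it remains to bound each of $\sin^4\phi\,[P_\ell'']^2$, $\ell^{-2}\sin^6\phi\,[P_\ell''']^2$, $\ell^{-4}\sin^8\phi\,[P_\ell'''']^2$ by $O(\ell^4)$, uniformly for $\phi=d(x,y)\in(C/\ell,\pi)$.

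Equivalently, I would show $\sin^j\phi\,|P_\ell^{(j)}(\cos\phi)|=O(\ell^j)$ for $j=2,3,4$, uniformly on $(C/\ell,\pi)$. Using $P_\ell^{(j)}(-x)=(-1)^{\ell+j}P_\ell^{(j)}(x)$ together with $\sin(\pi-\phi)=\sin\phi$, these quantities are symmetric about $\phi=\pi/2$, so I may assume $\phi\in(C/\ell,\pi/2]$. Iterating $\frac{d}{d(\cos\phi)}=-\frac{1}{\sin\phi}\frac{d}{d\phi}$ yields
\[
\sin^j\phi\,P_\ell^{(j)}(\cos\phi)=\sum_{i=0}^{j}\frac{b_{ji}(\phi)}{\sin^{\,j-i}\phi}\,\frac{d^i}{d\phi^i}\bigl[P_\ell(\cos\phi)\bigr],
\]
where the $b_{ji}$ are polynomials in $\cos\phi$, hence bounded on $[0,\pi/2]$. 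Now I invoke the Hilb-type Legendre asymptotics collected in the appendix, in the form $\bigl|\frac{d^i}{d\phi^i}[P_\ell(\cos\phi)]\bigr|=O(\ell^{\,i-1/2}\phi^{-1/2})$ uniformly for $0\le i\le 4$ and $\phi\in(C/\ell,\pi/2]$; since $\sin\phi\ge\frac{2}{\pi}\phi>\frac{2C}{\pi\ell}$ on this range, each summand is $O\bigl(\ell^{\,i-1/2}\,\phi^{-1/2}\,\phi^{-(j-i)}\bigr)=O\bigl(\ell^{\,i-1/2}(\ell/C)^{\,j-i+1/2}\bigr)=O(\ell^{\,j})$. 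Hence $\sin^j\phi\,|P_\ell^{(j)}(\cos\phi)|=O(\ell^j)$ uniformly; taking $j=2,3,4$ gives the required $O(\ell^4)$ for each of the three coefficients, and combining with the first paragraph completes the proof.

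The one genuinely delicate point is precisely the uniform bound $\sin^j\phi\,|P_\ell^{(j)}(\cos\phi)|=O(\ell^j)$ near the inner endpoint $\phi\sim C/\ell$: there $\sin^j\phi\to 0$ while $P_\ell^{(j)}(\cos\phi)\to P_\ell^{(j)}(1)$, a quantity of order $\ell^{2j}$, so the estimate hinges on an exact cancellation between these two effects — which is exactly what the Legendre/Hilb asymptotics in the appendix quantify. Once those are granted, the rest is bookkeeping.
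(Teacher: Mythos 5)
Your first paragraph is exactly the right reduction and matches the paper's plan: factor out the $t$-integrals (which converge absolutely since $p_1^c,p_2^c$ are explicit functions with Gaussian tails), and the lemma reduces to showing $\sin^{2j}\phi\,[P_\ell^{(j)}(\cos\phi)]^2=O(\ell^{2j})$ uniformly on $[C/\ell,\pi/2]$ for $j=2,3,4$.

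The gap is in how you then try to establish those three bounds. You change variables via the chain rule to the $\phi$-derivatives $\tfrac{d^i}{d\phi^i}[P_\ell(\cos\phi)]$ and then ``invoke the Hilb-type Legendre asymptotics collected in the appendix, in the form $|\tfrac{d^i}{d\phi^i}[P_\ell(\cos\phi)]|=O(\ell^{i-1/2}\phi^{-1/2})$.'' That estimate is not in the appendix. Lemmas \ref{hilbs} and \ref{hilbs_apx} give asymptotics for $P_\ell^{(j)}(\cos\phi)$ (derivatives with respect to the polynomial's own argument), not for $\tfrac{d^i}{d\phi^i}[P_\ell(\cos\phi)]$; and you cannot simply differentiate Lemma \ref{hilb0}, since it gives no control on derivatives of its error term $\delta_\ell(\phi)$. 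Moreover, $\tfrac{d^i}{d\phi^i}[P_\ell(\cos\phi)]$ is itself (by the chain rule) a combination of the very quantities $\sin^k\phi\,P_\ell^{(k)}(\cos\phi)$ you set out to estimate, so proving the bound you ``invoke'' is essentially the same task, and the detour just inserts an unjustified intermediate claim. The cleaner route---and the one the paper takes---is to apply Lemma \ref{hilbs_apx} directly: $P_\ell^{(j)}(\cos\phi)\simeq\sum_{i=0}^{j-1}\ell^{\,j-i-1/2}\sin^{-(j+i+1/2)}\phi + R_j(\ell,\phi)$; multiply through by $\sin^{j}\phi$, use $\sin\phi\ge(2/\pi)\phi>(2C)/(\pi\ell)$ on the relevant range to see that each summand and the error $\sin^j\phi\,R_j$ are $O(\ell^{j})$ uniformly, and square. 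That is the whole proof.
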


\begin{proposition} \label{a}
There exist a constant $c>0$ such that for every
nice domain $\mathcal{D\subseteq S}^{2}$ contained in some spherical cap of
radius $c/\ell$, one has
\begin{equation*}
\mathbb{E}\left[ \mathcal{N}^{c}(f_{\ell };\mathcal{D},I)\left( \mathcal{N}
^{c}(f_{\ell };\mathcal{D},I)-1\right) \right] =\iint_{\mathcal{D}\times
\mathcal{D}}\iint_{I\times
I}K_{2,\ell}(x,y;t_{1},t_{2})dt_{1}dt_{2}dx dy
\end{equation*}
\end{proposition}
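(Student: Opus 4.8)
The plan is to recognize the displayed identity as the exact Kac--Rice formula of \cite{adlertaylor}, Theorem 11.2.1, for the second factorial moment of the zero set of $\nabla f_\ell$ on $\mathcal{D}$, weighted by the indicator that $f_\ell(x),f_\ell(y)\in I$. As in the one-point computation of Lemma \ref{K-R_Exp}, the linear dependency \eqref{linear_dp} expresses $f_\ell(x)=-\bigl(e_1^xe_1^xf_\ell(x)+e_2^xe_2^xf_\ell(x)\bigr)/\lambda_\ell$ and similarly $f_\ell(y)$, so that every Gaussian density occurring in \eqref{k2elegant} is the image, under a surjective linear map, of the $10$-dimensional centred Gaussian vector
\[
V_{x,y}=\bigl(\nabla f_\ell(x),\nabla f_\ell(y),\nabla^2 f_\ell(x),\nabla^2 f_\ell(y)\bigr).
\]
Hence the only hypothesis of \cite{adlertaylor}, Theorem 11.2.1 that is not automatic is the non-degeneracy of $V_{x,y}$ for every pair of distinct points $x\ne y$ of $\mathcal{D}$. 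Granting this, Theorem 11.2.1 of \cite{adlertaylor} applied to $Y=\nabla f_\ell$ on $\mathcal{D}\times\mathcal{D}$, followed by disintegration over the values $f_\ell(x)=t_1$, $f_\ell(y)=t_2$ exactly as in the passage from Lemma \ref{K-R_Exp} to the kernel \eqref{k2elegant}, yields the asserted equality; the diagonal $\{x=y\}$ is $dx\,dy$-null and contributes nothing.

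So the entire content is the non-degeneracy of $V_{x,y}$, i.e.\ the non-vanishing of $\det\Sigma_\ell(\phi)$, where $\Sigma_\ell(\phi)$ is the $10\times10$ covariance matrix of $V_{x,y}$; by isotropy it depends only on $\phi=d(x,y)\in(0,2c/\ell)$, and its entries are read off from the covariance $(x,y)\mapsto P_\ell(\cos d(x,y))$ and its partial derivatives (computed in the appendix) as explicit combinations of $P_\ell^{(k)}(\cos\phi)$, $0\le k\le4$, and of $\sin\phi,\cos\phi$. I would substitute $\phi=\psi/\ell$ with $\psi\in(0,2c]$ and normalise the gradient and Hessian blocks of $\Sigma_\ell$ by the appropriate powers of $\ell$; by Hilb's asymptotics and the Mehler--Heine formula for $P_\ell(\cos(\psi/\ell))$ and its first four derivatives, with errors $O(1/\ell)$ uniform on $[0,2c]$, the normalised matrix equals $\Sigma_\infty(\psi)\bigl(1+O(1/\ell)\bigr)$, where $\Sigma_\infty(\psi)$ is the covariance matrix of $\bigl(\nabla W(u),\nabla W(v),\nabla^2W(u),\nabla^2W(v)\bigr)$ for the Berry random wave $W$ on $\mathbb{R}^2$ (the centred Gaussian field with $\mathbb{E}[W(u)W(v)]=J_0(|u-v|)$) evaluated at $|u-v|=\psi$. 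For each fixed $\psi>0$ one has $\det\Sigma_\infty(\psi)\ne0$: a nontrivial linear combination of the gradients and Hessians of $W$ at $u\ne v$ vanishing almost surely would, through the spectral representation of $W$ over the unit circle $S^1\subset\mathbb{R}^2$, force an identity $P(\theta)+Q(\theta)e^{i\langle v-u,\theta\rangle}=0$ for almost every $\theta\in S^1$ with $P,Q$ polynomials of degree at most $2$ not both identically zero, which is impossible for $v\ne u$.

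The real difficulty is to turn this into a bound \emph{uniform as $\psi\downarrow0$}: when the two points coalesce, $\det\Sigma_\infty(\psi)$ and $\det\Sigma_\ell(\psi/\ell)$ both vanish, so one needs their exact order of vanishing. I would carry out the precise Taylor expansion of the entries of $\Sigma_\ell(\phi)$ at $\phi=0$, equivalently of $\Sigma_\infty(\psi)$ in $\psi$, using the explicit values $P_\ell^{(k)}(1)=\frac{(\ell+k)!}{2^k\,k!\,(\ell-k)!}$ together with the power series of $J_0$; the delicate point is that many rows of $\Sigma_\ell(\phi)$ become collinear as $\phi\to0$, so one must first replace the entries of $V_{x,y}$ attached to $y$ by suitable divided differences along the geodesic through $x$ and $y$ before expanding, so as to expose the leading term and avoid catastrophic cancellation. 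This should give $\det\Sigma_\infty(\psi)=c_0\psi^{N}(1+O(\psi^2))$ for an explicit even integer $N$ and a constant $c_0>0$; combined with a compactness argument on $[\psi_0,2c]$ (where $\det\Sigma_\infty>0$ by the previous paragraph) and the uniform error estimate, this forces $\det\Sigma_\ell(\phi)>0$ on all of $(0,2c/\ell)$ once $c$ is small enough and $\ell$ large enough. Choosing $c$ small is exactly what keeps $\psi=\ell\phi$ in the range where the Taylor remainder and the Hilb error are dominated by $\psi^{N}$, while the finitely many remaining values of $\ell$ (for which $\mathcal{D}$ sits in a fixed small cap) are handled by hand. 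With $\Sigma_\ell(\phi)$ non-singular throughout $(0,2c/\ell)$, the Kac--Rice theorem of \cite{adlertaylor} applies as described and the proposition follows. The principal obstacle is thus this uniform $10\times10$ determinant estimate near the diagonal: the combined Legendre/Bessel Taylor analysis, together with the change of basis that exposes its leading order.
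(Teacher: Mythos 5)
Your plan reproduces the structure of the paper's proof: invoke the exact Kac--Rice formula for the second factorial moment (the paper cites Theorem~11.5.1 of Adler--Taylor) and reduce everything to the nonsingularity of the $10\times10$ covariance matrix $\Sigma_\ell(\phi)$ of $\bigl(\nabla f_\ell(x),\nabla f_\ell(y),\nabla^2 f_\ell(x),\nabla^2 f_\ell(y)\bigr)$ for $0<\phi<2c/\ell$, which the authors establish in Appendix~\ref{AppC} by Taylor-expanding the determinant around the diagonal $x=y$. So the approach is essentially identical. The genuine difference is that you introduce two tools the paper does not need: passing to the Berry random-wave scaling limit $\Sigma_\infty(\psi)$ at $\psi=\ell\phi$, with a spectral-representation argument for $\det\Sigma_\infty(\psi)>0$ at each fixed $\psi>0$ and a compactness step on $[\psi_0,2c]$; and a divided-differences change of basis meant to expose the leading order near $\psi=0$. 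The paper avoids both: with computer algebra it Taylor-expands each entry of $\Sigma_\ell(\psi/\ell)$ to order $12$ in $\psi$ with coefficients explicit in $\ell$ and a \emph{universal} $O(\psi^{13})$ remainder, and then computes directly that $\det\Sigma_\ell(\psi/\ell)=c(\ell)\,\psi^{26}+O(\psi^{27})$, where $c(\ell)$ is an explicit rational function, positive for $\ell\ge 5$ and tending to a positive constant, while the implied constant in $O(\psi^{27})$ is universal. Since this expansion holds uniformly for $\psi\in(0,2c]$, positivity of the determinant follows for $c$ small and $\ell$ large without a separate compactness argument, so your route is more conceptual but the paper's is shorter. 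The one thing your proposal leaves undone is precisely the crux of the matter: you assert only that the Taylor analysis ``should give'' $\det\Sigma_\infty(\psi)=c_0\psi^{N}(1+O(\psi^2))$ with $c_0>0$ and $N$ even, without determining $N=26$ or verifying $c_0>0$; the computer-assisted expansion in Appendix~\ref{AppC} is exactly what settles this.
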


\begin{lemma} \label{b}
There exists a constant $c>0$ such that, for $d(x,y)<c/\ell$, one has
\begin{equation*}
\int_{\mathbb{R}^2
}K_{2,\ell}(x,y;t_{1},t_{2})dt_{1}dt_{2} = O(\ell^4),
\end{equation*}%
where the constant involved in the $O$-notation is universal.
\end{lemma}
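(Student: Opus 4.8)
The plan is to integrate out the value variables $t_1,t_2$, which reduces the claim to a uniform $O(\ell^4)$ bound for the two-point correlation function of critical points \emph{without} value restriction, and then to prove that bound by analysing how the relevant Gaussian vectors degenerate as $d(x,y)\to 0$. For a sufficiently small constant $c>0$ and $\ell$ large, the $6$-dimensional Gaussian vector $\bigl(f_\ell(x),f_\ell(y),\nabla f_\ell(x),\nabla f_\ell(y)\bigr)$ is non-degenerate whenever $0<\phi:=d(x,y)<c/\ell$: after rescaling $\phi=\psi/\ell$ and using the Taylor expansions of $P_\ell(\cos\phi)$ and its derivatives about $\phi=0$ (the Legendre asymptotics collected in the appendix), its covariance determinant equals a fixed power of $\ell$ times the corresponding determinant for the scaling (Berry) model at distance $\psi$, up to a uniformly smaller error, and the latter is strictly positive for $\psi\in(0,c)$ once $c$ is small enough. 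Granting this, we factor $\varphi_{x,y,\ell}(t_1,t_2,\mathbf{0},\mathbf{0})=\varphi_{(f_\ell(x),f_\ell(y))\mid\nabla f_\ell(x)=\nabla f_\ell(y)=\mathbf{0}}(t_1,t_2)\cdot\varphi_{(\nabla f_\ell(x),\nabla f_\ell(y))}(\mathbf{0},\mathbf{0})$ in \eqref{k2elegant}; integrating over $t_1,t_2\in\mathbb{R}$, the conditional density integrates to $1$ and the tower property gives
\begin{equation*}
\int_{\mathbb{R}^2}K_{2,\ell}(x,y;t_1,t_2)\,dt_1\,dt_2=\varphi_{(\nabla f_\ell(x),\nabla f_\ell(y))}(\mathbf{0},\mathbf{0})\cdot\mathbb{E}\!\left[|\det\nabla^2 f_\ell(x)|\,|\det\nabla^2 f_\ell(y)|\,\Big|\,\nabla f_\ell(x)=\nabla f_\ell(y)=\mathbf{0}\right],
\end{equation*}
which is precisely the (unrestricted) two-point intensity of the critical points of $f_\ell$.

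It thus remains to bound this expression by $O(\ell^4)$ uniformly for $d(x,y)<c/\ell$. Keep $\phi=\psi/\ell$, $\psi\in(0,c)$, and let $\hat h$ be the unit tangent vector at $x$ pointing toward $y$. Let $\Lambda_\ell$ be the $4\times4$ covariance of $(\nabla f_\ell(x),\nabla f_\ell(y))$; passing to the basis $\bigl(\nabla f_\ell(x),\,\nabla f_\ell(y)-\nabla f_\ell(x)\bigr)$, the Legendre Taylor expansions show that $\Lambda_\ell$ is dominated by a block of size $\asymp\ell^2$ (from $\nabla f_\ell(x)$) and a block of size $\asymp\ell^2\psi^2$ (from $\nabla f_\ell(y)-\nabla f_\ell(x)\approx\nabla^2 f_\ell(x)(y-x)$), so $\det\Lambda_\ell\asymp\ell^8\psi^4$ and hence $\varphi_{(\nabla f_\ell(x),\nabla f_\ell(y))}(\mathbf{0},\mathbf{0})=(2\pi)^{-2}(\det\Lambda_\ell)^{-1/2}=O(\ell^{-4}\psi^{-2})$, uniformly. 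On the other hand, conditioning on $\nabla f_\ell(x)=\nabla f_\ell(y)=\mathbf{0}$ forces, to leading order, $\nabla^2 f_\ell(x)(y-x)\approx\mathbf{0}$ and $\nabla^2 f_\ell(y)(y-x)\approx\mathbf{0}$; computing the associated Schur complement one finds that the conditional mean squares of the entries $\partial^2_{\hat h\hat h}f_\ell$ and $\partial^2_{\hat h\hat h^\perp}f_\ell$ at $x$ and at $y$ are $O(\ell^4\psi^2)$, while the remaining second-derivative entries stay $O(\ell^4)$. Expanding $\det\nabla^2 f_\ell(x)$ in the frame $(\hat h,\hat h^\perp)$ and using Wick's formula yields $\mathbb{E}[(\det\nabla^2 f_\ell(x))^2\mid\nabla f_\ell(x)=\nabla f_\ell(y)=\mathbf{0}]=O(\ell^8\psi^2)$, and by the symmetry $x\leftrightarrow y$ the same bound holds at $y$; Cauchy--Schwarz then gives $\mathbb{E}[\,|\det\nabla^2 f_\ell(x)|\,|\det\nabla^2 f_\ell(y)|\mid\nabla f_\ell(x)=\nabla f_\ell(y)=\mathbf{0}\,]=O(\ell^8\psi^2)$. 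Multiplying the two estimates, the powers of $\psi$ cancel and
\begin{equation*}
\int_{\mathbb{R}^2}K_{2,\ell}(x,y;t_1,t_2)\,dt_1\,dt_2=O(\ell^{-4}\psi^{-2})\cdot O(\ell^8\psi^2)=O(\ell^4),
\end{equation*}
uniformly in $\ell$ and $\psi\in(0,c)$, i.e.\ in $x,y$ with $d(x,y)<c/\ell$, which is the assertion.

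The main obstacle is the cancellation in the last step: one must show that the factor $\psi^{-2}$ by which the density $\varphi_{(\nabla f_\ell(x),\nabla f_\ell(y))}(\mathbf{0},\mathbf{0})$ blows up as $\psi\to0$ is \emph{exactly} matched by the factor $\psi^2$ gained from the forced conditional smallness of the $\hat h$-columns of the two Hessians, with all error terms uniform in $\ell$ and $\psi\in(0,c)$. This requires retaining several orders in the Taylor expansions of $P_\ell(\cos(\psi/\ell))$ and of $P_\ell',P_\ell'',P_\ell'''$ at the origin and a careful treatment of the Schur complement defining the conditional covariance of $(\nabla^2 f_\ell(x),\nabla^2 f_\ell(y))$; choosing $c$ small enough is precisely what guarantees that the only degenerating directions are those $\hat h$-columns, ruling out accidental further degeneracies.
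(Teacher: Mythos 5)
Your proposal is correct and follows essentially the same route as the paper's proof: after integrating out $t_1,t_2$ you reduce to the unrestricted two-point intensity, then observe that the density $\varphi_{(\nabla f_\ell(x),\nabla f_\ell(y))}(\mathbf{0},\mathbf{0})$ blows up like $\ell^{-4}\psi^{-2}$ (the paper's Lemma stating $\det A_\ell(\psi)\geq c\psi^4$) while the conditional second moment of the Hessian determinants vanishes like $\ell^8\psi^2$ because the $\hat h$-column entries of the Hessian are forced small (the paper's Lemma bounding $\iint\rho_\ell\,dt_1dt_2=O(\psi^2)$ via $1+a_{2,\ell}(\psi),3+a_{3,\ell}(\psi)=O(\psi^2)$ and Cauchy--Schwarz plus Gaussian fourth moments), with the powers of $\psi$ cancelling exactly as you describe.
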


\noindent The proofs of all the results given in section \ref{3.22222} are deferred
to section \ref{asymptoticsection}.

\subsection{Proof of Proposition \ref{P-afkao}}

\subsubsection{Partition of the sphere into Voronoi cells}

We introduce the following notation for the spherical caps on ${\cal S}^2$:
\begin{align*}
{\cal B}(a,\varepsilon)=\{x \subseteq {\cal S}^2: d(a,x) \le \varepsilon\}.
\end{align*}
For any $\varepsilon>0$, we say that
$\Xi_\varepsilon=\{\xi_{1,\varepsilon}, \dots,
\xi_{N,\varepsilon}\}$ is a maximal $\varepsilon$-net, if
$\xi_{1,\varepsilon}, \dots, \xi_{N,\varepsilon}$ are in ${\cal
S}^2$, $\forall i \ne j$ we have $d(\xi_{i,\varepsilon},
\xi_{j,\varepsilon})>\varepsilon$ and
$$\forall x \in {\cal S}^2,\;\;  d(x,\Xi_\varepsilon) \le \varepsilon,\;\; \bigcup_{\xi_{i,\varepsilon} \in \Xi_\varepsilon} {\cal B}(\xi_{i,\varepsilon},\varepsilon)={\cal S}^2,$$
$$\forall i \ne j, \;\; {\cal B}(\xi_{i,\varepsilon},\varepsilon/2) \cap {\cal B}(\xi_{j,\varepsilon},\varepsilon/2)=\emptyset.$$
Heuristically, an $\varepsilon$-net is a grid of point at a
distance at least $\varepsilon$ from each other, and such that any
additional point should be within distance $\varepsilon$ from a point
on the grid, see \cite{MaPeCUP}. The number $N$ of points in a
$\varepsilon$-net on the sphere is necessarily commensurable to ${1}/{\varepsilon^{2}}$;
more precisely we have the following:
$$\frac{4}{\varepsilon^2} \le N \le \frac{4}{\varepsilon^2} \pi^2,$$
see \cite{BKMP}, Lemma 5. Given an $\varepsilon$-net it is natural to partition the sphere into its Voronoi cells, defined below,
each associated to a single point on the net; they are disjoint save to boundary overlaps.

\begin{definition}
Let $\Xi_\varepsilon$ be a maximal $\varepsilon$-net. For all $\xi_{i,\varepsilon} \in \Xi_\varepsilon$, the associated family of Voronoi cells is defined by
$${\cal V}(\xi_{i,\varepsilon},\varepsilon)=\{x \in {\cal S}^2: \forall j \ne i, \; d(x, \xi_{i,\varepsilon}) \le d(x,\xi_{j,\varepsilon})\}.$$
\end{definition}
\noindent We recall \cite{BKMP} that ${\cal B}(\xi_{i,\varepsilon},\varepsilon/2) \subseteq {\cal V}(\xi_{i,\varepsilon},\varepsilon) \subseteq {\cal B}(\xi_{i,\varepsilon},\varepsilon)$, hence $\text{Vol}({\cal V}(\xi_{i,\varepsilon},\varepsilon)) \approx \varepsilon^2$.
Let
$${\cal N}^c(f_\ell; {\cal V}(\xi_{i,\varepsilon},\varepsilon),I)=\#\{x \in {\cal V}(\xi_{i,\varepsilon},\varepsilon): f_\ell(x) \in I, \nabla f_\ell(x)=0\}.$$
Note that, almost surely, the summation of the critical points over the
Voronoi cells equals the total number of critical points:
\begin{align*}
{\cal N}^c_I(f_\ell)= \sum_{\xi_{i,\varepsilon} \in \Xi_\varepsilon } {\cal N}^c(f_\ell; {\cal V}(\xi_{i,\varepsilon},\varepsilon),I).
\end{align*}
Therefore, we have that
\begin{equation} \label{mir1}
\text{Var} \left( \mathcal{N}^{c}_I(f_{\ell })\right) =\sum_{\xi_{i,\varepsilon}, \xi_{j,\varepsilon} \in \Xi_\varepsilon } \text{Cov} \left( \mathcal{N}^{c}(f_{\ell }; {\cal V}(\xi _{i,\varepsilon }),I),\mathcal{N}%
^{c}(f_{\ell }; {\cal V}(\xi _{j,\varepsilon }),I)\right).
\end{equation}

\subsubsection{Proof of Proposition \ref{P-afkao}}

We divide the sum in \eqref{mir1} into terms with corresponding points at distance bigger
or smaller than $C/\ell$. For the former we will
exploit the precise Kac-Rice formula below, in contrast to the latter regime whose contribution is bounded; first we define
\begin{equation}
\varepsilon =c/\ell,  \label{numberedformula}
\end{equation}
where $c$ is a positive constant sufficiently small so that, we may apply Proposition
\ref{a} stating that for every $i$,
\begin{align}
\text{Var}\left( \mathcal{N}^{c}(f_{\ell };{\cal V}(\xi _{\varepsilon ,i}),I)\right)
&=\iint_{{\cal V} (\xi _{\varepsilon ,i})\times {\cal V}(\xi _{\varepsilon ,i})} \iint_{I\times
I}K_{2,\ell}(x,y;t_{1},t_{2})dt_{1}dt_{2}dxdy \nonumber \\
&\;\;+\mathbb{E}\left[ \mathcal{N}^{c}(f_{\ell }; {\cal V}(\xi _{\varepsilon ,i}),I)%
\right] -(\mathbb{E}\left[ \mathcal{N}^{c}(f_{\ell }; {\cal V}(\xi _{\varepsilon
,i}),I)\right])^{2};  \label{lario1}
\end{align}
and, by Proposition \ref{expectation copy(1)} and \eqref{numberedformula},
\begin{align}
\mathbb{E}\left[ \mathcal{N}^{c}(f_{\ell };V(\xi _{\varepsilon ,i}),I)\right]
\leq \mathbb{E}\left[ \mathcal{N}^{c}(f_{\ell };B(\xi _{\varepsilon
,i};\varepsilon ),I)\right]  \leq  \pi \varepsilon ^{2}\ell ^{2}=O(1).  \label{lario3}
\end{align}
Note that in the proof of Proposition \ref{a} we exploit the non-degeneracy of the covariance matrix for sufficiently close points $x,y$ established in Appendix \ref{AppC}, whence we can apply Kac-Rice as in formula \eqref{lario1}. Moreover, by Lemma \ref{b}, we have
\begin{equation}
\iint_{{\cal V}(\xi _{\varepsilon ,i})\times {\cal V} (\xi _{\varepsilon ,i})}\iint_{I\times
I}K_{2}(x,y;t_{1},t_{2})dt_{1}dt_{2}dxdy\leq \ell ^{4}\cdot (\pi \varepsilon
^{2})^{2}=O(1),  \label{nonnumberedyet}
\end{equation}
again by \eqref{numberedformula}. Substituting the estimates \eqref{lario3}
and \eqref{nonnumberedyet} into \eqref{lario1} yields
\begin{equation}
\label{seriousnumber}
\text{Var}\left( \mathcal{N}^{c}(f_{\ell }; {\cal V}(\xi _{\varepsilon ,i}),I)\right)
=O(1).
\end{equation}
uniformly for all $\ell,i$.
Using the latter with the Cauchy-Schwartz inequality we
may bound each individual summand in the summation \eqref{mir1} as
\begin{equation}
\label{csbound}
\left\vert \text{Cov} \left( \mathcal{N}^{c}(f_{\ell }; {\cal V} (\xi _{i,\varepsilon }),I),%
\mathcal{N}^{c}(f_{\ell }; {\cal V}(\xi _{j,\varepsilon }),I)\right) \right\vert
\leq \sqrt{\text{Var} \left( \mathcal{N}^{c}(f_{\ell }; {\cal V}(\xi _{i,\varepsilon
}),I)\right) }\cdot\sqrt{\text{Var}\left( \mathcal{N}^{c}(f_{\ell }; {\cal V}(\xi
_{j,\varepsilon }),I)\right) } = O(1).
\end{equation}


As there are $O(\ell ^{2})$ pairs of Voronoi cells
at distance smaller or equal than $C/\ell$, \eqref{csbound} implies that
the contribution of this range to \eqref{mir1} is
\begin{equation*}
\sum_{d( {\cal V}(\xi _{i,\varepsilon }),{\cal V}(\xi _{j,\varepsilon }))<C/\ell
}\left\vert \text{Cov}\left( \mathcal{N}^{c}(f_{\ell }; {\cal V}(\xi _{i,\varepsilon }),I),%
\mathcal{N}^{c}(f_{\ell }; {\cal V}(\xi _{j,\varepsilon }),I)\right) \right\vert
=O(\ell ^{2}).
\end{equation*}
For Voronoi cells that are at distance greater than $C/\ell$, we may use
the standard Kac-Rice formula in Lemma \ref{e}.
Applying Kac-Rice individually on each of the pairs $({\cal V}(\xi _{i,\varepsilon
}),{\cal V}(\xi _{j,\varepsilon }))$ such that $d( {\cal V}(\xi _{i,\varepsilon }), {\cal V}(\xi
_{j,\varepsilon }))\geq C/\ell$, we have
\begin{align}
\sum_{d({\cal V}(\xi _{i,\varepsilon }), {\cal V}(\xi _{j,\varepsilon }))\geq C/\ell
} \text{Cov} \left( \mathcal{N}^{c}(f_{\ell }; {\cal V}(\xi _{i,\varepsilon }),I),\mathcal{N}%
^{c}(f_{\ell }; {\cal V}(\xi _{j,\varepsilon }),I)\right)
=\int_{\mathcal{W}}\iint_{I\times I} K_{2,\ell}(x,y;t_{1},t_{2}) dt_{1}dt_{2}dxdy
\label{k2-p1}
\end{align}
where
\begin{equation*}
\mathcal{W}=\bigcup _{d( {\cal V}(\xi _{i,\varepsilon }), {\cal V}(\xi _{j,\varepsilon }))\geq
C/\ell } {\cal V}(\xi _{i,\varepsilon })\times {\cal V} (\xi _{j,\varepsilon }),
\end{equation*}%
is the union of all tuples of points belonging to Voronoi cells further apart than $C/\ell$.
Now with the help of Proposition \ref{key-c}, we may write this summation \eqref{k2-p1} as
\begin{align*}
&\sum_{d( {\cal V}(\xi _{i,\varepsilon }), {\cal V}(\xi _{j,\varepsilon }))\geq C/\ell
} \text{Cov} \left( \mathcal{N}^{c}(f_{\ell }; {\cal V}(\xi _{i,\varepsilon }),I),\mathcal{N}%
^{c}(f_{\ell }; {\cal V}(\xi _{j,\varepsilon }),I)\right) \nonumber \\
&\;\;= \int_{C/\ell}^{\pi/2} \iint_{I\times
I}L_{2,\ell}(\phi;t_{1},t_{2})  dt_{1}dt_{2} \sin \phi d \phi+\int_{\cal W} \iint_{I\times I}E_{2,\ell}(x,y;t_{1},t_{2})dt_{1}dt_{2}dxdy  \nonumber \\
&\;\;=\int_{C/\ell}^{\pi/2} \iint_{I\times
I}L_{2,\ell}(\phi;t_{1},t_{2}) dt_{1}dt_{2}  \sin \phi d \phi+O(\ell ^{5/2}),
\end{align*}
as claimed.

\section{Asymptotics of the two-point correlation function}
\label{asymptoticsection}

Here we prove the auxiliary results in section \ref{afkao}.

\subsection{Long-range asymptotics for the two-point correlation function} \label{llllllllong}

In this section we prove Lemma \ref{e}, Lemma \ref{d}, and Proposition \ref{key-c}.

\subsubsection{Conditional covariance matrix}

For $x,y\in {\cal S}^2$ we define the following random vector
\begin{equation*}
Z_{\ell ;x,y}=(\nabla f_{\ell }(x),\nabla f_{\ell }(y),\nabla ^{2}f_{\ell
}(x),\nabla ^{2}f_{\ell }(y)).
\end{equation*}
To write the Kac-Rice formula in coordinate system, given $x$, $y \in {\cal S}^2$, we consider two local orthogonal frames
$\{e_{1}^{x}, e_{2}^{x}\}$ and
$\{e_{1}^{y},e_{2}^{y}\}$ defined in some neighbourhood of $x$ and $y$ respectively.
This gives rise to the (local) identifications
\begin{equation} \label{ident_isom}
T_{x}({\cal S}^2)\cong \mathbb{R}^{2}\cong T_{y}({\cal S}^2),
\end{equation}
so that, as discussed earlier we do not have to work with
probability densities defined on tangent planes which depend on
the points $x$ and $y$ respectively. Under the identification
\eqref{ident_isom} the random vector $Z_{\ell ;x,y}$ is a
$\mathbb{R}^{10}$ centered Gaussian random vector.  By isotropy,
it is convenient to perform our computations along a specific
geodesic. In particular, we focus on the equatorial line  $x=(\pi
/2,\phi )$, $y=(\pi /2,0)$, and we work with the orthogonal frames
\begin{equation} \label{orthogonalf}
\left\{e_{1}^{x}=\frac{\partial }{\partial \theta _{x}},\;e_{2}^{x}=\frac{
\partial }{\partial \varphi _{x}}\right\},\hspace{0.5cm}\left\{e_{1}^{y}=\frac{
\partial }{\partial \theta _{y}},\;e_{2}^{y}=\frac{\partial }{\partial
\varphi _{y}}\right\}.
\end{equation}
In Appendix  \ref{cov_matx_ev} we compute the entries of the $10\times 10$ covariance matrix of  $Z_{\ell ;x,y}$, i.e.,
$$\Sigma _{\ell }(\phi)=\left(
\begin{array}{cc}
A_{\ell}(\phi) & B_{\ell}(\phi) \\
B^{t}_{\ell}(\phi) & C_{\ell}(\phi)
\end{array}
\right),$$
where $A_{\ell}(\phi)$, $B_{\ell}(\phi)$ and $C_{\ell}(\phi)$ are the covariance matrices of the gradient terms, first and second order derivatives, and second order derivatives, respectively. In
Appendix \ref{cov_matx} we compute the
 conditional covariance matrix $\Omega _{\ell }(\phi )$ of
the random vector
\begin{equation*}
(\nabla ^{2}f_{\ell }(x),\nabla ^{2}f_{\ell }(y)\big|\nabla f_{\ell
}(x)=\nabla f_{\ell }(y)={\bf 0}),
\end{equation*}%
i.e.,
\begin{equation} \label{2agosto}
\Omega _{\ell }(\phi )=C_{\ell }(\phi )-B_{\ell }(\phi
)^{t}A_{\ell }(\phi )^{-1}B_{\ell }(\phi ).
\end{equation}%
After scaling, we obtain
\begin{equation} \label{1agosto}
\Delta _{\ell }(\phi )=\frac{8}{   \lambda _{\ell }^2   }\Omega _{\ell }(\phi )=\left(
\begin{array}{cc}
\Delta _{1,\ell }(\phi ) & \Delta _{2,\ell }(\phi ) \\
\Delta _{2,\ell }(\phi ) & \Delta _{1,\ell }(\phi )%
\end{array}%
\right),
\end{equation}%
where the entries  $a_{i,\ell}(\phi)$, $i=1,\dots,8$, of  $\Delta _{1,\ell }(\phi )$ and $\Delta _{2,\ell }(\phi )$ are defined by
\begin{align}
\Delta _{1,\ell }(\phi )&=\left(
\begin{array}{ccc}
3-\frac{16\beta _{2,\ell }^{2} (\phi)}{\lambda _{\ell } (\lambda _{\ell
}^{2}-4\alpha _{2,\ell }^{2} (\phi))}-\frac{2}{\lambda _{\ell }} & 0 & 1-\frac{%
16\beta _{2,\ell } (\phi) \beta _{3,\ell }(\phi)}{\lambda _{\ell }(\lambda _{\ell
}^{2}-4\alpha _{2,\ell }^{2}(\phi))}+\frac{2}{\lambda _{\ell }} \\
0 & 1-\frac{16\beta _{1,\ell }^{2}(\phi)}{\lambda _{\ell } (\lambda _{\ell
}^{2}-4\alpha _{1,\ell }^{2}(\phi))} & 0 \\
1-\frac{16\beta _{2,\ell }(\phi)\beta _{3,\ell }(\phi)}{\lambda _{\ell } (\lambda
_{\ell }^{2}-4\alpha _{2,\ell }^{2}(\phi))}+\frac{2}{\lambda _{\ell }} & 0 & 3-%
\frac{16\beta _{3,\ell }^{2}(\phi)}{\lambda _{\ell } (\lambda _{\ell
}^{2}-4\alpha _{2,\ell }^{2}(\phi))}-\frac{2}{\lambda _{\ell }}%
\end{array}%
\right) \nonumber\\
&=\left(
\begin{array}{ccc}
3+a_{1,\ell }(\phi ) & 0 & 1+a_{4,\ell }(\phi ) \\
0 & 1+a_{2,\ell }(\phi ) & 0 \\
1+a_{4,\ell }(\phi ) & 0 & 3+a_{3,\ell }(\phi )%
\end{array}
\right), \label{Delta_1}
\end{align}%
and
\begin{align}
\Delta _{2,\ell }(\phi )&=\left(
\begin{array}{ccc}
8\frac{\gamma _{1,\ell }(\phi)+\frac{4\alpha _{2,\ell }(\phi)\beta _{2,\ell }^{2}(\phi)}{
4\alpha _{2,\ell }^{2}(\phi)-\lambda _{\ell }^{2}}}{\lambda _{\ell }^2} & 0 & 8\frac{\gamma _{3,\ell }(\phi)+\frac{4\alpha _{2,\ell }(\phi)\beta
_{2,\ell }(\phi)\beta _{3,\ell }(\phi)}{4\alpha _{2,\ell }^{2}(\phi)-\lambda _{\ell }^{2}}}{\lambda _{\ell }^2} \\
0 & 8\frac{\gamma _{2,\ell }(\phi)+\frac{4\alpha _{1,\ell }(\phi)\beta _{1,\ell }^{2}(\phi)}{%
4\alpha _{1,\ell }^{2}(\phi)-\lambda _{\ell }^{2}}}{\lambda _{\ell }^2} & 0 \\
8\frac{\gamma _{3,\ell }(\phi)+\frac{4\alpha _{2,\ell }(\phi)\beta _{2,\ell }(\phi)\beta
_{3,\ell }(\phi)}{4\alpha _{2,\ell }^{2}(\phi)-\lambda _{\ell }^{2}}}{\lambda _{\ell }^2} & 0 & 8\frac{\gamma _{4,\ell }(\phi)+\frac{4\alpha _{2,\ell
}(\phi) \beta _{3,\ell }^{2}(\phi)}{4\alpha _{2,\ell }^{2}(\phi)-\lambda _{\ell }^{2}}}{\lambda _{\ell }^2}%
\end{array}%
\right) \nonumber \\
&=\left(
\begin{array}{ccc}
a_{5,\ell }(\phi ) & 0 & a_{8,\ell }(\phi ) \\
0 & a_{6,\ell }(\phi ) & 0 \\
a_{8,\ell }(\phi ) & 0 & a_{7,\ell }(\phi )%
\end{array}%
\right), \label{Delta_2}
\end{align}
with
\begin{equation*}
\alpha _{1,\ell }(\phi )=P_{\ell }^{\prime }(\cos \phi ),
\end{equation*}
\begin{equation*}
\alpha _{2,\ell }(\phi )=-\sin ^{2}\phi P_{\ell }^{\prime \prime }(\cos \phi
)+\cos \phi P_{\ell }^{\prime }(\cos \phi ),
\end{equation*}

\begin{equation*}
\beta _{1,\ell }(\phi )=\sin \phi P_{\ell }^{\prime \prime }(\cos \phi ),
\end{equation*}%
\begin{equation*}
\beta _{2,\ell }(\phi )=\sin \phi \cos \phi P_{\ell }^{\prime \prime }(\cos
\phi )+\sin \phi P_{\ell }^{\prime }(\cos \phi ),
\end{equation*}%
\begin{equation*}
\beta _{3,\ell }(\phi )=-\sin ^{3}\phi P_{\ell }^{\prime \prime \prime
}(\cos \phi )+3\sin \phi \cos \phi P_{\ell }^{\prime \prime }(\cos \phi
)+\sin \phi P_{\ell }^{\prime }(\cos \phi ),
\end{equation*}
and
\begin{equation*}
\gamma _{1,\ell }(\phi )=(2+\cos ^{2}\phi )P_{\ell }^{\prime \prime }(\cos
\phi )+\cos \phi P_{\ell }^{\prime }(\cos \phi ),
\end{equation*}%
\begin{equation*}
\gamma _{2,\ell }(\phi )=-\sin ^{2}\phi P_{\ell }^{\prime \prime \prime
}(\cos \phi )+\cos \phi P_{\ell }^{\prime \prime }(\cos \phi ),
\end{equation*}%
\begin{equation*}
\gamma _{3,\ell }(\phi )=-\sin ^{2}\phi \cos \phi P_{\ell }^{\prime \prime
\prime }(\cos \phi )+(-2\sin ^{2}\phi +\cos ^{2}\phi )P_{\ell }^{\prime
\prime }(\cos \phi )+\cos \phi P_{\ell }^{\prime }(\cos \phi ),
\end{equation*}
\begin{equation*}
\gamma _{4,\ell }(\phi )=\sin ^{4}\phi P_{\ell }^{\prime \prime \prime
\prime }(\cos \phi )-6\sin ^{2}\phi \cos \phi P_{\ell }^{\prime \prime
\prime }(\cos \phi )+(-4\sin ^{2}\phi +3\cos ^{2}\phi )P_{\ell }^{\prime \prime }(\cos \phi
)+\cos \phi P_{\ell }^{\prime }(\cos \phi ).
\end{equation*}

\begin{proof}[Proof of Lemma \ref{e}]
Lemma \ref{e} follows from Theorem 11.5.1 in \cite{adlertaylor} provided that we show
that for $C>0$ sufficiently big the covariance matrix $\Sigma _{\ell }(\phi)$ is
nonsingular. To this end, we may write
\begin{equation*}
\det (\Sigma _{\ell }(\phi) )=\det (A_\ell(\phi) )\times \det (\Omega_\ell(\phi) ),
\end{equation*}
that is that each of the two matrices on the
right-hand side is nonsingular. The latter follows from the fact that the
perturbation terms $a_{i,\ell}$ defined in \eqref{Delta_1} and \eqref{Delta_2}, properly normalized, are decaying,
see Appendix \ref{estimates}, Lemma \ref{alphaalpha}.
\end{proof}

\subsubsection{Proof of Proposition \ref{key-c}}
First we recall that the two-point correlation
function is given by \eqref{k2elegant}, a Gaussian expectation related to a vector with covariance matrix $\Sigma_{\ell}(\phi)$.
To understand the asymptotic behaviour of  the function $K_{2,\ell}$ we will have to provide a more explicit formula by using the orthogonal frames \eqref{orthogonalf} chosen above. Then we will have a frame-dependent formula for the two-point correlation function depending on the geodesic distance $\phi=d(x,y)$. To define $K_{2,\ell }(\phi ; t_{1},t_{2})$ pointwise
(almost everywhere), we write, for any two intervals $I_{1},I_{2}$:
\begin{align*}
& \hspace{4cm}\iint_{I_{1}\times I_{2}}K_{2,\ell }(\phi ; t_{1},t_{2})dt_{1}dt_{2}\\
&=\frac{1}{(2\pi )^{2}\sqrt{\text{det} (A_{\ell }(\phi ))}}\iint_{{\mathbb{R}^{3}
}\times {\mathbb{R}^{3}}}|\zeta _{x,1}\zeta _{x,3}-\zeta _{x,2}^{2}|\cdot|\zeta
_{y,1}\zeta _{y,3}-\zeta _{y,2}^{2}|\cdot 1\hspace{-0.27em}\mbox{\rm l}_{\big\{
\frac{\zeta _{x,1}+\zeta _{x,3}}{\lambda _{\ell }}\in I_{1}\big\}}\cdot 1\hspace{%
-0.27em}\mbox{\rm l}_{\big\{\frac{\zeta _{y,1}+\zeta _{y,3}}{\lambda _{\ell
}}\in I_{2}\big\}}\\
&\hspace{0.2cm}\times \frac{1}{(2\pi )^{3}}\exp \left\{-\frac{1}{2}(\zeta
_{x,1},\zeta _{x,2},\zeta _{x,3},\zeta _{y,1},\zeta _{y,2},\zeta
_{y,3})\Omega _{\ell }(\phi )^{-1}(\zeta _{x,1},\zeta _{x,2},\zeta
_{x,3},\zeta _{y,1},\zeta _{y,2},\zeta _{y,3})^{t}\right\}\\
&\hspace{0.2cm} \times \frac{1}{\sqrt{\text{det} (\Omega _{\ell }(\phi ))}}d\zeta
_{x,1}\;d\zeta _{x,2}\;d\zeta _{x,3}\;d\zeta _{y,1}\;d\zeta _{y,2}\;d\zeta
_{y,3}.
\end{align*}%
Here we exploited the linear dependence \eqref{linear_dp}. Now we scale the variables: for $i=1,2,3$ introduce $\tilde{\zeta}_{x,i}$ and $\tilde{\zeta}_{y,i}$:
\begin{equation*}
\zeta _{x,i}= \frac{\lambda_\ell}{\sqrt 8} \tilde{%
\zeta}_{x,i},\hspace{1cm}\zeta _{y,i}=\frac{\lambda_\ell}{\sqrt 8} \tilde{\zeta}_{y,i}.
\end{equation*}%
With the new variables we have
\begin{align*}
& \iint_{I_{1}\times I_{2}}K_{2,\ell }(\phi; t_{1},t_{2})dt_{1}dt_{2}
=\frac{1 }{(2\pi )^{2}\sqrt{\text{det} (A_{\ell }(\phi) )}}\frac{\lambda
_{\ell }^{4}}{8^{2}}\iint_{{\mathbb{R}^{3}}\times {%
\mathbb{R}^{3}}}|\tilde{\zeta}_{x,1}\tilde{\zeta}_{x,3}-\tilde{\zeta}%
_{x,2}^{2}|\cdot |\tilde{\zeta}_{y,1}\tilde{\zeta}_{y,3}-\tilde{\zeta}_{y,2}^{2}|
\\
& \hspace{0.2cm}\times 1\hspace{-0.27em}\mbox{\rm l}_{\Big\{ \frac{   \tilde{\zeta%
}_{x,1}+\tilde{\zeta}_{x,3}}{\sqrt{8} }\in I\Big\}}1\hspace{-0.27em}\mbox{\rm l}_{\Big\{%
\frac{
\tilde{\zeta}_{y,1}+\tilde{\zeta}_{y,3}}{\sqrt{8}}\in I\Big\}} \\
& \hspace{0.2cm}\times \frac{1}{(2\pi )^{3}}\exp \left\{-\frac{1}{2}(\tilde{\zeta}%
_{x,1},\tilde{\zeta}_{x,2},\tilde{\zeta}_{x,3},\tilde{\zeta}_{y,1},\tilde{%
\zeta}_{y,2},\tilde{\zeta}_{y,3})\Delta _{\ell }(\phi )^{-1}(\tilde{\zeta}%
_{x,1},\tilde{\zeta}_{x,2},\tilde{\zeta}_{x,3},\tilde{\zeta}_{y,1},\tilde{%
\zeta}_{y,2},\tilde{\zeta}_{y,3})^{t}\right\} \\
& \hspace{0.2cm}\times \frac{1}{\sqrt{\text{det}(\Delta _{\ell }(\phi ))}}d%
\tilde{\zeta}_{x,1}\;d\tilde{\zeta}_{x,2}\;d\tilde{\zeta}_{x,3}\;d\tilde{%
\zeta}_{y,1}\;d\tilde{\zeta}_{y,2}\;d\tilde{\zeta}_{y,3}.
\end{align*}%
Making the substitutions
\begin{equation*}
\left(
\begin{array}{c}
\tilde{\zeta}_{x,1} \\
\tilde{\zeta}_{x,2} \\
\tilde{\zeta}_{x,3}
\end{array}
\right) =\left(
\begin{array}{ccc}
1 & 0 & 0 \\
0 & 1 & 0 \\
-1 & 0 & \sqrt 8
\end{array}
\right) \left(
\begin{array}{c}
z_{1} \\
z_{2} \\
t_{1}
\end{array}
\right),
\end{equation*}
\begin{equation*}
\left(
\begin{array}{c}
\tilde{\zeta}_{y,1} \\
\tilde{\zeta}_{y,2} \\
\tilde{\zeta}_{y,3}
\end{array}
\right) =\left(
\begin{array}{ccc}
1 & 0 & 0 \\
0 & 1 & 0 \\
-1 & 0 & \sqrt 8
\end{array}
\right) \left(
\begin{array}{c}
w_{1} \\
w_{2} \\
t_{2}
\end{array}
\right),
\end{equation*}
we obtain
\begin{align} \label{kkernelkk}
& \hspace{5cm} K_{2,\ell }(\phi ; t_{1},t_{2}) \nonumber \\
&=\frac{1 }{(2\pi )^{2}\sqrt{\text{det}(A_{\ell
}(\phi) )}}\frac{\lambda _{\ell }^{4}}{8^{2}}  \iint_{{\mathbb{R}^{2}}\times {\mathbb{R}^{2}}}
\left|
z_{1} \left( \sqrt 8
t_{1}-z_{1}\right)-z_{2}^{2} \right|
\cdot \left|w_{1}\big(  \sqrt 8 t_{2}-w_{1}\big)-w_{2}^{2}
\right|\nonumber \\
&\hspace{0.2cm}\times \frac{1}{(2\pi )^{3}}\exp \left\{-\frac{1}{2}%
v_{t_{1},t_{2}}(z_{1},z_{2},w_{1},w_{2})\Delta _{\ell }(\phi
)^{-1}v_{t_{1},t_{2}}(z_{1},z_{2},w_{1},w_{2})^{t}\right\}  \frac{1}{\sqrt{\text{det}%
(\Delta _{\ell }(\phi ))}} 8 \;
dz_{1}dz_{2}dw_{1}dw_{2},
\end{align}
where
\begin{equation*}
v_{t_{1},t_{2}}(z_{1},z_{2},w_{1},w_{2})=\left(z_{1},z_{2},\sqrt 8
t_{1}-z_{1},w_{1},w_{2}, \sqrt 8 t_{2}-w_{1}\right).
\end{equation*}%
Now let us observe that for the determinant of $A_\ell(\phi)$ we have
\begin{align}
\label{eq:sqrt(detA)}
(2\pi )^{2}\sqrt{\det (A_{\ell }(\phi ))} &={(2\pi )^{2}\sqrt{\frac{1}{16}%
(\lambda _{\ell }^{2}-4\alpha _{2,\ell }^{2}(\phi ))(\lambda _{\ell
}^{2}-4\alpha _{1,\ell }^{2}(\phi ))}} =\pi^{2} \sqrt{(\lambda _{\ell }^{2}-4\alpha _{2,\ell }^{2}(\phi
))(\lambda _{\ell }^{2}-4\alpha _{1,\ell }^{2}(\phi ))} .
\end{align}%
At this point we consider the $2$-point correlation function \eqref{kkernelkk} as a function
of the perturbing elements $\{a_{i,\ell}(\phi)\}$, $i=1,\dots,8$ defined in \eqref{Delta_1} and \eqref{Delta_2}; to this end
it is convenient to collect the elements into a single vector:
\begin{equation*}
\mathbf{a}=\mathbf{a}_{\ell }(\phi)=(a_{1,\ell }(\phi ),a_{2,\ell }(\phi ),a_{3,\ell
}(\phi ),a_{4,\ell }(\phi ),a_{5,\ell }(\phi ),a_{6,\ell }(\phi ),a_{7,\ell}(\phi ),a_{8,\ell }(\phi )),
\end{equation*}
and write
\begin{equation*}
\Delta(\mathbf{a})=\left(
\begin{array}{cc}
\Delta _{1}(\mathbf{a}) & \Delta _{2}(\mathbf{a}) \\
\Delta _{2}(\mathbf{a}) & \Delta _{1}(\mathbf{a})%
\end{array}%
\right),
\end{equation*}%
where
\begin{equation*}
\Delta _{1}(\mathbf{a})=\left(
\begin{array}{ccc}
3+a_{1} & 0 & 1+a_{4} \\
0 & 1+a_{2} & 0 \\
1+a_{4} & 0 & 3+a_{3}%
\end{array}
\right) \;\; 
\text{and} \;\;\;
\Delta _{2}(\mathbf{a})=\left(
\begin{array}{ccc}
a_{5} & 0 & a_{8} \\
0 & a_{6} & 0 \\
a_{8} & 0 & a_{7}%
\end{array}%
\right).
\end{equation*}
With this slight abuse of notation it is evident that
\begin{equation}
\label{eq:Deltal(phi)=Delta(al(phi))}
\Delta _{\ell }(\phi)=\Delta (\mathbf{a}_{\ell}(\phi)).
\end{equation}
We then introduce the functions
\begin{equation*}
\hat{q} (\mathbf{a};t_{1},t_{2};z_{1},z_{2},w_{1},w_{2})
=\frac{1}{\sqrt{\det (\Delta (\mathbf{a}))}}\exp \left\{-\frac{1}{2}%
v_{t_{1},t_{2}}(z_{1},z_{2},w_{1},w_{2})\Delta
(\mathbf{a})^{-1}v_{t_{1},t_{2}}(z_{1},z_{2},w_{1},w_{2})^{t}\right\},
\end{equation*}
and
\begin{align*}
q (\mathbf{a};t_1,t_2)&=\frac{1}{(2\pi )^{3}}\iint_{\mathbb{R}^{2}\times \mathbb{R}%
^{2}} \left|z_{1} \sqrt 8 t_1 -z_{1}^{2}-z_{2}^{2}\right|\cdot \left| w_{1} \sqrt 8 t_2 -w_{1}^{2}-w_{2}^{2}\right| \\
&  \hspace{0.2cm} \times \hat{q}
(\mathbf{a}; t_{1},t_{2};z_{1},z_{2},w_{1},w_{2})dz_{1}dz_{2}dw_{1}dw_{2}.
\end{align*}
Bearing in mind \eqref{kkernelkk}, \eqref{eq:sqrt(detA)} and \eqref{eq:Deltal(phi)=Delta(al(phi))} we have
\begin{equation}
\label{eq:=K2=*q}
K_{2,\ell }(\phi ; t_{1},t_{2}) = \frac{\lambda_{\ell}^{4}}{8\pi^{2} \sqrt{(\lambda _{\ell }^{2}-4\alpha _{2,\ell }^{2}(\phi
))(\lambda _{\ell }^{2}-4\alpha _{1,\ell }^{2}(\phi ))}} q (\mathbf{a}_{\ell}(\phi);t_1,t_2) .
\end{equation}

\begin{remark} \label{q(0)}
We note that

\begin{align*}
\iint_{I \times I} q ({\bf 0};t_1,t_2)dt_1dt_2 &= \frac{1}{8}
\frac{1}{(2\pi )^{3}} \Bigg[ \int_{I} d t \int_{\mathbb{R}^{2}}
\Big|z_{1} \sqrt{8}t-z_{1}^{2}-z_{2}^{2}\Big|   \exp
\left\{-\frac{3}{2}t^{2}\right\}\exp \left\{-\frac{1}{2}
(z_{1}^{2}+z_{2}^{2}-\sqrt{8}t z_{1})\right\} dz_{1}dz_{2}  \Bigg]^2 \nonumber \\
&=\frac{1}{8} \left[ \int_I p_{1}^{c}(t) d t  \right]^2.
\end{align*}
\end{remark}

Our next step is to study the asymptotic behaviour of the
functions $q$, by means of a Taylor expansion around the origin
$\mathbf{a}=\mathbf{0}$.

\vspace{0.5cm}

\noindent{\it {\bf  Taylor expansion of the two-point correlation function}}\\

To understand the behaviour of the two-point correlation
function in the long-range regime, we have to investigate the high energy asymptotic
behaviour of the integrals
\begin{align}  \label{sitt}
16 \pi^2 \int_{C/\ell}^{\pi/2} K_{2,\ell }(\phi; t_{1},t_{2}) \sin \phi d \phi=2 \lambda _{\ell }^{2} \int_{C/\ell}^{\pi/2}  \frac{\sin \phi}{ \sqrt{(1-4\alpha _{2,\ell }^{2}(\phi)/\lambda _{\ell }^{2})(1-4\alpha _{1,\ell }^{2}(\phi )/\lambda _{\ell }^{2})} }  q (\mathbf{a} ; t_{1},t_{2}) d \phi,
\end{align}
recalling \eqref{eq:=K2=*q}.
In the range $\phi >C/\ell $ the covariance matrices we shall deal
with are perturbations of the values they would have under
independence between values at the points $x,y.$ We can hence
exploit perturbation theory (see \cite{kato}, Theorem 1.5) to
yield that the Gaussian expectations are analytic functions of the
covariance matrix elements. Hence $q(\cdot ;t_{1},t_{2})$ is a smooth function,
defined on some neighbourhood of the origin (its arguments are
uniformly small for $\phi >C/\ell $), and we can expand it into a
finite Taylor polynomial around the origin, as follows:

\begin{align}  \label{taylor_ex}
q (\mathbf{a};t_1,t_2)&=q (\mathbf{0};t_1,t_2)+\sum_{i=1}^8 a_{i}
\frac{\partial }{\partial a_{i}}q (\mathbf{0};t_1,t_2)
+\sum_{i\neq j}a_{i} a_{j} \frac{\partial^2}{
\partial a_{i} \partial a_{j} }q (\mathbf{0};t_1,t_2)  \nonumber \\
&\;\;+\frac{1}{2} \sum_{i=1}^{8} a_{i}^2 \frac{\partial^2}{
\partial a_{i}^2} q (\mathbf{0};t_1,t_2)+O(w(t_1,t_2)||\mathbf{a}||^3), \mbox{ as $||\mathbf{a}|| \rightarrow 0$,}
\end{align}
for some $w(t_1,t_2)\ge 0$.
Below we will evaluate all the derivatives involved in \eqref{taylor_ex},
and show in addition that
\begin{equation}
\label{eq:w in L1}
w(\cdot ,\cdot)\in L^1(\mathbb{R}^2),
\end{equation}
important for integrating \eqref{taylor_ex} w.r.t. $t$.
We will
see that the variance of critical points is dominated by three
terms of order $O(\ell ^{3})$ in \eqref{taylor_ex}. \vspace{0.5cm}

\noindent{\it {\bf Asymptotic behaviour of the integrals}}\\

We shall now introduce the following notation: for $i,j=1,2,\dots 8$,
\begin{align*}
A_{0,\ell }& =\int_{C/\ell }^{\frac{\pi }{2}}\frac{\sin \phi }{\sqrt{%
(1-4\alpha _{2,\ell }^{2}(\phi )/\lambda _{\ell }^{2})(1-4\alpha _{1,\ell
}^{2}(\phi )/\lambda _{\ell }^{2})}}d\phi , \\
A_{i,\ell }& =\int_{C/\ell }^{\frac{\pi }{2}}\frac{a_{i,\ell }(\phi )}{\sqrt{%
(1-4\alpha _{2,\ell }^{2}(\phi )/\lambda _{\ell }^{2})(1-4\alpha _{1,\ell
}^{2}(\phi )/\lambda _{\ell }^{2})}}\sin \phi \;d\phi , \\
A_{ij,\ell }& =\int_{C/\ell }^{\frac{\pi }{2}}\frac{a_{i,\ell }(\phi
)a_{j,\ell }(\phi )}{\sqrt{(1-4\alpha _{2,\ell }^{2}(\phi )/\lambda _{\ell
}^{2})(1-4\alpha _{1,\ell }^{2}(\phi )/\lambda _{\ell }^{2})}}\sin \phi
\;d\phi .
\end{align*}%
As a consequence, we may write
\begin{align*}
16 \pi^2 \int_{C/\ell}^{\pi/2} K_{2,\ell}(\phi;t_1,t_2) \sin \phi
d \phi &=2 \lambda _{\ell }^{2} \Big\{A_{0,\ell } \; q
(\mathbf{0};t_1,t_2)+\sum_{i=1}^{8} A_{i,\ell
}\;\Big[\frac{\partial }{\partial a_{i} }q (\mathbf{a};t_1,t_2)\Big]_{\mathbf{a}=\mathbf{0}}\\
&\;\;+\frac{1}{2}\sum_{i,j=1}^{8}A_{ij,\ell }\;\Big[\frac{\partial
^{2}}{\partial a_{i} \partial a_{j}}q (\mathbf{a}
;t_1,t_2)\Big]_{\mathbf{a}=\mathbf{0}}\\
&\;\;+\int_{C/\ell }^{\frac{\pi }{2}}\frac{\sin \phi
}{\sqrt{(1-4\alpha _{2,\ell }^{2}(\phi )/\lambda _{\ell
}^{2})(1-4\alpha _{1,\ell }^{2}(\phi )/\lambda _{\ell
}^{2})}}O(||\mathbf{a}_{\ell }(\phi)||^{3})\;d\phi \Big\}.
\end{align*}

\noindent We shall now study the high frequency asymptotic behaviour of the terms $%
A_{0,\ell }$, $A_{i,\ell }$, and $A_{ij,\ell }$, for $i,j=1,2,\dots 8$.
First we shall show that the
first term in the expansion cancels out with the squared expectation.
More precisely, we shall prove the following lemma:

\begin{lemma} \label{A0}
\label{6.2} \label{A_0} As $\ell \rightarrow \infty$, we have
\begin{equation*}
2 \lambda _{\ell }^{2} \; A_{0,\ell }\; \iint_{I \times I} q (
\mathbf{0}; t_{1},t_{2}) d t_1 d t_2- \big(\mathbb{E}[{\cal N}^c_I(f_{\ell})]\big)^2=\frac{\ell
^{3}}{4} \Big[\int_I p_{1}^{c}(t ) d t \Big]^2+O(\ell^{2} \log \ell ).
\end{equation*}
\end{lemma}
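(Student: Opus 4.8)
The plan is to peel off the two explicit ingredients of the left–hand side and reduce everything to a one–variable asymptotic for $A_{0,\ell}$. First, Remark~\ref{q(0)} gives $\iint_{I\times I}q(\mathbf{0};t_1,t_2)\,dt_1\,dt_2=\frac18\big[\int_I p_1^c(t)\,dt\big]^2$, so the first summand equals $\frac{\lambda_\ell^2 A_{0,\ell}}{4}\big[\int_I p_1^c\big]^2$. Second, by the proof of Lemma~\ref{expectation} (namely \eqref{eq:EN=lambda*int g1l} together with \eqref{eq:int g1l=int p1+O(l-2)}) we have $\mathbb{E}[\mathcal{N}_I^c(f_\ell)]=\frac{\lambda_\ell}{2}\int_I p_1^c+O(1)$, hence $\big(\mathbb{E}[\mathcal{N}_I^c(f_\ell)]\big)^2=\frac{\lambda_\ell^2}{4}\big[\int_I p_1^c\big]^2+O(\ell^2)$ (using $(\int_I p_1^c)^2=O(1)$). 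Subtracting, the quantity in the lemma becomes $\frac14\big[\int_I p_1^c\big]^2\,\lambda_\ell^2(A_{0,\ell}-1)+O(\ell^2)$, and since $\lambda_\ell=\ell^2+\ell$ it suffices to prove
\[
A_{0,\ell}=1+\frac1\ell+O\!\left(\ell^{-2}\log\ell\right),
\]
which yields $\lambda_\ell^2(A_{0,\ell}-1)=\ell^3+O(\ell^2\log\ell)$ and hence the claim.

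\noindent\textbf{Expansion of $A_{0,\ell}$.} Recall $A_{0,\ell}=\int_{C/\ell}^{\pi/2}\sin\phi\,\big[(1-4\alpha_{1,\ell}^2(\phi)/\lambda_\ell^2)(1-4\alpha_{2,\ell}^2(\phi)/\lambda_\ell^2)\big]^{-1/2}\,d\phi$. By the decay of the perturbation terms from Appendix~\ref{estimates} (Lemma~\ref{alphaalpha}), for $C$ large the quantities $u_\ell=4\alpha_{1,\ell}^2/\lambda_\ell^2$ and $v_\ell=4\alpha_{2,\ell}^2/\lambda_\ell^2$ are $\le\frac12$ uniformly on $[C/\ell,\pi/2]$, so I may use $[(1-u)(1-v)]^{-1/2}=1+\tfrac{u+v}{2}+O(u^2+v^2)$ there. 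This splits $A_{0,\ell}$ into: (a) the base integral $\int_{C/\ell}^{\pi/2}\sin\phi\,d\phi=\cos(C/\ell)=1+O(\ell^{-2})$; (b) a linear term $\frac{2}{\lambda_\ell^2}\int_{C/\ell}^{\pi/2}\sin\phi\,(\alpha_{1,\ell}^2+\alpha_{2,\ell}^2)\,d\phi$; (c) a remainder bounded by $\frac{\mathrm{const}}{\lambda_\ell^4}\int_{C/\ell}^{\pi/2}\sin\phi\,(\alpha_{1,\ell}^4+\alpha_{2,\ell}^4)\,d\phi$. For (b), I would use the Legendre equation to rewrite $\alpha_{2,\ell}(\phi)=\lambda_\ell P_\ell(\cos\phi)-\cos\phi\,P_\ell'(\cos\phi)$ and expand the square; the only contribution of size $\ell^{-1}$ is $2\int_{C/\ell}^{\pi/2}\sin\phi\,P_\ell(\cos\phi)^2\,d\phi$, which by $\int_0^{\pi/2}\sin\phi\,P_\ell(\cos\phi)^2\,d\phi=\frac{1}{2\ell+1}$ and $\int_0^{C/\ell}\sin\phi\,P_\ell(\cos\phi)^2\,d\phi\le 1-\cos(C/\ell)=O(\ell^{-2})$ (since $|P_\ell|\le1$) equals $\frac1\ell+O(\ell^{-2})$. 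The remaining pieces — the term from $\alpha_{1,\ell}=P_\ell'(\cos\phi)$, the cross term $\frac{4}{\lambda_\ell}\int\sin\phi\cos\phi\,P_\ell P_\ell'$, and $\frac{2}{\lambda_\ell^2}\int\sin\phi\cos^2\phi\,(P_\ell')^2$ — are all $O(\ell^{-2})$, using the standard identity $\int_0^{\pi/2}\sin\phi\,(P_\ell'(\cos\phi))^2\,d\phi=\lambda_\ell/2=O(\ell^2)$ and, for the cross term, $xP_\ell P_\ell'=\tfrac12 x(P_\ell^2)'$ followed by an integration by parts (which shows it is $O(1)$). For (c), the Hilb-type bounds $|P_\ell(\cos\phi)|\lesssim(\ell\sin\phi)^{-1/2}$ and $|P_\ell'(\cos\phi)|\lesssim\ell^{1/2}(\sin\phi)^{-3/2}$ on $[C/\ell,\pi/2]$ (appendix) give $\alpha_{2,\ell}^4\lesssim\ell^6(\sin\phi)^{-2}$ and $\alpha_{1,\ell}^4\lesssim\ell^2(\sin\phi)^{-6}$, hence $\int_{C/\ell}^{\pi/2}\sin\phi\,\alpha_{2,\ell}^4\,d\phi\lesssim\ell^6\int_{C/\ell}^{\pi/2}\frac{d\phi}{\sin\phi}=O(\ell^6\log\ell)$ and $\int_{C/\ell}^{\pi/2}\sin\phi\,\alpha_{1,\ell}^4\,d\phi=O(\ell^6)$; dividing by $\lambda_\ell^4$ shows (c) $=O(\ell^{-2}\log\ell)$. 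Collecting (a)--(c) gives the desired expansion of $A_{0,\ell}$.

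\noindent\textbf{Where the difficulty lies.} The delicate parts are the two sharp bookkeeping points. First, isolating the $\ell^{-1}$ contribution hinges on the rewriting $\alpha_{2,\ell}=\lambda_\ell P_\ell-\cos\phi\,P_\ell'$ and the closed form $\int_{-1}^1 P_\ell^2=\frac{2}{2\ell+1}$; moreover the cross term $\int\sin\phi\cos\phi\,P_\ell P_\ell'$ must be evaluated exactly by integration by parts, since the naive Cauchy--Schwarz bound only gives $O(\ell^{1/2})$, which is too lossy once multiplied by $\lambda_\ell^{-1}$. Second, the $\log\ell$ in the error term is genuine and comes entirely from the quadratic remainder (c): the worst term $\alpha_{2,\ell}^4/\lambda_\ell^4$ integrates against $\sin\phi\,d\phi$ to the logarithmically divergent $\int_{C/\ell}^{\pi/2}\frac{d\phi}{\sin\phi}\sim\log\ell$ at the lower cutoff, whereas every other error contribution is genuinely $O(\ell^{-2})$. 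Throughout, the only nontrivial analytic inputs are the uniform asymptotics for $P_\ell$ and its derivatives recorded in the appendix, which control all the integrals in the transition regime $\phi\sim C/\ell$.
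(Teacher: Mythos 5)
Your proposal is correct, and the high-level structure matches the paper's: cancel the $\cos(C/\ell)$ base term of $A_{0,\ell}$ against $\big(\mathbb{E}[\mathcal{N}^c_I(f_\ell)]\big)^2$ up to $O(\ell^2)$, then show the subleading contribution of $A_{0,\ell}$ is $\frac{1}{\ell}+O(\ell^{-2}\log\ell)$, driven by $\frac{2}{\lambda_\ell^2}\int_{C/\ell}^{\pi/2}\alpha_{2,\ell}^2(\phi)\sin\phi\,d\phi$. Where you genuinely diverge from the paper is in how this key integral is evaluated. The paper substitutes Hilb's asymptotics for $P_\ell''$ directly into $\alpha_{2,\ell}=-\sin^2\phi\,P_\ell''+\cos\phi\,P_\ell'$ and explicitly computes the resulting oscillatory integral $\frac{4}{\pi\ell}\int_{C/\ell}^{\pi/2}\cos^2\psi_\ell^-\,d\phi$, using the auxiliary Lemmas \ref{sine2}--\ref{sine4} to control the various error pieces. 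You instead rewrite $\alpha_{2,\ell}=\lambda_\ell P_\ell-\cos\phi\,P_\ell'$ via the Legendre ODE, expand the square, and use the \emph{exact} identities $\int_0^1 P_\ell^2\,du=\frac{1}{2\ell+1}$ and $\int_0^1(P_\ell')^2\,du=\frac{\lambda_\ell}{2}$, together with an integration by parts for the cross term $\int u\,P_\ell P_\ell'\,du$. This sidesteps the oscillatory integral entirely, gives a sharper $O(\ell^{-2})$ bound (no $\log$) for the whole linear-order piece, and cleanly isolates the $\log\ell$ in the lemma's error term as coming solely from the quadratic remainder $\int_{C/\ell}^{\pi/2}\alpha_{2,\ell}^4/\lambda_\ell^4\,\sin\phi\,d\phi\sim\int P_\ell^4\sim\ell^{-2}\log\ell$. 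You correctly flag that Cauchy--Schwarz alone is too lossy for the cross term (it would only yield $O(\ell^{-3/2})$, enough for Proposition \ref{P-afkao} but not for Lemma \ref{A0} as stated), which is why the integration by parts is essential. Both arguments still need Hilb's asymptotics for the quadratic remainder, and both rely on the uniform smallness of $4\alpha_{i,\ell}^2/\lambda_\ell^2$ on $[C/\ell,\pi/2]$ for $C$ large, which comes from Lemma \ref{alphaalpha}; you state this explicitly where the paper invokes it implicitly.
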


\noindent We shall then show that all linear terms $A_{i,\ell }$
with $i \ne 3$ are indeed subdominant. The bound $O(\ell ^{-3/2})$
may not be optimal; it is probably possible to improve it to
$O(\log \ell /\ell ^{2})$ by working somewhat harder; we postpone
this analysis to future research.

\begin{lemma} \label{Ai}
\label{firstO} As $\ell \rightarrow \infty $, for all $i=1,\dots 8$, such
that $i\neq 3$, we have%
\begin{equation*}
A_{i,\ell }= O(\ell ^{-3/2}),
\end{equation*}%
whereas for $i=3$, we get
\begin{equation*}
A_{3,\ell }=-8\ell ^{-1}+O(\ell ^{-2}\log \ell ).
\end{equation*}
\end{lemma}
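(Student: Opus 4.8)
The plan is to substitute the closed forms of $\alpha_{1,\ell},\alpha_{2,\ell},\beta_{1,\ell},\beta_{2,\ell},\beta_{3,\ell},\gamma_{1,\ell},\dots,\gamma_{4,\ell}$ into the definitions \eqref{Delta_1}--\eqref{Delta_2} of the perturbations $a_{i,\ell}(\phi)$, and then to read off the leading high-energy behaviour of each $A_{i,\ell}$ using the Hilb-type asymptotics for $P_\ell(\cos\phi)$ and its $\phi$-derivatives established in the Appendix. Throughout one fixes $C$ large, so that by Lemma~\ref{alphaalpha} (Appendix~\ref{estimates}) the ratios $4\alpha_{1,\ell}^2(\phi)/\lambda_\ell^2$ and $4\alpha_{2,\ell}^2(\phi)/\lambda_\ell^2$ are $O(1/C)$ uniformly on $[C/\ell,\pi/2]$; hence the normalising factor $[(1-4\alpha_{2,\ell}^2/\lambda_\ell^2)(1-4\alpha_{1,\ell}^2/\lambda_\ell^2)]^{-1/2}$ is bounded and equals $1+O(\ell^{-1}(\sin\phi)^{-1})$, and every denominator $\lambda_\ell(\lambda_\ell^2-4\alpha_{j,\ell}^2(\phi))$ appearing in the $a_{i,\ell}$ is $\asymp\lambda_\ell^3$. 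Writing $N=\ell+\tfrac12$, Hilb's formula $P_\ell(\cos\phi)=\sqrt{2/(\pi N\sin\phi)}\,\cos(N\phi-\tfrac{\pi}{4})+(\text{error})$ together with its differentiated versions yields the envelope bounds $\frac{d^k}{d\phi^k}P_\ell(\cos\phi)=O(\ell^{k-1/2}(\sin\phi)^{-1/2})$ and $P_\ell^{(k)}(\cos\phi)=O(\ell^{k-1/2}(\sin\phi)^{-k-1/2})$, uniformly on $[C/\ell,\pi/2]$.

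For $i\neq3$ I would establish $A_{i,\ell}=O(\ell^{-3/2})$ by feeding these envelopes into $a_{i,\ell}$. For $i\in\{1,2,4,5\}$ the pointwise bound already suffices: one gets $|a_{i,\ell}(\phi)|=O(\ell^{-3}(\sin\phi)^{-3})$ for $i\in\{1,2\}$, $O(\ell^{-2}(\sin\phi)^{-2})$ for $i=4$, and $O(\ell^{-5/2}(\sin\phi)^{-5/2})$ for $i=5$, each of which, integrated against $\sin\phi\,d\phi$ over $[C/\ell,\pi/2]$, is $O(\ell^{-2}\log\ell)$. For $i\in\{6,8\}$, whose leading parts involve $\gamma_{2,\ell}$ resp.\ $\gamma_{3,\ell}$ (i.e.\ three $\phi$-derivatives of $P_\ell$), the pointwise bound is $O(\ell^{-3/2}(\sin\phi)^{-3/2})$, which integrates to $O(\ell^{-3/2})$ since $\int_0^{\pi/2}(\sin\phi)^{-1/2}d\phi<\infty$. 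The genuine bottleneck is $i=7$: here the dominant part of $a_{7,\ell}(\phi)$ comes from $\gamma_{4,\ell}(\phi)\approx\sin^4\phi\,P_\ell''''(\cos\phi)$ and equals $8\sqrt{2/\pi}\,(N\sin\phi)^{-1/2}\cos(N\phi-\tfrac{\pi}{4})+(\text{lower order})$; the crude pointwise bound only gives $O(\ell^{-1/2})$, so one must exploit the oscillation, a single integration by parts in $\phi$ of $\int_{C/\ell}^{\pi/2}(N\sin\phi)^{-1/2}\cos(N\phi-\tfrac{\pi}{4})\sin\phi\,d\phi$ producing $O(N^{-1/2}\cdot N^{-1})=O(\ell^{-3/2})$, the boundary term at $\phi=\pi/2$ being the largest. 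A second integration by parts together with a separate treatment of the edge strip $\phi\in[C/\ell,c]$ is what would be needed to improve this to $O(\ell^{-2}\log\ell)$; I would not pursue it.

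For $i=3$ I would compute the exact constant. The key identities are $\beta_{3,\ell}(\phi)=\tfrac{d^3}{d\phi^3}P_\ell(\cos\phi)$ and $\alpha_{2,\ell}(\phi)=-\tfrac{d^2}{d\phi^2}P_\ell(\cos\phi)$, which follow by differentiating $\tfrac{d}{d\phi}P_\ell(\cos\phi)=-\sin\phi\,P_\ell'(\cos\phi)$ twice more and comparing with the displayed definitions. Inserting Hilb's formula and keeping, in $\tfrac{d^3}{d\phi^3}$ and $\tfrac{d^2}{d\phi^2}$, only the dominant terms (all derivatives falling on the cosine), one obtains $\beta_{3,\ell}^2(\phi)=\tfrac{2N^5}{\pi\sin\phi}\sin^2(N\phi-\tfrac{\pi}{4})+(\text{l.o.t.})$ and $4\alpha_{2,\ell}^2(\phi)=\tfrac{8N^3}{\pi\sin\phi}\cos^2(N\phi-\tfrac{\pi}{4})+(\text{l.o.t.})$, so that, using $\lambda_\ell=N^2-\tfrac14$,
\[
-a_{3,\ell}(\phi)=\frac{16\beta_{3,\ell}^2(\phi)}{\lambda_\ell(\lambda_\ell^2-4\alpha_{2,\ell}^2(\phi))}+\frac{2}{\lambda_\ell}=\frac{32}{\pi N\sin\phi}\,\sin^{2}\!\bigl(N\phi-\tfrac{\pi}{4}\bigr)+\frac{2}{\lambda_\ell}+R_\ell(\phi).
\]
Integrating $-a_{3,\ell}(\phi)\sin\phi$ (times the normalising factor) over $[C/\ell,\pi/2]$: the first term contributes, via $\sin^2(N\phi-\tfrac{\pi}{4})=\tfrac12(1-\sin 2N\phi)$, the amount $\tfrac{32}{\pi N}(\tfrac{\pi}{4}+O(\ell^{-1}))=\tfrac{8}{\ell}+O(\ell^{-2})$; the $2/\lambda_\ell$ term contributes $O(\ell^{-2})$; and the remainder $R_\ell(\phi)$ — built from the lower-order Hilb terms, from $4\alpha_{1,\ell}^2/\lambda_\ell^2$, and from the $1+O(\ell^{-1}(\sin\phi)^{-1})$ expansion of the normalising factor — has a non-oscillatory part of size $O(\ell^{-2}(\sin\phi)^{-2})$ (its oscillatory parts absorbed by integration by parts into $O(\ell^{-2})$), hence integrates to $O(\ell^{-2}\log\ell)$. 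Therefore $A_{3,\ell}=-8\ell^{-1}+O(\ell^{-2}\log\ell)$.

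The main obstacle throughout is the edge strip $\phi\in[C/\ell,c]$, $c$ a small fixed constant. There the Hilb main term and its $k$-th $\phi$-derivative are large, of size $\sim\ell^{k-1/2}(\sin\phi)^{-1/2}$, and the Hilb error is least favourable, so one has to balance these growing factors against the explicit powers $\sin^m\phi$ in the definitions of $\alpha_{j,\ell},\beta_{j,\ell},\gamma_{j,\ell}$ and track the exact exponents, so that every error integral $\int_{C/\ell}^{\pi/2}(\sin\phi)^{-s}d\phi$ closes with the advertised order; it is exactly the borderline $s=1$ that manufactures the $\log\ell$ in $A_{3,\ell}$ and blocks an easy improvement of the $i=7$ bound. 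The analytic inputs needed — the Hilb asymptotics and envelope bounds for $P_\ell$ and its $\phi$-derivatives, and the uniform lower bound $\lambda_\ell^2-4\alpha_{j,\ell}^2(\phi)\geq c\lambda_\ell^2$ on $[C/\ell,\pi/2]$ for $C$ large — are exactly the Legendre-asymptotics lemmas of the Appendix already invoked for Lemma~\ref{e}.
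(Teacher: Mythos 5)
Your argument is correct and follows essentially the same route as the paper's Appendix~\ref{A_terms}: pointwise Hilb-envelope bounds on the $a_{i,\ell}$ combined with the integral estimates of Lemmas~\ref{sine2}--\ref{sine4} take care of $i\in\{1,2,4,5,6,8\}$; for $i=7$ you, like the paper, must exploit the cancellation in the oscillatory leading term of $\gamma_{4,\ell}$ (your integration by parts is equivalent to the paper's direct evaluation of $\int\cos\psi_\ell^-\,d\phi=O(\ell^{-1})$); and for $i=3$ the constant $-8\ell^{-1}$ comes out of the same Hilb main-term calculation via $\int_{C/\ell}^{\pi/2}\cos^2\psi_\ell^+\,d\phi=\tfrac{\pi}{4}+O(\ell^{-1})$. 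Your observation that $\alpha_{2,\ell}(\phi)=-\tfrac{d^2}{d\phi^2}P_\ell(\cos\phi)$ and $\beta_{3,\ell}(\phi)=\tfrac{d^3}{d\phi^3}P_\ell(\cos\phi)$ is a pleasant streamlining that the paper does not make explicit, but it does not change the proof strategy.
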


\noindent Finally, in the next lemma we study the asymptotic behaviour of the second
order terms $A_{ij,\ell }$, for $i,j=1,\dots ,8$; again they are all
subdominant, but for the term with index $(7,7)$:

\begin{lemma} \label{Aij}
\label{secondO} As $\ell \rightarrow \infty $, for $(i,j)\neq (7,7)$, we
have
\begin{equation*}
A_{ij,\ell }=O(\ell ^{-2}\log \ell ),
\end{equation*}%
and for $(i,j)=(7,7)$ we have
\begin{equation*}
A_{77,\ell }=32\ell ^{-1}+O(\ell ^{-2}\log \ell ).
\end{equation*}
\end{lemma}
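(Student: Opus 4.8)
The plan for Lemma~\ref{Aij} is to substitute the Hilb-type asymptotics for $P_{\ell}$ and its $\phi$-derivatives (collected in Appendix~\ref{estimates}, Lemma~\ref{alphaalpha}) into the explicit rational expressions \eqref{Delta_1}--\eqref{Delta_2} for the perturbation terms $a_{i,\ell}(\phi)$, and to reduce every integral $A_{ij,\ell}$ to a short list of elementary one-dimensional integrals. First I would record the structural identities $\alpha_{2,\ell}(\phi)=-\tfrac{d^{2}}{d\phi^{2}}P_{\ell}(\cos\phi)$, $\beta_{3,\ell}(\phi)=\tfrac{d^{3}}{d\phi^{3}}P_{\ell}(\cos\phi)$, $\gamma_{4,\ell}(\phi)=\tfrac{d^{4}}{d\phi^{4}}P_{\ell}(\cos\phi)$, while $\alpha_{1,\ell},\beta_{1,\ell},\beta_{2,\ell},\gamma_{1,\ell},\gamma_{2,\ell},\gamma_{3,\ell}$ are ``incomplete'' combinations carrying a strictly smaller power of $\ell$ in the range $\phi>C/\ell$. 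Combined with the Bessel expansion $J_{0}^{(k)}(z)=\sqrt{2/(\pi z)}\cos(z-\tfrac{\pi}{4}+\tfrac{k\pi}{2})+O(z^{-3/2})$, this gives, uniformly on $[C/\ell,\pi/2]$ for $C$ large, pointwise bounds $|a_{i,\ell}(\phi)|\le c_{i}(\ell\sin\phi)^{-\nu_{i}}$ with $\nu_{7}=\tfrac12$ (the dominant one, from the genuine fourth derivative $\gamma_{4,\ell}$), $\nu_{3}=1$, $\nu_{6}=\nu_{8}=\tfrac32$, $\nu_{4}=2$, $\nu_{5}=\tfrac52$, $\nu_{1}=\nu_{2}=3$; it also yields the explicit leading shapes $a_{7,\ell}(\phi)=\tfrac{8\sqrt 2}{\sqrt{\pi(\ell+\frac12)\sin\phi}}\cos\big((\ell+\tfrac12)\phi+c\big)+(\text{lower order})$ and $a_{3,\ell}(\phi)=-\tfrac{16}{\pi(\ell+\frac12)\sin\phi}\big(1+\cos(2(\ell+\tfrac12)\phi+c')\big)+(\text{lower order})$ --- the essential point being that $a_{7,\ell}$ is \emph{purely oscillating} at leading order whereas $a_{3,\ell}$ carries a genuine non-oscillating part of size $(\ell\sin\phi)^{-1}$ --- together with $\big((1-4\alpha_{1,\ell}^{2}/\lambda_{\ell}^{2})(1-4\alpha_{2,\ell}^{2}/\lambda_{\ell}^{2})\big)^{-1/2}=1+O((\ell\sin\phi)^{-1})$.

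With these ingredients, for $(i,j)\neq(7,7)$ I would expand the integrand of $A_{ij,\ell}$ as a slowly varying amplitude times a constant plus slowly varying amplitudes times $\cos\big(m(\ell+\tfrac12)\phi+\mathrm{phase}\big)$, $m\in\{1,2,3,4\}$. Since at least one index differs from $7$, the oscillating pieces have amplitude $O\big(\ell^{-3/2}(\sin\phi)^{-1/2}\big)$ at worst (attained for $(i,j)=(3,7)$), and a single integration by parts gains a factor $(\ell+\tfrac12)^{-1}$ while the resulting boundary contribution at $\phi=C/\ell$ and interior term $\int_{C/\ell}^{\pi/2}|g'(\phi)|\,d\phi$ are each $O(\ell^{-1})$, so every oscillating piece contributes $O(\ell^{-2})$. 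The non-oscillating piece is absent whenever $i=7$ or $j=7$ (because $a_{7,\ell}$ has no constant part at leading order), and in all remaining cases its amplitude is $O\big((\ell\sin\phi)^{-2}\big)$ at worst, attained by the $(3,3)$ term whose constant part is $\tfrac{3}{2}\big(\tfrac{16}{\pi(\ell+\frac12)\sin\phi}\big)^{2}$; hence, after multiplying by $\sin\phi$ and integrating, it contributes $O\big(\ell^{-2}\int_{C/\ell}^{\pi/2}(\sin\phi)^{-1}\,d\phi\big)=O(\ell^{-2}\log\ell)$. Adding the two types of contribution gives $A_{ij,\ell}=O(\ell^{-2}\log\ell)$.

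For $(i,j)=(7,7)$ I would first use the prefactor expansion to write $A_{77,\ell}=\int_{C/\ell}^{\pi/2}a_{7,\ell}(\phi)^{2}\sin\phi\,d\phi+O(\ell^{-2}\log\ell)$, the error being exactly of the type treated in the previous paragraph; this step runs parallel to the computation of $A_{3,\ell}$ in Lemma~\ref{Ai}. Substituting $a_{7,\ell}(\phi)^{2}=\tfrac{128}{\pi(\ell+\frac12)\sin\phi}\cos^{2}\big((\ell+\tfrac12)\phi+c\big)+(\text{lower order})$, multiplying by $\sin\phi$ and using $\cos^{2}=\tfrac12(1+\cos 2(\cdot))$, the constant part gives $\tfrac{64}{\pi(\ell+\frac12)}\big(\tfrac{\pi}{2}-\tfrac{C}{\ell}+O(\ell^{-1})\big)=\tfrac{32}{\ell}+O(\ell^{-2})$, the $\cos\big(2(\ell+\tfrac12)\phi\big)$ part is $O(\ell^{-2})$ by one integration by parts, and the lower-order corrections (from the Hilb remainder, the $O(z^{-3/2})$ Bessel remainder, the factor $(\phi/\sin\phi)^{1/2}\neq1$, and the prefactor) are all $O(\ell^{-2}\log\ell)$; hence $A_{77,\ell}=32\ell^{-1}+O(\ell^{-2}\log\ell)$. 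I expect the main obstacle to be the bookkeeping in the first step: one must simultaneously control the several sub-leading contributions of the Hilb and Bessel expansions and of the rational forms \eqref{Delta_1}--\eqref{Delta_2}, and verify that the competition between the near-endpoint singularity $(\sin\phi)^{-\mu}$ and the oscillatory cancellation never produces a term larger than $O(\ell^{-2}\log\ell)$; once every integral is reduced to the model integrals $\int_{C/\ell}^{\pi/2}(\sin\phi)^{-\mu}\,d\phi$ and $\int_{C/\ell}^{\pi/2}(\sin\phi)^{-\mu}\cos\big(m(\ell+\tfrac12)\phi+c\big)\,d\phi$, the remaining estimates are routine.
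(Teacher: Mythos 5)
Your plan is essentially the same as the paper's -- use the Hilb/Bessel asymptotics for $P_\ell$ and its derivatives (collected in Appendix~\ref{estimates}) to read off the sizes of the $a_{i,\ell}$, reduce $A_{ij,\ell}$ to model integrals handled by Lemmas~\ref{sine2}--\ref{sine4}, and compute $A_{77,\ell}$ explicitly via $\cos^2\psi^-_\ell = \tfrac12 + \tfrac12\cos 2\psi^-_\ell$ exactly as in the paper. In one respect your write-up is actually cleaner than the printed proof: the paper simply asserts that $|a_{i,\ell}a_{j,\ell}|$ reduces to $O\big((\ell\sin\phi)^{-2}\big)$ for $(i,j)\ne(7,7)$ and cites the three integration lemmas, but for pairs such as $(3,7)$ the raw size of the product is $O\big((\ell\sin\phi)^{-3/2}\big)$, and a naive size bound would only give $O(\ell^{-3/2})$; oscillation must be exploited. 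Your explicit split into a constant amplitude plus amplitudes against $\cos\big(m(\ell+\tfrac12)\phi+c\big)$, followed by one integration by parts, makes this cancellation explicit and is the right way to close that gap.

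However, your justification for the step \emph{``the non-oscillating piece is absent whenever $i=7$ or $j=7$, because $a_{7,\ell}$ has no constant part at leading order''} is insufficient as stated, and for $(i,j)=(3,7)$ it is exactly the step the lemma hinges on. The constant part of a product $a_{i,\ell}a_{7,\ell}$ need not come from ``constant $\times$ oscillating''; it can come from the beat $\cos\psi_i\cos\psi_7$ of two oscillating pieces, and if that beat had a nonzero DC component the $(3,7)$ contribution would genuinely be of order $\ell^{-3/2}$, contradicting the lemma. What saves the day is a phase accident that you must check: at leading order $a_{7,\ell}\propto\cos\psi^-_\ell$ (coming from $P''''_\ell$ via $\gamma_{4,\ell}$) whereas every other $a_{i,\ell}$ with $\nu_i\le 3/2$ oscillates, at leading order, either like $\cos^{2}\psi^+_\ell$ (the case $i=3$, via $\beta_{3,\ell}^2$) or like $\cos\psi^+_\ell$ (the cases $i=6,8$, via $\gamma_{2,\ell},\gamma_{3,\ell}\propto P'''_\ell$); since $\psi^+_\ell-\psi^-_\ell=\pi/2$, the DC term in $\cos\psi^+_\ell\cos\psi^-_\ell$ is $\tfrac12\cos(\pi/2)=0$, and likewise $\cos^2\psi^+_\ell\cos\psi^-_\ell$ is purely oscillatory. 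Once you record this phase cancellation the argument closes; without it you have asserted exactly the conclusion you need. Finally, two small bookkeeping remarks: your pointwise amplitude $O(\ell^{-3/2}(\sin\phi)^{-1/2})$ in the first paragraph is the amplitude \emph{after} multiplying by $\sin\phi$ (you should state that), and the $(\phi/\sin\phi)^{1/2}$ factor you list among the corrections for $A_{77,\ell}$ is already absorbed into the $\sin^{-1/2}\phi$ prefactor in the expansions of Lemma~\ref{hilbs}, so it is not a separate source of error.
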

\noindent The proofs of Lemma \ref{A0}, Lemma \ref{Ai} and Lemma \ref{Aij} are in Appendix \ref{A_terms}. We have proved that
\begin{align} \label{dominantttttt}
&16 \pi^2 \int_{C/\ell}^{\pi/2} \iint_{I \times I} K_{2,\ell}(\phi; t_1,t_2) d t_1 d t_2 \sin \phi d \phi -
 \big(\mathbb{E}[{\cal N}^c_I(f_{\ell})]\big)^2 \nonumber \\
&= \ell^3 \Big\{  \frac{1}{4}  \Big[\int_I p_{1}^{c}(t ) d t \Big]^2 -16 \iint_{I \times I} \Big[\frac{%
\partial }{\partial a_{3} }q (\mathbf{a}; t_{1},t_{2})\Big]_{\mathbf{a}=\mathbf{0}} d t_1 d t_2 \nonumber \\
&\;\;+32 \iint_{I \times I}   \Big[\frac{\partial ^{2}}{\partial
a_{7}^2}q (\mathbf{a}; t_{1},t_{2})\Big]_{\mathbf{a}=\mathbf{0}} d
t_1 d t_2 \Big\}  +O(\ell^{5/2}).
\end{align}
We may rewrite the latter result as
\begin{align*}
16 \pi^2 \int_{C/\ell}^{\pi/2} \iint_{I \times I} K_{2,\ell}(\phi; t_1,t_2) d t_1 d t_2   \sin \phi d \phi -  \big(\mathbb{E}[{\cal N}^c_I(f_{\ell})]\big)^2
&= \int_{C/\ell}^{\pi/2} \iint_{I \times I}  L_{2,\ell}(\phi;t_1,t_2) d t_1 d t_2  \sin \phi d \phi   +O(\ell^{5/2}),
\end{align*}
with $L_{2,\ell}$ defined by \eqref{bachata}. In fact, once we
have isolated the dominant terms in \eqref{dominantttttt} (see
Appendix \ref{A_terms}), we can write them, as function of $\phi$,
in the form given in \eqref{bachata}. In Appendix \ref{A_terms},
the remainder $E_{2,\ell}(\phi; t_1,t_2)$ is computed explicitly,
and the bound \eqref{integralerror} is also established. To prove
the statement of Proposition \ref{key-c} we now compute the values
of the derivatives of $\hat{q}$ to obtain $v_2$ and $v_3$ in
\eqref{bachata}.

\vspace{0.5cm}

\noindent{\it {\bf Derivatives of $\hat{q}$}}\\

The relevant derivatives can be evaluated explicitly as
follows. Let
\begin{equation*}
\mathbf{a}_{i}=(0,\dots ,0,a_{i},0,\dots ,0),
\end{equation*}%
for $i=1,\dots ,8$; recall that $\hat{q} (\mathbf{a}; t_{1},t_{2};
z_1,z_2,w_1,w_2) $ is an analytic function of the elements of the
vector $ \mathbf{a}$, see \cite{kato}, Theorem 1.5, so that we can
write
\begin{equation*}
\Big[\frac{\partial^{j}}{\partial a_{i}^{j}} \hat{q}(\mathbf{a}; t_{1},t_{2} ; z_1,z_2,w_1,w_2 )\Big]_{\mathbf{a}=\mathbf{0}}=%
\Big[\frac{\partial ^{j}}{\partial a_{i}^{j}} \hat{q}_{\ell}(\mathbf{a}%
_{i}; t_{1},t_{2} ; z_1,z_2,w_1,w_2 )\Big]_{\mathbf{a}_i=\mathbf{0}},
\end{equation*}%
for $j=1,2$. Using Leibnitz integral rule and some tedious but
mechanical computations, we obtain
\begin{align}
\left[\frac{\partial }{\partial a_{3}} q
(\mathbf{a}_{3}; t_{1},t_{2})\right]_{\mathbf{a}_3=\mathbf{0}}
&=\frac{1}{2 \cdot 8^2}   \frac{1}{(2 \pi)^3} \iint_{\mathbb{R}^2
\times \mathbb{R}^2}
\left|z_1 \sqrt 8 t_1-z_1^2-z_2^2\right| \cdot \left|w_1 \sqrt 8 t_2-w_1^2-w_2^2\right| \nonumber \\
& \;\; \times \exp \left\{-\frac{3}{2}t_{1}^{2}\right\}\exp \left\{-\frac{1}{2}(z_{1}^{2}+z_{2}^{2}-\sqrt{8}%
t_{1}z_{1})\right\} \nonumber \\
&\;\; \times \exp \left\{-\frac{3}{2} t_{2}^{2}\right\}\exp \left\{-\frac{1}{2}(w_{1}^{2}+w_{2}^{2}-
\sqrt{8}t_{2}w_{1})\right\}  \nonumber  \\
& \;\;\times \left[ -6+(3t_{1}-\sqrt{2}z_{1})^{2}+(3t_{2}-\sqrt{2}%
w_{1})^{2}\right] d z_1 d z_2 d w_1 d w_2, \label{dominantttttt1}
\end{align}
\begin{align}
\Big[\frac{\partial ^{2}}{\partial a_{7}^{2}} q (\mathbf{a}%
_{7} ; t_{1},t_{2})\Big]_{\mathbf{a}_7=\mathbf{0}} &= \frac{1}{8^3}
\frac{1}{(2 \pi)^3} \iint_{\mathbb{R}^2 \times \mathbb{R}^2}
\left|z_1 \sqrt 8 t_1-z_1^2-z_2^2\right|\cdot \left|w_1 \sqrt 8 t_2-w_1^2-w_2^2\right| \nonumber \\
& \;\; \times \exp \left\{-\frac{3}{2}t_1^{2}\right\} \exp\left\{-\frac{1}{2}
(z_{1}^{2}+z_{2}^{2}-\sqrt{8}t_1 z_{1})\right\}  \nonumber \\
&\;\; \times  \exp \left\{-\frac{3}{2}t_2 ^{2}\right\} \exp\left\{-\frac{1}{2}
(z_{1}^{2}+z_{2}^{2}-\sqrt{8}t_2 z_{1})\right\}  \nonumber \\
& \;\; \times \left[3-(3t_1 -\sqrt{2}z_{1})^{2}\right] \cdot
\left[3-(3t_2 -\sqrt{2}z_{1})^{2}\right] d z_1 d z_2 d w_1 d w_2.
\label{dominantttttt2}
\end{align}

\noindent Performing similar computations reveals that on a sufficiently small neighborhood
of $\mathbf{a}=\mathbf{0}$ in $\R^{8}$ the function $w(t_1,t_2)$ appearing
in \eqref{taylor_ex} has Gaussian tails w.r.t. $(t_1,t_2)$, and hence
\eqref{eq:w in L1}, i.e. $w(\cdot, \cdot)$ belongs to $L^1(\mathbb{R}^2)$.
Indeed, the inverse matrix
$\Delta(\bf{a})^{-1}$ appearing in the definition of $\hat{q}$ is
a perturbation of the identity, whence a Gaussian term in
$(t_1,t_2)$ factors out from its derivatives of every order. This
concludes the proof of Proposition \ref{key-c}.

\subsubsection{Proof of Lemma \ref{d}}

To prove Lemma \ref{d}, it is sufficient to show that for $\phi >C/\ell $ we have%
\begin{equation*}
\sin ^{4}\phi \; [P_{\ell }^{\prime \prime }(\cos \phi )]^{2}, \hspace{0.2cm}
\frac{1}{\ell ^{2}} \sin ^{6}\phi [ P_{\ell }^{\prime \prime \prime }(\cos \phi
) ]^{2},\hspace{0,2cm} \frac{1}{\ell ^{4}} \sin ^{8}\phi [ P_{\ell }^{\prime
\prime \prime \prime }(\cos \phi )]^{2} \hspace{0.2cm} =O(\ell ^{4}),
\end{equation*}%
uniformly in $\ell$. To establish these bounds, we note that, from Lemma \ref{hilbs},
\begin{align*}
\sin ^{4}\phi [P_{\ell }^{\prime \prime }(\cos \phi )]^{2}
&=\sin ^{4}\phi \frac{\ell ^{3}}{\sin ^{5}\phi }+O(\ell ^{3})=O(\ell ^{4}),
\\
\frac{1}{\ell ^{2}} \sin ^{6}\phi [P_{\ell }^{\prime \prime \prime }(\cos \phi
)]^{2} &=\frac{\sin ^{6}\phi }{\ell ^{2}}\frac{\ell ^{5}%
}{\sin ^{7}\phi }+O(\ell ^{3})=O(\ell ^{4}), \\
\frac{1}{\ell ^{4}} \sin ^{8}\phi [P_{\ell }^{\prime \prime \prime \prime }(\cos
\phi )]^{2}&=\frac{\sin ^{8}\phi }{\ell ^{4}}\frac{%
\ell ^{7}}{\sin ^{9}\phi }+O(\ell ^{3})=O(\ell ^{4}),
\end{align*}%
as claimed.

\subsection{Short-range application of Kac-Rice}

In this section, we provide the proofs of Proposition \ref{a} and Lemma \ref{b}.

\subsubsection{Conditional covariance matrix}

With the scaling
\begin{equation*}
\phi =\psi / \ell,
\end{equation*}%
the matrix $\Sigma _{\ell }$ becomes
\begin{equation*}
\Sigma _{\ell }(\psi )=\left(
\begin{array}{cc}
A_\ell(\psi) & B_\ell(\psi) \\
B^{t}_\ell(\psi) & C_\ell(\psi)
\end{array}%
\right),
\end{equation*}%
where
\begin{equation*}
A_{\ell }(\psi )_{4\times 4}=\left(
\begin{array}{cccc}
\frac{\ell +1}{2\ell } & 0 & \alpha _{1,\ell }(\psi ) & 0 \\
0 & \frac{\ell +1}{2\ell } & 0 & \alpha _{2,\ell }(\psi ) \\
\alpha _{1,\ell }(\psi ) & 0 & \frac{\ell +1}{2\ell } & 0 \\
0 & \alpha _{2,\ell }(\psi ) & 0 & \frac{\ell +1}{2\ell }%
\end{array}%
\right),
\end{equation*}%
and the elements of the off-diagonal $2\times 2$ terms of $A$ are given by
\begin{equation*}
\alpha _{1,\ell }(\psi )=\frac{1}{\ell ^{2}}P_{\ell }^{\prime }(\cos (\psi
/\ell )),
\end{equation*}%
\begin{equation*}
\alpha _{2,\ell }(\psi )=-\frac{1}{\ell ^{2}}\sin ^{2}(\psi /\ell )P_{\ell
}^{\prime \prime }(\cos (\psi /\ell ))+\frac{1}{\ell ^{2}}\cos (\psi /\ell
)P_{\ell }^{\prime }(\cos (\psi /\ell )).
\end{equation*}%
The matrix $B_{\ell}(\psi)$ is given by
\begin{equation*}
B_{\ell }(\psi )_{4\times 6}=\left(
\begin{array}{cccccc}
0 & 0 & 0 & 0 & \beta _{1,\ell }(\psi ) & 0 \\
0 & 0 & 0 & \beta _{2,\ell }(\psi ) & 0 & \beta _{3,\ell }(\psi ) \\
0 & -\beta _{1,\ell }(\psi ) & 0 & 0 & 0 & 0 \\
-\beta _{2,\ell }(\psi ) & 0 & -\beta _{3,\ell }(\psi ) & 0 & 0 & 0%
\end{array}%
\right) ,
\end{equation*}%
with elements%
\begin{equation*}
\beta _{1,\ell }(\psi )=\frac{1}{\ell ^{3}}\sin (\psi /\ell )P_{\ell
}^{\prime \prime }(\cos (\psi /\ell )),
\end{equation*}%
\begin{equation*}
\beta _{2,\ell }(\psi )=\sin (\psi /\ell )\cos (\psi /\ell )\frac{1}{\ell
^{3}}P_{\ell }^{\prime \prime }(\cos (\psi /\ell ))+\sin (\psi /\ell )\frac{1%
}{\ell ^{3}}P_{\ell }^{\prime }(\cos (\psi /\ell )),
\end{equation*}%
\begin{equation*}
\beta _{3,\ell }(\psi )=-\sin ^{3}(\psi /\ell )\frac{1}{\ell ^{3}}P_{\ell
}^{\prime \prime \prime }(\cos (\psi /\ell ))+3\sin (\psi /\ell )\cos (\psi
/\ell )\frac{1}{\ell ^{3}}P_{\ell }^{\prime \prime }(\cos (\psi /\ell
))+\sin (\psi /\ell )\frac{1}{\ell ^{3}}P_{\ell }^{\prime }(\cos (\psi /\ell
)).
\end{equation*}
Finally, for the matrix $C_{\ell}(\psi)$, we have
\begin{equation*}
C_{\ell }(\psi )_{6\times 6}=\left(
\begin{array}{cc}
c_\ell(0) & c_\ell(\psi ) \\
c_\ell(\psi ) & c_\ell(0)
\end{array}
\right),
\end{equation*}
where
\begin{equation*}
c_\ell(0)=\left(
\begin{array}{ccc}
\frac{\ell (1+3\ell (\ell +2))-2}{8\ell ^{3}} & 0 & \frac{(\ell +1)(\ell
^{2}+\ell +2)}{8\ell ^{3}} \\
0 & \frac{(\ell +1)(\ell ^{2}+\ell -2)}{8\ell ^{3}} & 0 \\
\frac{(\ell +1)(\ell ^{2}+\ell +2)}{8\ell ^{3}} & 0 & \frac{\ell (1+3\ell
(\ell +2))-2}{8\ell ^{3}}%
\end{array}%
\right),
\end{equation*}%
and
\begin{equation*}
c_{\ell}(\psi )=\left(
\begin{array}{ccc}
\gamma _{1,\ell }(\psi ) & 0 & \gamma _{3,\ell }(\psi ) \\
0 & \gamma _{2,\ell }(\psi ) & 0 \\
\gamma _{3,\ell }(\psi ) & 0 & \gamma _{4,\ell }(\psi )%
\end{array}%
\right) ,
\end{equation*}%
with%
\begin{align*}
\gamma _{1,\ell }(\psi )=(2+\cos ^{2}(\psi /\ell ))\frac{1}{\ell ^{4}}%
P_{\ell }^{\prime \prime }(\cos (\psi /\ell ))+\cos (\psi /\ell )\frac{1}{%
\ell ^{4}}P_{\ell }^{\prime }(\cos (\psi /\ell )),
\end{align*}%
\begin{align*}
\gamma _{2,\ell }(\psi )=-\sin ^{2}(\psi /\ell )\frac{1}{\ell ^{4}}P_{\ell
}^{\prime \prime \prime }(\cos (\psi /\ell ))+\cos (\psi /\ell )\frac{1}{%
\ell ^{4}}P_{\ell }^{\prime \prime }(\cos (\psi /\ell )),
\end{align*}%
\begin{align*}
\gamma _{3,\ell }(\psi ) &=-\sin ^{2}(\psi /\ell )\cos (\psi /\ell )\frac{1%
}{\ell ^{4}}P_{\ell }^{\prime \prime \prime }(\cos (\psi /\ell )) +(-2\sin ^{2}(\psi /\ell )+\cos ^{2}(\psi /\ell ))\frac{1}{\ell ^{4}}%
P_{\ell }^{\prime \prime }(\cos (\psi /\ell ))\\
&\;\;+\cos (\psi /\ell )\frac{1}{%
\ell ^{4}}P_{\ell }^{\prime }(\cos (\psi /\ell )),
\end{align*}%
\begin{align*}
\gamma _{4,\ell }(\psi )& =\sin ^{4}(\psi /\ell )\frac{1}{\ell ^{4}}P_{\ell
}^{\prime \prime \prime \prime }(\cos (\psi /\ell ))-6\sin ^{2}(\psi /\ell
)\cos (\psi /\ell )\frac{1}{\ell ^{4}}P_{\ell }^{\prime \prime \prime }(\cos
(\psi /\ell )) \\
& +(-4\sin ^{2}(\psi /\ell )+3\cos ^{2}(\psi /\ell ))\frac{1}{\ell ^{4}}%
P_{\ell }^{\prime \prime }(\cos (\psi /\ell )) +\cos (\psi /\ell )\frac{1}{\ell ^{4}}P_{\ell }^{\prime }(\cos (\psi
/\ell )).
\end{align*}
The conditional covariance matrix $\Delta_{\ell}(\psi)$ is given by
\begin{equation*}
\Delta _{\ell }(\psi )=C_{\ell }(\psi )-B_{\ell }^{t}(\psi )A_{\ell
}^{-1}(\psi )B_{\ell }(\psi )=\left(
\begin{matrix}
\Delta _{1,\ell }(\psi ) & \Delta _{2,\ell }(\psi ) \\
\Delta _{2,\ell }(\psi ) & \Delta _{1,\ell }(\psi )%
\end{matrix}%
\right);
\end{equation*}%
we shall use below only the explicit expression only for  $\Delta _{1,\ell }(\psi )$ which is given by
\begin{align*}
&\hspace{3cm} \Delta _{1,\ell }(\psi )\\
&=\left(
\begin{array}{ccc}
\frac{2 \ell ( \ell+1)\beta _{2, \ell}(\psi )^{2}}{4 \ell^{2}\alpha _{2, \ell}(\psi )^{2}-(\ell+1)^{2}}+%
\frac{\ell(3\ell(\ell+2)+1)-2}{8\ell^{3}} & 0 & \frac{(\ell+1)\left( \frac{16\beta
_{2, \ell}(\psi )\beta _{3, \ell}(\psi ) \ell^{4}}{4 \ell^{2}\alpha _{2, \ell}(\psi )^{2}-(\ell+1)^{2}}%
+\ell^{2}+\ell+2\right) }{8 \ell^{3}} \\
0 & \frac{2 \ell(\ell+1)\beta _{1, \ell}(\psi )^{2}}{4\ell^{2}\alpha _{1, \ell}(\psi
)^{2}-(\ell+1)^{2}}+\frac{(\ell-1)(\ell+1)(\ell+2)}{8\ell ^{3}} & 0 \\
\frac{(\ell+1)\left( \frac{16\beta _{2, \ell}(\psi )\beta _{3, \ell}(\psi ) \ell^{4}}{%
4 \ell^{2}\alpha _{2, \ell}(\psi )^{2}-(\ell+1)^{2}}+\ell^{2}+\ell+2\right) }{8 \ell^{3}} & 0 &
\frac{2\ell (\ell+1)\beta _{3, \ell}(\psi )^{2}}{4 \ell^{2}\alpha _{2, \ell}(\psi )^{2}-(\ell+1)^{2}}+%
\frac{\ell(3\ell(\ell+2)+1)-2}{8 \ell^{3}}%
\end{array}%
\right).
\end{align*}

\begin{proof}[Proof of Proposition \ref{a}]
The statement of Proposition \ref{a} is an application of
Theorem 11.5.1 in \cite{adlertaylor}, provided that we
check that of the $10\times 10$ covariance matrix $\Sigma_\ell(\psi)$ of
the first and the second order derivatives of $f_{\ell}$ is nonsingular for
sufficiently $x,y$ satisfying $d(x,y)<c/\ell$ with $c>0$ sufficiently small.
The latter is shown in Appendix \ref{AppC}, with the
aid of specialized computer software, by Taylor expanding
the relevant determinant around the diagonal $x=y$.
\end{proof}

\subsubsection{Proof of Lemma \ref{b}}

We now need to study the high energy asymptotic behaviour of the kernel
\begin{equation*}
 \iint_{\mathbb{R}\times \mathbb{R}%
}K_{2,\ell}(\phi; t_{1},t_{2})dt_{1}dt_{2}
\end{equation*}
for $\phi<c/\ell$. In view of the equality \eqref{kkernelkk} we may proceed directly to bounding
\begin{equation}
\frac{\ell^4}{\sqrt{\text{det} (A_{\ell
}(\psi ))}}    \iint_{\mathbb{R}\times \mathbb{R}}\rho_{\ell
}(\psi ; t_{1},t_{2})dt_{1}dt_{2},  \label{lemmab}
\end{equation}%
where $\psi=\ell \phi$, and
\begin{align*}
\rho_{\ell }(\psi ; t_{1},t_{2})&= \iint_{\mathbb{R}^{2}\times \mathbb{R}^{2}}\Big|z_{1} t_1
-z_{1}^{2}-z_{2}^{2}\Big|\cdot
 \Big|w_{1} t_2 -w_{1}^{2}-w_{2}^{2}\Big| \frac{1}{\sqrt{\det(\Delta _{\ell }(\psi ))}} \\
 &\;\; \times  \exp \left\{-\frac{1}{2}%
(z_{1},z_{2},t_1-z_1,w_{1},w_{2}, t_2-w_1 ) \Delta _{\ell }(\psi
)^{-1} (z_{1},z_{2},t_1-z_1,w_{1},w_{2}, t_2-w_1)^{t}\right\}  dz_{1}dz_{2}dw_{1}dw_{2}.
\end{align*}%
Let $a_{i,\ell}(\psi)$ be the entries of the matrix $\Delta _{1,\ell }(\psi )$ defined by
\begin{align}
\Delta _{1,\ell }(\psi )
&=\left(
\begin{array}{ccc}
3+{a}_{1,\ell }(\psi ) & 0 & 1+{a}_{4,\ell }(\psi ) \\
0 & 1+{a}_{2,\ell }(\psi ) & 0 \\
1+{a}_{4,\ell }(\psi ) & 0 & 3+{a}_{3,\ell }(\psi )%
\end{array}%
\right). \label{deltapsi}
\end{align}
The main technical difficulty of this proof is
that, as Lemma \ref{18luglioa} below shows, the term $\sqrt{\text{det}(A_{\ell }(\psi ))}$ appearing in the
denominator is of the order $\psi^{2}$ around the origin. As a
consequence, a very delicate bound must be established on $\rho_{\ell}(
\psi; t_1,t_2)$ to ensure the convergence and the boundedness of the integral with respect to $\psi$.

\begin{lemma} \label{18luglioa}
Uniformly in $\ell$, we have for $\psi \leq c$
\begin{equation}
\det (A_{\ell }(\psi ))\geq c\psi ^{4},  \label{lemmab1}
\end{equation}
for some universal $c>0$.
\end{lemma}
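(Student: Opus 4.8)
The plan is to split the $4\times4$ determinant into the two $2\times2$ sub-blocks it factors through, and then bound each block from below by a constant multiple of $\psi^2$. After permuting the index order $(1,2,3,4)\mapsto(1,3,2,4)$ (on both rows and columns), the matrix $A_\ell(\psi)$ becomes block-diagonal with blocks $\left(\begin{smallmatrix} a & \alpha_{1,\ell}(\psi)\\ \alpha_{1,\ell}(\psi) & a\end{smallmatrix}\right)$ and $\left(\begin{smallmatrix} a & \alpha_{2,\ell}(\psi)\\ \alpha_{2,\ell}(\psi) & a\end{smallmatrix}\right)$, where $a=\frac{\ell+1}{2\ell}$, so that $\det(A_\ell(\psi))=(a^2-\alpha_{1,\ell}(\psi)^2)(a^2-\alpha_{2,\ell}(\psi)^2)$. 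Since $a=\frac1{\ell^2}P_\ell'(1)$ and, using $P_\ell'(1)=\ell(\ell+1)/2$ together with the Legendre ODE, $\alpha_{1,\ell}(0)=\alpha_{2,\ell}(0)=a$, both factors vanish at $\psi=0$ (this is exactly the degeneracy at coincident points), and the lemma is the assertion that each factor decays at most quadratically. Thus it suffices to show: there are universal $c, c_0>0$ with $a^2-\alpha_{i,\ell}(\psi)^2\ge c_0\psi^2$ for $i=1,2$, all $\ell\ge2$, and $0\le\psi\le c$; multiplying the two bounds gives $\det(A_\ell(\psi))\ge c_0^2\psi^4$.

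For $i=1$ I would write, by the mean value theorem, $\ell^2(a-\alpha_{1,\ell}(\psi))=P_\ell'(1)-P_\ell'(\cos(\psi/\ell))=(1-\cos(\psi/\ell))\,P_\ell''(\xi)$ for some $\xi\in[\cos(\psi/\ell),1]$. For $i=2$ I would first use $(1-x^2)P_\ell''=2xP_\ell'-\ell(\ell+1)P_\ell$ to rewrite $\ell^2\alpha_{2,\ell}(\psi)=H(\cos(\psi/\ell))$ with $H(x):=\ell(\ell+1)P_\ell(x)-xP_\ell'(x)$, note $H(1)=P_\ell'(1)=\ell^2 a$ and $H'(x)=(\ell(\ell+1)-1)P_\ell'(x)-xP_\ell''(x)$, so again $\ell^2(a-\alpha_{2,\ell}(\psi))=H(1)-H(\cos(\psi/\ell))=(1-\cos(\psi/\ell))\,H'(\xi)$. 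In both cases $\frac{\psi^2}{4\ell^2}\le 1-\cos(\psi/\ell)\le\frac{\psi^2}{2\ell^2}$ for $\psi\le c\le1$, and a direct computation gives the positive edge values $P_\ell''(1)=\tfrac18(\ell-1)\ell(\ell+1)(\ell+2)\asymp\ell^4$ and $H'(1)=\tfrac18\ell(\ell+1)(3\ell(\ell+1)-2)\asymp\ell^4$ (both $>0$ for $\ell\ge2$). Granting the uniform near-edge bounds below, this yields $c_1\psi^2\le a-\alpha_{i,\ell}(\psi)\le c_2\psi^2$; taking $c$ small then forces $a-\alpha_{i,\ell}(\psi)\le\tfrac12 a$, hence $\alpha_{i,\ell}(\psi)\ge\tfrac12 a>0$ and $a+\alpha_{i,\ell}(\psi)\ge\tfrac12$, so $a^2-\alpha_{i,\ell}(\psi)^2=(a-\alpha_{i,\ell}(\psi))(a+\alpha_{i,\ell}(\psi))\ge\tfrac12 c_1\psi^2$, as desired.

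The one genuine input is a uniform estimate on Legendre derivatives near the edge: for $k=1,2$ there is a universal $\delta_0>0$ with $\tfrac12 P_\ell^{(k)}(1)\le P_\ell^{(k)}(x)\le\tfrac32 P_\ell^{(k)}(1)$ whenever $1-\delta_0/\ell^2\le x\le1$, uniformly in $\ell$. This I would get by Taylor expanding about $x=1$, $P_\ell^{(k)}(x)=\sum_{j\ge0}\frac{(x-1)^j}{j!}P_\ell^{(k+j)}(1)$, using the exact values $P_\ell^{(m)}(1)=\frac{(\ell+m)!}{2^m m!(\ell-m)!}$ and the elementary ratio bound $\frac{P_\ell^{(m+1)}(1)}{P_\ell^{(m)}(1)}=\frac{(\ell+m+1)(\ell-m)}{2(m+1)}\le\ell^2$, which gives $|P_\ell^{(k)}(x)-P_\ell^{(k)}(1)|\le P_\ell^{(k)}(1)(e^{|x-1|\ell^2}-1)\le\tfrac12 P_\ell^{(k)}(1)$ once $|x-1|\ell^2\le\ln\tfrac32$. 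Applying this at the point $\xi\in[\cos(\psi/\ell),1]\subseteq[1-\psi^2/(2\ell^2),1]$ (so $|\xi-1|\ell^2\le c^2/2$) controls $P_\ell''(\xi)\asymp\ell^4$ directly, and controls $H'(\xi)=(\ell(\ell+1)-1)P_\ell'(\xi)-\xi P_\ell''(\xi)\asymp\ell^4$ by combining the $k=1$ and $k=2$ bounds (the extra term $|\xi-1|P_\ell''(\xi)\le\tfrac32\psi^2\ell^{-2}P_\ell''(1)$ being negligible against $H'(1)\asymp\ell^4$ for $\psi$ small). The main obstacle is therefore not conceptual but bookkeeping: one must choose a single threshold $c>0$ (depending only on $\delta_0$ and on the fixed asymptotic ratios of the edge values) that makes the above estimates work simultaneously for $i=1,2$ and all $\ell\ge2$. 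The case $\ell=1$ is genuinely excluded — $P_1''\equiv0$ forces $\alpha_{1,1}\equiv a$ and $\det A_1(\psi)\equiv0$ — but is irrelevant in the high-energy regime; if the appendix already records edge asymptotics of Legendre polynomials in the required form, the uniform estimate above can simply be cited instead of proved inline.
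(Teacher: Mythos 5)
Your proof is correct, and it takes a genuinely different route to the key estimate than the paper. Both of you reduce to the same factorization $\det(A_\ell(\psi))=(a^2-\alpha_{1,\ell}^2)(a^2-\alpha_{2,\ell}^2)$ with $a=\frac{\ell+1}{2\ell}$, but the paper then quotes the explicit Taylor expansions of $\alpha_{1,\ell},\alpha_{2,\ell}$ computed (to order $12$, evidently with software) in Appendix~E, concluding that the product of the two factors is $(\frac{3}{64}+O(\ell^{-1}))\psi^4+O(\psi^6)$ with a verbal assertion that the $O(\cdot)$ constants in the appendix are universal. You instead use the Legendre ODE to rewrite $\ell^2\alpha_{2,\ell}(\psi)$ as $H(\cos(\psi/\ell))$ with $H(x)=\ell(\ell+1)P_\ell(x)-xP_\ell'(x)$, apply the mean value theorem to each factor, and control the intermediate point uniformly via the explicit edge values $P_\ell^{(m)}(1)=\frac{(\ell+m)!}{2^m m!(\ell-m)!}$ together with the clean ratio bound $P_\ell^{(m+1)}(1)/P_\ell^{(m)}(1)\le\ell^2$. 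This gives a fully self-contained proof of the required uniformity — the one thing the paper leaves implicit — and makes visible the role of the edge derivatives $P_\ell''(1),H'(1)\asymp\ell^4$ in producing the $\psi^4$ law. You also correctly notice that the statement must implicitly assume $\ell\ge 2$: for $\ell=1$ one has $P_1''\equiv 0$, hence $\alpha_{1,1}\equiv a$ and $\det A_1\equiv 0$ (and indeed the paper's own leading coefficient $\frac{3}{64}+\frac{3}{32\ell}-\frac{5}{64\ell^2}-\frac{1}{8\ell^3}+\frac{1}{16\ell^4}$ vanishes at $\ell=1$), a degeneracy the paper does not mention. The trade-off is that your argument costs a page of elementary bookkeeping, whereas the paper's is a two-line citation to pre-computed expansions; for a reader who trusts Appendix~E the paper's version is shorter, but yours is the one that actually establishes the claimed uniformity from first principles.
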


\begin{lemma} \label{dettttA}
Uniformly in $\ell$, we have%
\begin{equation}
\iint_{\mathbb{R\times R}}\rho_{\ell }(\psi
; t_{1},t_{2})dt_{1}dt_{2}=O(\psi ^{2}).  \label{x}
\end{equation}
\end{lemma}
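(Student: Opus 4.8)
The plan is to turn the left-hand side into a Gaussian expectation and then exploit the degeneracy of the conditional Hessian forced by conditioning on two nearby critical points.

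\emph{Step 1: reduction to a Gaussian moment.} For $0<\psi\le c$ the conditional covariance matrix $\Delta_{\ell}(\psi)=C_{\ell}(\psi)-B_{\ell}^{t}(\psi)A_{\ell}^{-1}(\psi)B_{\ell}(\psi)$ is nonsingular: indeed $\det\Sigma_{\ell}(\psi)=\det A_{\ell}(\psi)\cdot\det\Delta_{\ell}(\psi)$, the first factor is positive by Lemma \ref{18luglioa}, and $\Sigma_{\ell}(\psi)$ is nonsingular by the computation behind Proposition \ref{a} (Appendix \ref{AppC}). Hence $\rho_{\ell}(\psi;t_{1},t_{2})$ is $(2\pi)^{3}$ times the integrand of a genuine Gaussian expectation. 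Performing the unit-Jacobian substitution
\begin{equation*}
\zeta_{x,1}=z_{1},\quad\zeta_{x,2}=z_{2},\quad\zeta_{x,3}=t_{1}-z_{1},\quad\zeta_{y,1}=w_{1},\quad\zeta_{y,2}=w_{2},\quad\zeta_{y,3}=t_{2}-w_{1},
\end{equation*}
under which $z_{1}t_{1}-z_{1}^{2}-z_{2}^{2}=\zeta_{x,1}\zeta_{x,3}-\zeta_{x,2}^{2}$ and the vector appearing in the quadratic form becomes $(\zeta_{x,1},\zeta_{x,2},\zeta_{x,3},\zeta_{y,1},\zeta_{y,2},\zeta_{y,3})$, we obtain
\begin{equation*}
\iint_{\mathbb{R}\times\mathbb{R}}\rho_{\ell}(\psi;t_{1},t_{2})\,dt_{1}\,dt_{2}=(2\pi)^{3}\,\mathbb{E}\Big[\,\big|\zeta_{x,1}\zeta_{x,3}-\zeta_{x,2}^{2}\big|\cdot\big|\zeta_{y,1}\zeta_{y,3}-\zeta_{y,2}^{2}\big|\,\Big],
\end{equation*}
with $(\zeta_{x},\zeta_{y})\sim N(0,\Delta_{\ell}(\psi))$. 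Since $\Delta_{\ell}(\psi)$ has the block form with both diagonal blocks equal to $\Delta_{1,\ell}(\psi)$, the marginals of $\zeta_{x}$ and $\zeta_{y}$ are both $N(0,\Delta_{1,\ell}(\psi))$, so by Cauchy--Schwarz it suffices to prove that $\mathbb{E}[(\zeta_{1}\zeta_{3}-\zeta_{2}^{2})^{2}]=O(\psi^{2})$ uniformly in $\ell$, for $\zeta=(\zeta_{1},\zeta_{2},\zeta_{3})\sim N(0,\Delta_{1,\ell}(\psi))$.

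\emph{Step 2: the conditional Hessian degenerates to order $\psi^{2}$.} Conditioning on $\nabla f_{\ell}(x)=\nabla f_{\ell}(y)=0$ with $d(x,y)=\psi/\ell$ along the equator (the $e_{2}$-direction) forces, as $\psi\to0$, the Hessian at $x$ to annihilate $e_{2}$; quantitatively, writing $\Delta_{1,\ell}(\psi)$ as in \eqref{deltapsi}, the claim is that, uniformly in $\ell$ for $0<\psi\le c$,
\begin{equation*}
3+a_{1,\ell}(\psi)=O(1),\qquad 1+a_{2,\ell}(\psi)=O(\psi^{2}),\qquad 3+a_{3,\ell}(\psi)=O(\psi^{2}),
\end{equation*}
while the $(1,2)$ and $(2,3)$ entries of $\Delta_{1,\ell}(\psi)$ vanish identically. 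The vanishing of those off-diagonal entries and the first bound are read off the explicit expression for $\Delta_{1,\ell}(\psi)$; for the remaining two one inserts the small-argument behaviour of the Legendre quantities. Using $P_{\ell}^{(k)}(1)=\frac{(\ell+k)!}{(\ell-k)!\,2^{k}k!}$ and Taylor expanding $P_{\ell}^{(k)}(\cos(\psi/\ell))$ for $k\le3$, one finds that for $0<\psi\le c$ the denominators satisfy $4\ell^{2}\alpha_{i,\ell}(\psi)^{2}-(\ell+1)^{2}=-\ell^{2}\psi^{2}(d_{i}+O(\psi^{2}))$ with $d_{i}>0$, while $\beta_{1,\ell}(\psi)^{2}$ and $\beta_{3,\ell}(\psi)^{2}$ are of order $\psi^{2}$; consequently the perturbation terms $a_{2,\ell}(\psi)$ and $a_{3,\ell}(\psi)$ in \eqref{deltapsi} tend to $-1$ and $-3$ respectively, exactly cancelling the constants $1$ and $3$, with remainder $O(\psi^{2})$. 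Establishing these expansions \emph{uniformly in $\ell$} is the substantive step: it is precisely the Taylor analysis of Legendre polynomials and their first few derivatives near the north pole collected in the appendix (cf.\ Lemma \ref{hilbs}, Appendix \ref{estimates}), and it requires $\ell$-uniform control of the remainder of the expansion in $\psi/\ell$. This is where I expect the real work to lie.

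\emph{Step 3: conclusion, and the main obstacle.} Write $\sigma_{ij}=(\Delta_{1,\ell}(\psi))_{ij}$, so by Step 2 $\sigma_{11}=O(1)$, $\sigma_{22}=O(\psi^{2})$, $\sigma_{33}=O(\psi^{2})$, $\sigma_{12}=\sigma_{23}=0$, and $|\sigma_{13}|\le(\sigma_{11}\sigma_{33})^{1/2}=O(\psi)$ by positive semidefiniteness. The Gaussian (Wick/Isserlis) moment formula gives
\begin{equation*}
\mathbb{E}\big[(\zeta_{1}\zeta_{3}-\zeta_{2}^{2})^{2}\big]=\mathbb{E}[\zeta_{1}^{2}\zeta_{3}^{2}]-2\mathbb{E}[\zeta_{1}\zeta_{3}\zeta_{2}^{2}]+\mathbb{E}[\zeta_{2}^{4}]=\big(\sigma_{11}\sigma_{33}+2\sigma_{13}^{2}\big)-2\sigma_{13}\sigma_{22}+3\sigma_{22}^{2},
\end{equation*}
where $\sigma_{12}=\sigma_{23}=0$ was used in the cross term. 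Each summand is $O(\psi^{2})$: $\sigma_{11}\sigma_{33}=O(\psi^{2})$, $\sigma_{13}^{2}=O(\psi^{2})$, $\sigma_{13}\sigma_{22}=O(\psi^{3})$, $\sigma_{22}^{2}=O(\psi^{4})$. Hence $\mathbb{E}[(\zeta_{1}\zeta_{3}-\zeta_{2}^{2})^{2}]=O(\psi^{2})$ uniformly, and Step 1 yields $\iint_{\mathbb{R}\times\mathbb{R}}\rho_{\ell}(\psi;t_{1},t_{2})\,dt_{1}\,dt_{2}=O(\psi^{2})$; for $\psi$ bounded away from $0$ the bound is trivial since the left-hand side is then $O(1)$, so all the content lives in the regime $\psi\to0$. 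The main obstacle is therefore Step 2: obtaining the $O(\psi^{2})$ decay of the conditional variances of $e_{1}e_{2}f_{\ell}$ and $e_{2}e_{2}f_{\ell}$ \emph{uniformly in $\ell$}, which rests on uniform estimates for $P_{\ell}',P_{\ell}'',P_{\ell}'''$ evaluated at $\cos(\psi/\ell)$.
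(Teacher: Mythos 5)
Your proposal is correct and follows essentially the same route as the paper: bound the conditional Gaussian moment $\mathbb{E}[|\det H(x)||\det H(y)| \mid \ldots]$ in terms of the entries of $\Delta_{1,\ell}(\psi)$, then Taylor-expand those entries uniformly in $\ell$ to exhibit the $O(\psi^{2})$ decay coming from the degeneration of the conditional Hessian. The only difference is algebraic: you apply Cauchy--Schwarz once on the full product and then compute the resulting second moment exactly via Isserlis, whereas the paper first splits $|\zeta_{1}\zeta_{3}-\zeta_{2}^{2}|\le|\zeta_{1}\zeta_{3}|+\zeta_{2}^{2}$, expands the double product into four terms, and iterates Cauchy--Schwarz with $\mathbb{E}[X^4]=3(\mathbb{E}[X^2])^2$ to arrive at the bound $3\big[\sqrt{(3+a_{1})(3+a_{3})}+(1+a_{2})\big]^{2}$. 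Both bounds reduce to showing $\sigma_{22}=O(\psi)$ and $\sigma_{33}=O(\psi^{2})$ with $\sigma_{11}=O(1)$ (you in fact get $\sigma_{22}=O(\psi^{2})$, and the paper likewise), and both defer the verification of these expansions --- correctly identified as the substantive step --- to the uniform Taylor expansions of $\alpha_{i,\ell},\beta_{i,\ell}$ near $\psi=0$ collected in the appendices. Your version is marginally cleaner in that it dispenses with the triangle inequality and gives an exact fourth-moment formula, but the content is the same.
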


\begin{proof}[Proof of Proposition \ref{b} assuming Lemmas \ref{18luglioa} and \ref{dettttA}]
Since we have shown that the numerator is uniformly bounded by terms of order $\psi ^{2}$, the statement of Lemma \ref{b} follows at once upon substituting the estimates \eqref{lemmab1} and \eqref{x} into \eqref{lemmab}.
\end{proof}

\begin{proof}[Proof of Lemma \ref{18luglioa}]  First we note that
\begin{align*}
\text{det}(A_{\ell }(\psi ) )&=\frac{1}{16\ell ^{4}}((\ell +1)^{2}-4\ell
^{2}\alpha _{1,\ell}^{2}(\psi ))((\ell +1)^{2}-4\ell ^{2}\alpha _{2.\ell}^{2}(\psi ))
\\
&\geq  \frac{1}{16}(1-2\ell \alpha _{2,\ell }(\psi )/(\ell +1))(1-2\ell \alpha _{1,\ell
}(\psi )/(\ell +1)).
\end{align*}
By exploiting the Taylor expansions given in Appendix \ref{AppC},
for $\alpha_{1,\ell}(\psi)$ and  $\alpha_{2,\ell}(\psi)$, we
obtain
\begin{align*}
(1-2\ell \alpha_{1,\ell }(\psi )/(\ell +1))(1-2\ell \alpha _{2,\ell }(\psi )/(\ell +1))&=
\left( \frac{3}{64}+\frac{3}{32 \ell}-\frac{5}{64\ell^{2}}-\frac{1}{%
8 \ell^{3}}+\frac{1}{16 \ell^{4}}\right) \psi ^{4}+O(\psi ^{6}),
\end{align*}%
which certainly implies the statement of the present lemma.
\end{proof}

\begin{proof}[Proof of Lemma \ref{dettttA}]
We can bound \eqref{x} by
\begin{equation*}
\iint_{\mathbb{R\times R}}\rho_{\ell }(\psi
; t_{1},t_{2})dt_{1}dt_{2} \le \mathbb{E}\Big[%
|X_{1}X_{3}||Y_{1}Y_{3}|+|X_{1}X_{3}|Y_{2}^{2}+|Y_{1}Y_{3}|X_{2}^{2}+Y_{2}^{2}X_{2}^{2}%
\Big],
\end{equation*}%
where the random vector $(X_{1},X_{2},X_{3},Y_{1},Y_{2},Y_{3})$ is a
multivariate Gaussian with zero mean and covariance matrix $\Delta_{\ell}(\psi)$.
Applying repeatedly Cauchy-Schwarz and recalling that for Gaussian random
variables we have $\mathbb{E}[X_{1}^{4}]=3(\mathbb{E}[X_{1}^{2}])^{2}$, we
obtain
\begin{align*}
\mathbb{E}\left[|X_{1}X_{3}||Y_{1}Y_{3}|\right]& \leq \mathbb{E}
\left[|X_{1}||X_{3}||Y_{1}||Y_{3}|\right]\leq \left(\mathbb{E}\left[X_{1}^{2}X_{3}^{2}\right]
\cdot \mathbb{E}\left[Y_{1}^{2}Y_{3}^{2}\right]\right)^{1/2}\leq \left(\mathbb{E}[X_{1}^{4}]
\mathbb{E}\left[X_{3}^{4}\right]\mathbb{E}\left[Y_{1}^{4}\right]\mathbb{E}\left[Y_{3}^{4}\right]\right)^{1/4} \\
& =3\left(\mathbb{E}[X_{1}^{2}]\mathbb{E}\left[X_{3}^{2}\right]\mathbb{E}\left[Y_{1}^{2}\right]
\mathbb{E}\left[Y_{3}^{2}\right]\right)^{1/2}=3(3+a_{1,\ell }(\psi ))(3+a_{3,\ell }(\psi)),
\end{align*}
where in the last equality we write explicitly the variances by replacing
the elements of the covariance matrix $\Delta _{1,\ell }(\psi)$. Analogously we get
\begin{equation*}
\mathbb{E}\left[|X_{1}X_{3}|Y_{2}^{2}\right]\leq 3\left(\mathbb{E}\left[X_{1}^{2}\right]\mathbb{E}%
\left[X_{3}^{2}\right]\right)^{1/2}\mathbb{E}\left[Y_{2}^{2}\right]=3((3+a_{1,\ell }(\psi
))\left(3+a_{3,\ell }(\psi /\ell ))\right)^{1/2}(1+a_{2,\ell }(\psi )),
\end{equation*}
\begin{equation*}
\mathbb{E}\left[|Y_{1}Y_{3}|X_{2}^{2}\right]\leq 3\left(\mathbb{E}\left[Y_{1}^{2}\right]\mathbb{E}%
\left[Y_{3}^{2}\right]\right)^{1/2}\mathbb{E}\left[X_{2}^{2}\right]=3\left((3+a_{1,\ell }(\psi
))(3+a_{3,\ell }(\psi ))\right)^{1/2}(1+a_{2,\ell }(\psi )),
\end{equation*}
\begin{equation*}
\mathbb{E}\left[Y_{2}^{2}X_{2}^{2}\right]\leq 3\mathbb{E}\left[Y_{2}^{2}\right]\mathbb{E}%
\left[X_{2}^{2}\right]=3(1+a_{2,\ell }(\psi ))^{2}.
\end{equation*}
Collecting the previous results, and after some direct calculations, we obtain
\begin{align*}
&\mathbb{E}\left[
|X_{1}X_{3}|\cdot |Y_{1}Y_{3}|+|X_{1}X_{3}|\cdot Y_{2}^{2}+|Y_{1}Y_{3}|\cdot X_{2}^{2}+Y_{2}^{2}X_{2}^{2}%
\right] \\
& \leq 3(3+a_{1,\ell }(\psi ))(3+a_{3,\ell }(\psi ))+6((3+a_{1,\ell }(\psi
))(3+a_{3,\ell }(\psi /\ell )))^{1/2}(1+a_{2,\ell }(\psi ))+3(1+a_{2,\ell
}(\psi ))^{2}\\
&=\frac{3(\ell-2)(\ell-1)(\ell+1)^{3}(\ell+2)(\ell+3)\left( \ell^{2}+\ell-2\right) \psi ^{2}}{%
128 \ell^{7}(3 \ell(\ell+1)-2)}+O\left( \psi ^{3}\right).
\end{align*}%
\end{proof}

\begin{proof}[Proof of Proposition \ref{b}]
We have hence shown that the numerator is uniformly
bounded by terms of order $\psi ^{2}$. The statement of Lemma \ref{b} follows at once upon substituting the estimates \eqref{lemmab1} and \eqref{x} into \eqref{lemmab}.

\end{proof}

\begin{remark}
The geometric intuition of the previous result can be
explained as follows. We impose the condition that two critical points are
at (scaled) distance $\psi ,$ and study the asymptotic behaviour of the Hessian
for small values of $\psi .$ In this regime, the two critical points collide,
and hence the Hessian approaches zero with locally
quadratic behaviour. This is exactly the term we were looking for to cancel the
determinant term of order $\psi^{-2}$ in Lemma \ref{dettttA}.
\end{remark}

\section{Asymptotic expression for the variance} \label{cinque}

Here we find the analytic expression for the variance stated in Theorem \ref{th_variance copy(1)}.

\begin{proof}[Proof of Theorem \ref{th_variance copy(1)}]
In view of  \eqref{dominantttttt}, \eqref{dominantttttt1} and \eqref{dominantttttt2} we can write

\begin{align*}
&\text{Var}(\mathcal{N}_{I}^{c}(f_{\ell }))= \frac{\ell^{3}}{4} \left(  \left[ \int_I p^c_1(t) d t \right]^2-    \iint_{I \times I}
g^c_2(t_1,t_2) d t_1 dt_2+ 16  \left[ \int_{I} g^c_3(t) dt \right]^2 \right) +
O(\ell^{5/2}),
\end{align*}
where
\begin{align*}
g^c_2(t_1,t_2)&= \frac{1}{2} \frac{1}{(2 \pi)^{3}}\iint_{\mathbb{R}^2 \times
\mathbb{R}^2} \left| z_1 \sqrt 8 t_1  -z_1^2-z_2^2\right| \exp\left\{ - \frac{3 }{2} t_1^2 \right\}
\exp\left\{-\frac 1 2 (z_1^2+z_2^2- \sqrt 8 t_1 z _1)\right\} \\
&\;\; \times \left| w_1 \sqrt 8 t_2  -w_1^2-w_2^2\right| \exp\left\{ - \frac{3 }{2} t_2^2 \right\}
\exp\left\{-\frac 1 2 (w_1^2+w_2^2- \sqrt 8 t_2 w _1)\right\} \\
&\;\; \times \left[-6+(3 t_1-\sqrt{2} z_1)^2+(3 t_2 -\sqrt{2} w_1)^2\right] d z_1 d z_2
d w_1 d w_2,
\end{align*}
and
\begin{align*}
g^c_3(t)= \frac{1}{8} \frac{1}{ (2 \pi)^{3/2}}\int_{\mathbb{R}^2} \left|z_1 \sqrt 8 t -z_1^2-z_2^2\right|
\exp\left\{- \frac 3{2} t^2\right\} \exp\left\{-\frac 1 2 (
z_1^2+z_2^2- \sqrt 8 t z _1)\right\} \left[3-(3 t -\sqrt{2} z_1)^2\right] d z_1 d z_2.
\end{align*}
Let
\begin{equation*}
k(z_{1},z_{2},t)=\left|z_{1} \sqrt 8 t-z_{1}^{2}-z_{2}^{2}\right|\exp \left\{-\frac{3}{2}
t^{2}\right\}\exp \left\{-\frac{1}{2}(z_{1}^{2}+z_{2}^{2}-\sqrt 8 t z_{1})\right\};
\end{equation*}%
we note that
\begin{align*}
g^c_{2}(t_{1},t_{2})& =-\frac{6}{2}p^c_1(t_{1})p^c_1(t_{2}) \\
& \;\;+\frac{1}{2}\frac{1}{(2 \pi)^{3}}\int_{\mathbb{R}^{2}}(3t_{1}-\sqrt{2}%
z_{1})^{2}k(z_{1},z_{2},t_{1})dz_{1}dz_{2}\int_{\mathbb{R}%
^{2}}k(w_{1},w_{2},t_{2})dw_{1}dw_{2} \\
& \;\;+\frac{1}{2}\frac{1}{(2 \pi)^{3}}\int_{\mathbb{R}%
^{2}}k(z_{1},z_{2},t_{1})dz_{1}dz_{2}\int_{\mathbb{R}^{2}}(3t_{2}-\sqrt{2}%
w_{1})^{2}k(w_{1},w_{2},t_{2})dw_{1}dw_{2} \\
& =-3p^c_{1}(t_{1})p^c_{1}(t_{2})+\frac{1}{2}p^c_2(t_{1})p^c_{1}(t_{2})+\frac{1}{2}%
p^c_{1}(t_{1})p^c_2(t_{2}),
\end{align*}%
where%
\begin{align*}
p^c_2(t)&=\frac{1}{(2 \pi)^{3/2}}\int_{\mathbb{R}^{2}}(3t-\sqrt{2}%
z_{1})^{2}k(z_{1},z_{2},t)dz_{1}dz_{2}.
\end{align*}%
Note also that
\begin{equation*}
g^c_{3}(t)=\frac{1}{8}\frac{1}{(2 \pi)^{3/2}}\int_{\mathbb{R}%
^{2}}k(z_{1},z_{2},t)\left[3-(3t-\sqrt{2}z_{1})^{2}\right]dz_{1}dz_{2}=\frac{3}{8}%
p^c_{1}(t)-\frac{1}{8}p^c_2(t).
\end{equation*}%
Hence%
\begin{align*}
\left[\int_{I}{p}^c_{1}(t)dt\right]^{2}-\iint_{I\times I}
g^c_{2}(t_{1},t_{2})dt_{1}dt_{2}+16\left[\int_{I}
g^c_{3}(t)dt\right]^{2} &
=\left[\int_{I}{p}^c_{1}(t)dt\right]^{2}+3\left[\int_{I}{p}^c_{1}(t)dt\right]
^{2}-\int_{I}{p}^c_{1}(t)dt\int_{I}p^c_2(t)dt
\end{align*}

\begin{equation*}
+\frac{9}{4}\left[\int_{I}{p}^c_{1}(t)dt\right]^{2}+\frac{1}{4}\left[
\int_{I}p^c_2(t)dt\right]^{2}-\frac{3}{2}\int_{I}{p}^c_{1}(t)dt\int_{I}p^c_2(t)dt \\
=\frac{1}{4}\left[5\int_{I}p^c_{1}(t)dt-\int_{I}
p^c_2(t)dt\right]^{2},
\end{equation*}%
i.e.
\begin{align*}
\text{Var}(\mathcal{N}_{I}^{c}(f_{\ell }))&=\frac{\ell^3}{16}
 \Big[5\int_{I}p^c_{1}(t)dt-\int_{I} p^c_2(t)dt\Big]^{2} + O(\ell^{5/2})
\end{align*}
with
\begin{align*}
p^c_1(t)&=\sqrt 8 \mathbb{E} \left. \big[  |Y_1 Y_3-Y_2^2| \right| Y_1+Y_3=\sqrt 8 t  \big] \phi_{Y_1+Y_3}(\sqrt 8 t)=\frac{\sqrt 2}{ \sqrt \pi} (2 e^{-t^2}+t^2-1) e^{-\frac{t^2}{2}},\\
p^c_2(t)&=\sqrt 8 \mathbb{E} \left. \big[  (3 t-\sqrt 2 Y_1)^2 |Y_1 Y_3-Y_2^2| \right| Y_1+Y_3=\sqrt 8 t  \big] \phi_{Y_1+Y_3}(\sqrt 8 t).
\end{align*}
We now derive an analytic expression for $p^c_2(t)$. As in the
proof of Proposition \ref{expectation copy(1)} we first write
$p^c_2(t)$ as
\begin{align*}
p^c_2(t)&=\sqrt 8 \mathbb{E}  \left[  \left(3 t-\sqrt 2 (Z_1+\sqrt 2 t) \right)^2 \cdot\left| \sqrt 8 t (Z_1+\sqrt 2 t)- (Z_1+\sqrt 2 t)^2-Z_2^2 \right|  \right] \cdot\phi_{Y_1+Y_3}(\sqrt 8 t)\\
&=\sqrt 8 \mathbb{E}  \left[  \left( t-\sqrt 2 Z_1  \right)^2 \cdot\left| -Z_1^2-Z_2^2+2
t^2 \right|  \right]\cdot \phi_{Y_1+Y_3}(\sqrt 8 t),
\end{align*}
where $Z_1$, $Z_2$ denote standard independent Gaussian variables. Now
\begin{align*}
\phi_{Y_1+Y_3}(\sqrt 8 t)=\frac{1}{ 4 \sqrt{ \pi}} e^{-\frac{t^2}{2}},
\end{align*}
and we need to compute
$$\mathbb{E}  \big[  ( t-\sqrt 2 Z_1  )^2 | -Z_1^2-Z_2^2+2 t^2 |  \big].$$
The joint density function of $\xi=Z_1$ and $\zeta=Z_1^2+Z_2^2$ is given by
\begin{align*}
f_{(\xi,\zeta)}(u,v)&= \frac{\partial^2}{\partial u \partial v} \mathbb{P} [\xi < u, \zeta< v]=\frac{\partial^2}{\partial u \partial v} \mathbb{P} [Z_1 < u, Z_1^2+Z_2^2< v] =\frac{1}{2 \pi} \frac{\partial^2}{\partial u \partial v} \int\limits_{\stackrel{Z_1 < u}{0\le Z_1^2+Z_2^2 <v}}  e^{-\frac{z_1^2+z_2^2}{2}} d z_1 d z_2,
\end{align*}
i.e.,
\begin{align*}
f_{(\xi,\zeta)}(u,v)&=  \begin{cases}
0 & u \le - \sqrt v,\\
\frac{1}{2 \pi} \frac{\partial^2}{\partial u \partial v} \int_{-\sqrt v}^{u} d z_1 \int_{-\sqrt{v-z_1^2}}^{\sqrt{v-z_1^2}} e^{-\frac{z_1^2+z_2^2}{2}} d z_2&  -\sqrt{v}< u \le 0,\\
\frac{1}{2 \pi} \frac{\partial^2}{\partial u \partial v} \int_{-\sqrt v}^{u} d z_1 \int_{-\sqrt{v-z_1^2}}^{\sqrt{v-z_1^2}} e^{-\frac{z_1^2+z_2^2}{2}} d z_2 & 0< u < \sqrt v,   \\
  0 & u \ge \sqrt v,
 \end{cases}\\
 & =\frac{1}{2 \pi} \frac{e^{- \frac v 2}}{\sqrt{v-u^2}} \ind_{\{v\ge 0,\; u \in (-\sqrt v, \sqrt v )\}}.
\end{align*}
Then
\begin{align*}
\mathbb{E}  \big[  ( t-\sqrt 2 \xi  )^2 | -\zeta+2 t^2 |  \big]&= \int\limits_{0}^{\infty} d v \int\limits_{-\sqrt{v}}^{\sqrt v}
( t-\sqrt 2 u  )^2 \cdot \left| -v+2 t^2 \right| \frac{1}{2 \pi} \frac{e^{- \frac v 2}}{\sqrt{v-u^2}}  d u\\
&= \int_{0}^{2 t^2} d v \int_{-\sqrt{v}}^{\sqrt v}
( t-\sqrt 2 u  )^2 ( -v+2 t^2 ) \frac{1}{2 \pi} \frac{e^{- \frac v 2}}{\sqrt{v-u^2}} d u \\
&+ \int_{2 t^2}^{\infty} d v \int_{-\sqrt{v}}^{\sqrt v}
( t-\sqrt 2 u  )^2 ( v-2 t^2 ) \frac{1}{2 \pi} \frac{e^{- \frac v 2}}{\sqrt{v-u^2}} d u.
\end{align*}
Now
\begin{align*}
\int \frac{(t-\sqrt 2 u)^2}{\sqrt{v-u^2}} du&=t^2 \int \frac{1}{\sqrt{v-u^2}} du-2 \sqrt 2 t \int \frac{u}{\sqrt{v-u^2}} du + 2 \int \frac{ u^2}{\sqrt{v-u^2}} du,
\end{align*}
where for any symmetric interval
\begin{align*}
 \int_{-\sqrt{v}}^{\sqrt{v}} \frac{u}{\sqrt{v-u^2}} du =0,
\end{align*}
while
\begin{align*}
\int_{-\sqrt{v}}^{\sqrt{v}} \frac{1}{\sqrt{v-u^2}} du=\left[ -\arctan \Big(\frac{u \sqrt{v-u^2}}{u^2-v}\Big)\right]_{-\sqrt{v}}^{\sqrt{v}}=\pi,
\end{align*}
and
\begin{align*}
 \int_{-\sqrt{v}}^{\sqrt{v}}  \frac{ u^2}{\sqrt{v-u^2}} du=\left[\frac 1 2 \left(-u \sqrt{v-u^2}+ v \arctan \Big(\frac{u }{v-u^2}\Big) \right) \right]_{-\sqrt{v}}^{\sqrt{v}}=\frac {\pi v} 2.
\end{align*}
Hence we have
\begin{align*}
\mathbb{E}  \big[  ( t-\sqrt 2 \xi  )^2 | -\zeta+2 t^2 |  \big]
&= \frac{1}{2} \int_{0}^{2 t^2} e^{- \frac v 2} (-v+2 t^2) (v+t^2) d v   + \frac 1 2 \int_{2 t^2}^{\infty} e^{-\frac v 2} (v-2t^2) (v+t^2) d v  \\
&= 2 [-4+t^2+t^4+e^{-t^2} (4+3 t^2)]+2 e^{-t^2} (4+3 t^2),
\end{align*}
which leads to
\begin{align*}
p^c_2(t)= [-4+t^2+t^4+e^{- t^2} 2(4+3 t^2)] \frac{\sqrt 2}{  \sqrt{ \pi}} e^{-\frac{t^2}{2}}.
\end{align*}
Finally,
\begin{align*}
5 p_1^c(t)-p_2^c(t)
&=\frac{\sqrt 2}{ \sqrt \pi} e^{- \frac{3 }{2}t^2} [2-6 t^2-e^{t^2} (1-4 t^2+t^4)].
\end{align*}
Similarly, for the extrema, we have
\begin{align*}
p_2^e(t)&=\sqrt 8 \mathbb{E}[(3 t -\sqrt 2 Y_1) |Y_1 Y_3-Y_2^2| \ind_{\{Y_1 Y_3-Y_2^2>0\}} | Y_1+Y_3=\sqrt 8 t] \phi_{Y_1+Y_3} (\sqrt 8 t)\\
&=\sqrt 8 \mathbb{E}[(t-\sqrt 2 \xi)^2 |-\xi+2 t^2| \ind_{\{-\xi +2 t^2>0\}}] \phi_{Y_1+Y_3} (\sqrt 8 t),
\end{align*}
where
\begin{align*}
\mathbb{E}[(t-\sqrt 2 \xi)^2 |-\xi+2 t^2| \ind_{\{-\xi +2 t^2>0\}}] 
&=2 [-4+t^2+t^4+e^{- t^2} (4+3 t^2)],
\end{align*}
so that
\begin{align*}
p_2^e(t)= \left[-4+t^2+t^4+e^{- t^2} (4+3 t^2)\right] \frac{\sqrt 2}{ \sqrt \pi} e^{-\frac {t^2} {2}}
\end{align*}
and
\begin{align*}
5 p_1^e(t)-p_2^e(t)=\frac{\sqrt 2}{\sqrt \pi} e^{- \frac{3 }{2}t^2} [1-3 t^2-e^{t^2} (1-4 t^2+t^4)].
\end{align*}
Finally, applying the same methods for the saddles, we have
\begin{align*}
\mathbb{E}[(3 t -\sqrt 2 Y_1) |Y_1 Y_3-Y_2^2| \ind_{\{Y_1 Y_3-Y_2^2<0\}} | Y_1+Y_3=\sqrt 8 t]
&=\frac{1}{2 } \int_{2 t^2}^{\infty}  (t^2+v)   (v-2 t^2)   e^{-\frac v 2}   d v\\
&=2(4+3 t^2) e^{-t^2},
\end{align*}
yielding
\begin{align*}
p^s_2(t)=\frac{\sqrt 2}{ \sqrt \pi} (4+3 t^2) e^{-\frac {3}{2} t^2},
\end{align*}
and
\begin{align*}
5 p^s_1(t)-p^s_2(t)=\frac{\sqrt 2}{\sqrt \pi} (1-3 t^2) e^{-\frac {3}{2} t^2},
\end{align*}
as claimed.

\end{proof}

\section{Convergence of empirical measures} \label{sei}

The following auxiliary lemma shows that the empirical measures under random
and deterministic normalizations are asymptotically equivalent, uniformly in
$z$.

\begin{lemma}
\label{lma} For all $\varepsilon >0$, as $\ell \rightarrow \infty $,
\begin{equation*}
\mathbb{P}\{\sup_{z}|F_{\ell }(z)-F_{\ell }^{\ast }(z)|\geq \varepsilon
\}\rightarrow 0,
\end{equation*}
\end{lemma}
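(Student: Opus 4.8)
The plan is to observe that $F_\ell$ and $F_\ell^\ast$ share the same numerator and differ only through the choice of normalization, so controlling their difference uniformly in $z$ reduces to a law-of-large-numbers statement for $\mathcal{N}^c(f_\ell;\mathbb{R})$, which is in turn delivered by the variance bound of Theorem \ref{th_variance copy(1)} together with Proposition \ref{expectation copy(1)}.

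First I would introduce the random variable $R_\ell := \mathcal{N}^c(f_\ell;\mathbb{R}) / \mathbb{E}[\mathcal{N}^c(f_\ell;\mathbb{R})]$. For $\ell$ large the field $f_\ell$ is a.s. Morse, by the non-degeneracy of $(\nabla f_\ell(x),\nabla^2 f_\ell(x))$ exploited in Section \ref{sec:Kac-Rice expected}, so $\mathcal{N}^c(f_\ell;\mathbb{R})$ is a.s. finite and $\ge 2$; thus $R_\ell$ is a.s. well defined and strictly positive, and $F_\ell^\ast$ makes sense. Directly from the definitions one has $F_\ell(z) = R_\ell\,F_\ell^\ast(z)$ for every $z$, hence
$$F_\ell(z) - F_\ell^\ast(z) = (R_\ell - 1)\,F_\ell^\ast(z),$$
and since $0 \le F_\ell^\ast(z) \le 1$ for all $z$ this gives the (deterministic in $z$) estimate
$$\sup_z\,\bigl|F_\ell(z) - F_\ell^\ast(z)\bigr| \le |R_\ell - 1|.$$

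It then remains only to show $R_\ell \to 1$ in probability. By Chebyshev's inequality,
$$\mathbb{P}\bigl(|R_\ell - 1| \ge \varepsilon\bigr) \le \frac{\text{Var}\bigl(\mathcal{N}^c_{\mathbb{R}}(f_\ell)\bigr)}{\varepsilon^2\,\bigl(\mathbb{E}[\mathcal{N}^c_{\mathbb{R}}(f_\ell)]\bigr)^2}.$$
Proposition \ref{expectation copy(1)} gives $\mathbb{E}[\mathcal{N}^c_{\mathbb{R}}(f_\ell)] = \tfrac{2}{\sqrt 3}\ell^2 + O(1)$, so the denominator is of exact order $\varepsilon^2\ell^4$; Theorem \ref{th_variance copy(1)} gives $\text{Var}(\mathcal{N}^c_{\mathbb{R}}(f_\ell)) = O(\ell^3)$ (and in fact $\nu^c(\mathbb{R})=0$ improves this to $O(\ell^{5/2})$, which is more than enough). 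Hence the right-hand side is $O(\ell^{-1}) \to 0$, and combining with the previous display yields the assertion of the lemma for every fixed $\varepsilon>0$.

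I expect the only genuinely delicate point to be the legitimacy of the random normalization, namely that $\mathcal{N}^c(f_\ell;\mathbb{R})$ is a.s. strictly positive and finite so that $F_\ell^\ast$ and $R_\ell$ are a.s. defined; but this is immediate from the Morse property of $f_\ell$, which follows from exactly the non-degeneracy of the Gaussian vector $(\nabla f_\ell(x),\nabla^2 f_\ell(x))$ already invoked to apply the Kac–Rice formula. Everything else is a routine consequence of the first and second moment asymptotics established earlier in the paper.
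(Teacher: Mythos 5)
Your proposal is correct and follows essentially the same route as the paper: factor $F_\ell(z)=R_\ell F_\ell^\ast(z)$ with $R_\ell=\mathcal{N}^c_{\mathbb{R}}(f_\ell)/\mathbb{E}[\mathcal{N}^c_{\mathbb{R}}(f_\ell)]$, bound the supremum by $|R_\ell-1|$ using $0\le F_\ell^\ast\le 1$, and conclude via Chebyshev since Proposition \ref{expectation copy(1)} and Theorem \ref{th_variance copy(1)} give $\mathrm{Var}(R_\ell)=O(\ell^{-1})$. The only difference is that you spell out the Chebyshev step and the a.s.\ well-definedness of the random normalization (via the Morse property), which the paper leaves implicit; this is a welcome clarification, not a deviation.
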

\begin{proof}
We first note that
\begin{align*}
|F_\ell(z)-F_\ell^*(z)| &= F_\ell^*(z) \left| 1- \frac{F_\ell(z)}{F_\ell^*(z)} \right| = F_\ell^*(z) \left| 1- \frac{{\cal N}^c_{\mathbb{R}}(f_\ell)}{\mathbb{E}[{\cal N}^c_{\mathbb{R}}(f_\ell)} \right| \le  \left| 1- \frac{{\cal N}^c_{\mathbb{R}}(f_\ell)}{\mathbb{E}[{\cal N}^c_{\mathbb{R}}(f_\ell)} \right|.
\end{align*}
The statement of the present lemma follows by observing that
 from Proposition \ref{expectation copy(1)} and Theorem \ref{th_variance copy(1)} we have that
\begin{align*}
\frac{{\cal N}^c_{\mathbb{R}}(f_\ell)}{\mathbb{E}[{\cal N}^c_{\mathbb{R}}(f_\ell)]}
\end{align*}
is a random variable with unitary mean and variance $O(\ell^{-1})$.
\end{proof}

\noindent We can now provide the proof of Proposition \ref{emprcl}.

\begin{proof}[Proof of Proposition \ref{emprcl}]
We first note that in view Lemma \ref{lma} proving Proposition \ref{emprcl} is equivalent to proving that for all $\varepsilon >0$ and $\delta>0$ there exists $\ell_{\varepsilon,\delta}$ such that for all $\ell>\ell_{\varepsilon, \delta }$ we have
\begin{align*}
\mathbb{P}\{ \sup_z |  F_\ell(z) -\Phi_{\infty}(z)|>\varepsilon\} \le \delta.
\end{align*}
Fix $\varepsilon>0$ and choose $K_{\varepsilon}>0$ sufficiently big such that $1/ K_{\varepsilon}< \varepsilon/2$. Now we define the partitions
$$
-\infty=x_1 \le x_2 \le \dots \le x_{K_{\varepsilon}}= \infty,
$$
such that
\begin{align*}
\Phi_{\infty}(x_{k+1})-\Phi_{\infty}(x_k)< \frac{\varepsilon}{2}.
\end{align*}
For every $z$ there exist $i^{-}_{K_{\varepsilon}}(z)$
and $i^{+}_{K_{\varepsilon}}(z) \in \{x_1, x_2,  \dots, x_{K_{\varepsilon}} \}$ such that
$$z \in (i^{-}_{K_{\varepsilon}}(z), i^{+}_{K_{\varepsilon}}(z)).$$
Then, since $F_\ell(z)$ and $\Phi_{\infty}(z)$ are both non decreasing in $z$,  we have
\begin{align*}
F_\ell(z) -\Phi_{\infty}(z) &\le F(i^{+}_{K_{\varepsilon}}(z) ;f_\ell) -\Phi_{\infty}(i^{+}_{K_{\varepsilon}}(z))+\frac{\varepsilon}{2},\\
F_\ell(z) -\Phi_{\infty}(z) &\ge F(i^{-}_{K_{\varepsilon}}(z) ;f_\ell) - \Phi_{\infty}(i^{-}_{K_{\varepsilon}}(z))-\frac{\varepsilon}{2},
\end{align*}
so that
\begin{align*}
\sup_{z} \left|F_\ell(z) -\Phi_{\infty}(z)  \right| \le \max_{k=1, \dots, K_{\varepsilon}}  \left|F_\ell(x_k) -\Phi_{\infty}(x_k)  \right|+ \frac {\varepsilon} 2.
\end{align*}
Then
\begin{align*}
\mathbb{P}\{ \sup_z |  F_\ell(z) -\Phi_{\infty}(z)|>\varepsilon\} \le \mathbb{P}\left\{  \max_{k=1, \dots, K_{\varepsilon}}  \left|F_\ell(x_k) -\Phi_{\infty}(x_k)  \right| > \frac{\varepsilon}{2}\right\}
\end{align*}
and, in view of Proposition \ref{th_variance copy(1)}, each of the $K_{\varepsilon}$ random variables
$$\left|F_\ell(x_k) -\Phi_{\infty}(x_k)  \right|,$$
converges in probability to zero.
\end{proof}

\begin{appendices}

\section{Evaluation of covariance matrices}

\label{cov_matx_ev}

In this section we compute the covariance matrix  $\Sigma _{\ell
}(x,y)$ for the $10$-dimensional random vector $Z_{\ell ;x,y}$,
which combines the gradient and the elements of the Hessian
evaluated at $x,y$. $\Sigma _{\ell }(x,y)$ depends only on the
geodesic distance $\phi =d(x,y)$, so, abusing notation, we shall
write $\Sigma _{\ell }(x,y)=\Sigma _{\ell }(\phi )$ whenever
convenient, and similarly for the other functions we shall deal
with.
 The computations are quite lengthy, but they do not require
sophisticated arguments, other than iterative derivations of Legendre
polynomials. It is convenient
to write these matrices in block-diagonal form, i.e.
\begin{equation*}
\Sigma _{\ell }(\phi)=\left(
\begin{array}{cc}
A_{\ell}(\phi) & B_{\ell}(\phi) \\
B^{t}_{\ell}(\phi) & C_{\ell}(\phi)
\end{array}
\right).
\end{equation*}
In particular the $A_\ell$ component collects the variances of the
gradient
terms, and it is given by%
\begin{align*}
A_{\ell }(x,y)_{4\times 4}&=\left. \mathbb{E}\left[ \left(
\begin{array}{c}
\nabla f_{\ell }(\bar{x})^{t} \\
\nabla f_{\ell }(\bar{y})^{t}%
\end{array}%
\right) \left(
\begin{array}{cc}
\nabla f_{\ell }(x) & \nabla f_{\ell }(y)%
\end{array}%
\right) \right] \right\vert _{{x=\bar{x}},{y=\bar{y}}}\\
&=\left(
\begin{array}{cc}
a_{\ell }(x,x) & a_{\ell }(x,y) \\
a_{\ell }(y,x) & a_{\ell }(y,y)%
\end{array}%
\right),
\end{align*}%
where
\begin{equation*}
a_{\ell }(x,x)=\left. \left(
\begin{array}{cc}
e_{1}^{\bar{x}}e_{1}^{x}r_{\ell }(\bar{x},x) & e_{1}^{\bar{x}%
}e_{2}^{x}r_{\ell }(\bar{x},x) \\
e_{2}^{\bar{x}}e_{1}^{x}r_{\ell }(\bar{x},x) & e_{2}^{\bar{x}%
}e_{2}^{x}r_{\ell }(\bar{x},x)%
\end{array}%
\right) \right\vert _{x=\bar{x}}, \hspace{0,5cm} a_{\ell }(x,y)=\left. \left(
\begin{array}{cc}
e_{1}^{\bar{x}}e_{1}^{y}r_{\ell }(\bar{x},y) & e_{1}^{\bar{x}%
}e_{2}^{y}r_{\ell }(\bar{x},y) \\
e_{2}^{\bar{x}}e_{1}^{y}r_{\ell }(\bar{x},y) & e_{2}^{\bar{x}%
}e_{2}^{y}r_{\ell }(\bar{x},y)%
\end{array}%
\right) \right\vert _{x=\bar{x}},
\end{equation*}
and $$r_{\ell}(x,y)=\mathbb{E}[f_{\ell}(x) f_{\ell}(y)]=P_{\ell}(\cos d(x,y)),$$
 with $h(x,y)=\cos d(x,y)=\cos \theta_x \cos \theta_y+\sin \theta_x \sin \theta_y \cos(\varphi_x-\varphi_y)$.  Then, for example,  computing explicitly the derivatives, we have
 \begin{align*}
 e_{1}^{\bar{x}}e_{1}^{x}r_{\ell }(\bar{x},x)=P''_{\ell}(h(\bar{x},x)) \frac{\partial}{\partial \theta_{\bar{x}}} h(\bar{x},x)  \frac{\partial}{\partial \theta_{x}} h(\bar{x},x)+P'_{\ell}(h(\bar{x},x)) \frac{\partial}{\partial \theta_{\bar{x}}}  \frac{\partial}{\partial \theta_{x}} h(\bar{x},x),
 \end{align*}
 where
 \begin{align*}
 \frac{\partial}{\partial \theta_{\bar{x}}} h(\bar{x},x)&=\left. -\cos \theta_{{x}} \sin \theta_{\bar{x}}+\cos \theta_{\bar{x}} \sin \theta_{{x}} \cos( \varphi_{x}-\varphi_{\bar{x}}) \right\vert _{x=\bar{x}=(\pi /2,\varphi _{x})}=0,\\
 \frac{\partial}{\partial \theta_{x}} h(\bar{x},x) &=\left.  -\cos \theta_{\bar{x}} \sin \theta_{{x}}+\cos \theta_{{x}} \sin \theta_{\bar{x}} \cos( \varphi_{x}-\varphi_{\bar{x}}) \right\vert _{x=\bar{x}=(\pi /2,\varphi _{x})}=0,\\
 \frac{\partial}{\partial \theta_{\bar{x}}}  \frac{\partial}{\partial \theta_{x}} h(\bar{x},x) &=\left.  \sin \theta_{\bar{x}} \sin \theta_{{x}}+\cos \theta_{{x}} \cos \theta_{\bar{x}} \cos( \varphi_{x}-\varphi_{\bar{x}}) \right\vert _{x=\bar{x}=(\pi /2,\varphi _{x})}=1.
 \end{align*}
 We write then
 \begin{equation*}
\left. a_{\ell }(x,x)\right\vert _{x=(\pi /2,\varphi _{x})}=\left. a_{\ell
}(y,y)\right\vert _{y=(\pi /2,0)}=\left(
\begin{array}{cc}
P_{\ell }^{\prime }(1) & 0 \\
0 & P_{\ell }^{\prime }(1)%
\end{array}%
\right),
\end{equation*}%
and, again with some slight abuse of notation,
\begin{align*}
\left. a_{\ell }(x,y)\right\vert _{{x=(\pi /2,\varphi _{x})},{y=(\pi /2,0)}%
}&=\left. a_{\ell }(y,x)\right\vert _{{x=(\pi /2,\varphi _{x})},{%
y=(\pi /2,0)}}=\left(
\begin{array}{cc}
\alpha _{1,\ell }(\phi ) & 0 \\
0 & \alpha _{2,\ell }(\phi )%
\end{array}%
\right),
\end{align*}%
where as we recalled before $\phi =d(x,y)$ and
\begin{equation*}
\alpha _{1,\ell }(\phi )=P_{\ell }^{\prime }(\cos \phi ),
\end{equation*}
\begin{equation*}
\alpha _{2,\ell }(\phi )=-\sin ^{2}\phi P_{\ell }^{\prime \prime }(\cos \phi
)+\cos \phi P_{\ell }^{\prime }(\cos \phi ).
\end{equation*}
Now recall that $P_{\ell }^{\prime }(1)=\frac{\ell (\ell +1)}{2}$, for $
\lambda _{\ell }=\ell (\ell +1)$; hence we have
\begin{equation*}
A_{\ell }(\phi )=\left(
\begin{array}{cccc}
\frac{\lambda _{\ell }}{2} & 0 & \alpha _{1,\ell }(\phi ) & 0 \\
0 & \frac{\lambda _{\ell }}{2} & 0 & \alpha _{2,\ell }(\phi ) \\
\alpha _{1,\ell }(\phi ) & 0 & \frac{\lambda _{\ell }}{2} & 0 \\
0 & \alpha _{2,\ell }(\phi ) & 0 & \frac{\lambda _{\ell }}{2}%
\end{array}%
\right) .
\end{equation*}%

\noindent The matrix $B_{\ell}$ collects the covariances between first and second order
derivatives, and is given by
\begin{align*}
B_{\ell }(x,y)_{4\times 6}&=\left. \mathbb{E}\left[ \left(
\begin{array}{c}
\nabla f_{\ell }(\bar{x})^{t} \\
\nabla f_{\ell }(\bar{y})^{t}%
\end{array}%
\right) \left(
\begin{array}{cc}
\nabla ^{2}f_{\ell }(x) & \nabla ^{2}f_{\ell }(y)%
\end{array}%
\right) \right] \right\vert _{{x=\bar{x}},{y=\bar{y}}}\\
&=\left(
\begin{array}{cc}
b_{\ell }(x,x) & b_{\ell }(x,y) \\
b_{\ell }(y,x) & b_{\ell }(y,y)%
\end{array}%
\right).
\end{align*}
It is well-known that for Gaussian isotropic processes, for $i,j=1,2$, the second derivatives $e^x_i e^x_j f_\ell(x)$ are independent of $e_i^x f_\ell(x)$ at every fixed point $x \in {\cal S}^2$ see, e.g, \cite{adlertaylor} section 5.5; we have then
\begin{equation*}
\left. b_{\ell }(x,x)\right\vert _{x=(\pi /2,\varphi _{x})}=\left. b_{\ell
}(y,y)\right\vert _{y=(\pi /2,0)}=\left(
\begin{array}{ccc}
0 & 0 & 0 \\
0 & 0 & 0%
\end{array}%
\right),
\end{equation*}%
while
\begin{align*}
\left. b_{\ell }(x,y)\right\vert _{{x=(\pi /2,\varphi _{x})},{y=(\pi /2,0)}%
}& =\left(
\begin{array}{ccc}
0 & \beta _{1,\ell }(\phi ) & 0 \\
\beta _{2,\ell }(\phi ) & 0 & \beta _{3,\ell }(\phi )%
\end{array}%
\right) \\
& =-\left. b_{\ell }(y,x)\right\vert _{{x=(\pi /2,\varphi _{x})},{y=(\pi
/2,0)}}.
\end{align*}%
Here we have introduced the functions%
\begin{equation*}
\beta _{1,\ell }(\phi )=\sin \phi P_{\ell }^{\prime \prime }(\cos \phi ),
\end{equation*}%
\begin{equation*}
\beta _{2,\ell }(\phi )=\sin \phi \cos \phi P_{\ell }^{\prime \prime }(\cos
\phi )+\sin \phi P_{\ell }^{\prime }(\cos \phi ),
\end{equation*}%
\begin{equation*}
\beta _{3,\ell }(\phi )=-\sin ^{3}\phi P_{\ell }^{\prime \prime \prime
}(\cos \phi )+3\sin \phi \cos \phi P_{\ell }^{\prime \prime }(\cos \phi
)+\sin \phi P_{\ell }^{\prime }(\cos \phi ).\newline
\end{equation*}

\noindent Finally, the matrix $C_{\ell}$ contains the variances of second-order
derivatives, and we have
\begin{align*}
C_{\ell }(x,y)_{6\times 6}&=\left. \mathbb{E}\left[
\begin{array}{c}
\left(
\begin{array}{c}
\nabla ^{2}f_{\ell }(\bar{x})^{t} \\
\nabla ^{2}f_{\ell }(\bar{y})^{t}%
\end{array}%
\right) \left(
\begin{array}{cc}
\nabla ^{2}f_{\ell }({x}) & \nabla ^{2}f_{\ell }(\bar{y})%
\end{array}%
\right)%
\end{array}%
\right] \right\vert _{{x=\bar{x}},{y=\bar{y}}}\\
&=\left(
\begin{array}{cc}
c_{\ell }(x,x) & c_{\ell }(x,y) \\
c_{\ell }(y,x) & c_{\ell }(y,y)%
\end{array}%
\right).
\end{align*}%
Direct calculations yield
\begin{align*}
\left. c_{\ell }(x,y)\right\vert _{{x=(\pi /2,\varphi _{x})},{y=(\pi /2,0)}%
}&=\left. c_{\ell }(y,x)\right\vert _{{x=(\pi /2,\varphi _{x})},{%
y=(\pi /2,0)}}\\
&=\left(
\begin{array}{ccc}
\gamma _{1,\ell }(\phi ) & 0 & \gamma _{3,\ell }(\phi ) \\
0 & \gamma _{2,\ell }(\phi ) & 0 \\
\gamma _{3,\ell }(\phi ) & 0 & \gamma _{4,\ell }(\phi )%
\end{array}%
\right),
\end{align*}%
with%
\begin{equation*}
\gamma _{1,\ell }(\phi )=(2+\cos ^{2}\phi )P_{\ell }^{\prime \prime }(\cos
\phi )+\cos \phi P_{\ell }^{\prime }(\cos \phi ),
\end{equation*}%
\begin{equation*}
\gamma _{2,\ell }(\phi )=-\sin ^{2}\phi P_{\ell }^{\prime \prime \prime
}(\cos \phi )+\cos \phi P_{\ell }^{\prime \prime }(\cos \phi ),
\end{equation*}%
\begin{equation*}
\gamma _{3,\ell }(\phi )=-\sin ^{2}\phi \cos \phi P_{\ell }^{\prime \prime
\prime }(\cos \phi )+(-2\sin ^{2}\phi +\cos ^{2}\phi )P_{\ell }^{\prime
\prime }(\cos \phi )+\cos \phi P_{\ell }^{\prime }(\cos \phi ),
\end{equation*}
\begin{equation*}
\gamma _{4,\ell }(\phi )=\sin ^{4}\phi P_{\ell }^{\prime \prime \prime
\prime }(\cos \phi )-6\sin ^{2}\phi \cos \phi P_{\ell }^{\prime \prime
\prime }(\cos \phi )+(-4\sin ^{2}\phi +3\cos ^{2}\phi )P_{\ell }^{\prime \prime }(\cos \phi
)+\cos \phi P_{\ell }^{\prime }(\cos \phi ).
\end{equation*}
Since $P_{\ell }^{\prime \prime }(1)=\frac{\lambda_\ell}{8}(\lambda_\ell-2)$, it immediately follows that
\begin{align*}
\left. c_{\ell }(x,x)\right\vert _{x=(\pi /2,\varphi _{x})}& =\left(
\begin{array}{ccc}
3P_{\ell }^{\prime \prime }(1)+P_{\ell }^{\prime }(1) & 0 & P_{\ell
}^{\prime \prime }(1)+P_{\ell }^{\prime }(1) \\
0 & P_{\ell }^{\prime \prime }(1) & 0 \\
P_{\ell }^{\prime \prime }(1)+P_{\ell }^{\prime }(1) & 0 & 3P_{\ell
}^{\prime \prime }(1)+P_{\ell }^{\prime }(1)%
\end{array}%
\right) \\
& =\left(
\begin{array}{ccc}
\frac{\lambda _{\ell }}{8}[3\lambda _{\ell }-2] & 0 & \frac{\lambda _{\ell }%
}{8}[\lambda _{\ell }+2] \\
0 & \frac{\lambda _{\ell }}{8}[\lambda _{\ell }-2] & 0 \\
\frac{\lambda _{\ell }}{8}[\lambda _{\ell }+2] & 0 & \frac{\lambda _{\ell }}{%
8}[3\lambda _{\ell }-2]%
\end{array}%
\right) =\left. c_{\ell }(y,y)\right\vert _{y=(\pi /2,0)}.
\end{align*}

\section{The conditional covariance matrix $\Delta_{\ell }(\phi )$}

\label{cov_matx}

In this section we compute the conditional covariance matrices
$\Omega _{\ell }(\phi )$ and $\Delta _{\ell }(\phi )$ (eqs.
\ref{2agosto}, \ref{1agosto}). To simplify the notation we will
write
 $\alpha _{i}, \beta_{i}, \gamma_{i}$ for  $\alpha _{i,\ell }(\phi )$, $\beta _{i,\ell }(\phi )$ and $\gamma
_{i,\ell }(\phi )$; likewise we will adopt the shorthand notation $A,B,C,\Omega, \Delta$ for $A_{\ell }(\phi )$, $%
B_{\ell }(\phi )$, $C_{\ell }(\phi )$, $\Omega _{\ell }(\phi
) $ and $\Delta _{\ell }(\phi )$, respectively.

Let us first compute explicitly the inverse matrix $A^{-1}$; we
write $A$ as a block matrix
\begin{equation*}
A=\left(
\begin{array}{cc}
a_1 & a_2 \\
a_2 & a_1%
\end{array}%
\right)
\end{equation*}%
where
\begin{equation*}
a_1=\left(
\begin{array}{cc}
\frac{\lambda _{\ell }}{2} & 0 \\
0 & \frac{\lambda _{\ell }}{2}%
\end{array}%
\right) ,\hspace{1cm}a_2=\left(
\begin{array}{cc}
\alpha _{1} & 0 \\
0 & \alpha _{2}%
\end{array}%
\right) ,
\end{equation*}%
and we evaluate the following components:
\begin{equation*}
(a_1-a_2a_1^{-1}a_2)^{-1}=\left(
\begin{array}{cc}
\frac{2\lambda _{\ell }}{\lambda _{\ell }^{2}-4\alpha _{1}^{2}} & 0 \\
0 & \frac{2\lambda _{\ell }}{\lambda _{\ell }^{2}-4\alpha _{2}^{2}}%
\end{array}%
\right) ,
\end{equation*}%
and
\begin{equation*}
a_1^{-1}a_2=a_2a_1^{-1}=\left(
\begin{array}{cc}
\frac{2\alpha _{1}}{\lambda _{\ell }} & 0 \\
0 & \frac{2\alpha _{2}}{\lambda _{\ell }}%
\end{array}%
\right) .
\end{equation*}%
Now, to invert blockwise $A$, we need to compute the main diagonal
blocks
\begin{equation*}
a_1^{-1}+a_1^{-1}a_2(a_1-a_2a_1^{-1}a_2)^{-1}a_2a_1^{-1}
=(a_1-a_2a_1^{-1}a_2)^{-1},
\end{equation*}%
and the off-diagonal blocks
\begin{equation*}
-a_1^{-1}a_2(a_1-a_2a_1^{-1}a_2)^{-1}=-(a_1-a_2a_1^{-1}a_2)^{-1}a_2a_1^{-1}=-\left(
\begin{array}{cc}
\frac{4\alpha _{1}}{\lambda _{\ell }^{2}-4\alpha _{1}^{2}} & 0 \\
0 & \frac{4\alpha _{2}}{\lambda _{\ell }^{2}-4\alpha _{2}^{2}}%
\end{array}%
\right) .
\end{equation*}
We have then
\begin{equation*}
A^{-1}=\left(
\begin{array}{cccc}
\frac{2\lambda _{\ell }}{\lambda _{\ell }^{2}-4\alpha _{1}^{2}} & 0 & -\frac{%
4\alpha _{1}}{\lambda _{\ell }^{2}-4\alpha _{1}^{2}} & 0 \\
0 & \frac{2\lambda _{\ell }}{\lambda _{\ell }^{2}-4\alpha _{2}^{2}} & 0 & -%
\frac{4\alpha _{2}}{\lambda _{\ell }^{2}-4\alpha _{2}^{2}} \\
-\frac{4\alpha _{1}}{\lambda _{\ell }^{2}-4\alpha _{1}^{2}} & 0 & \frac{%
2\lambda _{\ell }}{\lambda _{\ell }^{2}-4\alpha _{1}^{2}} & 0 \\
0 & -\frac{4\alpha _{2}}{\lambda _{\ell }^{2}-4\alpha _{2}^{2}} & 0 & \frac{%
2\lambda _{\ell }}{\lambda _{\ell }^{2}-4\alpha _{2}^{2}}%
\end{array}%
\right) .
\end{equation*}%
We are now in the position to compute the matrix $B^{t}A^{-1}B$;
indeed we get:
\begin{equation*}
B^{t}A^{-1}B=\left(
\begin{array}{cccc}
0 & 0 & 0 & -\beta _{2} \\
0 & 0 & -\beta _{1} & 0 \\
0 & 0 & 0 & -\beta _{3} \\
0 & \beta _{2} & 0 & 0 \\
\beta _{1} & 0 & 0 & 0 \\
0 & \beta _{3} & 0 & 0%
\end{array}%
\right) A^{-1}\left(
\begin{array}{cccccc}
0 & 0 & 0 & 0 & \beta _{1} & 0 \\
0 & 0 & 0 & \beta _{2} & 0 & \beta _{3} \\
0 & -\beta _{1} & 0 & 0 & 0 & 0 \\
-\beta _{2} & 0 & -\beta _{3} & 0 & 0 & 0%
\end{array}%
\right)
\end{equation*}%
\begin{equation*}
=\left(
\begin{array}{cccccc}
\frac{2\lambda _{\ell }\beta _{2}^{2}}{\lambda _{\ell }^{2}-4\alpha _{2}^{2}}
& 0 & \frac{2\lambda _{\ell }\beta _{2}\beta _{3}}{\lambda _{\ell
}^{2}-4\alpha _{2}^{2}} & \frac{4\alpha _{2}\beta _{2}^{2}}{\lambda _{\ell
}^{2}-4\alpha _{2}^{2}} & 0 & \frac{4\alpha _{2}\beta _{2}\beta _{3}}{%
\lambda _{\ell }^{2}-4\alpha _{2}^{2}} \\
0 & \frac{2\lambda _{\ell }\beta _{1}^{2}}{\lambda _{\ell }^{2}-4\alpha
_{1}^{2}} & 0 & 0 & \frac{4\alpha _{1}\beta _{1}^{2}}{\lambda _{\ell
}^{2}-4\alpha _{1}^{2}} & 0 \\
\frac{2\lambda _{\ell }\beta _{2}\beta _{3}}{\lambda _{\ell }^{2}-4\alpha
_{2}^{2}} & 0 & \frac{2\lambda _{\ell }\beta _{3}^{2}}{\lambda _{\ell
}^{2}-4\alpha _{2}^{2}} & \frac{4\alpha _{2}\beta _{2}\beta _{3}}{\lambda
_{\ell }^{2}-4\alpha _{2}^{2}} & 0 & \frac{4\alpha _{2}\beta _{3}^{2}}{%
\lambda _{\ell }^{2}-4\alpha _{2}^{2}} \\
\frac{4\alpha _{2}\beta _{2}^{2}}{\lambda _{\ell }^{2}-4\alpha _{2}^{2}} & 0
& \frac{4\alpha _{2}\beta _{2}\beta _{3}}{\lambda _{\ell }^{2}-4\alpha
_{2}^{2}} & \frac{2\lambda _{\ell }\beta _{2}^{2}}{\lambda _{\ell
}^{2}-4\alpha _{2}^{2}} & 0 & \frac{2\lambda _{\ell }\beta _{2}\beta _{3}}{%
\lambda _{\ell }^{2}-4\alpha _{2}^{2}} \\
0 & \frac{4\alpha _{1}\beta _{1}^{2}}{\lambda _{\ell }^{2}-4\alpha _{1}^{2}}
& 0 & 0 & \frac{2\lambda _{\ell }\beta _{1}^{2}}{\lambda _{\ell
}^{2}-4\alpha _{1}^{2}} & 0 \\
\frac{4\alpha _{2}\beta _{2}\beta _{3}}{\lambda _{\ell }^{2}-4\alpha _{2}^{2}%
} & 0 & \frac{4\alpha _{2}\beta _{3}^{2}}{\lambda _{\ell }^{2}-4\alpha
_{2}^{2}} & \frac{2\lambda _{\ell }\beta _{2}\beta _{3}}{\lambda _{\ell
}^{2}-4\alpha _{2}^{2}} & 0 & \frac{2\lambda _{\ell }\beta _{3}^{2}}{\lambda
_{\ell }^{2}-4\alpha _{2}^{2}}%
\end{array}%
\right)\mbox{,}
\end{equation*}%
From section A in this appendix we have%
\begin{equation*}
C=\left(
\begin{array}{cccccc}
3\frac{\lambda _{\ell }(\lambda _{\ell }-2)}{8}+\frac{\lambda _{\ell }}{2} &
0 & \frac{\lambda _{\ell }(\lambda _{\ell }-2)}{8}+\frac{\lambda _{\ell }}{2}
& \gamma _{1} & 0 & \gamma _{3} \\
0 & \frac{\lambda _{\ell }(\lambda _{\ell }-2)}{8} & 0 & 0 & \gamma _{2} & 0
\\
\frac{\lambda _{\ell }(\lambda _{\ell }-2)}{8}+\frac{\lambda _{\ell }}{2} & 0
& 3\frac{\lambda _{\ell }(\lambda _{\ell }-2)}{8}+\frac{\lambda _{\ell }}{2}
& \gamma _{3} & 0 & \gamma _{4} \\
\gamma _{1} & 0 & \gamma _{3} & 3\frac{\lambda _{\ell }(\lambda _{\ell }-2)}{%
8}+\frac{\lambda _{\ell }}{2} & 0 & \frac{\lambda _{\ell }(\lambda _{\ell
}-2)}{8}+\frac{\lambda _{\ell }}{2} \\
0 & \gamma _{2} & 0 & 0 & \frac{\lambda _{\ell }(\lambda _{\ell }-2)}{8} & 0
\\
\gamma _{3} & 0 & \gamma _{4} & \frac{\lambda _{\ell }(\lambda _{\ell }-2)}{8%
}+\frac{\lambda _{\ell }}{2} & 0 & 3\frac{\lambda _{\ell }(\lambda _{\ell
}-2)}{8}+\frac{\lambda _{\ell }}{2}%
\end{array}%
\right)\mbox{;}
\end{equation*}
The remaining computations to obtain $\Omega $ and $\Delta$ are
straightforward.

\section{Some estimates on Legendre polynomials} \label{estimates}

\noindent Let us first recall the following:

\begin{lemma}[Hilb's asymptotics, \cite{szego}, page 195, Theorem
8.21.6.]
\label{hilb0} For any $\varepsilon >0$ and any constant $C>0$, we have
\begin{equation*}
P_\ell(\cos \phi)=\left( \frac{\phi}{\sin \phi}\right)^{1/2} J_0((\ell+1/2)
\phi)+\delta_\ell(\phi),
\end{equation*}
where $J_\nu$ is the Bessel function of the first kind, $P_\ell$
denotes Legendre polynomials, and the error term satisfies
\begin{align*}
\delta_\ell(\phi) \ll
\begin{cases}
\phi^2 O(1), & 0<\phi<C/\ell, \\
\phi^{1/2} O(\ell^{-3/2}), & C/ \ell \le \phi,%
\end{cases}%
\end{align*}
uniformly w.r.t. $\ell \ge 1$ and $\phi \in [0, \pi-\varepsilon]$.
\end{lemma}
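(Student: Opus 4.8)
Strictly speaking there is nothing new to prove: the statement is Theorem~8.21.6 of \cite{szego}, which we merely recall. For the reader's convenience I outline the classical Liouville--Green argument behind it; the plan is to bring the Legendre equation into normal form and compare it with the Bessel equation of order zero.

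First I would set $x=\cos\phi$ in $(1-x^{2})P_\ell''-2xP_\ell'+\ell(\ell+1)P_\ell=0$ and substitute $y(\phi)=(\sin\phi)^{1/2}P_\ell(\cos\phi)$; a short computation, in which the first-order terms cancel, gives
\begin{equation*}
y''+\Big[\big(\ell+\tfrac12\big)^{2}+\frac{1}{4\sin^{2}\phi}\Big]\,y=0 .
\end{equation*}
With $\nu=\ell+\tfrac12$, the function $w_1(\phi):=\phi^{1/2}J_0(\nu\phi)$ solves $w_1''+\big[\nu^{2}+\tfrac{1}{4\phi^{2}}\big]w_1=0$, so $y$ satisfies the perturbed equation
\begin{equation*}
y''+\Big[\nu^{2}+\frac{1}{4\phi^{2}}\Big]\,y=g(\phi)\,y,\qquad g(\phi)=\frac14\Big(\frac{1}{\phi^{2}}-\frac{1}{\sin^{2}\phi}\Big).
\end{equation*}
The elementary but crucial point is that $g$ extends to a bounded smooth function on $[0,\pi-\varepsilon]$, with $g(\phi)\to-\tfrac1{12}$ as $\phi\to0$; in particular $\sup_{[0,\pi-\varepsilon]}|g|<\infty$, independently of $\ell$.

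Next I would recast this as a Volterra integral equation. Together with $w_1$, the function $w_2(\phi):=\tfrac{\pi}{2}\,\phi^{1/2}Y_0(\nu\phi)$ forms a fundamental system of the comparison equation with Wronskian $1$; since $y$ and $w_1$ are both the (essentially unique) solution of their equation that is regular at $\phi=0$, normalised by $P_\ell(1)=1=J_0(0)$, one obtains
\begin{equation*}
y(\phi)=w_1(\phi)+\int_0^{\phi}\big[w_1(t)w_2(\phi)-w_1(\phi)w_2(t)\big]\,g(t)\,y(t)\,dt .
\end{equation*}
Iterating this identity and using the standard uniform bounds $|J_0(u)|\le1$ and $|J_0(u)|,|Y_0(u)|\ll u^{-1/2}$ for $u\gtrsim1$ (the logarithmic blow-up of $Y_0$ near $0$ being harmless on the regular branch), one estimates the remainder $\delta_\ell(\phi)=(\sin\phi)^{-1/2}\big[y(\phi)-w_1(\phi)\big]$. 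For $\phi<C/\ell$ the Bessel arguments stay $O(1)$, $g=O(1)$ and $y(t)=O(t^{1/2})$, so a one-line bound of the Volterra integral gives $\delta_\ell(\phi)=O(\phi^{2})$. For $\phi\ge C/\ell$ one splits the $t$-integral at $t\sim1/\ell$ and exploits the decay $J_0(\nu t)\ll(\nu t)^{-1/2}$ together with the oscillation of $J_0$ to obtain $\delta_\ell(\phi)=\phi^{1/2}O(\ell^{-3/2})$, all uniformly in $\ell$.

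The genuine work is the sharp estimate in the second regime --- the careful split at $t\sim1/\ell$ and the use of the oscillatory behaviour of the Bessel functions --- which I expect to be the main obstacle; it is, however, entirely classical and is carried out in full in \cite{szego}.
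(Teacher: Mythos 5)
The statement is a verbatim quotation of Szeg\H{o}, Theorem~8.21.6 specialised to $\alpha=\beta=0$, and the paper offers no proof beyond the citation; your observation that nothing is left to prove is exactly the paper's stance, so the approaches coincide. Your supplementary Liouville--Green sketch is sound: the substitution $y=(\sin\phi)^{1/2}P_\ell(\cos\phi)$ does yield $y''+[(\ell+\tfrac12)^2+\tfrac{1}{4\sin^2\phi}]y=0$, the comparison functions $w_1=\phi^{1/2}J_0(\nu\phi)$ and $w_2=\tfrac{\pi}{2}\phi^{1/2}Y_0(\nu\phi)$ solve $w''+[\nu^2+\tfrac{1}{4\phi^2}]w=0$ with unit Wronskian, and the perturbation $g(\phi)=\tfrac14(\phi^{-2}-\sin^{-2}\phi)$ indeed extends continuously to $g(0)=-\tfrac{1}{12}$, so the Volterra set-up and the resulting uniform error bounds are exactly the classical argument carried out in Szeg\H{o}.
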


\begin{lemma}
\label{bessel} The following asymptotic reppresentation for the Bessel
functions of the first kind holds:
\begin{align*}
J_0(x)&=\left( \frac{2}{\pi x} \right)^{1/2} \cos(x- \pi/4) %
\left[ \sum_{k=0}^n (-1)^k g(2k) \; (2 x)^{-2 k} +O(|x|^{-2 n -2})\right] \\
&\;\;-\left( \frac{2}{\pi x} \right)^{1/2} \sin(x- \pi/4) \left[
\sum_{k=0}^n (-1)^k g(2k+1)\; (2 x)^{-2 k-1} +O(|x|^{-2 n -3})\right]
\end{align*}
where $\varepsilon>0$, $|\arg x|\le \pi-\varepsilon$, $(0,0)=1$ and $g(k)=\frac{(-1)(-3^2) \cdots (-(2k-1)^2)}{2^{2k} k!}=(-1)^k \frac{[(2k)!!]^2}{2^{2k} k!}$.
\end{lemma}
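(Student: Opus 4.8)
Lemma~\ref{bessel} is the classical large-argument asymptotic expansion of the Bessel function $J_{0}$ (cf.\ \cite{szego}), and the plan is to derive it by the method of stationary phase applied to the Poisson integral representation together with Watson's lemma, establishing the expansion first on the positive real axis and then extending it to the sector $|\arg x|\le\pi-\varepsilon$.

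First I would start from the representation
$$J_{0}(x)=\frac{1}{\pi}\int_{-1}^{1}\frac{e^{ixt}}{\sqrt{1-t^{2}}}\,dt,$$
valid for every $x$ because the integrand has only integrable square-root singularities at the endpoints $t=\pm1$. For $x>0$ the phase $t\mapsto xt$ has no interior stationary point, so after inserting a smooth partition of unity $1=\chi_{-}(t)+\chi_{0}(t)+\chi_{+}(t)$ with $\chi_{\pm}$ supported in small neighbourhoods of $\pm1$ and $\chi_{0}$ supported in the interior, repeated integration by parts shows that the $\chi_{0}$ piece is $O_{N}(x^{-N})$ for every $N$. Thus the asymptotics is controlled entirely by the two endpoints.

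Next, near $t=1$ I would substitute $t=1-u$, so that $1-t^{2}=u(2-u)$ and the $\chi_{+}$ contribution becomes $\frac{1}{\pi}e^{ix}\int_{0}^{\infty}\chi_{+}(1-u)\,u^{-1/2}(2-u)^{-1/2}e^{-ixu}\,du$. Expanding $(2-u)^{-1/2}=2^{-1/2}\sum_{k\ge0}\binom{-1/2}{k}(-u/2)^{k}$ and applying Watson's lemma with $\int_{0}^{\infty}u^{k-1/2}e^{-ixu}\,du=\Gamma(k+\tfrac12)(ix)^{-k-1/2}$ produces a descending series in $x^{-1/2-k}$, with remainder $O(x^{-n-3/2})$ after $n$ terms coming from the Taylor remainder of $(2-u)^{-1/2}$. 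Using $\Gamma(k+\tfrac12)=\sqrt{\pi}\,(2k-1)!!\,2^{-k}$ and $(ix)^{-k-1/2}=x^{-k-1/2}e^{-i\pi/4}(-i)^{k}$, the $t=1$ endpoint contributes $\sqrt{1/(2\pi x)}\,e^{i(x-\pi/4)}\sum_{k\ge0}\tfrac{[(2k-1)!!]^{2}}{8^{k}k!}(-i)^{k}x^{-k}+O(x^{-n-3/2})$, and the change $t\mapsto-t$ shows the $t=-1$ endpoint contributes the complex conjugate of this.

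Adding the two endpoint contributions gives $J_{0}(x)=2\Re[\cdots]$, which turns $\sqrt{1/(2\pi x)}$ into $\sqrt{2/(\pi x)}$; splitting the sum according to the parity of $k$, using $(-i)^{2m}=(-1)^{m}$ and $(-i)^{2m+1}=-i(-1)^{m}$, pairs the even-index terms with $\cos(x-\pi/4)$ and the odd-index terms with $-\sin(x-\pi/4)$, while the Gamma duplication formula identifies the resulting coefficients with $(-1)^{k}g(2k)$ and $(-1)^{k}g(2k+1)$; regrouping the two $O(x^{-n-3/2})$ remainders in this real form yields the stated errors $O(|x|^{-2n-2})$ and $O(|x|^{-2n-3})$. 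For complex $x$ with $|\arg x|\le\pi-\varepsilon$ I would instead work with $J_{0}=\tfrac12(H_{0}^{(1)}+H_{0}^{(2)})$ and carry out the steepest-descent analysis on the Sommerfeld contour integrals for the Hankel functions, where a single saddle governs the behaviour and whose expansions are valid uniformly on overlapping sectors whose intersection contains $|\arg x|\le\pi-\varepsilon$. I expect the main obstacle to be the bookkeeping: matching the coefficients produced by Watson's lemma to the closed form for $g(k)$ via the duplication formula and the parity split, and making the remainder estimates uniform up to the boundary rays $\arg x=\pm(\pi-\varepsilon)$, which is precisely why the Hankel-function route (or an equivalent contour rotation) is needed rather than the Poisson integral on the real axis alone.
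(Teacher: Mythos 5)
The paper does not prove Lemma~\ref{bessel}: immediately after the statement it refers the reader to \cite{lebedev}, Section~5.11. Your derivation is a correct reconstruction of that classical argument --- Mehler--Sonine/Poisson integral, endpoint localization, Watson's lemma, parity split into the $\cos$ and $\sin$ blocks, Legendre duplication to identify $g(k)$, and steepest descent on the Hankel contour integrals for the full sector --- so there is no proof in the paper itself to compare against beyond the source you are effectively reproving. Two remarks worth making explicit in a full write-up. First, the paper's closed form $g(k)=(-1)^k[(2k)!!]^2/(2^{2k}k!)$ is a misprint for $(-1)^k[(2k-1)!!]^2/(2^{2k}k!)$, and it is the latter that both the defining product and your Watson's-lemma coefficients yield (for instance $g(1)=-1/4$ from the product, while $(-1)[(2)!!]^2/(4\cdot 1!)=-1$). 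Second, the identity $\int_0^\infty u^{k-1/2}e^{-ixu}\,du=\Gamma(k+\tfrac{1}{2})(ix)^{-k-1/2}$ that you invoke diverges as an ordinary improper integral for real $x$ and $k\ge 1$; it must be read through the compact cutoff $\chi_+$ together with an integration-by-parts remainder bound, or through the contour rotation that you mention only at the very end for the complex sector --- that rotation (or Erd\'elyi's lemma for Fourier integrals) is already needed on the positive real axis, and the remainder orders $O(|x|^{-2n-2})$ and $O(|x|^{-2n-3})$ come from applying the resulting remainder estimate after $2n+2$ terms and splitting by parity exactly as you indicate. Neither point is a substantive gap; both are routine and are handled in the cited reference.
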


\noindent For a proof of Lemma \ref{bessel} see \cite{lebedev}
, Section 5.11.\newline

\noindent For instance when $n=0$ we have
\begin{align}
J_0(x)&=\left( \frac{2}{\pi x} \right)^{1/2} \cos(x- \pi/4) -\frac 1 8 \left(
\frac{2}{\pi x} \right)^{1/2} \frac{ 1 }{x} \cos(x+ \pi/4)+O(|x|^{-5/2}).
\label{J0}
\end{align}
In the rest of the paper we use the following notation:
\begin{align*}
&R_1(\ell,\phi)=O(\ell^{-1/2} \phi^{-5/2}), &
R_2(\ell,\phi) =O(\ell^{1/2} \phi^{-7/2}), \\
&R_3(\ell,\phi)=O( \ell^{1/2} \phi^{-11/2}), & R_4(\ell,\phi)=O(\ell^{1/2} \phi^{-15/2} ).
\end{align*}

\begin{lemma}
\label{hilbs} For any constant $C>0$, we have, uniformly for $\ell \geq 1$ and $\phi \in \lbrack C/\ell ,\pi /2]$: 
\begin{equation} \label{P}
P_{\ell }(\cos \phi )=\sqrt{\frac{2}{\pi }}\frac{\ell ^{-1/2}}{\sin
^{1/2}\phi }\left[\cos \psi _{\ell }^{-}-\frac{1}{8\ell \phi }\cos \psi
_{\ell }^{+}\right]+O(\ell ^{-5/2}\phi ^{-5/2})+O(\ell ^{-3/2}\phi ^{-1/2}),
\end{equation}
\begin{equation} \label{P'}
P_{\ell }^{\prime }(\cos \phi )=\sqrt{\frac{2}{\pi }}\frac{\ell ^{1-1/2}}{%
\sin ^{1+1/2}\phi }\left[\sin \psi _{\ell }^{-}-\frac{1}{8\ell \phi }\sin
\psi _{\ell }^{+}\right]+R_{1}(\ell ,\phi ),
\end{equation}
\begin{equation} \label{P''}
P_{\ell }^{\prime \prime }(\cos \phi )=\sqrt{\frac{2}{\pi }}\frac{\ell
^{2-1/2}}{\sin ^{2+1/2}\phi }\left[-\cos \psi _{\ell }^{-}+\frac{1}{8\ell
\phi }\cos \psi _{\ell }^{+}\right]-\sqrt{\frac{2}{\pi }}\frac{\ell ^{1-1/2}}{\sin ^{3+1/2}\phi }\left[\cos
\psi _{\ell -1}^{+}+\frac{1}{8\ell \phi }\cos \psi _{\ell -1}^{-}\right]%
+R_{2}(\ell ,\phi ),
\end{equation}
\begin{align}
P_{\ell }^{\prime \prime \prime }(\cos \phi )& =\sqrt{\frac{2}{\pi }}\frac{%
\ell ^{3-1/2}}{\sin ^{3+1/2}\phi }\left[\cos \psi _{\ell }^{+}+\frac{1}{8\ell
\phi }\cos \psi _{\ell }^{-}\right]  \nonumber  \\
& \;\;-\sqrt{\frac{2}{\pi }}\frac{\ell ^{2-1/2}}{\sin ^{4+1/2}\phi }\left[\frac{1%
}{2}(\cos \psi _{\ell +1}^{-}+5\cos \psi _{\ell -1}^{-})-\frac{1}{8\ell \phi
}\frac{1}{2}(\cos \psi _{\ell +1}^{+}+5\cos \psi _{\ell -1}^{+})\right]  \notag \\
& \;\;+\sqrt{\frac{2}{\pi }}\frac{\ell ^{1-1/2}}{\sin ^{5+1/2}\phi }\left[3\cos
\phi \sin \psi _{\ell -1}^{-}-\frac{1}{8\ell \phi }3\cos \phi \sin \psi
_{\ell -1}^{+}\right]+R_{3}(\ell ,\phi ),  \label{P'''}
\end{align}
\begin{align}
P_{\ell }^{\prime \prime \prime \prime }(\cos \phi )& =\sqrt{\frac{2}{\pi }}%
\frac{\ell ^{4-1/2}}{\sin ^{4+1/2}\phi }\left[\cos \psi _{\ell }^{-}-\frac{1}{%
8\ell \phi }\cos \psi _{\ell }^{+}\right]  \notag \\
& +\sqrt{\frac{2}{\pi }}\frac{\ell ^{3-1/2}}{\sin ^{5+1/2}\phi }\left[-\frac{3}{2}%
(\sin \psi _{\ell +1}^{-}+3\sin \psi _{\ell -1}^{-})+\frac{1}{8\ell \phi }%
\frac{3}{2}(\sin \psi _{\ell +1}^{+}+3\sin \psi _{\ell -1}^{+})\right]  \notag \\
& +\sqrt{\frac{2}{\pi }}\frac{\ell ^{2-1/2}}{\sin ^{6+1/2}\phi }\left[-\frac{1}{2}%
(\cos \psi _{\ell +2}^{-}+16\cos \psi _{\ell }^{-}+13\cos \psi _{\ell
-2}^{-})+\frac{1}{8\ell \phi }\frac{1}{2}(\cos \psi _{\ell +2}^{+}+16\cos
\psi _{\ell }^{+}+13\cos \psi _{\ell -2}^{+})\right]  \notag \\
& +\sqrt{\frac{2}{\pi }}\frac{\ell ^{1-1/2}}{\sin ^{7+1/2}\phi }\left[-3(5-4\sin
^{2}\phi )\cos \psi _{\ell -1}^{+}-\frac{1}{8\ell \phi }3(5-4\sin ^{2}\phi
)\cos \psi _{\ell -1}^{-}\right]+R_{4}(\ell ,\phi ),  \label{P''''}
\end{align}%
where $\psi _{\ell +k}^{\pm }=(\ell +k+1/2)\phi \pm \pi /4$.
\end{lemma}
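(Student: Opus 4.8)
The plan is to establish \eqref{P} directly from Hilb's asymptotics (Lemma \ref{hilb0}) combined with the Bessel expansion \eqref{J0}, and then to derive \eqref{P'}--\eqref{P''''} by differentiating repeatedly in $\phi$, translating each $\phi$-derivative of $P_\ell(\cos\phi)$ back into the Legendre derivatives $P_\ell^{(k)}(\cos\phi)$ via the chain rule. For \eqref{P}: inserting $x=(\ell+1/2)\phi$ in \eqref{J0} and multiplying by $(\phi/\sin\phi)^{1/2}$ from Lemma \ref{hilb0}, the two square-root prefactors combine into $\sqrt{2/\pi}\,(\ell+1/2)^{-1/2}(\sin\phi)^{-1/2}$, while $x\mp\pi/4=\psi_\ell^{\mp}$ \emph{exactly}, so the phases are not approximated. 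Replacing $\ell+1/2$ by $\ell$ in the amplitude, and $\tfrac{1}{8x}$ by $\tfrac{1}{8\ell\phi}$, each costs a relative factor $1+O(\ell^{-1})$; since $(\phi/\sin\phi)^{1/2}$ is bounded on $[0,\pi/2]$, since by Lemma \ref{hilb0} the Hilb remainder there is $O(\ell^{-3/2}\phi^{1/2})$, and since the $O(|x|^{-5/2})$ tail of \eqref{J0} contributes $O(\ell^{-5/2}\phi^{-5/2})$, all these errors fit into the two remainder terms of \eqref{P} after bounding $\phi^{1/2}\le(\pi/2)^{1/2}$ and, where convenient, $\phi\le\pi/2$; here $\phi\ge C/\ell$ is what guarantees both the favourable form of Hilb's remainder and that $x=(\ell+1/2)\phi\ge C$ lies in the asymptotic range of \eqref{J0}.

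For the derivatives, set $g(\phi):=P_\ell(\cos\phi)$. Iterating $\frac{d}{d\phi}\cos\phi=-\sin\phi$ gives the triangular relations $g'=-\sin\phi\,P_\ell'(\cos\phi)$, $g''=-\cos\phi\,P_\ell'(\cos\phi)+\sin^2\phi\,P_\ell''(\cos\phi)$, and so forth, which invert to express $P_\ell^{(k)}(\cos\phi)$ as $(\sin\phi)^{-k}$ times a polynomial-in-$\cos\phi$ combination of $g',\dots,g^{(k)}$. One then differentiates \eqref{P} term by term; to do so legitimately one uses a differentiated form of Hilb's asymptotics (whose $k$-th $\phi$-derivative remainder is $O(\ell^{k-3/2}\phi^{1/2})$) together with the Bessel expansion \eqref{J0} carried one further order, so that the truncation tail is an explicit elementary oscillatory function that may be differentiated termwise. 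Each derivative landing on an oscillatory factor $\cos\psi_{\ell+m}^{\pm}$ produces $\mp(\ell+m+\tfrac12)\sin\psi_{\ell+m}^{\pm}$, i.e.\ one extra power of $\ell$ and a $\cos\leftrightarrow\sin$ toggle, whereas a derivative landing on an algebraic factor $(\sin\phi)^{-a}$ or on the slowly varying correction $(\ell\phi)^{-1}$ loses a power of $\ell$; this is why the leading blocks of \eqref{P'}--\eqref{P''''} carry $\ell^{k-1/2}$. After $k$ differentiations the remainder is multiplied by at most $\ell^{k}$, and a further weight $(\sin\phi)^{-k}\ll\phi^{-k}$ is picked up when passing from $g^{(k)}$ to $P_\ell^{(k)}(\cos\phi)$; one checks the outcome lies in $R_k(\ell,\phi)$, using $\phi\ge C/\ell$ to convert surplus negative powers of $\phi$ into positive powers of $\ell$.

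It remains to recast the differentiated expression in the closed form stated. The chain-rule coefficients $\cos\phi,\ \sin^2\phi,\ \sin\phi\cos\phi,\dots$ multiply oscillatory factors $\cos\psi_{\ell+m}^{\pm}$, $\sin\psi_{\ell+m}^{\pm}$, and are removed by the product-to-sum identities $\cos\phi\cos\psi_{\ell+m}^{\pm}=\tfrac12(\cos\psi_{\ell+m+1}^{\pm}+\cos\psi_{\ell+m-1}^{\pm})$, $\sin\phi\sin\psi_{\ell+m}^{\pm}=\tfrac12(\cos\psi_{\ell+m-1}^{\pm}-\cos\psi_{\ell+m+1}^{\pm})$, and their analogues; this is precisely the mechanism generating the shifted phases $\psi_{\ell\pm1}^{\pm}$, $\psi_{\ell\pm2}^{\pm}$ that appear in \eqref{P''}--\eqref{P''''}. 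Grouping terms by total power of $\ell$ and of $1/\sin\phi$, the top block is $\sqrt{2/\pi}\,\ell^{k-1/2}(\sin\phi)^{-k-1/2}$ times the relevant oscillation, the retained subleading blocks drop one power of $\ell$ while gaining one or two powers of $1/\sin\phi$ (and carry the shifted phases produced by the $\cos\phi$, $\sin\phi$ reductions), and everything of strictly lower order --- including the cross terms from replacing $\ell+m+\tfrac12$ by $\ell$ in the amplitudes --- is absorbed into $R_k(\ell,\phi)$. The main obstacle is exactly this last step: verifying that \emph{all} the error contributions (the differentiated Hilb and Bessel remainders, the $\ell+m+\tfrac12\to\ell$ replacements, and the leftovers of the product-to-sum reductions) collapse uniformly into the declared orders $R_1,\dots,R_4$ throughout $\phi\in[C/\ell,\pi/2]$ --- and it is in this bookkeeping that the hypothesis $\phi\ge C/\ell$ is indispensable, since it is precisely what turns each stray factor $\phi^{-1}$ into a factor $\le\ell/C$.
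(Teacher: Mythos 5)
Your approach genuinely diverges from the paper's. You propose to set $g(\phi)=P_\ell(\cos\phi)$, differentiate \eqref{P} $k$ times in $\phi$, and then invert the chain rule to recover $P_\ell^{(k)}(\cos\phi)$. The paper instead never differentiates an asymptotic formula: it uses exact algebraic recurrences expressing $P_\ell^{(k)}(x)$ as an explicit rational-coefficient combination of $P_\ell(x),P_{\ell+1}(x),\dots,P_{\ell+k}(x)$ (e.g.\ $P'_\ell(\cos\phi)=\tfrac{\ell+1}{\sin^2\phi}[\cos\phi\,P_\ell(\cos\phi)-P_{\ell+1}(\cos\phi)]$, and longer analogues for higher $k$), applies Hilb's asymptotics \emph{separately} to each $P_{\ell+j}$, and then reduces the products of $\cos\phi$, $\cos^2\phi,\dots$ against $\cos\psi_{\ell+j}^{\pm}$ via the same trigonometric shift identities you mention. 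This route only ever needs the undifferentiated Hilb remainder from Lemma \ref{hilb0}; all the cancellations that produce the shifted phases happen algebraically, at the level of the exact Legendre recurrences, before any analysis is done.

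The gap in your proposal is exactly the step you flag in passing: ``a differentiated form of Hilb's asymptotics (whose $k$-th $\phi$-derivative remainder is $O(\ell^{k-3/2}\phi^{1/2})$).'' Lemma \ref{hilb0} as cited gives only $\delta_\ell(\phi)=O(\ell^{-3/2}\phi^{1/2})$ for $\phi\ge C/\ell$; it says nothing about $\delta_\ell'(\phi),\dots,\delta_\ell^{(4)}(\phi)$, and a bound on a function does not imply a bound on its derivatives. Your whole differentiation strategy hinges on this unproved estimate, and proving it uniformly in $\ell$ and $\phi\in[C/\ell,\pi/2]$ is not a minor checking step --- it is essentially equivalent in difficulty to reproving Hilb's theorem in a strengthened, $C^4$-with-quantitative-remainder form. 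Without it, differentiating \eqref{P} term by term is not legitimate, even though the Bessel main term and the explicit next-order tail are of course differentiable. The paper's route via the polynomial recurrences is precisely how it sidesteps needing such a result. If you want to keep your strategy, you would need to either (i) establish the differentiated Hilb remainder bound from scratch, or (ii) replace it by an argument exploiting the Legendre ODE in the variable $\phi$ to transfer control from $\delta_\ell$ to its derivatives; as written, the proof is incomplete at this single but load-bearing point.
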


\begin{proof}
By applying the Hilb's asymptotics in Lemma \ref{hilb0} and, in view of formula \eqref{J0}, we obtain
\begin{align*}
P_\ell(\cos \phi)&=\left( \frac{\phi}{\sin \phi}\right)^{1/2} \left( \frac{2}{\pi (\ell+1/2) \phi } \right)^{1/2} \Big[  \sin((\ell+1/2) \phi+ \pi/4) -\frac 1 8  \frac{ 1 }{(\ell+1/2) \phi} \cos((\ell+1/2) \phi+ \pi/4) \Big] \\
&\;\; + \left( \frac{\phi}{\sin \phi}\right)^{1/2} O(|(\ell+1/2) \phi|^{-5/2}) +\phi^{1/2} O(\ell^{-3/2}) \\
&= \sqrt{\frac{2}{\pi}} \left( \frac{1}{\sin \phi}\right)^{1/2} \left( \frac{1}{\sqrt \ell}+O(\ell^{-3/2}) \right) \Big[  \sin((\ell+1/2) \phi+ \pi/4) -\frac{1}{8 \phi}  \left(  \frac 1 {\ell} +O(\ell^{-2}) \right) \cos((\ell+1/2) \phi+ \pi/4) \Big] \\
&\;\; + \left( \frac{\phi}{\sin \phi}\right)^{1/2} O( \ell^{-5/2} \phi^{-5/2})  +\phi^{1/2} O(\ell^{-3/2}) \\
&= \sqrt{\frac{2}{\pi}} \left( \frac{1}{ \ell \sin \phi } \right)^{1/2}  \big[ \sin \psi_\ell^+ - \frac{ 1 }{8 \ell \phi} \cos \psi_\ell^+ \big] +  O(\ell^{-5/2} \phi^{-5/2}) +  O( \ell^{-3/2} \phi^{-1/2}),
\end{align*}
so that \eqref{P}, is established.
Let us introduce some more notation:
\begin{equation*}
s_1(\ell,\phi)= \sqrt{ \frac{2}{\pi } } \frac{\ell^{-1/2}}{\sin^{1/2} \phi}, \;\;\;\;\; s_2(\ell, \phi)=     \sqrt{\frac{2}{\pi}} \frac{ 1 }{8 \ell \phi}   \frac{\ell^{-1/2}}{\sin^{1/2} \phi},
\end{equation*}
and
\begin{equation*}
s_3(\ell,\phi)=P_\ell(\cos \phi)- s_1(\ell,\phi) \cos \psi^-_\ell - s_2(\ell,\phi) \cos \psi^+_\ell;
\end{equation*}
note that
\begin{equation*}
s_3(\ell,\phi)= O(\ell^{-5/2} \phi^{-5/2}) + O(\ell^{-3/2} \phi^{-1/2}).
\end{equation*}
We can now rewrite \eqref{P} as follows:
\begin{align*}
P_\ell(\cos \phi)=s_1(\ell,\phi) \cos \psi^-_\ell - s_2(\ell,\phi) \cos \psi^+_\ell+s_3(\ell,\phi).
\end{align*}
To obtain the asymptotic behaviour of the first derivative in \eqref{P'} we first note that
\begin{align*}
P'_\ell(\cos \phi)&= \frac{\ell+1}{\sin^2 \phi} [\cos \phi P_\ell(\cos \phi)-P_{\ell+1}(\cos \phi)].
\end{align*}
where
\begin{align*}
\cos \phi P_\ell(\cos \phi)-P_{\ell+1}(\cos \phi)&= s_1(\ell, \phi) \cos \phi \cos \psi^-_\ell-  s_2(\ell, \phi)  \cos \phi \cos \psi^+_\ell+    s_3(\ell, \phi) \cos \phi  \\
&\;\; - s_1(\ell+1, \phi) \cos \psi^-_{\ell+1} +  s_2(\ell+1, \phi) \cos \psi^+_{\ell+1}- s_3(\ell+1, \phi).
\end{align*}
Now note that
\begin{align}
s_1(\ell+k,\phi)&=s_1(\ell,\phi)+\sqrt{\frac{2}{\pi}} \frac{1}{\sqrt{\sin \phi}}   \sum_{n=1}^{m}\binom{-1/2}{n} k^n \ell^{-n-1/2}+O(\ell^{-(m+1)-1/2} \phi^{-1/2}),\label{infinite-jest-1}\\
s_2(\ell+k,\phi)&=s_2(\ell,\phi)+ \sqrt{\frac{2}{\pi}}  \frac{1}{\sqrt{\sin \phi}} \frac{1}{8 \ell \phi} \sum_{n=1}^{m}\binom{-1/2}{n} k^n \ell^{-n-1/2} +O(\ell^{-(m+1)-3/2} \phi^{-3/2}), \label{infinite-jest-2}
\end{align}
and
\begin{align}
s_3(\ell+k,\phi)&=s_3(\ell,\phi)+O(\ell^{-7/2} \phi^{-5/2})+O(\ell^{-5/2} \phi^{-1/2}); \label{infinite-jest-3}
\end{align}
so that, using \eqref{infinite-jest-1} and \eqref{infinite-jest-2}, with $m=0$, we have
\begin{align*}
\cos \phi P_\ell(\cos \phi)-P_{\ell+1}(\cos \phi)&=   s_1(\ell, \phi) [\cos \phi \cos \psi^-_\ell-\cos \psi^-_{\ell+1}]-  s_2(\ell, \phi) [\cos \phi \cos \psi^+_\ell-\cos \psi^+_{\ell+1}] \\
&\;\;+ s_3(\ell, \phi) [\cos \phi-1] +O(\ell^{-3/2} \phi^{-1/2}).
\end{align*}
Now observe that
\begin{align*}
\cos \phi \cos \psi^{\pm}_\ell-\cos \psi^{\pm}_{\ell+1} &= \sin
\phi \sin\psi^{\pm}_\ell;
\end{align*}
we obtain
\begin{align*}
\cos \phi P_\ell(\cos \phi)-P_{\ell+1}(\cos \phi)
&=   s_1(\ell, \phi) \sin \phi \sin\psi^{-}_\ell  -  s_2(\ell, \phi) \sin \phi \sin\psi^{+}_\ell + s_3(\ell, \phi) [\cos \phi-1]+O(\ell^{-3/2} \phi^{-1/2}).
\end{align*}
Then \eqref{P'} easily follows, since we get
\begin{align*}
P'_\ell(\cos \phi)&=  \sqrt{\frac 2  \pi} \frac{\ell^{1-1/2}}{\sin^{1+1/2} \phi}  [\sin \psi^-_\ell- \frac{1}{8 \ell \phi}\sin \psi^+_\ell  ] +O(  \ell^{-1/2} \phi^{-5/2}).
\end{align*}
To prove the asymptotic behaviour of the second derivative in \eqref{P''} we start from
\begin{align} \label{c2}
P''_\ell(\cos \phi)&=\frac{\ell+1}{\sin^4 \phi} [ (1+2 \cos^2 \phi +\ell \cos^2 \phi) P_\ell(\cos \phi)-(5+2 \ell) \cos \phi  P_{\ell+1}(\cos \phi)+(\ell+2) P_{\ell+2}(\cos \phi)  ].
\end{align}
We first note that
\begin{align*}
& (1+2 \cos^2 \phi +\ell \cos^2 \phi) P_\ell(\cos \phi)-(5+2 \ell) \cos \phi  P_{\ell+1}(\cos \phi)+(\ell+2) P_{\ell+2}(\cos \phi)  \\
&=    s_1(\ell, \phi)(1+2 \cos^2 \phi +\ell \cos^2 \phi)  \cos \psi^-_\ell -  s_2(\ell, \phi) (1+2 \cos^2 \phi +\ell \cos^2 \phi)  \cos \psi^+_\ell+ s_3(\ell, \phi) (1+2 \cos^2 \phi +\ell \cos^2 \phi)  \\
&\;\;- (5+2 \ell)   s_1(\ell+1, \phi)  \cos \phi    \cos \psi^-_{\ell+1}  + (5+2 \ell)  s_2(\ell+1, \phi) \cos \phi   \cos \psi^+_{\ell+1}  - (5+2 \ell)  s_3(\ell+1, \phi) \cos \phi   \\
&\;\; + (\ell+2)  s_1(\ell+2, \phi) \cos \psi^-_{\ell+2}  - (\ell+2)  s_2(\ell+2, \phi) \cos \psi^+_{\ell+2} +(\ell+2)  s_3(\ell+2, \phi),
\end{align*}
now, applying \eqref{infinite-jest-1} and \eqref{infinite-jest-2} with $m=1$, we obtain for example that
\begin{align*}
&-(5+2 \ell)s_1(\ell+1,\phi) \cos \phi    \cos \psi^-_{\ell+1}+ (\ell+2)  s_1(\ell+2, \phi) \cos \psi^-_{\ell+2}\\
&=-(5+2 \ell)s_1(\ell,\phi) \cos \phi    \cos \psi^-_{\ell+1}+ (\ell+2)  s_1(\ell, \phi) \cos \psi^-_{\ell+2}\\
&\;\;-(5+2 \ell) \left[- \sqrt{\frac{2}{\pi}} \frac{1}{\sqrt{\sin \phi}} \frac 1 2 \ell^{-3/2} +O(\ell^{-2-1/2} \phi^{-1/2})  \right]  \cos \phi    \cos \psi^-_{\ell+1}\\
&\;\;+(\ell+2) \left[- \sqrt{\frac{2}{\pi}} \frac{1}{\sqrt{\sin \phi}}  \ell^{-3/2} +O(\ell^{-2-1/2} \phi^{-1/2})  \right]
\cos \psi^-_{\ell+2}
\end{align*}
i.e.,
\begin{align*}
&-(5+2 \ell)s_1(\ell,\phi) \cos \phi    \cos \psi^-_{\ell+1}+ (\ell+2)  s_1(\ell, \phi) \cos \psi^-_{\ell+2}\\
&+ \left(\frac{5}{2}+ \ell \right) \sqrt{\frac{2}{\pi}} \frac{1}{\sqrt{\sin \phi}} \ell^{-3/2} \cos \phi    \cos \psi^-_{\ell+1}  \\
&- (\ell+2)  \sqrt{\frac{2}{\pi}} \frac{1}{\sqrt{\sin \phi}}  \ell^{-3/2}
\cos \psi^-_{\ell+2} +O(\ell^{-3/2} \phi^{-1/2})
\end{align*}
now since $\cos \psi^-_{\ell+2}=\cos \phi \cos \psi^-_{\ell+1}-\sin \phi \cos \psi^-_{\ell+1}$, we have
\begin{align*}
&-(5+2 \ell)s_1(\ell,\phi) \cos \phi    \cos \psi^-_{\ell+1}+ (\ell+2)  s_1(\ell, \phi) \cos \psi^-_{\ell+2}\\
&+ \frac{5}{2}  \sqrt{\frac{2}{\pi}} \frac{1}{\sqrt{\sin \phi}} \ell^{-3/2} \cos \phi    \cos \psi^-_{\ell+1} + \ell   \sqrt{\frac{2}{\pi}} \frac{1}{\sqrt{\sin \phi}}  \ell^{-3/2} \sin \phi \sin \psi^-_{\ell+1}  \\
&- 2  \sqrt{\frac{2}{\pi}} \frac{1}{\sqrt{\sin \phi}}  \ell^{-3/2}
\cos \psi^-_{\ell+2} +O(\ell^{-3/2} \phi^{-1/2}) \\
&=-(5+2 \ell)s_1(\ell,\phi) \cos \phi    \cos \psi^-_{\ell+1}+ (\ell+2)  s_1(\ell, \phi) \cos \psi^-_{\ell+2}\\
&+O(\ell^{-1/2} \phi^{1/2}).
\end{align*}
In the same way we deal with $(5+2 \ell)s_2(\ell+1,\phi) \cos \phi
\cos \psi^+_{\ell+1}- (\ell+2)  s_2(\ell+2, \phi) \cos
\psi^+_{\ell+2}$ and $-(5+2 \ell)s_3(\ell+1,\phi) \cos \phi   +
(\ell+2)  s_3(\ell+2, \phi)$, so that we arrive at
\begin{align*}
& (1+2 \cos^2 \phi +\ell \cos^2 \phi) P_\ell(\cos \phi)-(5+2 \ell) \cos \phi  P_{\ell+1}(\cos \phi)+(\ell+2) P_{\ell+2}(\cos \phi)  \\
&=  \ell \big\{  s_1(\ell, \phi)  [\cos^2 \phi  \cos \psi^-_\ell  -2 \cos \phi    \cos \psi^-_{\ell+1} + \cos \psi^-_{\ell+2} ]    -  s_2(\ell, \phi) [ \cos^2 \phi  \cos \psi^+_\ell  - 2  \cos \phi   \cos \psi^+_{\ell+1} +\cos \psi^+_{\ell+2}  ]     \\
&\;\;+  s_3(\ell, \phi) [ \cos^2 \phi - 2 \cos \phi +1 ]  \big\}  +     s_1(\ell, \phi)[  (1+2 \cos^2 \phi )  \cos \psi^-_\ell  - 5   \cos \phi    \cos \psi^-_{\ell+1} + 2  \cos \psi^-_{\ell+2}  ] \\
&\;\;- s_2(\ell, \phi)[ (1+2 \cos^2 \phi )  \cos \psi^+_\ell - 5  \cos \phi   \cos \psi^+_{\ell+1} + 2  \cos \psi^+_{\ell+2}   ]    + s_3(\ell, \phi) [1+2 \cos^2 \phi  - 5  \cos \phi +    2]\\
&\;\;+O(\ell^{-1/2} \phi^{1/2})  .
\end{align*}
Since
\begin{align*}
&\cos^2 \phi  \cos \psi^{\pm}_\ell  -2 \cos \phi    \cos \psi^{\pm}_{\ell+1} + \cos \psi^{\pm}_{\ell+2} = -\sin^2 \phi \cos \psi^{\pm}_\ell, \\
  &(1+2 \cos^2 \phi )  \cos \psi^{\pm}_\ell  - 5   \cos \phi    \cos \psi^{\pm}_{\ell+1} + 2  \cos \psi^{\pm}_{\ell+2}  =\pm \sin \phi \cos \psi^{\mp}_{\ell-1},
\end{align*}
we obtain
\begin{align}
& (1+2 \cos^2 \phi +\ell \cos^2 \phi) P_\ell(\cos \phi)-(5+2 \ell) \cos \phi  P_{\ell+1}(\cos \phi)+(\ell+2) P_{\ell+2}(\cos \phi) \nonumber  \\
&=  \ell \big\{ s_1(\ell, \phi)  [ -\sin^2 \phi \cos \psi^-_\ell ]   -  s_2(\ell, \phi) [  -\sin^2 \phi \cos \psi^+_\ell ]     +  s_3(\ell, \phi) [ \cos^2 \phi - 2 \cos \phi +1 ]  \big\}\nonumber \\
&\;\; + s_1(\ell, \phi)[ -\sin \phi \cos \psi^+_{\ell-1}  ]  - s_2(\ell, \phi)[ \sin \phi \cos \psi^-_{\ell-1}]       + s_3(\ell, \phi) [1+2 \cos^2 \phi  - 5  \cos \phi +    2] \nonumber \\
&\;\;+O(\ell^{-1/2} \phi^{1/2})  .    \label{c1}
\end{align}
Plugging  \eqref{c1} in \eqref{c2}, we finally have
\begin{align*}
P''_\ell(\cos \phi)
&= \frac{(\ell+1) \ell}{\sin^2 \phi} \{   - s_1(\ell, \phi)  \cos \psi^-_\ell   +  s_2(\ell, \phi)  \cos \psi^+_\ell      \} + \frac{\ell+1}{\sin^3 \phi} \{    - s_1(\ell, \phi) \cos \psi^+_{\ell-1}    - s_2(\ell, \phi) \cos \psi^-_{\ell-1}       \}\\
&\;\;+ \frac{(\ell+1) \ell}{\sin^4 \phi}    s_3(\ell, \phi) [ \cos^2 \phi - 2 \cos \phi +1 ]  + \frac{\ell+1}{\sin^4 \phi}  s_3(\ell, \phi) [1+2 \cos^2 \phi  - 5  \cos \phi +    2]   +O(\ell^{1/2} \phi^{-7/2})  \\
&= \sqrt{\frac 2 \pi } \frac{ \ell^{2-1/2}}{\sin^{2+1/2} \phi} \{   -   \cos \psi^-_\ell   + \frac{1}{8 \ell \phi}  \cos \psi^+_\ell      \} +\sqrt{\frac 2 \pi }  \frac{\ell^{1-1/2}}{\sin^{3+1/2} \phi} \{    - \cos \psi^+_{\ell-1}    -\frac{1}{8 \ell \phi} \cos \psi^-_{\ell-1}       \}\\
&\;\;+ \ell^2   s_3(\ell, \phi)   + \frac{\ell}{\sin^2 \phi}  s_3(\ell, \phi) +O(\ell^{1/2} \phi^{-7/2}),
\end{align*}
where
\begin{align*}
 \ell^2   s_3(\ell, \phi)   + \frac{\ell}{\sin^2 \phi}  s_3(\ell, \phi)&=\ell^2  [O(\ell^{-5/2} \phi^{-5/2}) + O(\ell^{-3/2} \phi^{-1/2})]   + \frac{\ell}{\sin^2 \phi} [O(\ell^{-5/2} \phi^{-5/2}) + O(\ell^{-3/2} \phi^{-1/2})]\\
&= O(\ell^{1/2}\phi^{-1/2})  +O(\ell^{-3/2} \phi^{-9/2}).
\end{align*}
To obtain the asymptotic behaviour of the third derivative in \eqref{P'''}, we write
\begin{align*}
P'''_\ell(x)&=\frac{\ell+1}{(x^2-1)^3} [(-\ell^2 x^3-\ell (3 x + 5 x^3) -9x-6 x^3) P_\ell(x)+(3 \ell^2 x^2+ \ell(3+18 x^2)+6+27 x^2) P_{\ell+1}(x)\\
&\;\;-(3 \ell^2 x+18 \ell x+24 x)P_{\ell+2}(x)+(\ell^2+5 \ell+6) P_{\ell+3}(x)]\\
&=\frac{\ell^2(\ell+1)}{(x^2-1)^3} [- x^3 P_\ell(x)+3  x^2 P_{\ell+1}(x)-3  x P_{\ell+2}(x)+ P_{\ell+3}(x)]\\
&\;\;+\frac{\ell (\ell+1)}{(x^2-1)^3} [- (3 x + 5 x^3) P_\ell(x)+ (3+18 x^2) P_{\ell+1}(x)- 18 xP_{\ell+2}(x)+5  P_{\ell+3}(x)]\\
&\;\;+\frac{\ell+1}{(x^2-1)^3} [-(9x+6 x^3) P_\ell(x)+(6+27 x^2) P_{\ell+1}(x)- 24 xP_{\ell+2}(x)+6 P_{\ell+3}(x)].
\end{align*}
Now, writing Legendre polynomials in terms of $s_1, s_2$ and $s_3$
and applying \eqref{infinite-jest-1} and \eqref{infinite-jest-2}
with $m=2$, we have, for example, that
\begin{align*}
&3 s_1(\ell+1,\phi) \cos^2 \phi \cos \psi^-_{\ell+1}-3 s_1(\ell+2,\phi) \cos \phi \cos \psi^-_{\ell+2}+s_1(\ell+3,\phi) \cos \psi^-_{\ell+3}\\
&=3 s_1(\ell,\phi) \cos^2 \phi \cos \psi^-_{\ell+1}-3 s_1(\ell,\phi) \cos \phi \cos \psi^-_{\ell+2}+s_1(\ell,\phi) \cos \psi^-_{\ell+3}\\
&+ \sqrt{\frac{2}{\pi}} \frac{1}{\sqrt{\sin \phi}} \Big\{  3 \cos^2 \phi \cos \psi^-_{\ell+1} \Big[ -\frac 1 2 \ell^{-3/2} +\frac 3 2 \ell^{-5/2}+O(\ell^{-7/2}) \Big] \\
&- 3 \cos \phi \cos \psi^-_{\ell+2} \Big[  -\ell^{-3/2} +\frac 3 2 2^2 \ell^{- 5/2}+O(\ell^{-7/2}) \Big]\\
&+ \cos \psi^-_{\ell+3} \Big[  -\frac 3 2 \ell^{-3/2} + \frac 3 2 3^2 \ell^{-5/2}+O(\ell^{-7/2})\Big] \Big\}
\end{align*}
and since we also have $\cos \psi^-_{\ell+3}=\cos^2 \phi \cos
\psi^-_{\ell+1}- \sin^2 \phi \cos \psi^{-}_{\ell+1}-2 \sin \phi
\cos \phi \sin \psi^-_{\ell+1} $, we have

\begin{align*}
&3 s_1(\ell,\phi) \cos^2 \phi \cos \psi^-_{\ell+1}-3 s_1(\ell,\phi) \cos \phi \cos \psi^-_{\ell+2}+s_1(\ell,\phi) \cos \psi^-_{\ell+3}\\
&+ \sqrt{\frac{2}{\pi}} \frac{1}{\sqrt{\sin \phi}} \Big\{  O(\ell^{-7/2})\\
&+3 \sin \phi \cos \phi \sin \psi^-_{\ell+1} \Big[  \frac 3 2 2^2 \ell^{- 5/2}+O(\ell^{-7/2}) \Big]\\
&-\sin^2 \phi \cos \psi^-_{\ell+1} \Big[  -\frac 3 2 \ell^{-3/2} + \frac 3 2 3^2 \ell^{-5/2}+O(\ell^{-7/2})\Big]\\
&-2 \sin \phi \cos \phi \sin \psi^-_{\ell+1} \Big[\frac 3 2 3^2 \ell^{-5/2}+O(\ell^{-7/2})  \Big]  \Big\}\\
&=3 s_1(\ell,\phi) \cos^2 \phi \cos \psi^-_{\ell+1}-3 s_1(\ell,\phi) \cos \phi \cos \psi^-_{\ell+2}+s_1(\ell,\phi) \cos \psi^-_{\ell+3}\\
&+O(\ell^{-5/2} \phi^{1/2}).
\end{align*}
Exploiting the identities
\begin{align*}
&-\cos^3 \phi \cos \psi^{\pm}_{\ell}+3 \cos^2 \phi \cos \psi^{\pm}_{\ell+1}-3 \cos \phi \cos \psi^{\pm}_{\ell+2}+\cos \psi^{\pm}_{\ell+3}= {\pm} \sin^3 \phi \cos \psi_\ell^{\mp},\\
&-(3 \cos \phi+5 \cos^3 \phi) \cos \psi^{\pm}_{\ell}+(3+18 \cos^2 \phi) \cos \psi^{\pm}_{\ell+1}-18
\cos \phi \cos \psi^{\pm}_{\ell+2}+5 \cos \psi^{\pm}_{\ell+3}=\frac 1 2 \sin^2 \phi [\cos \psi^{\pm}_{\ell+1}+5 \cos \psi^{\pm}_{\ell-1}],\\
&(-9 \cos \phi-6 \cos^3 \phi) \cos \psi^{\pm}_\ell+(6+27 \cos^2 \phi) \cos \psi^{\pm}_{\ell+1}-24 \cos \phi
\cos \psi^{\pm}_{\ell+2}+6 \cos \psi^{\pm}_{\ell+3}=-3 \sin \phi \cos \phi \sin \psi^{\pm}_{\ell-1},
\end{align*}
we get
\begin{align*}
&\hspace{6cm}P'''_\ell(\cos \phi)\\
&\;\;=-\frac{\ell^2(\ell+1)}{\sin^6 \phi} \big\{  - s_1(\ell, \phi)  \sin^3 \phi \cos \psi_{\ell}^+ -  s_2(\ell, \phi)  \sin^3 \phi \cos \psi_\ell^-  +  s_3(\ell, \phi) [-\cos^3 \phi+3 \cos^2 \phi-3 \cos \phi+1] \big\}\\
&\;\;-\frac{\ell (\ell+1)}{\sin^6 \phi} \big\{   s_1(\ell, \phi) \frac 1 2 \sin^2 \phi [\cos \psi^{-}_{\ell+1}+5 \cos \psi^{-}_{\ell-1}]  - s_2(\ell, \phi) \frac 1 2 \sin^2 \phi [\cos \psi^{+}_{\ell+1}+5 \cos \psi^{+}_{\ell-1}]         \\
&\;\; +  s_3(\ell, \phi) [-(3 \cos \phi+5 \cos^3 \phi)+3+18 \cos^2 \phi-18 \cos \phi+5] \big\}\\
&\;\;-\frac{\ell+1}{\sin^6 \phi} \big\{  - s_1(\ell, \phi) 3 \sin \phi \cos \phi \sin \psi^{-}_{\ell-1} + s_2(\ell, \phi) 3 \sin \phi \cos \phi \sin \psi^{+}_{\ell-1}  \\
&\;\;+  s_3(\ell, \phi) [-9 \cos \phi-6 \cos^3 \phi+6+27 \cos^2 \phi-24 \cos \phi+6] \big\}+O(\ell^{1/2} \phi^{-11/2}) \\
&=\sqrt{\frac 2 \pi}\frac{\ell^{3-1/2}}{\sin^{3+1/2} \phi} [    \cos \psi_{\ell}^+ + \frac{1}{8 \ell \phi}    \cos \psi_{\ell}^- ]- \sqrt{\frac 2 \pi} \frac{\ell^{2-1/2} }{\sin^{4+1/2} \phi} [  \frac 1 2  (\cos \psi^{-}_{\ell+1}+5 \cos \psi^{-}_{\ell-1})  -\frac{1}{8 \ell \phi} \frac 1 2  (\cos \psi^{+}_{\ell+1}+5 \cos \psi^{+}_{\ell-1})     ]\\
&\;\;+\sqrt{\frac 2 \pi} \frac{\ell^{1-1/2}}{\sin^{5+1/2} \phi} [  3  \cos \phi \sin \psi^{-}_{\ell-1} - \frac{1}{8 \ell \phi} 3  \cos \phi \sin \psi^{+}_{\ell-1} ] + \ell^3  s_3(\ell, \phi)-\frac{\ell^2}{\sin^2 \phi}   s_3(\ell, \phi)-\frac{\ell}{\sin^4 \phi}  s_3(\ell, \phi) +O(\ell^{1/2} \phi^{-11/2}) .
\end{align*}
Finally, to prove \eqref{P''''}, we start from
\begin{align*}
&\hspace{6cm}P''''_\ell(x)\\
&=\frac{\ell+1}{(x^2-1)^4} [(\ell^3 x^4+9 \ell^2 x^4+6 \ell^2 x^2+26 \ell x^4+42 \ell x^2+3 \ell+24 x^4+72 x^2+9) P_\ell(x)\\
& \;\;+(-4 \ell^3 x^3-42 \ell^2 x^3-12 \ell^2 x-146 \ell x^3-78 \ell x-168x^3-111 x) P_{\ell+1}(x)\\
&\;\;+(6 \ell^3 x^2+66 \ell^2 x^2+6 \ell^2+231 \ell x^2+30 \ell+246 x^2+36)P_{\ell+2}(x)\\
&\;\;+(-4 \ell^3 x-42 \ell^2 x-134\ell x-132 x)P_{\ell+3}(x)\\
&\;\;+(\ell^3+9\ell^2+26 \ell+24)P_{\ell+4}(x)] \\
&=\frac{\ell^3(\ell+1)}{(x^2-1)^4} [ x^4 P_\ell(x)-4  x^3 P_{\ell+1}(x)+6  x^2 P_{\ell+2}(x)-4  x P_{\ell+3}(x)+P_{\ell+4}(x)] \\
&+\frac{\ell^2(\ell+1)}{(x^2-1)^4} [(9  x^4+6  x^2) P_\ell(x)-(42  x^3+12  x) P_{\ell+1}(x)+(66  x^2+6 )P_{\ell+2}(x)-42  x P_{\ell+3}(x)+9 P_{\ell+4}(x)] \\
&+\frac{\ell(\ell+1)}{(x^2-1)^4} [(26  x^4+42  x^2+3) P_\ell(x)-(146  x^3+78  x) P_{\ell+1}(x)+(231 x^2+30 )P_{\ell+2}(x)-134  xP_{\ell+3}(x)+26 P_{\ell+4}(x)] \\
&+\frac{\ell+1}{(x^2-1)^4} [(24 x^4+72 x^2+9) P_\ell(x)-(168x^3+111 x) P_{\ell+1}(x)+(246 x^2+36)P_{\ell+2}(x)-132 x P_{\ell+3}(x)+24 P_{\ell+4}(x)].
\end{align*}
Let us recall the further identities
\begin{align*}
&\cos^4 \phi \cos \psi^{\pm}_\ell-4 \cos^3 \phi \cos \psi^{\pm}_{\ell+1}+6 \cos^2 \phi \cos \psi^{\pm}_{\ell+2}-4 \cos \phi \cos \psi^{\pm}_{\ell+3}+\cos \psi^{\pm}_{\ell+4}=\sin^4 \phi \cos \psi^{\pm}_\ell, \\
&\cos^4 \phi -4 \cos^3 \phi+6 \cos^2 \phi -4 \cos \phi +1=(\cos \phi-1)^4,
\end{align*}
\begin{align*}
&(6 \cos^2\phi+9 \cos^4 \phi) \cos \psi^{\pm}_{\ell}-(12 \cos \phi+42\cos^3\phi) \cos \psi^{\pm}_{\ell+1}+(6+66 \cos^2 \phi) \cos \psi^{\pm}_{\ell+2} -42 \cos \phi \cos \psi^{\pm}_{\ell+3} \\
&\;\;+9 \cos\psi^{\pm}_{\ell+4}=- \frac 3 2 \sin^3 \phi   [\sin \psi^{\pm}_{\ell+1} +3 \sin \psi^{\pm}_{\ell-1} ],\\
&(6 \cos^2\phi+9 \cos^4 \phi) -(12 \cos \phi+42\cos^3\phi) +(6+66 \cos^2 \phi)  -42 \cos \phi  +9 =3 (\cos \phi-1)^3 (3 \cos \phi-5),
\end{align*}
\begin{align*}
&(3+42 \cos^2 \phi+26 \cos^4\phi) \cos \psi^{\pm}_{\ell}+(-78 \cos \phi-146 \cos^3 \phi) \cos \psi^{\pm}_{\ell+1}+(30+231 \cos^2\phi) \cos \psi^{\pm}_{\ell+2}\\
&\;\;-134 \cos \phi \cos \psi^{\pm}_{\ell+3}+26 \cos \psi^{\pm}_{\ell+4}=-\frac 1 2 \sin^2 \phi  [\cos \psi^{\pm}_{\ell+2}+16 \cos \psi^{\pm}_{\ell} +13 \cos \psi^{\pm}_{\ell-2} ],\\
&(3+42 \cos^2 \phi+26 \cos^4\phi)+(-78 \cos \phi-146 \cos^3 \phi) +(30+231 \cos^2\phi)\\
&\;\; -134 \cos \phi +26 =(\cos \phi-1)^2 (59-94 \cos \phi+26 \cos^2 \phi),
\end{align*}
 \begin{align*}
& (9+72 \cos^2 \phi+24 \cos^4 \phi) \cos \psi^{\pm}_{\ell}+(-111 \cos \phi-168 \cos^3 \phi) \cos \psi^{\pm}_{\ell+1}+(36+246 \cos^2\phi) \cos\psi^{\pm}_{\ell+2}\\
  &\;\; -132 \cos \phi \cos \psi^{\pm}_{\ell+3}+24 \cos \psi^{\pm}_{\ell+4}={\pm}3  \sin \phi  (5-4\sin^2 \phi) \cos \psi^{\mp}_{\ell-1},
   \end{align*}
from which we obtain
\begin{align*}
&\hspace{6cm}P''''_\ell(\cos \phi)\\
&=\frac{\ell^3(\ell+1)}{\sin^8 \phi} \{ s_1(\ell, \phi) \sin^4 \phi \cos \psi^{-}_\ell - s_2(\ell, \phi) \sin^4 \phi \cos \psi^{+}_\ell+ s_3(\ell, \phi) (\cos \phi-1)^4\}\\
&\;\;+\frac{\ell^2(\ell+1)}{\sin^8 \phi} \{  - s_1(\ell, \phi)  \frac 3 2 \sin^3 \phi   [\sin \psi^{-}_{\ell+1} +3 \sin \psi^{-}_{\ell-1} ]+ s_2(\ell, \phi)  \frac 3 2 \sin^3 \phi   [\sin \psi^{+}_{\ell+1} +3 \sin \psi^{+}_{\ell-1} ]+ s_3(\ell, \phi) 3 (\cos \phi-1)^3 (3 \cos \phi-5)\} \\
&\;\;+\frac{\ell(\ell+1)}{\sin^8 \phi} \{ - s_1(\ell, \phi) \frac 1 2 \sin^2 \phi  [\cos \psi^{-}_{\ell+2}+16 \cos \psi^{-}_{\ell} +13 \cos \psi^{-}_{\ell-2} ]+ s_2(\ell, \phi) \frac 1 2 \sin^2 \phi  [\cos \psi^{+}_{\ell+2}+16 \cos \psi^{+}_{\ell} +13 \cos \psi^{+}_{\ell-2} ]\\
&\;\;\;\;\;\;+ s_3(\ell, \phi) (\cos \phi-1)^2 (59-94 \cos \phi+26 \cos^2 \phi)\} \\
&+\frac{\ell+1}{\sin^8 \phi} \{  - s_1(\ell, \phi) 3  \sin \phi  (5-4\sin^2 \phi) \cos \psi^+_{\ell-1}- s_2(\ell, \phi) 3  \sin \phi  (5-4\sin^2 \phi) \cos \psi^-_{\ell-1}\\
&\;\;\;\;\;\;+ s_3(\ell, \phi) 3 [79-123 \cos \phi+57 \cos(2 \phi) -14 \cos(3 \phi)+\cos(4 \phi)] \}  +O(\ell^{1/2} \phi^{-15/2})\\
&=\frac{\ell^3(\ell+1)}{\sin^4 \phi} \{ s_1(\ell, \phi)  \cos \psi^{-}_\ell - s_2(\ell, \phi)  \cos \psi^{+}_\ell \} +\ell^4   s_3(\ell, \phi)  \\
&\;\;+\frac{\ell^2(\ell+1)}{\sin^5 \phi} \{  - s_1(\ell, \phi)  \frac 3 2    [\sin \psi^{-}_{\ell+1} +3 \sin \psi^{-}_{\ell-1} ]+ s_2(\ell, \phi)  \frac 3 2   [\sin \psi^{+}_{\ell+1} +3 \sin \psi^{+}_{\ell-1} ] \} +\frac{\ell^3}{\sin^2 \phi}  s_3(\ell, \phi) 3  (3 \cos \phi-5) \\
&\;\;+\frac{\ell(\ell+1)}{\sin^6 \phi} \{ - s_1(\ell, \phi) \frac 1 2   [\cos \psi^{-}_{\ell+2}+16 \cos \psi^{-}_{\ell} +13 \cos \psi^{-}_{\ell-2} ]+ s_2(\ell, \phi) \frac 1 2   [\cos \psi^{+}_{\ell+2}+16 \cos \psi^{+}_{\ell} +13 \cos \psi^{+}_{\ell-2} ]\} \\
&\;\;\;\;\;\;+\frac{\ell^2}{\sin^4 \phi}   s_3(\ell, \phi)  (59-94 \cos \phi+26 \cos^2 \phi) \\
&\;\;+\frac{\ell+1}{\sin^7 \phi} \{ - s_1(\ell, \phi) 3    (5-4\sin^2 \phi) \cos \psi^+_{\ell-1}- s_2(\ell, \phi) 3    (5-4\sin^2 \phi) \cos \psi^-_{\ell-1}\}+\frac{\ell}{\sin^6 \phi}  s_3(\ell, \phi)+O(\ell^{1/2} \phi^{-15/2})\\
&=\sqrt{\frac 2 \pi} \frac{\ell^{4-1/2}}{\sin^{4+1/2} \phi} \left[  \cos \psi^{-}_\ell -\frac{1}{8 \ell \phi}  \cos \psi^{+}_\ell\right] +\sqrt{\frac 2 \pi}\frac{\ell^{3-1/2}}{\sin^{5+1/2} \phi} \left[- \frac 3 2    \left[\sin \psi^{-}_{\ell+1} +3 \sin \psi^{-}_{\ell-1} \right]+\frac{1}{8 \ell \phi}   \frac 3 2   \left[\sin \psi^{+}_{\ell+1} +3 \sin \psi^{+}_{\ell-1} \right]\right]   \\
&\;\;+\sqrt{\frac 2 \pi}\frac{\ell^{2-1/2}}{\sin^{6+1/2} \phi} \left[- \frac 1 2   [\cos \psi^{-}_{\ell+2}+16 \cos \psi^{-}_{\ell} +13 \cos \psi^{-}_{\ell-2} ]+\frac{1}{8 \ell \phi}  \frac 1 2   \left[\cos \psi^{+}_{\ell+2}+16 \cos \psi^{+}_{\ell} +13 \cos \psi^{+}_{\ell-2} \right]\right]\\
&\;\;+\sqrt{\frac 2 \pi}\frac{\ell^{1-1/2}}{\sin^{7+1/2} \phi} \left[- 3    (5-4\sin^2 \phi) \cos \psi^+_{\ell-1}-\frac{1}{8 \ell \phi}  3    (5-4\sin^2 \phi) \cos \psi^-_{\ell-1}\right]\\
&\;\;+\ell^4   s_3(\ell, \phi)+\frac{\ell^3}{\sin^2 \phi}  s_3(\ell, \phi)   +\frac{\ell^2}{\sin^4 \phi}   s_3(\ell, \phi)   +\frac{\ell}{\sin^6 \phi}  s_3(\ell, \phi)+O(\ell^{1/2} \phi^{-15/2}).
\end{align*}
\end{proof}

\noindent In what follows we write $f_{\ell}(\phi) \simeq
g_{\ell}(\phi)$, if there exists a constant $c_0>0$ such that
\begin{equation*}
-c_0 \; g_{\ell}(\phi) \le f_{\ell}(\phi) \le c_0 \;
g_{\ell}(\phi),
\end{equation*}
for all $\ell \ge 1$ and $\phi \in [C/\ell, \pi/2]$. From Lemma \ref{hilbs}
it follows immediately that:

\begin{lemma}
\label{hilbs_apx} Uniformly in $\ell \ge 1$ and $\phi \in [C/\ell,
\pi/2]$, we have
\begin{align*}
&P^{\prime}_\ell(\cos \phi) \simeq \frac{\ell^{1-1/2}}{ \sin^{1+1/2} \phi}
+R_1(\ell,\phi), & P^{\prime\prime}_\ell(\cos \phi) \simeq \sum_{i=0}^1 \frac{
\ell^{2-i-1/2}}{ \sin^{2+i+1/2} \phi} +R_2(\ell,\phi), \\
&P^{\prime\prime\prime}_\ell(\cos \phi) \simeq \sum_{i=0}^2 \frac{
\ell^{3-i-1/2}}{ \sin^{3+i+1/2} \phi} +R_3(\ell,\phi), & P^{\prime\prime%
\prime\prime}_\ell(\cos \phi) \simeq \sum_{i=0}^3 \frac{ \ell^{4-i-1/2}}{%
\sin^{4+i+1/2} \phi}+R_4(\ell,\phi).
\end{align*}
\end{lemma}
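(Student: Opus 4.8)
The plan is to read off the claimed bounds directly from the explicit asymptotic expansions \eqref{P'}--\eqref{P''''} established in Lemma \ref{hilbs}, so that the only work is a crude term-by-term majorisation. First I would observe that every trigonometric factor occurring in those expansions --- the quantities $\cos\psi^{\pm}_{\ell+k}$ and $\sin\psi^{\pm}_{\ell+k}$ for $k\in\{-2,-1,0,1,2\}$, as well as $\cos\phi$ and the polynomial-in-$\cos\phi$ coefficients such as $5-4\sin^{2}\phi$ --- is bounded in absolute value by a universal constant, uniformly in $\ell$ and $\phi$. Second, I would note that the internal correction factor $\tfrac{1}{8\ell\phi}$ appearing inside each bracket satisfies $\tfrac{1}{8\ell\phi}\le\tfrac{1}{8C}$ on the range $\phi\in[C/\ell,\pi/2]$, since there $\ell\phi\ge C$; hence each bracketed expression of the form $[\,\cos\psi^{-}_{\ell}-\tfrac{1}{8\ell\phi}\cos\psi^{+}_{\ell}\,]$ and its higher-order analogues is $O(1)$, with an implied constant depending only on $C$.

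Combining these two observations, each summand of the form $\sqrt{2/\pi}\,\dfrac{\ell^{\,k-i-1/2}}{\sin^{\,k+i+1/2}\phi}\,[\,\cdots\,]$ occurring in the expansion of $P^{(k)}_{\ell}(\cos\phi)$ (with $i=0,\dots,k-1$) is bounded in modulus by $c_{0}\,\dfrac{\ell^{\,k-i-1/2}}{\sin^{\,k+i+1/2}\phi}$ for a suitable $c_{0}=c_{0}(C)>0$. Summing over $i=0,\dots,k-1$ and adding the remainder $R_{k}(\ell,\phi)$ --- which already appears in the statement of Lemma \ref{hilbs} in exactly the form needed here --- yields the asserted relation for $k=1,2,3,4$ respectively. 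Since the symbol $\simeq$ encodes only a two-sided size bound and no lower bound on $|P^{(k)}_{\ell}(\cos\phi)|$ is claimed, nothing further is required.

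The only point demanding a moment's care --- and it is the closest thing to an obstacle --- is the bookkeeping of the various error contributions: in the derivation of \eqref{P''}, \eqref{P'''} and \eqref{P''''} in Lemma \ref{hilbs} one picks up, besides $R_{k}(\ell,\phi)$, extra terms of the type $\ell^{k}s_{3}(\ell,\phi)$ and $\ell^{\,k-j}\sin^{-2j}\phi\cdot s_{3}(\ell,\phi)$. I would check once and for all that, because $\phi$ is bounded above by $\pi/2$ and bounded below by $C/\ell$ with $C$ large, every such contribution is absorbed into $R_{k}(\ell,\phi)$; for instance $\ell^{1/2}\phi^{-1/2}\le(\pi/2)^{3}\ell^{1/2}\phi^{-7/2}$ and $\ell^{-3/2}\phi^{-9/2}\le C^{-1}\ell^{1/2}\phi^{-7/2}$ on the relevant range, and analogously in the higher-order cases. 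Granting this elementary comparison of powers of $\ell$ and $\phi$, the statement of Lemma \ref{hilbs_apx} follows immediately from Lemma \ref{hilbs}.
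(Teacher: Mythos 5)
Your proof is correct and takes essentially the same approach as the paper: both read the bounds directly off the asymptotic expansions \eqref{P'}--\eqref{P''''} of Lemma \ref{hilbs}, bound every trigonometric factor by a universal constant, and use $\tfrac{1}{8\ell\phi}\le\tfrac{1}{8C}$ on the range $\phi\ge C/\ell$. The paper simply writes out the case $k=1$ explicitly and declares the rest analogous, while you phrase the term-by-term majorisation generically; your closing remark on absorbing the $\ell^{k}s_{3}$-type contributions into $R_{k}$ is in fact part of the proof of Lemma \ref{hilbs} rather than of the present lemma, but it does no harm.
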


\begin{proof}
From \eqref{P'} and since $\phi \in [C/\ell, \pi/2]$, we obtain
\begin{align*}
P^{\prime}_\ell(\cos \phi)&\le  \sqrt{\frac 2 \pi} \frac{\ell^{1-1/2}}{%
\sin^{1+1/2} \phi} \big[ 1+ \frac{1}{8 \ell \phi}  \big] +R_1(\ell,\phi) \le \sqrt{\frac 2 \pi} \frac{\ell^{1-1/2}}{%
\sin^{1+1/2} \phi} \big[ 1+ \frac{1}{8 C}  \big] +R_1(\ell,\phi),
\end{align*}
and
\begin{align*}
P^{\prime}_\ell(\cos \phi)&\ge \sqrt{\frac 2 \pi} \frac{\ell^{1-1/2}}{%
\sin^{1+1/2} \phi} \big[ -1 - \frac{1}{8 \ell \phi}  \big] +R_1(\ell,\phi) \ge - \sqrt{\frac 2 \pi} \frac{\ell^{1-1/2}}{%
\sin^{1+1/2} \phi} \big[ 1+ \frac{1}{8 C}  \big] +R_1(\ell,\phi).
\end{align*}
The proof is analogous in the other cases.
\end{proof}

From Lemma \ref{hilbs_apx} we obtain the following asymptotics for the
elements of the covariance matrix $\Sigma$.

\begin{lemma}
\label{alphaalpha} For every $\ell \ge 1$ and $\phi \in [C/\ell, \pi/2]$, we
have
\begin{align*}
\frac{\alpha_{1,\ell}(\phi)}{\ell^2}&\simeq \frac{1}{ \ell^{1+1/2}
\sin^{1+1/2} \phi}  + O(\ell^{-5/2} \phi^{-5/2}) \\
\frac{\alpha_{2,\ell}(\phi)}{\ell^2}&\simeq \sum_{i=0}^1 \frac{ 1}{
\ell^{i+1/2} \sin^{i+1/2} \phi} + O(\ell^{-3/2}\phi^{-3/2})  \\
\frac{\beta_{1,\ell}(\phi)}{\ell^3}&\simeq \sum_{i=0}^1 \frac{ 1}{
\ell^{1+i+1/2} \sin^{1+i+1/2} \phi} +O(\ell^{-5/2}\phi^{-5/2})  \\
\frac{\beta_{2,\ell}(\phi)}{\ell^3}& \simeq \sum_{i=0}^1 \frac{1}{%
\ell^{1+i+1/2} \sin^{1+i +1/2} \phi}+ \frac{1}{\ell^{2+1/2} \sin^{1/2} \phi}%
+ O(\ell^{-5/2} \phi^{-5/2})
\\
\frac{\beta_{3,\ell}(\phi)}{\ell^3} & \simeq \sum_{i=0}^2 \frac{1}{%
\ell^{i+1/2} \sin^{i+1/2} \phi} + \frac{1}{\ell^{2+1/2} \sin^{1/2} \phi} +
O(\ell^{-5/2} \phi^{-5/2}) \\
\frac{\gamma_{1,\ell}(\phi)}{\ell^4} & \simeq \sum_{i=0}^1 \frac{1}{%
\ell^{2+i+1/2} \sin^{2+i+1/2} \phi}+ \frac{1}{\ell^{3+1/2} \sin^{1+1/2} \phi}%
+O(\ell^{-7/2} \phi^{-7/2})
\\
\frac{\gamma_{2,\ell}(\phi)}{\ell^4} &\simeq \sum_{i=0}^1 \frac{1}{%
\ell^{1+i+1/2} \sin^{1+i +1/2} \phi}+ \frac{1}{\ell^{2+1/2} \sin^{1/2} \phi}
+O(\ell^{-7/2} \phi^{-7/2}) \\
\frac{\gamma_{3,\ell}(\phi)}{\ell^4} &\simeq \sum_{i=0}^2 \frac{1}{%
\ell^{1+i+1/2} \sin^{1+i+1/2} \phi} + \sum_{i=0}^1 \frac{1}{\ell^{2+i+1/2}
\sin^{i+1/2} \phi} + O(\ell^{-7/2} \phi^{-7/2}) \\
\frac{\gamma_{4,\ell}(\phi)}{\ell^4} &\simeq \sum_{i=0}^3 \frac{1}{%
\ell^{i+1/2} \sin^{i+1/2} \phi}+ \sum_{i=0}^1 \frac{1}{\ell^{2+i+1/2}
\sin^{i+1/2} \phi} + O(\ell^{-7/2} \phi^{-7/2}).
\end{align*}
\end{lemma}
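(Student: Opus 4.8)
The plan is to read off the claimed estimates directly from Lemma \ref{hilbs_apx} by substituting the latter into the definitions of $\alpha_{i,\ell}(\phi)$, $\beta_{i,\ell}(\phi)$ and $\gamma_{i,\ell}(\phi)$ recorded in Appendix \ref{cov_matx_ev}. Each of these nine quantities is a fixed linear combination of $P'_\ell(\cos\phi),P''_\ell(\cos\phi),P'''_\ell(\cos\phi),P''''_\ell(\cos\phi)$ whose coefficients are polynomials in $\cos\phi$ and $\sin\phi$; after dividing $\alpha_{i,\ell}$ by $\ell^2$, $\beta_{i,\ell}$ by $\ell^3$ and $\gamma_{i,\ell}$ by $\ell^4$ — the normalisations dictated by the orders of $P^{(k)}_\ell$ — the stated formulas should fall out simply by multiplying out the trigonometric prefactors and collecting the resulting monomials of the form $\ell^{-a}\sin^{-b}\phi$.

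First I would record once and for all the elementary facts used at every step: on the range $\phi\in[C/\ell,\pi/2]$ one has $|\cos\phi|\le 1$, $0<\sin\phi\le\phi\le\pi/2$, and $\ell\phi\ge C$, hence also $\ell\sin\phi\ge c\,C$ for a universal $c>0$ (because $\sin\phi\ge\tfrac{2}{\pi}\phi$ on $[0,\pi/2]$). The bound $|\cos\phi|\le 1$ lets me attach any trigonometric prefactor to a term without changing its order; the relations $\sin\phi\le\phi\le\pi/2$ together with $\ell\phi\ge C$ are precisely what is needed to see that a monomial $\ell^{-a}\sin^{-b}\phi$ is dominated, up to a universal constant, by $\ell^{-a'}\sin^{-b'}\phi$ whenever $a-a'=b-b'\ge 0$, and, more importantly, to check that each error term $R_k(\ell,\phi)$ (cf. the definitions preceding Lemma \ref{hilbs}), after multiplication by the relevant power of $\sin\phi$ and division by the relevant power of $\ell$, collapses into the single $O(\cdot)$ remainder displayed in the corresponding line of the statement.

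Then I would treat the nine quantities one at a time. The $\alpha$'s are immediate: $\alpha_{1,\ell}=P'_\ell(\cos\phi)$, so $\alpha_{1,\ell}/\ell^2$ is just the first estimate of Lemma \ref{hilbs_apx} divided by $\ell^2$; and $\alpha_{2,\ell}=-\sin^2\phi\,P''_\ell(\cos\phi)+\cos\phi\,P'_\ell(\cos\phi)$, where multiplying the two-term estimate for $P''_\ell$ by $\sin^2\phi$ produces, after dividing by $\ell^2$, the two surviving monomials $\ell^{-1/2}\sin^{-1/2}\phi$ and $\ell^{-3/2}\sin^{-3/2}\phi$, the $\cos\phi\,P'_\ell$ piece and the $\sin^2\phi\,R_2$ remainder both being of strictly smaller order. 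The $\beta$'s and $\gamma$'s go the same way; the only genuinely new phenomenon is that a low-order derivative multiplied by a low power of $\sin\phi$ — for instance the summand $\sin\phi\,P'_\ell(\cos\phi)$ inside $\beta_{2,\ell}$ — creates a monomial such as $\ell^{-5/2}\sin^{-1/2}\phi$ that is \emph{not} dominated by the leading block $\sum_i\ell^{-1-i-1/2}\sin^{-1-i-1/2}\phi$ (it carries a lower power of $\ell$ but also a lower inverse power of $\sin\phi$), and therefore has to be kept as the separate term $\ell^{-2-1/2}\sin^{-1/2}\phi$ appearing in the statement; the analogous extra monomials in $\gamma_{1,\ell},\dots,\gamma_{4,\ell}$ arise from exactly this mechanism applied to the $P'_\ell$ and $P''_\ell$ contributions inside $\gamma_{i,\ell}$.

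The routine but laborious part — and the one needing care — is the bookkeeping for $\gamma_{3,\ell}$ and especially $\gamma_{4,\ell}$: the latter involves $P''''_\ell$, whose estimate already carries four monomials, together with $P'''_\ell$, $P''_\ell$, $P'_\ell$ and the heaviest trigonometric weights ($\sin^4\phi$, $\sin^2\phi\cos\phi$, and so on), so one must verify that after multiplication and normalisation the whole collection of monomials collapses exactly to $\sum_{i=0}^{3}\ell^{-i-1/2}\sin^{-i-1/2}\phi+\sum_{i=0}^{1}\ell^{-2-i-1/2}\sin^{-i-1/2}\phi$ and that the scaled pieces of $R_1,\dots,R_4$ are all $O(\ell^{-7/2}\phi^{-7/2})$. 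I anticipate no conceptual obstacle — no cancellation has to be exploited, only upper bounds — but this is the step most prone to arithmetic slips, so I would organise it by first tabulating, for each $\gamma_{i,\ell}$, the pair (power of $\ell$, power of $\sin\phi$) produced by every (prefactor)$\times$(derivative-monomial) product, then discarding the dominated pairs using the inequalities above, and finally converting the scaled $R_k$'s in the same fashion; the same tabulation handles $\beta_{i,\ell}$ and $\gamma_{1,\ell},\gamma_{2,\ell},\gamma_{3,\ell}$ with fewer entries.
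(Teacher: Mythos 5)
Your plan — substituting the two-sided estimates of Lemma \ref{hilbs_apx} into the explicit expressions for $\alpha_{i,\ell},\beta_{i,\ell},\gamma_{i,\ell}$ from Appendix \ref{cov_matx_ev} and then bookkeeping the resulting monomials in $\ell^{-a}\sin^{-b}\phi$ — is exactly the (unwritten) computation the paper invokes, and it reproduces the stated bounds. One small correction to your running commentary: the extra monomial $\ell^{-5/2}\sin^{-1/2}\phi$ that you say is ``not dominated'' by the leading block of $\beta_{2,\ell}/\ell^3$ actually is dominated (e.g.\ by $\ell^{-3/2}\sin^{-3/2}\phi$, since their ratio $\sin\phi/\ell$ is bounded; the sufficient dominance condition $a'\le a$, $b'\ge b$ with $\ell\ge1$, $\sin\phi\le1$ applies, even when your ``matched difference'' criterion $a-a'=b-b'\ge0$ does not), so listing it only gives a weaker but still valid upper bound and the overall argument is unaffected.
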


\begin{lemma}
\label{sine2} For $k=0,1,2,\dots $ and $n=1,2,3, \dots$, we have
\begin{align*}
\frac{1}{\ell^k} \int_{C/\ell}^{\pi/2} \frac{1}{\sin^n \phi} \sin \phi d
\phi=
\begin{cases}
O(\ell^{-k}) & \mathrm{for\;} n=1, \\
O(\ell^{-k} \log \ell) & \mathrm{for\;} n=2, \\
O(\ell^{n-k-2}) & \mathrm{for\;} n\ge 3.%
\end{cases}%
\end{align*}
\end{lemma}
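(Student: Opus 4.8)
The plan is to reduce the claim to an elementary one-variable integral estimate by comparing $\sin\phi$ with $\phi$. Indeed, on the range $\phi\in[C/\ell,\pi/2]$ we have the two-sided bound $\tfrac{2}{\pi}\,\phi\le\sin\phi\le\phi$, so that for every fixed $n$,
\begin{equation*}
\frac{1}{\sin^{n}\phi}\sin\phi=\sin^{1-n}\phi\;\asymp\;\phi^{1-n},
\end{equation*}
with constants depending only on $n$ (hence universal once $n$ is fixed). Therefore it suffices to estimate $\int_{C/\ell}^{\pi/2}\phi^{1-n}\,d\phi$ and then multiply by $\ell^{-k}$, and the whole lemma follows by splitting into the three cases $n=1$, $n=2$, $n\ge 3$ according to the behaviour of the antiderivative of $\phi^{1-n}$.

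\textbf{Carrying out the cases.} First I would handle $n=1$: here the integrand is $\equiv 1$, so $\int_{C/\ell}^{\pi/2}d\phi=\pi/2-C/\ell=O(1)$, giving $\ell^{-k}\cdot O(1)=O(\ell^{-k})$. For $n=2$ the integrand is $\asymp\phi^{-1}$, whose antiderivative is $\log\phi$; evaluating, $\int_{C/\ell}^{\pi/2}\phi^{-1}\,d\phi=\log(\pi/2)-\log(C/\ell)=\log\ell+O(1)=O(\log\ell)$, so multiplying by $\ell^{-k}$ yields $O(\ell^{-k}\log\ell)$. Finally, for $n\ge 3$ we have $2-n<0$, and
\begin{equation*}
\int_{C/\ell}^{\pi/2}\phi^{1-n}\,d\phi=\frac{1}{n-2}\Big[(C/\ell)^{2-n}-(\pi/2)^{2-n}\Big]\le\frac{1}{n-2}\,(C/\ell)^{-(n-2)}=O(\ell^{n-2}),
\end{equation*}
where the subtracted term $(\pi/2)^{2-n}$ is a harmless constant; multiplying by $\ell^{-k}$ produces $O(\ell^{n-k-2})$, as claimed. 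In each case the implied constants depend only on $n$ (and $C$), and in particular are uniform in $\ell\ge 1$.

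\textbf{Main obstacle.} There is essentially no substantive obstacle here: the only point requiring a moment's care is ensuring that the comparison $\sin\phi\asymp\phi$ and the resulting constants are genuinely uniform over $\ell\ge1$ and over the full window $\phi\in[C/\ell,\pi/2]$ (which they are, since $\sin$ is concave and increasing on $[0,\pi/2]$), and that in the case $n\ge 3$ one correctly identifies the lower endpoint $C/\ell$ as the one producing the dominant term. Everything else is the direct evaluation above.
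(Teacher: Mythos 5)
Your proof is correct. The paper states this lemma in the appendix without proof (only the companion Lemma \ref{sine4} is proved there, and it in fact invokes the present lemma), so the comparison $\tfrac{2}{\pi}\phi\le\sin\phi\le\phi$ on $[0,\pi/2]$ followed by the elementary evaluation of $\int_{C/\ell}^{\pi/2}\phi^{1-n}\,d\phi$ is exactly the intended argument, and your case analysis ($n=1$ bounded, $n=2$ logarithmic, $n\ge3$ dominated by the lower endpoint) is carried out correctly with uniform constants.
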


\begin{lemma}
\label{sine5} For $n=0,1,2, \dots$, we have
\begin{align*}
\frac{1}{\ell^{n+1/2}} \int_{C/\ell}^{\pi/2} \frac{1}{\sin^{n+1/2} \phi}
\sin \phi d \phi=
\begin{cases}
O(\ell^{-1/2}) & \mathrm{for\;} n=0, \\
O(\ell^{-1-1/2}) & \mathrm{for\;} n=1, \\
O(\ell^{-2}) & \mathrm{for\;} n\ge 2.%
\end{cases}%
\end{align*}
\end{lemma}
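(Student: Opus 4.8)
The plan is to reduce the estimate to an elementary power-law integral. First I would simplify the integrand: since $\sin\phi/\sin^{n+1/2}\phi=\sin^{1/2-n}\phi$, the quantity in question equals
\[
\frac{1}{\ell^{n+1/2}}\int_{C/\ell}^{\pi/2}\sin^{1/2-n}\phi\,d\phi .
\]
On $[0,\pi/2]$ the concavity of $\sin$ gives the two-sided bound $\frac{2}{\pi}\phi\le\sin\phi\le\phi$, so $\sin^{1/2-n}\phi$ is bounded above and below by constant multiples (depending on $n$) of $\phi^{1/2-n}$ throughout $[C/\ell,\pi/2]$. Hence it suffices to estimate $\ell^{-(n+1/2)}\int_{C/\ell}^{\pi/2}\phi^{1/2-n}\,d\phi$, whose antiderivative is $\phi^{3/2-n}/(3/2-n)$ (the logarithmic borderline $3/2-n=0$ never occurs for integer $n$).

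Next I would split into cases according to the sign of $3/2-n$. For $n=0$ and $n=1$ the exponent $1/2-n$ exceeds $-1$, so $\int_{C/\ell}^{\pi/2}\phi^{1/2-n}\,d\phi=O(1)$ (the lower-limit contribution $(C/\ell)^{3/2-n}$ tends to $0$), and multiplying by $\ell^{-1/2}$, resp. $\ell^{-3/2}$, gives $O(\ell^{-1/2})$, resp. $O(\ell^{-3/2})$. For $n\ge 2$ the exponent satisfies $3/2-n<0$, so the integral is dominated by its lower endpoint, being a constant times $(C/\ell)^{3/2-n}=O(\ell^{\,n-3/2})$; multiplying by $\ell^{-(n+1/2)}$ then yields exactly $O(\ell^{-2})$, since $\ell^{\,n-3/2}\cdot\ell^{-(n+1/2)}=\ell^{-2}$.

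There is essentially no obstacle here; the only point requiring a moment's care is the case $n\ge2$, where one checks that the blow-up of $\phi^{1/2-n}$ at the origin is precisely absorbed by the cut-off at $C/\ell$ and that the cancellation of powers of $\ell$ is exact, so that the bound $O(\ell^{-2})$ is the same for every $n\ge2$. An entirely analogous and slightly simpler computation — rewriting the integrand of Lemma~\ref{sine2} as $\sin^{1-n}\phi$, comparable to $\phi^{1-n}$, and splitting into the sub-cases $n=1$, $n=2$, $n\ge3$ — proves Lemma~\ref{sine2}.
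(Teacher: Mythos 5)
Your proof is correct. The paper does not actually write out a proof of this lemma (it only spells out the argument for the neighboring Lemma~\ref{sine4}), but your reduction to a power-law integral via $\frac{2}{\pi}\phi\le\sin\phi\le\phi$ is exactly the intended argument, and the exponent bookkeeping in each of the three cases is accurate.
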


\begin{lemma}
\label{sine4} For $k,n=1,2,3, \dots$, we have
\begin{align*}
\frac{1}{\ell^k} \int_{C/\ell}^{\pi/2} \frac{1}{\sin^n \phi} \frac{1}{\phi^k}
\sin \phi d \phi=
\begin{cases}
O(\ell^{-1} \log \ell) & \mathrm{for\;} n+k=2, \\
O(\ell^{n-2}) & \mathrm{for\;} n+k\ge 3.%
\end{cases}%
\end{align*}
\end{lemma}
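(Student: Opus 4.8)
The plan is to strip away the trigonometric factor by an elementary comparison and thereby reduce the estimate to a single pure power integral, which is then evaluated in closed form. No input from the asymptotics of Legendre polynomials is needed.

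First I would use the elementary inequality $\tfrac{2}{\pi}\phi\le\sin\phi\le\phi$, valid on $[0,\pi/2]$, to bound the integrand. Since $\dfrac{\sin\phi}{\sin^{n}\phi}=\dfrac{1}{\sin^{n-1}\phi}\le\Big(\dfrac{\pi}{2}\Big)^{n-1}\dfrac{1}{\phi^{\,n-1}}$ for every $n\ge1$ and $\phi\in(0,\pi/2]$, one gets
\[
0\le\frac{1}{\ell^{k}}\int_{C/\ell}^{\pi/2}\frac{1}{\sin^{n}\phi}\,\frac{1}{\phi^{k}}\,\sin\phi\,d\phi\;\le\;\Big(\frac{\pi}{2}\Big)^{n-1}\frac{1}{\ell^{k}}\int_{C/\ell}^{\pi/2}\frac{d\phi}{\phi^{\,n+k-1}}.
\]
(The reverse bound $\sin\phi\le\phi$ shows both sides have the same order of magnitude, but only the displayed upper estimate is needed for the $O(\cdot)$ claim.)

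Next I would evaluate $\int_{C/\ell}^{\pi/2}\phi^{-m}\,d\phi$ with $m:=n+k-1$, splitting according to whether $m=1$ or $m\ge2$. Since $n,k\ge1$ we always have $m\ge1$, with $m=1$ precisely when $n=k=1$, i.e. when $n+k=2$; in that case the integral equals $\log(\pi/2)-\log(C/\ell)=\log\ell+O(1)$, and multiplying by $\ell^{-k}=\ell^{-1}$ gives the bound $O(\ell^{-1}\log\ell)$. When $n+k\ge3$ we have $m\ge2$, hence $\int_{C/\ell}^{\pi/2}\phi^{-m}\,d\phi=\tfrac{1}{m-1}\big((\ell/C)^{m-1}-(2/\pi)^{m-1}\big)=O(\ell^{\,m-1})=O(\ell^{\,n+k-2})$, and multiplying by $\ell^{-k}$ yields $O(\ell^{\,n-2})$, as claimed.

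The argument is entirely elementary; the only point demanding attention is the bookkeeping of the exponent $n+k-1$ and the isolation of the borderline case $n+k=2$ (equivalently $n=k=1$), where the power integral degenerates into a logarithm. I do not anticipate any genuine analytic obstacle here — this lemma is purely a convenient packaging of a routine integral estimate used elsewhere in the paper.
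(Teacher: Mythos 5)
Your proof is correct and essentially the paper's own: both compare $\sin^{1-n}\phi\,\phi^{-k}$ to the pure power $\phi^{-(n+k-1)}$ and then evaluate the resulting elementary integral, isolating the logarithmic borderline case $n+k=2$. The only cosmetic difference is that you apply Jordan's inequality $\sin\phi\ge(2/\pi)\phi$ to land directly on a power integral as the upper bound, whereas the paper uses $\phi^{-k}\le\sin^{-k}\phi$ to obtain a pure $\sin$-integral as the upper bound and then invokes Lemma~\ref{sine2}, reserving the explicit power-integral computation for the matching lower bound.
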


\begin{proof}
We first note that
\begin{align*}
\frac{1}{\ell^k} \int_{C/\ell}^{\pi/2}  \frac{1}{\phi^{n+k-1}}  d \phi \le \frac{1}{\ell^k} \int_{C/\ell}^{\pi/2} \frac{1}{\sin^n \phi} \frac{1}{\phi^k} \sin \phi d \phi \le \frac{1}{\ell^k} \int_{C/\ell}^{\pi/2} \frac{1}{\sin^{n+k-1} \phi}  d \phi,
\end{align*}
then the conclusion follows from Lemma \ref{sine2} and by observing that
\begin{align*}
\frac{1}{\ell^k} \int_{C/\ell}^{\pi/2}  \frac{1}{\phi^{n+k-1}}  d \phi =
\begin{cases}
O(\ell^{-1} \log \ell) & {\rm for\;} n+k=2,\\
O(\ell^{n-2})& {\rm for\;} n+k\ge 3.
\end{cases}
\end{align*}
\end{proof}

\section{Bounds for the terms $A_{0,\ell}$, $A_{i,\ell}$ and $A_{ij,\ell}$} \label{A_terms}

\begin{proof} [Proof of Lemma \ref{A0}]
By expanding the denominator in $A_{0,\ell}$, we write
\begin{align*}
A_{0,\ell}&=\int_{C/\ell}^{\frac{\pi }{2}}\Big[1+2\frac{\alpha _{2,\ell}^{2}(\phi)}{\lambda _{\ell
}^{2}}+O\Big(\frac{\alpha
_{2,\ell}^{4}(\phi)}{\lambda _{\ell }^{4}}\Big)\Big] \Big[1+2\frac{\alpha _{1,\ell}^{2}(\phi)}{\lambda _{\ell }^{2}}+O\Big(\frac{\alpha _{1,\ell}^{4}(\phi)}{\lambda _{\ell }^{4}}\Big)\Big] \sin \phi d\phi\\
&=\cos \big( \frac{C}{\ell} \big)+\int_{C/ \ell}^{\frac{\pi }{2}}\Big[ 2\frac{\alpha _{2,\ell}^{2}(\phi)}{\lambda
_{\ell }^{2}}+O\Big(\frac{
\alpha _{2,\ell}^{4}(\phi)}{\lambda _{\ell }^{4}}\Big) +2\frac{\alpha _{1,\ell}^{2}(\phi)}{\lambda _{\ell }^{2}}+O\Big(\frac{\alpha _{1,\ell}^{4}(\phi)}{\lambda _{\ell }^{4}}%
\Big)\\
&\;\; +\Big(2\frac{\alpha _{2,\ell}^{2}(\phi)}{\lambda _{\ell }^{2}}+O\Big(\frac{\alpha _{2,\ell}^{4}(\phi)}{\lambda _{\ell }^{4}}%
\Big)\Big) \Big(2\frac{\alpha _{1,\ell}^{2}(\phi)}{\lambda _{\ell }^{2}}+O\Big(\frac{\alpha _{1,\ell}^{4}(\phi)}{\lambda _{\ell }^{4}}\Big)%
\Big)\Big]\sin \phi d\phi.
\end{align*}
The idea is that the leading term in this expansion produces the
cancellation of the component
$\big( \mathbb{E}[{\cal N}^c_I(f_{\ell})] \big)^2$.
 Indeed, in view of Remark \ref{q(0)}, we have
\begin{align*}
2 \lambda _{\ell }^{2} \cos\big( \frac C  \ell \big) \; \iint_{I \times I} q({\bf 0}; t_1,t_2) d t_1 dt_2-
\frac{\lambda_{\ell}^2}{4} \Big[ \int_I p_1^c(t) d t \Big]^2 &=O(\ell^{2}).
\end{align*}
We consider now the rate of the terms
\begin{align*}
&2 \lambda _{\ell }^{2}  \int_{C/\ell}^{\frac{\pi }{2}}\Big[ 2\frac{\alpha _{2,\ell}^{2}(\phi)}{\lambda
_{\ell }^{2}}+2\frac{\alpha _{1,\ell}^{2}(\phi)}{\lambda _{\ell }^{2}}+ 4\frac{\alpha _{2,\ell}^{2}(\phi)}{\lambda _{\ell }^{2}} \frac{\alpha _{1,\ell}^{2}(\phi)}{\lambda _{\ell }^{2}} +O\Big(\frac{\alpha _{1,\ell}^{4}(\phi)}{\lambda _{\ell }^{4}}
\Big) +O\Big(\frac{\alpha _{2,\ell}^{4}(\phi)}{\lambda _{\ell }^{4}}\Big)\Big] \sin \phi \; d\phi\; \iint_{I \times I}q({\bf 0}; t_1,t_2)  d t_1 dt_2.
\end{align*}
We apply here Lemma \ref{alphaalpha} and Lemma \ref{sine2} to
identify the rate of the dominant term. In fact, from Lemma
\ref{alphaalpha}, we obtain the asymptotic behaviour of each
addend of the integrand function, then Lemma \ref{sine2} gives the
asymptotic behaviour of their integrals. We immediately see that
\begin{align} \label{13:02}
\frac{1}{\lambda _{\ell }^{2}}\int_{C/\ell}^{
\frac{\pi }{2}}  \alpha^{2}_{2,\ell }(\phi
) \sin \phi d\phi =O(\ell^{-1}).
\end{align}
To obtain the multiplicative constant of the leading term \eqref{13:02} note that, from Lemma \ref{hilbs} and applying again Lemma \ref{sine2}, to determine the non-dominant terms, we can write
\begin{align} \label{o}
\frac{2}{\lambda _{\ell }^{2}}\int_{C/\ell}^{
\frac{\pi }{2}}\alpha^2_{2,\ell }(\phi) \sin \phi d\phi&=\frac{2}{\lambda _{\ell }^{2}}\int_{C/\ell}^{
\frac{\pi }{2}}  [-\sin ^{2}\phi P_{\ell}^{\prime \prime }(\cos \phi )+\cos \phi P_{\ell }^{\prime }(\cos \phi
)]^{2} \sin \phi d\phi \nonumber \\
&=\frac{2}{\lambda _{\ell }^{2}}\int_{C/\ell}^{
\frac{\pi }{2}}  \Big\{  -\sin^{2}\phi  \sqrt{\frac 2 \pi } \frac{ \ell^{2-1/2}}{%
\sin^{2+1/2} \phi} \left[ - \cos  \psi_\ell^- + \frac{1}{8 \ell \phi} \cos
\psi_\ell^+ \right] \Big\}^{2} \sin \phi d\phi+O(\ell^{-2} \log \ell)\nonumber \\
&=\frac 4 \pi \int_{C/\ell}^{\frac{\pi }{2}}    \frac{1}{
\ell \sin \phi} \left[ - \cos \psi_\ell^- + \frac{1}{8 \ell \phi} \cos
\psi_\ell^+ \right]^{2} \sin \phi d\phi+O(\ell^{-2} \log \ell),
\end{align}
where $\psi_{\ell}^{\pm}=(\ell+1/2) \phi \pm \pi/4$. Applying Lemma \ref{sine4} to get the asymptotic behaviour of the integral \eqref{o}, we have
\begin{align*}
\frac{2}{\lambda _{\ell }^{2}}\int_{C/\ell}^{
\frac{\pi }{2}}\alpha^{2}_{2,\ell }(\phi)  \sin \phi d\phi&=\frac 4 \pi \frac{1}{\ell} \int_{C/\ell}^{\frac{\pi }{2}}     \cos^2 \psi_\ell^-  d\phi+O(\ell^{-2} \log \ell)\\
&= \frac 4 \pi \frac{1}{\ell} \int_{C/\ell }^{\frac{\pi }{2}}  \left[\frac 1 2+\frac 1 2 \cos(2 \psi _{\ell }^{-})\right]
d\phi+O(\ell^{-2} \log \ell)\\
&=\frac{2}{\pi} \frac{1}{\ell}\int_{C/\ell }^{\frac{\pi }{2}}   d\phi +\frac{2}{\pi \ell }\int_{C/\ell }^{\frac{\pi }{2}} \cos (2\psi _{\ell }^{-}) d\phi+O(\ell^{-2} \log \ell)\\
&=\frac{2}{\pi } \frac{1}{\ell} \left(\frac{\pi }{2}+\frac{C}{\ell }\right)+\frac{2}{\pi \ell }
\frac{\cos (C(2+1/\ell )+\sin (\ell \pi ))}{1+2\ell }+O(\ell^{-2} \log \ell)\\
&= \ell^{-1} +O(\ell^{-2} \log \ell).
\end{align*}
\end{proof}

\begin{proof}[Proof of Lemma \ref{Ai}]
We start by observing that the terms $A_{i,\ell}$  can be written in the form
\begin{align*}
A_{i,\ell}=\int_{C/\ell}^{\pi/2} \frac{N_{i,\ell}(\phi)}{(1-4 \alpha^2_{2,\ell}(\phi)/ \lambda_{\ell}^2)^{m/2}(1-4 \alpha^2_{1,\ell}(\phi)/ \lambda_\ell^2)^{n/2}} \sin \phi d \phi
\end{align*}
for a suitable function $N_{i,\ell}(\phi)$ and $n,m=1,2,3$. 
 By expanding in power series around the origin the ratio
 $$\frac{1}{(1-4 \alpha^2_{2,\ell}(\phi)/ \lambda_{\ell}^2)^{m/2}(1-4 \alpha^2_{1,\ell}(\phi)/ \lambda_\ell^2)^{n/2}}$$
and with computations analogous to those performed in the proof of Lemma \ref{A0}, it follows that the dominant terms of $A_{i,\ell}$ are all of the form
\begin{align} \label{//}
\int_{C/\ell}^{\pi/2} N_{i,\ell}(\phi) \sin \phi d \phi.
\end{align}
We study now the asymptotic behaviour of \eqref{//}, for $i=1,\dots,8$:
\begin{itemize}
\item To obtain the asymptotic behaviour of $A_{1,\ell}(\phi)$, we note that
\end{itemize}
\begin{align*}
\int_{C/\ell}^{\pi/2} N_{1,\ell}(\phi) \sin \phi d \phi
&=-\frac{16}{\lambda_\ell^2 (\lambda_\ell-2)} \int_{C/\ell}^{ \pi / 2}  \beta_{2,\ell}^2(\phi) \sin \phi\; d \phi.
\end{align*}
Now Lemma \ref{alphaalpha} gives the asymptotic behaviour of the terms of the integrand function and Lemma \ref{sine2} gives the asymptotic behaviour of the integrand of each term, so that we get:
\begin{align*}
\int_{C/\ell}^{\pi/2} \Big(\frac{\beta_{2,\ell}(\phi)}{\ell^3}\Big)^2   \sin \phi d \phi = O \Big( \int_{C/\ell}^{\pi/2}   \frac{1}{\ell^3 \sin^3 \phi}   \sin \phi d \phi \Big)=O(\ell^{-2}) .
\end{align*}

\begin{itemize}
\item For the term $A_{2,\ell}(\phi)$ we note that
\end{itemize}
\begin{align*}
&\int_{C/\ell}^{\pi/2} N_{2,\ell}(\phi) \sin \phi d \phi=-\frac{16}{\lambda_\ell^2 (\lambda_\ell-2)} \int_{C/\ell}^{ \pi / 2} \beta_{1,\ell}^2 (\phi)  \sin \phi\; d \phi
\end{align*}
and, applying again Lemma \ref{alphaalpha} and Lemma \ref{sine2}, we have
\begin{align*}
 \int_{C/\ell}^{\pi/2} \Big(\frac{ \beta_{1,\ell}(\phi)}{\ell^3}\Big)^2     \sin \phi d \phi = O \Big( \int_{C/\ell}^{\pi/2}   \frac{1}{\ell^3 \sin^3 \phi}   \sin \phi d \phi \Big)=O(\ell^{-2}).
\end{align*}

\begin{itemize}
\item The term $A_{3,\ell}(\phi)$ leads to
\end{itemize}
\begin{align*}
\int_{C/\ell}^{ \pi / 2} N_{3,\ell}(\phi) \sin \phi d \phi=-\frac{16}{\lambda_\ell^2 (\lambda_\ell-2)} \int_{C/\ell}^{ \pi / 2} \beta_{3,\ell}^2(\phi)   \sin \phi\; d \phi
\end{align*}
and, applying Lemma \ref{alphaalpha}  and Lemma \ref{sine2},
\begin{align*}
\int_{C/\ell}^{ \pi / 2} \Big(\frac{ \beta_{3,\ell}(\phi)}{\ell^3}\Big)^2     \sin \phi\; d \phi = O\Big( \int_{C/\ell}^{ \pi / 2} \frac{1}{\ell \sin \phi} \sin \phi\; d \phi \Big)=O(\ell^{-1}).
\end{align*}
Since it is a dominant term, we compute now the leading constant of the term $A_{3,\ell}(\phi)$. Recalling the definition of $\beta_{3,\ell}$ and Lemma \ref{hilbs}, we get
\begin{align*}
&-\frac{16}{\lambda_\ell^2 (\lambda_\ell-2)} \int_{C/\ell}^{ \pi / 2} \beta_{3,\ell}^2(\phi)   \sin \phi\; d \phi\\
&=-\frac{16}{\lambda_\ell^2 (\lambda_\ell-2)}   \int_{C/\ell}^{ \pi / 2} \left[-\sin^3 \phi P'''_\ell(\cos \phi)+3 \sin \phi \cos \phi P''_\ell(\cos \phi)+\sin \phi P'_\ell(\cos \phi)\right]^2  \sin \phi\; d \phi\\
&=-\frac{16}{\ell^6}   \int_{C/\ell}^{ \pi / 2} \left[-\sin^3 \phi \sqrt{\frac{2}{\pi}} \frac{\ell^{3-1/2}}{\sin^{3+1/2} \phi }  \cos \psi^+_\ell \right]^2 \sin \phi\; d \phi+O(\ell^{-2} \log \ell),
\end{align*}
where we have also applied Lemma \ref{sine2}, Lemma \ref{sine5} and Lemma \ref{sine4} to identify the leading term. Now computing explicitly the integral, we have
\begin{align*}
-\frac{16}{\lambda_\ell^2 (\lambda_\ell-2)} \int_{C/\ell}^{ \pi / 2} \beta_{3,\ell}^2(\phi)   \sin \phi\; d \phi&=-16 {\frac{2}{\pi}}\frac{1}{\ell}  \int_{C/\ell}^{ \pi / 2}   \cos^2 \psi^+_\ell \; d \phi+O(\ell^{-2} \log \ell)\\
&=-16 {\frac{2}{\pi}} \frac{1}{\ell}  \int_{C/\ell}^{ \pi / 2}   \cos^2 [(\ell+1/2) \phi+\pi/4] \; d \phi+O(\ell^{-2} \log \ell)\\
&=-16 {\frac{2}{\pi}} \frac{1}{\ell}   \frac{\pi}{4}+O(\ell^{-2} \log \ell)\\
&=-8  \ell^{-1}   +O(\ell^{-2} \log \ell).
\end{align*}

\begin{itemize}
\item The term $A_{4,\ell}(\phi)$ leads to
\end{itemize}
\begin{align*}
\int_{C/\ell}^{ \pi / 2} N_{4,\ell}(\phi) \sin \phi d \phi=-\frac{16}{\lambda_\ell^2 (\lambda_\ell-2)} \int_{C/\ell}^{ \pi / 2} \beta_{2,\ell}(\phi) \beta_{3,\ell}(\phi)   \sin \phi\; d \phi
\end{align*}
again, by  Lemma  \ref{alphaalpha} and Lemma \ref{sine2}, we have
\begin{align*}
\int_{C/\ell}^{\pi/2}  \frac{\beta_{2,\ell}(\phi)}{\ell^3} \frac{\beta_{3,\ell}(\phi)}{\ell^3}   \sin \phi d \phi
= O \Big(  \int_{C/\ell}^{\pi/2}   \frac{1}{\ell^2 \sin^2 \phi}   \sin \phi d \phi  \Big)=O(\ell^{-2} \log \ell).
\end{align*}

\begin{itemize}
\item For $A_{5,\ell}(\phi)$ we have
\end{itemize}

\begin{align*}
\int_{C/\ell}^{ \pi / 2}  N_{5,\ell}(\phi) \sin \phi\; d \phi=\frac{8}{\lambda_\ell (\lambda_\ell-2)} \int_{C/\ell}^{ \pi / 2} \gamma_{1,\ell}(\phi)   \sin \phi\; d \phi+ \frac{8 \cdot 4}{\lambda_\ell^3 (\lambda_\ell-2)} \int_{C/\ell}^{ \pi / 2} \alpha_{2,\ell}(\phi) \beta_{2,\ell}^2(\phi)  \sin \phi\; d \phi
\end{align*}
and then, by Lemma \ref{alphaalpha},  and Lemma \ref{sine5},
\begin{align*}
& \int_{C/\ell}^{\pi/2}  \frac{\gamma_{1,\ell}(\phi)}{\ell^{4}}   \sin \phi d \phi = O \left(  \int_{C/\ell}^{\pi/2}   \frac{1}{\ell^{2+1/2} \sin^{2+1/2} \phi}   \sin \phi d \phi \right)=O(\ell^{-2}),\\
& \int_{C/\ell}^{\pi/2}  \Big( \frac{\beta_{2,\ell}(\phi)}{\ell^3}\Big)^{2}
     \frac{\alpha_{2,\ell}(\phi)}{\ell^2}   \sin \phi d \phi = O \Big(  \int_{C/\ell}^{\pi/2}   \frac{1}{\ell^{3+1/2} \sin^{3+1/2} \phi}   \sin \phi d \phi \Big)=O(\ell^{-2}).
\end{align*}

\begin{itemize}
\item The term $A_{6,\ell}(\phi)$ leads to
\end{itemize}
\begin{align*}
\int_{C/\ell}^{ \pi / 2}  N_{6,\ell}(\phi)  \sin \phi\; d \phi =\frac{8}{\lambda_\ell (\lambda_\ell-2)} \int_{C/\ell}^{ \pi / 2} \gamma_{2,\ell}(\phi)   \sin \phi\; d \phi + \frac{8 \cdot 4}{\lambda_\ell^3 (\lambda_\ell-2)} \int_{C/\ell}^{ \pi / 2} \alpha_{1,\ell}(\phi) \beta_{1,\ell}^2(\phi)   \sin \phi\; d \phi
\end{align*}
and applying again Lemma \ref{alphaalpha} and Lemma \ref{sine5}, we have
\begin{align*}
&\int_{C/\ell}^{\pi/2} \frac{\gamma_{2,\ell}(\phi)}{\ell^{4}}   \sin \phi d \phi = O \Big(  \int_{C/\ell}^{\pi/2}   \frac{1}{\ell^{1+1/2} \sin^{1+1/2} \phi}   \sin \phi d \phi \Big)=O(\ell^{-1-1/2}),\\
&\int_{C/\ell}^{\pi/2}  \Big(\frac{\beta_{1,\ell}(\phi)}{\ell^3}\Big)^2  \frac{\alpha_{1,\ell}(\phi)}{\ell^2}  \sin \phi d \phi = O \Big(  \int_{C/\ell}^{\pi/2}   \frac{1}{\ell^{4+1/2} \sin^{4+1/2} \phi}   \sin \phi d \phi \Big)=O(\ell^{-2}).
\end{align*}

\begin{itemize}
\item For $A_{7,\ell}(\phi)$, we get
\end{itemize}

\begin{align*}
\int_{C/\ell}^{ \pi / 2} N_{7,\ell}(\phi) \sin \phi\; d \phi=\frac{8}{\lambda_\ell (\lambda_\ell-2)} \int_{C/\ell}^{ \pi / 2} \gamma_{4,\ell}(\phi)  \sin \phi\; d \phi + \frac{8 \cdot 4}{\lambda_\ell^3 (\lambda_\ell-2)} \int_{C/\ell}^{ \pi / 2} \alpha_{2,\ell}(\phi) \beta_{3,\ell}^2(\phi)  \sin \phi\; d \phi
\end{align*}
and, by Lemma \ref{alphaalpha} and Lemma \ref{sine5},
\begin{align} \label{gamma4}
& \int_{C/\ell}^{\pi/2}  \frac{ \gamma_{4,\ell}(\phi)}{\ell^{4}}  \sin \phi d \phi = O \Big(  \int_{C/\ell}^{\pi/2}   \frac{1}{\ell^{1/2} \sin^{1/2} \phi}   \sin \phi d \phi \Big)=O(\ell^{-1/2}),\\
& \int_{C/\ell}^{\pi/2}   \Big(\frac{ \beta_{3,\ell}(\phi)}{\ell^3}\Big)^2    \frac{\alpha_{2,\ell}(\phi)}{\ell^2}  \sin \phi d \phi = O \Big(  \int_{C/\ell}^{\pi/2}   \frac{1}{\ell^{1+1/2} \sin^{1+1/2} \phi}   \sin \phi d \phi \Big)=O(\ell^{-1-1/2}). \nonumber
\end{align}
Since the leading term in $\gamma_{4,\ell}(\phi)$ is oscillatory, we can get a sharper bound for the term in \eqref{gamma4}, by observing that
\begin{align*}
& \hspace{6cm} \int_{C/\ell}^{\pi/2}  \frac{ \gamma_{4,\ell}(\phi)}{\ell^{4}} \sin \phi d \phi \\
&= \frac{ 1}{\ell^{4}} \int_{C/\ell}^{\pi/2} [\sin ^{4}\phi P_{\ell }^{\prime \prime \prime
\prime }(\cos \phi )-6\sin ^{2}\phi \cos \phi P_{\ell }^{\prime \prime
\prime }(\cos \phi )+(-4\sin ^{2}\phi +3\cos ^{2}\phi )P_{\ell }^{\prime
\prime }(\cos \phi )+\cos \phi P_{\ell }^{\prime }(\cos \phi )]  \sin \phi d \phi \\
 &= \frac{1}{\sqrt \ell } \int_{C/\ell}^{\pi/2}  \frac{1}{ \sin^{1/2} \phi} [\cos \psi^-_\ell  - \frac{1}{8 \ell \phi} \cos  \psi^+_\ell ] \sin \phi d \phi+O(\ell^{-1-1/2})  \\
&\le \frac{1}{\sqrt \ell } \int_{C/\ell}^{\pi/2}  \cos \psi^-_\ell    d \phi
  + \frac{1}{\sqrt \ell } \int_{C/\ell}^{\pi/2}  \frac{1}{8 \ell \phi}  d \phi+O(\ell^{-1-1/2})  \\
  &= \frac{1}{\sqrt \ell } \frac{2}{1+2 \ell} \Big[ \sin \big( \frac{\ell \pi}{2}\big)+ \sin\big( \frac \pi 4 - \frac{C}{2 \ell} -C\big) \Big]+ O(\ell^{-1-1/2})  \\
 &=O(\ell^{-1-1/2}).
 \end{align*}

 \begin{itemize}
\item Finally for $A_{8,\ell}(\phi)$ we have
\end{itemize}

\begin{align*}
\int_{C/\ell}^{ \pi / 2}  N_{8,\ell}(\phi) \sin \phi\; d \phi =\frac{8}{\lambda_\ell (\lambda_\ell-2)} \int_{C/\ell}^{ \pi / 2} \gamma_{3,\ell}(\phi)   \sin \phi\; d \phi + \frac{8 \cdot 4}{\lambda_\ell^3 (\lambda_\ell-2)} \int_{C/\ell}^{ \pi / 2} \alpha_{2,\ell}(\phi) \beta_{2,\ell}(\phi) \beta_{3,\ell}(\phi)  \sin \phi\; d \phi
\end{align*}
where, from Lemma \ref{alphaalpha} and Lemma \ref{sine5}, we have
\begin{align*}
&\int_{C/\ell}^{\pi/2}  \frac{ \gamma_{3,\ell}(\phi)}{\ell^{4}}    \sin \phi d \phi
= O \Big(  \int_{C/\ell}^{\pi/2}   \frac{1}{\ell^{1+1/2} \sin^{1+1/2} \phi}   \sin \phi d \phi \Big)=O(\ell^{-1-1/2}),\\
& \int_{C/\ell}^{\pi/2}  \frac{\beta_{2,\ell}(\phi)}{\ell^{3}}  \frac{\beta_{3,\ell}(\phi) }{\ell^{3}}  \frac{\alpha_{2,\ell}(\phi)}{\ell^2} \sin \phi d \phi = O \Big(  \int_{C/\ell}^{\pi/2}   \frac{1}{\ell^{2+1/2} \sin^{2+1/2} \phi}   \sin \phi d \phi \Big)=O(\ell^{-2}).
\end{align*}
\end{proof}

\begin{proof}[Proof of Lemma \ref{Aij}]
 \noindent We note that the second order terms are all of the form
 \begin{align*}
\int_{C/\ell}^{  \pi / 2}  \frac{a_{i,\ell}(\phi) a_{j,\ell}(\phi) }{ \sqrt{ (1 -4 \alpha_{2,\ell}^2/\lambda_\ell^2) (1 -4 \alpha_{1,\ell}^2/\lambda_\ell^2) }}\sin \phi\; d \phi
\end{align*}
with $i,j=1,2, \dots,8$. Applying Lemma \ref{sine2}, Lemma \ref{sine5} and Lemma \ref{sine4}  we identify, as in the proof of the previous lemma, that the product of two terms $a_{i,\ell}(\phi) a_{j,\ell}(\phi)$ such that at least one is non dominant produces a non dominant term, so that, if $(i,j) \ne (7,7)$, we immediately see that
\begin{align*}
\int_{C/\ell}^{  \pi / 2}  \frac{a_{i,\ell}(\phi) a_{j,\ell}(\phi) }{ \sqrt{ (1 -4 \alpha_{2,\ell}^2/\lambda_\ell^2) (1 -4 \alpha_{1,\ell}^2/\lambda_\ell^2) }}\sin \phi\; d \phi = O \Big(\int_{C/\ell}^{ \pi / 2}  \frac{1}{\ell^2 \sin^2 \phi}  \sin \phi\; d \phi \Big)=O(\ell^{-2} \log \ell).
\end{align*}
Instead, for $(i,j) = (7,7)$, we have the square of the integrand function in \eqref{gamma4}, that in view of Lemma \ref{sine2}, is immediately seen to be dominant, i.e.,
\begin{align*}
\int_{C/\ell}^{  \pi / 2}  \frac{a^2_{7,\ell}(\phi) }{ \sqrt{ (1 -4 \alpha_{2,\ell}^2/\lambda_\ell^2) (1 -4 \alpha_{1,\ell}^2/\lambda_\ell^2) }} \sin \phi\; d \phi=O\Big(\int_{C/\ell}^{  \pi / 2} \frac{1}{\ell \sin \phi} \sin \phi\; d \phi\Big)=O( \ell^{-1}).
\end{align*}
Since we need the multiplicative constant of the leading terms we first note that:
\begin{align*}
 a^2_{7,\ell}(\phi) =  \frac{8^2}{\lambda_\ell^2 (\lambda_\ell-2)^2} \left[ \gamma_{4,\ell}^2(\phi)+\frac{16}{(1 -4 \alpha_{2,\ell}^2(\phi)/\lambda_\ell^2)^2} \frac{\alpha_{2,\ell}^2(\phi) \beta_{3,\ell}^4(\phi)}{\lambda_\ell^4}+ \frac{8}{1 -4 \alpha_{2,\ell}^2(\phi)/\lambda_\ell^2} \frac{\gamma_{4,\ell}(\phi) \alpha_{2,\ell}(\phi) \beta_{3,\ell}^2(\phi)}{\lambda_\ell^2} \right],
\end{align*}
then, by isolating the leading integral terms with the aid of Lemma \ref{sine2}, Lemma \ref{sine5} and Lemma \ref{sine4}, and finally by computing the integral, we get
\begin{align*}
&\hspace{6cm} \int_{C/\ell}^{  \pi / 2} a^2_{7,\ell}(\phi)  \sin \phi\; d \phi\\
&=8^2\int_{C/\ell}^{  \pi / 2}  \frac{\gamma_{4,\ell}^2(\phi)}{\lambda_\ell^2 (\lambda_\ell-2)^2}    \sin \phi d \phi +O \Big( \int_{C/\ell}^{  \pi / 2}  \frac{1}{\ell^2 \sin^2 \phi}    \sin \phi d \phi  \Big) \\
&=8^2  \int_{C/\ell}^{  \pi / 2} \left\{ \sqrt{\frac{\pi}{2}} \frac{1}{\ell^{1/2} \sin^{1/2} \phi} \left[ \cos \psi^-_\ell -\frac{1}{8 \ell \phi} \cos \psi^+_\ell \right] +O\left( \frac{1}{\ell^{1+1/2} \sin^{1+1/2} \phi}\right) \right\}^2   \sin \phi d \phi + O \left( \int_{C/\ell}^{  \pi / 2} \frac{1}{\ell^2 \sin^2 \phi}    \sin \phi d \phi  \right)\\
&=8^2  \int_{C/\ell}^{  \pi / 2} \left\{ \sqrt{\frac{\pi}{2}} \frac{1}{\ell^{1/2} \sin^{1/2} \phi} \left[ \cos \psi^-_\ell -\frac{1}{8 \ell \phi} \cos \psi^+_\ell \right] \right\}^2   \sin \phi d \phi + O \Big( \int_{C/\ell}^{  \pi / 2} \frac{1}{\ell^2 \sin^2 \phi}    \sin \phi d \phi  \Big)\\
&=8^2  \int_{C/\ell}^{  \pi / 2} \left\{ \frac{2}{\pi} \frac{1}{\ell \sin \phi} \left[ \frac 1 2 + \frac 1 2 \cos(2 \psi^-_\ell)+ \frac{1}{64 \ell^2 \phi^2} \cos^2 \psi^+_\ell - \frac 1{4 \ell \phi} \cos \psi^-_\ell \cos \psi^+_\ell \right] \right\} \sin \phi d \phi + O \Big( \int_{C/\ell}^{  \pi / 2} \frac{1}{\ell^2 \sin^2 \phi}    \sin \phi d \phi  \Big)\\
&= {32} \ell^{-1}+O(\ell^{-2} \log \ell).
\end{align*}
\end{proof}

\section{Nonsingularity of the covariance matrix for $\phi<c/\ell $} \label{AppC}

We only need to show that, after scaling, the determinant of the
matrix $\Sigma_\ell(\psi)$, evaluated for points on the equatorial line $x=(\pi/2,\phi)$, $y=(\pi/2,0)$,  is strictly positive for $c>\psi>0$; for points outside the equator the covariance matrix is obtained by a change of basis: the corresponding matrix does not depend on $\ell$ and  can be easily shown to be full rank.

By expanding the terms of the matrix up to order $12$ around $\psi
=0$, for $\lambda_\ell=\ell(\ell+1)$,
$\lambda_{1,\ell}=(\ell-1)(\ell+1)(\ell+2)$ and
$\lambda_{2,\ell}=(\ell-1) \ell (\ell+1) (\ell+2)$, we have

\begin{align*}
\alpha _{1,\ell }(\psi ) & =\frac{\ell+1}{2\ell}-\frac{
\lambda_{\ell} ( \lambda_{\ell}-2 )   \psi
^{2}}{16\ell^{4}}+\frac{\lambda_{1,\ell} (
\lambda_{\ell}-4 ) \psi ^{4}}{2^7 3 \ell^{5}}
 -\frac{  \lambda_{1,\ell} ( 5\lambda_{\ell} ( \lambda_{\ell}-10 )
+2^3 \cdot  17 ) \psi ^{6}}{2^{11} 3^2 5 \ell^{7}} \\
& +\frac{\lambda_{1,\ell} ( 7\lambda_{\ell} ( \lambda_{\ell} ( \lambda_{\ell}-2 \cdot  3^2 )
+2^3 3 \cdot  5 ) -2^6 \cdot  31 ) \psi ^{8}}{2^{15} 3 ^2 5\cdot  7 \ell^{9}} \\
& -\frac{  \lambda_{1,\ell} ( \lambda_{\ell} ( 21\lambda_{\ell} (
\lambda_{\ell} ( \lambda_{\ell}-2^2 \cdot 7 ) +2^2 \cdot 83 ) -2^4 5 \cdot 499 ) +2^7 691 )
  \psi ^{10}}{2^{18} 3^4 5^27\ell^{11}} \\
& +\frac{\lambda_{1,\ell} ( 11\lambda_{\ell} ( \lambda_{\ell} ( 3\lambda_{\ell} (
\lambda_{\ell} ( \lambda_{\ell}-2^3 5 ) +2^2 3 \cdot 61  ) -2^6 347 ) +2^6 17 \cdot 109)
-2^9 43 \cdot 127 ) \psi ^{12}}{2^{21} 3^5 5^2 7 \cdot 11 \ell^{13}} \\
& +O ( \psi ^{13} )
\end{align*}

\begin{align*}
\alpha _{2,\ell }(\psi ) &=\frac{\ell+1}{2
\ell}-\frac{\lambda_{\ell} (3 \lambda_{\ell}-2) \psi ^2}{2^4
\ell^4}+\frac{\lambda_{\ell} (5 \lambda_{2,\ell}+2^3) \psi ^4}{2^7 3
   \ell^6} -\frac{ \lambda_{\ell}  (7 \lambda_{\ell}  (5 \lambda_{\ell}  (\lambda_{\ell}-2^2 )+2^2 11 )-2^4 17 )  \psi^6}{2^{11} 3^2 5 \ell^8}\\
   &\;\;+\frac{\lambda_{\ell} (3 \lambda_{\ell} (7 \lambda_{\ell} (\lambda_{\ell} (3 \lambda_{\ell}-2^2 5)+2^2 3 \cdot 7)-2^4 89)+2^7 31) \psi ^8}{2^{15}3^2 5 \cdot 7
   \ell^{10}}\\
   &\;\;-\frac{ \lambda_{\ell} (11 \lambda_{\ell}  (\lambda_{\ell}  (3 \cdot 7 \lambda_{\ell}  (\lambda_{\ell}
    (\lambda_{\ell}-2 \cdot 5 )+2^2 17 )-2^3 823 )+2^5 17 \cdot 31 )-2^8 691 )  \psi ^{10}}{2^{18} 3^4 5^2 7
   \ell^{12}}\\
   &\;\;+\frac{\lambda_{\ell} (13 \lambda_{\ell}  (11 \lambda_{\ell}  (\lambda_{\ell}  (3 \lambda_{\ell}  (\lambda_{\ell}
    (\lambda_{\ell}-2 \cdot 7 )+2^2 5 \cdot 7  )-2^3 17 \cdot 23  )+2^6 239 )-2^7 23 \cdot 151 )+2^{10} 43 \cdot 127) \psi ^{12}}{2^{21} 3^5 5^2 7 \cdot 11
   \ell^{14}}\\
   &\;\;+O (\psi ^{13} )
 \end{align*}

 \begin{align*}
\beta_{1,\ell }(\psi ) &=\frac{\lambda_{\ell}  (\lambda_{\ell}-2 )
\psi }{2^3 \ell^4}-\frac{ \lambda_{1,\ell}  (\lambda_{\ell}-2^2  )
\psi ^3}{2^5 3 
   \ell^5}+\frac{\lambda_{1,\ell}  (5 \lambda_{\ell}  (\lambda_{\ell}-2 \cdot 5  )+2^3 17  ) \psi ^5}{2^{10} 3 \cdot 5 \ell^7}\\
   &\;\;-\frac{ \lambda_{1,\ell}  (7 \lambda_{\ell}  (\lambda_{\ell}  (\lambda_{\ell}-2 \cdot 3^2 )+2^3 3 \cdot 5 )-2^6 31 )  \psi ^7}{2^{12} 3^2 5 \cdot 7
   \ell^9}\\
   &\;\;+\frac{\lambda_{1,\ell}  (\lambda_{\ell}  (3 \cdot 7  \lambda_{\ell}  (\lambda_{\ell}
    (\lambda_{\ell}-2^2 7 )+2^2 83 )-2^4 5 \cdot 499 )+2^7 691 ) \psi ^9}{2^{17} 3^4 5 \cdot 7 \ell^{11}}\\
   &\;\;-\frac{ \lambda_{1,\ell}
    (11 \lambda_{\ell}  (\lambda_{\ell}  (3 \lambda_{\ell}  (\lambda_{\ell}
    (\lambda_{\ell}-2^3 5  )+2^2 3 \cdot 61 )-2^6 347 )+2^6 17 \cdot 109 )-2^9 43 \cdot 127 )  \psi ^{11}}{2^{19} 3^4 5^2 7 \cdot 11
   \ell^{13}}\\
   &\;\;+\frac{\lambda_{1,\ell}  (13 \lambda_{\ell}  (11 \lambda_{\ell}  (\lambda_{\ell}  (3 \lambda_{\ell}  (\lambda_{\ell}
    (\lambda_{\ell}-2 \cdot 3^3 )+2^2 \cdot 3^3 \cdot 13 )-2^3 8179 ) ) )) \psi
   ^{13}}{2^{25} 3^5 5^2 7 \cdot 11 \cdot 13 \ell^{15}}\\
    &\;\;+\frac{\lambda_{1,\ell}  (13 \lambda_{\ell}  (11 \lambda_{\ell}  (\lambda_{\ell}  2^7 3 \cdot 1601 )-2^7 3 \cdot 5 \cdot 7 \cdot 2591 )+2^{10} 257 \cdot 3617) \psi
   ^{13}}{2^{25} 3^5 5^2 7 \cdot 11 \cdot 13 \ell^{15}}\\
      &\;\;+O (\psi ^{14} )
 \end{align*}

  \begin{align*}
\beta_{2,\ell }(\psi )&=\frac{\lambda_{\ell}  (\lambda_{\ell}+2 ) \psi }{2^3  \ell^4}-\frac{ \lambda_{\ell}  (\lambda_{\ell}  (\lambda_{\ell}+2 \cdot 3  )-2^3 )
   \psi ^3}{2^5 3  \ell^6}+\frac{\lambda_{\ell}  (\lambda_{\ell}  (5 \lambda_{\ell}  (\lambda_{\ell}+2^2 3 )-2^2 61 )+2^4 17 ) \psi
   ^5}{2^{10} 3 \cdot 5 \ell^8}\\
   &\;\;-\frac{ \lambda_{\ell} (\lambda_{\ell}  (7 \lambda_{\ell}  (\lambda_{\ell}
    (\lambda_{\ell}+2^2 5 )-2^2 41 )+2^4 233 )-2^7 31)  \psi ^7}{2^{12} 3^2 5 \cdot 7 \ell^{10}}\\
   &\;\;+\frac{\lambda_{\ell}  (\lambda_{\ell}
    (\lambda_{\ell}  (2 \cdot 7  \lambda_{\ell}  (\lambda_{\ell}
    (\lambda_{\ell}+2 \cdot 3 \cdot 5 )-2^2 103 )+2^3 6917 )-2^5 3^2 19 \cdot 31 )+2^8 691 ) \psi ^9}{2^{17} 3^4 5 \cdot 7
   \ell^{12}}\\
   &\;\;-\frac{ \lambda_{\ell}  (\lambda_{\ell}  (11 \lambda_{\ell}  (\lambda_{\ell}  (3 \lambda_{\ell}  (\lambda_{\ell}
    (\lambda_{\ell}+2 \cdot 3 \cdot 7 )-2 \cdot 7 \cdot 31 )+2^3 \cdot 7 \cdot 491 )-2^8 3 \cdot 5 \cdot 43)+2^7 5 \cdot 7^2 173 )-2^{10} 43 \cdot 127 )  \psi
   ^{11}}{2^{19} 3^4 5^2 7 \cdot 11 \ell^{14}}\\
   &\hspace{-1cm}+\frac{\lambda_{\ell}  (\lambda_{\ell}  (13 \lambda_{\ell}  (11 \lambda_{\ell}  (\lambda_{\ell}  (3 \lambda_{\ell} (\lambda_{\ell}
    (\lambda_{\ell}+2^3 7)-2^3 \cdot 7 \cdot 29 )+2^7 599)-2^4 5 \cdot 9473 )+2^7 593 \cdot 641 )-2^8 347 \cdot 20947 )+2^{11} 257 \cdot 3617 )
   \psi ^{13}}{2^{25} 3^5 5^2 7 \cdot 11 \cdot 13 \ell^{16}}\\
   &\:\:+O (\psi ^{14} )
 \end{align*}

  \begin{align*}
\beta_{3,\ell }(\psi )&=\frac{(\ell (3 \ell (\ell+2)+1)-2) \psi }{2^3 \ell^3}-\frac{\lambda_{\ell} (5 \lambda_{2,\ell}+2^3) \psi ^3}{2^5 3  \ell^6}
   +\frac{\lambda_{\ell}  (7 \lambda_{\ell}  (5 \lambda_{\ell}  (\lambda_{\ell}-2^2 )+2^2 11 )-2^4 17 ) \psi ^5}{2^{10} 3 \cdot 5 \ell^8}\\
   &\;\;-\frac{\lambda_{\ell} (3 \lambda_{\ell} (7 \lambda_{\ell} (\lambda_{\ell} (3 \lambda_{\ell}-2^2 5)+2^2 3 \cdot 7)-2^4 89)+2^7 31) \psi ^7}{2^{12} 3^2 5 \cdot 7 \ell^{10}}\\
   &\;\;+\frac{\lambda_{\ell}  (11 \lambda_{\ell}  (\lambda_{\ell}
    (3 \cdot 7 \lambda_{\ell}  (\lambda_{\ell}  (\lambda_{\ell}-5 \cdot 5 )+2^2 17)-2^3 823 )+2^5 17 \cdot 31 )-2^8 691 ) \psi
   ^9}{2^{17} 3^4 5 \cdot 7 \ell^{12}}\\
   &\;\;-\frac{ \lambda_{\ell} (13 \lambda_{\ell}  (11 \lambda_{\ell}  (\lambda_{\ell}  (3 \lambda_{\ell}  (\lambda_{\ell}  (\lambda_{\ell}-2 \cdot 7 )+2^2  5 \cdot 7 )-2^{3} 17 \cdot 23 )+2^6 239 )-2^7 23 \cdot 151)+2^{10} 43 \cdot 127 )  \psi
   ^{11}}{2^{19} 3^4 5^2 7 \cdot 11 \ell^{14}}\\
   &\hspace{-1.5cm}+\frac{\lambda_{\ell} (3 \lambda_{\ell} (13 \lambda_{\ell} (11 \lambda_{\ell} (\lambda_{\ell} (\lambda_{\ell} (5 \lambda_{\ell} (3 \lambda_{\ell}-2^3 7)+2^3 3 \cdot 7 \cdot 23)-2^8 3 \cdot 5 \cdot 11)+2^4 20887)-2^7 3 \cdot 48409)+2^8 3 \cdot 846457)-2^{11}257 \cdot 3617) \psi ^{13}}{2^{25} 3^5 5^2 7 \cdot 11 \cdot 13 \ell^{16}}\\
   &\;\;+O (\psi^{14} )
 \end{align*}

  \begin{align*}
\gamma_{1,\ell }(\psi )&=\frac{\ell (3 \ell (\ell+2)+1)-2}{2^3 \ell^3}-\frac{\lambda_{\ell} (\lambda_{2,\ell}+2^3) \psi^2}{2^5 \ell^6}
   +\frac{\lambda_{\ell}  (3 \lambda_{\ell}
    (\lambda_{\ell}  (\lambda_{\ell}-2^2 )+2^2 3 \cdot 5 )-2^4 17 ) \psi ^4}{2^{10} 3 \ell^8}\\
   &\;\;-\frac{\lambda_{\ell} (\lambda_{\ell} (\lambda_{\ell} (\lambda_{\ell}
   (3 \lambda_{\ell}-2^2 5 )+2^2 3 \cdot 5 \cdot 11)-2^4 199)+2^7 31) \psi ^6}{2^{12} 3^2 5 \ell^{10}}\\
   &\;\;+\frac{\lambda_{\ell}  (\lambda_{\ell}  (\lambda_{\ell}  (7 \lambda_{\ell}
    (\lambda_{\ell}  (\lambda_{\ell}-2 \cdot 5 )+2^2 5 \cdot 29 )-2^3 31 \cdot 163 )+2^5 5 \cdot 31^2 )-2^8 691 ) \psi ^8}{2^{17} 3^2 5 \cdot 7
   \ell^{12}}\\
   &\;\;-\frac{ \lambda_{\ell} (3 \lambda_{\ell}  (\lambda_{\ell}  (\lambda_{\ell}  (3 \lambda_{\ell}  (\lambda_{\ell}
    (\lambda_{\ell}-2 \cdot 7 )+2^2 3^2 5 \cdot 7 )-2^3 61 \cdot 131 )+2^6 7^2 157 )-2^7 47 \cdot 281 )+2^{10} 43 \cdot 127  ) \psi
   ^{10}}{2^{19} 3^4 5^2 7 \ell^{14}}\\
   &\hspace{-2cm}+\frac{\lambda_{\ell} (\lambda_{\ell} (\lambda_{\ell} (11 \lambda_{\ell} (3 \lambda_{\ell} (\lambda_{\ell} (\lambda_{\ell} (3 \lambda_{\ell}-2^3 7)+2^3 3 \cdot 7 \cdot 43)-2^8 \cdot 5^2 \cdot 29)+2^4 5^4 7 \cdot 103)-2^7 3^2 307 \cdot 1543)+2^8 5 \cdot 47 \cdot 29443)-2^{11} 257 \cdot 3617) \psi ^{12}}{2^{25} 3^5 5^2 7 \cdot 11 \ell^{16}}\\
   &\;\;+O (\psi^{13} )
 \end{align*}

  \begin{align*}
\gamma_{2,\ell }(\psi )&=\frac{\lambda_{\ell}  (\lambda_{\ell}-2 )}{2^3 \ell^4}-\frac{ \lambda_{1,\ell}  (\lambda_{\ell}-2^2 )  \psi ^2}{2^5
   \ell^5}+\frac{\lambda_{1,\ell}  (5 \lambda_{\ell}  (\lambda_{\ell}-2 \cdot 5 )+2^3 17 ) \psi ^4}{2^{10} 3 \ell^7}\\
   &\;\;-\frac{   \lambda_{1,\ell}  (7 \lambda_{\ell}  (\lambda_{\ell}  (\lambda_{\ell}-2 \cdot 3^2 )+2^3 3 \cdot 5 )-2^6 31  ) \psi ^6}{2^{12} 3^2 5
   \ell^9}\\
   &\;\;+\frac{\lambda_{1,\ell}  (\lambda_{\ell}  (3 \cdot 7 \lambda_{\ell}  (\lambda_{\ell}
    (\lambda_{\ell}-2^2 7 )+2^2 83 )-2^4 5 \cdot 499 )+2^7 691 ) \psi ^8}{2^{17} 3^2 5 \cdot 7 \ell^{11}}\\
   &\;\;-\frac{ \lambda_{1,\ell}
    (11 \lambda_{\ell}  (\lambda_{\ell}  (3 \lambda_{\ell}  (\lambda_{\ell}
    (\lambda_{\ell}-2^3 5 )+2^2 3 \cdot 61 )-2^2 3 \cdot 61 )+2^6 17 \cdot 109 )-2^9 43 \cdot 127 )  \psi ^{10}}{2^{19} 3^4 5^2 7
   \ell^{13}}\\
   &\;\;+\frac{\lambda_{1,\ell}  (13 \lambda_{\ell}  (11 \lambda_{\ell}  (\lambda_{\ell}  (3 \lambda_{\ell}  (\lambda_{\ell}
    (\lambda_{\ell}-2 \cdot 3^3)+2^2 3^3 13  )-2^3 8179 )+2^7 3 \cdot 1601 )-2^7 3 \cdot 5 \cdot 7 \cdot 2591 )+2^{10} 257 \cdot 3617 ) \psi
   ^{12}}{2^{25} 3^5 5^2 \cdot 7 \cdot 11 \ell^{15}}\\
   &\;\;+O (\psi ^{13} )
 \end{align*}

  \begin{align*}
\gamma_{3,\ell }(\psi )&=\frac{\lambda_{\ell}  (\lambda_{\ell}+2 )}{2^3 \ell^4}-\frac{ \lambda_{\ell}  (\lambda_{\ell}  (\lambda_{\ell}+2 \cdot 3 )-2^3 ) \psi
   ^2}{2^5 \ell^6}+\frac{\lambda_{\ell}  (\lambda_{\ell}  (5 \lambda_{\ell}  (\lambda_{\ell}+2^2 3 )-2^2 61 )+2^4 17 ) \psi ^4}{2^{10} 3
   \ell^8}\\
   &\;\;-\frac{ \lambda_{\ell}  (\lambda_{\ell}  (7 \lambda_{\ell}  (\lambda_{\ell}
    (\lambda_{\ell}+2^2 5 )-2^2 41 )+2^4 233 )-2^7 31 )  \psi ^6}{2^{12} 3^2 5 \ell^{10}}\\
   &\;\;+\frac{\lambda_{\ell}  (\lambda_{\ell}
    (\lambda_{\ell}  (3 \cdot 7 \lambda_{\ell}  (\lambda_{\ell}
    (\lambda_{\ell}+2 \cdot 3 \cdot 5 )-2^2 103 )+2^3 6917 )-2^5 3^2 19 \cdot 31 )+2^8 691 ) \psi ^8}{2^{17} 3^2 5 \cdot 7
   \ell^{12}}\\
   &\;\;-\frac{ \lambda_{\ell}  (\lambda_{\ell}  (11 \lambda_{\ell}  (\lambda_{\ell}  (3 \lambda_{\ell}  (\lambda_{\ell}
    (\lambda_{\ell}+2 \cdot 3 \cdot 7 )-2^2 7 \cdot 31 )+2^2 7 \cdot 31 )-2^8 3 \cdot 5 \cdot 43 )+2^7 5 \cdot 7^2 173 )-2^{10} 43 \cdot 127 ) \psi
   ^{10}}{2^{19} 3^4 5^2 7 \ell^{14}}\\
   &\hspace{-1cm}+\frac{\lambda_{\ell}  (\lambda_{\ell}  (13 \lambda_{\ell}  (11 \lambda_{\ell}  (\lambda_{\ell}  (3 \lambda_{\ell}
    (\lambda_{\ell}
    (\lambda_{\ell}+2^3 7 )-2^3 7 \cdot 29 )+2^7 599 )-2^4 5 \cdot 9473 )+2^7 593 \cdot 641 )-2^8 347 \cdot 20947 )+2^{11} 257 \cdot 3617 )
   \psi ^{12}}{2^{25} 3^5 5^2 7 \cdot 11 \ell^{16}}\\
   &\;\;+O (\psi ^{13} )
 \end{align*}

  \begin{align*}
\gamma_{4,\ell }(\psi )&=\frac{\ell (3 \ell (\ell+2)+1)-2}{2^3 \ell^3}-\frac{\lambda_{\ell} (5 \lambda_{2,\ell}+2^3) \psi ^2}{2^5 \ell^6}
+\frac{\lambda_{\ell}  (7 \lambda_{\ell}
    (5 \lambda_{\ell}  (\lambda_{\ell}-4 )+2^2 11 )-2^4 17 ) \psi ^4}{2^{10} 3 \ell^8}\\
   &\;\;-\frac{\lambda_{\ell} (3 \lambda_{\ell} (7 \lambda_{\ell} (\lambda_{\ell} (3 \lambda_{\ell}-2^2 5)+2^2 3 \cdot 7)-2^4 89)+3968) \psi ^6}{2^{12} 3^2 5 \ell^{10}}\\
   &\;\;+\frac{\lambda_{\ell}  (11 \lambda_{\ell}  (\lambda_{\ell}  (3 \cdot 7 \lambda_{\ell} (\lambda_{\ell}  (\lambda_{\ell}-2 \cdot 5  )+2^2 17 )-2^3 823 )+2^5 17 \cdot 31 )-2^8 691 ) \psi^8}{2^{17} 3^2 5 \cdot 7
   \ell^{12}}\\
   &\;\;-\frac{ \lambda_{\ell}  (13 \lambda_{\ell}  (11 \lambda_{\ell}  (\lambda_{\ell}  (3 \lambda_{\ell}  (\lambda_{\ell}
    (\lambda_{\ell}-2 \cdot 7 )+2^{2} 5 \cdot 7 )-2^3 \cdot 17 \cdot 23)+2^6 239 )-2^{7} 23 \cdot 151 )+2^{10} 43 \cdot 127 ) \psi
   ^{10}}{2^{19} 3^4 5^2 7 \ell^{14}}\\
   &\hspace{-1.5cm}+\frac{\lambda_{\ell} (3 \lambda_{\ell} (13 \lambda_{\ell} (11 \lambda_{\ell} (\lambda_{\ell} (\lambda_{\ell} (5 \lambda_{\ell} (3 \lambda_{\ell}-2^3 7)+2^3 3 \cdot 7 \cdot 23 )-2^8 3 \cdot 5 \cdot 11 )+2^4 20887)-2^7 3 \cdot 48409)+2^8 3 \cdot 846457)-2^{11} 257 \cdot 3617 ) \psi ^{12}}{2^{25} 3^5 5^2 7 \cdot 11 \ell^{16}}\\
   &\;\;+O (\psi
   ^{13} )
 \end{align*}
It should be noted that, as $\ell \rightarrow \infty$, all coefficients
converge to constants; more importantly, the constants involved in the $O$-notation
for all the $O ( \psi ^{13}) $ terms are universal. A computer-oriented computation
yields the following Taylor expansion for the determinant:
\begin{align*}
\text{det}(\Sigma _{\ell }(\psi ))&=\frac{
(\ell-4)(\ell-3)^{3}(\ell+1)^{13}(\ell+4)^{3}(\ell+5)\left( \ell^{2}+\ell-6\right) ^{5}\left(
\ell^{2}+\ell-2\right) ^{7}\psi ^{26}}{2^{57}\; 3^8\; 5^2 \; \ell^{45}}+O( \psi
^{27})\\
&=\psi^{26} (1+O(\ell^{-1}))+O(\psi^{27})>0
\end{align*}%
the inequality holding for $c$ sufficiently small, because by the above, the $O(\psi^{27})$ term is universal.

\end{appendices}

\newpage

\end{document}